\newcommand{\arxiv}[1]{\href{http://arxiv.org/abs/#1}{\tt arXiv:\nolinkurl{#1}}}
\newcommand{\arXiv}[1]{\href{http://arxiv.org/abs/#1}{\tt arXiv:\nolinkurl{#1}}}
\newcommand{\googlebooks}[1]{(preview at \href{http://books.google.com/books?id=#1}{google books})}
\definecolor{dark-red}{rgb}{0.7,0.25,0.25}
\definecolor{dark-blue}{rgb}{0.15,0.15,0.55}
\definecolor{medium-blue}{rgb}{0,0,.8}
\definecolor{DarkGreen}{RGB}{0,150,0}
\definecolor{rho}{named}{red}
\newcommand\numberthis{\addtocounter{equation}{1}\tag{\theequation}}
\newcommand{\rhobar}{\overline{\rho}}
\newcommand{\B}{\mathcal{B}}
\newcommand{\F}{\mathcal{F}}
\newcommand{\M}{\mathcal{M}}
\renewcommand{\H}{\mathcal{H}}
\newcommand{\C}{\mathbb{C}}
\newcommand{\Id}{\mathbf{1}}
\newcommand{\U}{\mathcal{U}}
\newcommand{\CC}{\mathcal{C}}
\newcommand{\A}{\mathbb{A}}
\newcommand{\PV}{\operatorname{\mathbf P.V.}~}
\newcommand{\R}{\mathbb{R}}
\newcommand{\cR}{\mathcal{R}}
\newcommand{\Z}{\mathbb{Z}}
\newcommand{\cl}{\operatorname{cl}}
\newcommand{\interior}[1]{\mathring{#1}}
\newcommand{\abs}[1]{\left|#1\right|}
\newcommand{\norm}[1]{\left\|#1\right\|}
\newcommand{\ip}[1]{\left\langle#1\right\rangle}
\newcommand{\D}{\mathbb{D}}
\renewcommand{\O}{\mathcal{O}}
\renewcommand{\P}{\mathcal{P}}
\newcommand{\bbP}{\mathbb{P}}
\newcommand{\Diff}{\operatorname{Diff}}
\newcommand{\tr}{\operatorname{tr}}
\newcommand{\grotimes}{\hat \otimes}
\newcommand{\CAR}{\operatorname{CAR}}
\newcommand{\Rot}{\operatorname{Rot}}
\newcommand{\Aut}{\operatorname{Aut}}
\newcommand{\End}{\operatorname{End}}
\newtheorem{thmalpha}{Theorem}
\newtheorem{Theorem}{Theorem}[section]
\newtheorem*{Theorem*}{Theorem}
\newtheorem{Lemma}[Theorem]{Lemma}
\newtheorem{Proposition}[Theorem]{Proposition}
\newtheorem{Corollary}[Theorem]{Corollary}
\theoremstyle{definition}
\newtheorem{Notation}[Theorem]{Notation}
\newtheorem{Remark}[Theorem]{Remark}
\newtheorem{Definition}[Theorem]{Definition}
\newtheorem{Example}[Theorem]{Example}
\newtheorem{Caution}[Theorem]{Caution}
\numberwithin{equation}{section}
\title{Construction of the unitary free fermion Segal CFT}
\author{James E. Tener}
\begin{document}

\maketitle

\abstract{In this article, we provide a detailed construction and analysis of the mathematical conformal field theory of the free fermion, defined in the sense of Graeme Segal.
We verify directly that the operators assigned to disks with two disks removed correspond to vertex operators, and use this to deduce analytic properties of the vertex operators.
One of the main tools used in the construction is the Cauchy transform for Riemann surfaces, for which we establish several properties analogous to those of the classical Cauchy transform in the complex plane.
}

\bigskip\bigskip

\tableofcontents

\newpage

\settocdepth{section}
\section{Introduction}
In \cite{SegalDef}, Graeme Segal proposed a new mathematical definition of conformal field theory.
Under Segal's definition, a conformal field theory is a projective, monoidal functor from the cobordism category of closed $1$-manifolds and Riemann surfaces to the category of Hilbert spaces and trace class maps, subject to certain additional axioms.
We call conformal field theories in this spirit ``Segal CFTs.''

In \cite[\S 8]{SegalDef}, Segal describes the first examples of Segal CFTs, the charged chiral fermion theories (often called $b$-$c$ systems in physics).
In particular, there is one \emph{unitary} charged chiral fermion theory, which we will simply call \emph{the free fermion Segal CFT.}

Many authors have explored mathematical aspects of the free fermion Segal CFT.
The most detailed study is \cite{Kriz03}, in which Kriz studies the projective anomaly and partition functions of a class of conformal field theories which includes the free fermion.
While results concerning the analytic aspects of the construction have appeared (e.g. \cite{PrSe86,Po03}), to our knowledge there has never been a complete, rigorous analysis of the trace class operators assigned to surfaces with boundary.
The first purpose of this paper is to provide such a treatment (Sections \ref{secRiggedSpinRiemannSurfaces} and \ref{secSegalCFT}).

The second purpose of this paper is to establish concretely the connection between the free fermion vertex operator algebra and the free fermion Segal CFT. 
It has been understood for some time that the operators assigned by Segal CFTs to spheres with three holes should correspond to vertex operators, after slight modification.
This connection was used by Huang \cite{Huang03} to construct Segal CFTs in genus zero from a general class of vertex operator algebras, but in the context of topological vector spaces as opposed to Hilbert spaces.
In Section \ref{secVertexOperators}, we provide an explicit formula in terms of vertex operators for the operators assigned to a disk with two disks removed by the free fermion Segal CFT.
As a consequence of this formula, we are able to deduce analytic properties of the vertex operators (see Theorem \ref{thmIntroVertexOperators}).

We now summarize the main results.

Let $\Sigma$ be a compact Riemann surface with boundary, with no closed components.
One slight complication of the free fermion Segal CFT is that it is a \emph{spin conformal field theory}, so we must assume that $\Sigma$ is equipped with a spin structure.
That is, we assume we have a holomorphic line bundle $L \to \Sigma$, and an isomorphism $\Phi: L \otimes L \to K_\Sigma$, where $K_\Sigma$ is the holomorphic cotangent bundle.
We also assume that the boundary of $\Sigma$ comes with a family of parametrizations $\beta$ from the two standard spin structures on the unit circle $S^1$.
The collection of data $X=(\Sigma, L, \Phi, \beta)$ is called a spin Riemann surface with parametrized boundary, and we use $\cR$ to denote the collection of all such $X$.

We assign to each boundary component of $\Sigma$ the fermionic Fock space $\F$ assigned to the unit circle $S^1 \subset \C$ and the disk $\D$ that it bounds.
That is, if $H = L^2(S^1)$ and $H^2(\D)$ is the classical Hardy subspace, we define $\F$ to be the exterior Hilbert space 
$$
\F = \Lambda (\overline{H^2(\D)} \oplus H^2(\D)^\perp),
$$
which is a super Hilbert space.
Fermionic Fock space comes equipped with a representation of $\CAR(H)$, the $C^*$ algebra generated by annihilation and creation operators $a(f)$ and $a(f)^*$, for $f \in H$.

Now let $X \in \cR$. The boundary $\Gamma$ of $\Sigma$ is partitioned into incoming boundary components, $\Gamma^0$, on which the parametrizing map $\beta$ is orientation reversing, and outgoing boundary components, $\Gamma^1$, on which $\beta$ is orientation preserving.
We define the Hardy space 
$$
H^2(X) \subseteq \left(\bigoplus_{j \in \pi_0(\Gamma^1)} L^2(S^1) \right) \oplus \left(\bigoplus_{j \in \pi_0(\Gamma^0)} L^2(S^1) \right)
$$
to be the closure of holomorphic sections of $X$, pulled back to $L^2(S^1)$ by the boundary parametrizations $\beta$.

The free fermion Segal CFT assigns to $X$ the second quantization of the Hardy space $H^2(X)$. That is, it assigns the space $E(X)$ of trace class maps $T_X : \bigotimes_{j \in \pi_0(\Gamma^0)} \F \to \bigotimes_{j \in \pi_0(\Gamma^1)} \F$ which satisfy the \emph{$H^2(X)$ commutation relations} with the annihilation and creation operators:
\begin{equation}\label{eqnIntroCommRels1}
a(f^1)T_X = T_Xa(f^0), \qquad \mbox{ for all } (f^1, f^0) \in H^2(X)
\end{equation}
and
\begin{equation}\label{eqnIntroCommRels2}
a(g^1)^*T_X = -T_Xa(g^0)^*, \qquad \mbox{ for all } (g^1, g^0) \in H^2(X)^\perp.
\end{equation}
If $\Sigma$ has both incoming and outgoing boundary, the $H^2(X)$ commutation relations are equivalent to $T_X$ implementing the unbounded operator whose graph is $H^2(X)$ as a Bogoliubov-like endomorphism of $\CAR(H)$.

The basic properties of the assignment $X \mapsto E(X)$ are summarized in Theorem \ref{thmIntroCFTProperties} below, which is stated more precisely as Theorem \ref{thmCFTProperties} in the body of the paper.
\begin{thmalpha}
\label{thmIntroCFTProperties}
Let $X \in \cR$. The maps $E(X)$ assigned by the free fermion Segal CFT satisfy the following properties:
\begin{enumerate}
\item  (Existence) $E(X)$ is one-dimensional, and its elements are homogeneous and trace class.
\item (Non-degeneracy) If every connected component of $\Sigma$ has an outgoing boundary component, then non-zero elements of $E(X)$ are injective. If every connected component of $\Sigma$ has an incoming boundary component, then non-zero elements of $E(X)$ have dense image.
\item (Monoidal) If $Y \in \cR$, then $E(X \sqcup Y) = E(X) \grotimes E(Y)$, where $X \sqcup Y$ is the disjoint union and $\grotimes$ is the graded tensor product. 
\item (Sewing) If $\hat X \in \cR$ is obtained by sewing two boundary components of $X$ along the parametrizations, then the partial supertrace induces an isomorphism $\tr^s:E(X) \to E(\hat X)$. In particular, composition of cobordisms corresponds to composition of maps.
\item (Reparametrization) The Fock space $\F$ comes equipped with unitary representations of the automorphism groups of the standard spin structures on the circle, and the assignment $X \mapsto E(X)$ is covariant with respect to reparametrization of boundary components of $X$.
\item (Unitarity) $E(\overline{X}) = E(X)^*$, where $\overline{X}$ is the complex conjugate spin Riemann surface, and $E(X)^*$ denotes taking the adjoint elementwise.
\end{enumerate}
\end{thmalpha}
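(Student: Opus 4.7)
The plan is to reduce Theorem~A to the standard theory of Bogoliubov-type implementers on fermionic Fock space, after first establishing the requisite analytic properties of the boundary Hardy spaces $H^2(X)$. The key structural input is that $H^2(X)$ is a closed subspace of the total boundary $L^2$ whose orthogonal projection differs from the block projection onto (the appropriate direct sum of copies of) $\overline{H^2(\D)} \oplus H^2(\D)^\perp$ by a trace class operator. This is the stringent Shale--Stinespring-type condition needed for trace class, as opposed to merely Hilbert--Schmidt, implementers, and I would verify it using the Cauchy transform for Riemann surfaces established earlier in the paper, together with the smoothness of the boundary parametrizations.

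With this admissibility in hand, part~(1) follows from the general theory: the commutation relations (\ref{eqnIntroCommRels1})--(\ref{eqnIntroCommRels2}) determine $T_X$ uniquely up to scalar, the trace class refinement of Shale--Stinespring produces a nonzero trace class solution, and the homogeneity of $T_X$ is read off from the $\Z/2$-graded index of the polarization. Part~(3) is immediate from the orthogonal decomposition $H^2(X \sqcup Y) = H^2(X) \oplus H^2(Y)$ and the factorization of fermionic Fock spaces over disjoint unions of circles. For~(2), if every connected component of $\Sigma$ has an outgoing boundary then the commutation relations force $\ker T_X$ to be a closed subspace invariant under a dense subalgebra of $\CAR(H)$, and irreducibility of $\F$ gives injectivity; dense image is dual. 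For~(5), the unitary projective representations of the automorphism groups of the standard spin circles on $\F$ implement reparametrizations of $H^2(X)$, and covariance follows by conjugating the commutation relations. For~(6), taking adjoints of (\ref{eqnIntroCommRels1})--(\ref{eqnIntroCommRels2}) shows that $T_X^*$ satisfies the commutation relations associated to $\overline{X}$, whence $E(X)^* \subseteq E(\overline{X})$, and equality follows from the one-dimensionality in~(1).

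Part~(4), the sewing property, is the analytic heart of the theorem. When an incoming boundary component $c_0$ is glued to an outgoing component $c_1$ via matching parametrizations, the new Hardy space $H^2(\hat X)$ should be the restriction to the remaining boundary components of those elements of $H^2(X)$ whose boundary values agree on the sewn circle. I would first make this identification rigorous, using the Cauchy transform to show that sections holomorphic on $\hat X$ correspond precisely to sections holomorphic on $X$ whose boundary values match across the sewing. Next, I would show that the partial supertrace over the Fock factor attached to the sewn circle converts the $H^2(X)$ commutation relations into the $H^2(\hat X)$ commutation relations, using cyclicity of the supertrace together with CAR identities (the sign in the supertrace is precisely what lets the identification of $a(f^1)$ on $c_1$ with $a(f^0)$ on $c_0$ pass through). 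Non-vanishing of the supertrace then reduces to the non-degeneracy statement in~(2).

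The principal obstacle is the sewing step: proving that the partial supertrace is trace class, nonzero, and realizes an implementer for $H^2(\hat X)$ requires a precise analytic identification of the sewn Hardy space in terms of the original, controlled by Cauchy kernel estimates near the sewn curve, together with a non-vanishing argument for which the non-degeneracy statement~(2) is indispensable. Once (4) is in place, the remaining parts of the theorem are essentially bookkeeping; accordingly, I would expect the bulk of the work in the main text to be the setup in Sections~\ref{secRiggedSpinRiemannSurfaces}--\ref{secSegalCFT} that makes the above sketches into rigorous proofs.
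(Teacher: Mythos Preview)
Your outline matches the paper's approach closely, but two points deserve sharpening.

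First, the identity $H^2(X)^\perp = M_{\pm}\, H^2(\overline{X})$ (Theorem~\ref{thmHardyPerp}, proved via the Cauchy transform) does more work than your sketch suggests. For unitarity~(6), taking adjoints of the commutation relations~\eqref{eqnIntroCommRels1}--\eqref{eqnIntroCommRels2} sends $H^2(X)$ to $M_\pm H^2(X)^\perp$, and one needs this identity to recognize the result as $H^2(\overline X)$; it is not automatic. For sewing~(4), the identification of $H^2(\hat X)$ with boundary-matched elements of $H^2(X)$ is just conformal welding and needs no Cauchy transform, but one must separately verify the analogous compatibility for the \emph{orthogonal complements} (elements of $H^2(\hat X)^\perp$ lift to elements of $H^2(X)^\perp$ with opposite signs on the sewn circles, Lemma~\ref{lemSewnHardyPerp}), and this is \emph{not} a formal consequence of the Hardy-space compatibility---it again requires Theorem~\ref{thmHardyPerp}. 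Without it you cannot check that $\tr^s(T)$ satisfies the second commutation relation~\eqref{eqnIntroCommRels2}.

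Second, non-vanishing of the partial supertrace in~(4) does not reduce to the non-degeneracy statement~(2). The paper's argument is different: assuming $\tr^s_{j^0 j^1}(T)=0$, one uses the commutation relations together with density of the projection of $H^2(X)$ onto the $L^2$-summands indexed by the \emph{sewn} components $j^0,j^1$ (Proposition~\ref{propHardyDenseComponents}, which requires a third boundary component on the same connected piece---guaranteed since $\hat X$ has no closed components) to bootstrap to $\tr^s_{j^0 j^1}((y \grotimes x)T)=0$ for all $x,y$, and hence $\tr(T^*T)=0$. This uses the same density lemma that underlies~(2), but applied to the sewn rather than the remaining boundary components, and the conclusion does not follow from injectivity or dense image of $T$ alone.
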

The proof of Theorem \ref{thmIntroCFTProperties} requires a careful study of the Hardy spaces $H^2(X)$. 
Our main tool for this is the Cauchy transform for Riemann surfaces, which we study in Section \ref{secCauchyTransform}.
In particular, we obtain analogs of the Plemelj formula and Kerzman-Stein formula.

The explicit description of $E(X)$ in terms of commutation relations \eqref{eqnIntroCommRels1} and \eqref{eqnIntroCommRels2} is useful for computing operators assigned to particular surfaces.
As a demonstration, we compute the operator assign to a disk with two disks removed, and identify the result with free fermion vertex operators, which we now describe in more detail.

The action of rotation on $S^1$ induces a one-parameter group of unitary operators acting on $\F$, which can be written as $e^{2 \pi i L_0 \theta}$ for a diagonalizable, positive operator $L_0$ with eigenvalues in $\tfrac12 \Z_{\ge 0}$ and finite-dimensional eigenspaces.
We let $\F^0$ denote the algebraic span of eigenvectors of $L^0$, which are called finite energy vectors.
The free fermion vertex operator algebra (often called the charged chiral fermion vertex operator algebra) provides a `state-field correspondence.'
That is, for every $\xi \in \F^0$, we have a formal power series
$$
Y(\xi, z) = \sum_{n \in \Z} \xi_n z^{-n-1},
$$
where $\xi_n \in \End(\F^0)$.

In general, the $\xi_n$ are closable operators on $\F$, but do not extend to bounded operators.
If one tries evaluating $Y(\xi, z)\eta$ with $\eta \in \F^0$ and $z$ a complex number instead of a formal variable, the resulting series will converge in $\F$ in general only when $\abs{z} < 1$. 
Even then, $Y(\xi, z)$ is not generally a closeable operator on $\F$; in fact, its adjoint may be defined only on the vector $0$.
However, we show in Theorem \ref{thmPantsAreVertexOperators} that the trace class operators assigned by the free fermion Segal CFT to disks with two disks removed are closely related to vertex operators.

\begin{thmalpha}
\label{thmIntroVertexOperators}
Let $\bbP_{w,r_1,r_2}$ be the Riemann surface obtained by removing from the closed unit disk the open disk of radius $r_1$ centered at $w$ and the open disk of radius $r_2$ centered at $0$.
Give $\bbP_{w,r_1,r_2}$ the spin structure obtained by its embedding into $\C$, and parametrize the boundary components via dilation and translation of the unit circle.
Then $E(\bbP_{w,r_1,r_2})$ is spanned by the map given on $\xi \otimes \eta \in \F^0 \otimes \F^0$ by
\begin{equation}
\xi \otimes \eta \mapsto Y(r_1^{L_0} \xi, w)r_2^{L_0}\eta = \sum_{n \in \Z} (r_1^{L_0} \xi)_n w^{-n-1} r_2^{L_0}\eta.
\end{equation}
The operators $(r_1^{L_0}\xi)_n$ extend to trace class operators on $\F$, and the sum
$$
\sum_{n \in \Z} (r_1^{L_0} \xi)_n r_2^{L_0}w^{-n-1}
$$
converges absolutely in operator norm, uniformly in $r_1, r_2,$ and $w$ on compact subsets of the configuration space of pairs of pants $\bbP_{w,r_1,r_2}$.
\end{thmalpha}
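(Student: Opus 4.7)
My plan is to invoke the uniqueness statement in Theorem~\ref{thmIntroCFTProperties}: the space $E(\bbP_{w,r_1,r_2})$ is one-dimensional, and any nonzero trace class operator satisfying the $H^2(\bbP_{w,r_1,r_2})$ commutation relations \eqref{eqnIntroCommRels1}--\eqref{eqnIntroCommRels2} spans it. It therefore suffices to show that the proposed series defines a nonzero trace class map $\F \grotimes \F \to \F$ and satisfies these commutation relations. Non-vanishing will be immediate, since applying the proposed operator to $\Omega \otimes \Omega$ produces $Y(\Omega, w)\Omega = \Omega$.

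The first task is the analytic one. The geometric constraints on $\bbP_{w,r_1,r_2}$ (the two inner disks being disjoint and contained in the unit disk) force $r_1, r_2 < 1$, so both $r_i^{L_0}$ are positive trace class operators whose weight-$N$ operator norms decay like $r_i^N$. Expressing $\xi$ in the finite-energy basis and writing each mode $(\xi_{(k)})_n$ as a Wick polynomial in the bounded free fermion modes $\psi_m, \psi_m^*$, one obtains mode-by-mode estimates showing that $(r_1^{L_0}\xi)_n$ is trace class with trace norm decaying rapidly in $|n|$; composition with $r_2^{L_0}$ and summing against $w^{-n-1}$ yields absolute convergence in trace norm. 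The dependence on $(w, r_1, r_2)$ is governed by distances from the degenerate configurations where the disks collide, giving uniformity on compacta of the configuration space.

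For the commutation relations, I would analyze $H^2(\bbP_{w,r_1,r_2})$ using the Cauchy transform tools of Section~\ref{secCauchyTransform}. A holomorphic half-form on $\bbP_{w,r_1,r_2}$ admits Laurent expansions in the two annular regions $r_2 < |z| < |w| - r_1$ and $|w| + r_1 < |z| < 1$; after pulling back through the boundary parametrizations $\zeta \mapsto w + r_1 \zeta$ and $\zeta \mapsto r_2 \zeta$, this identifies the inner boundary values $f^0_1$ and $f^0_2$ with the Laurent expansions of $f^1$ at $w$ and at $0$, respectively. The identity $a(f^1)T = T a(f^0)$ then unrolls mode by mode into the classical free fermion vertex identity: the contribution of expanding $f^1$ at $w$ is absorbed into the state-field correspondence for $Y(r_1^{L_0}\xi, w)$ via the dilation covariance $r^{L_0} Y(\zeta,z)r^{-L_0} = Y(r^{L_0}\zeta, rz)$, while the contribution at $0$ passes through to act on $r_2^{L_0}\eta$. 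The underlying vertex algebra identity is a direct consequence of the free fermion OPE $\psi(z)\psi^*(u) \sim (z-u)^{-1}$; a parallel argument using $(H^2)^\perp$ and the dual field handles \eqref{eqnIntroCommRels2}.

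The main obstacle is the first step: the absolute trace norm convergence must hold uniformly on compacta, not merely pointwise on finite-energy vectors. This is where one genuinely uses that vertex operators, which are not closable on their own, become trace class once conjugated by $r^{L_0}$ with $r < 1$. Once such uniform bounds are in place, the verification of the commutation relations becomes an essentially formal vertex algebra manipulation, and the theorem follows from the uniqueness of $E(\bbP_{w,r_1,r_2})$.
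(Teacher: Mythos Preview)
Your strategy inverts the paper's. You propose to (i) first prove directly that the vertex operator series $\sum_n (r_1^{L_0}\xi)_n r_2^{L_0} w^{-n-1}$ converges in trace norm, uniformly on compacta, and (ii) then verify that the resulting operator satisfies the $H^2(\bbP)$ commutation relations, concluding by uniqueness. The paper does the opposite: it takes a nonzero $T \in E(\bbP_x)$, whose existence and trace-class property are already secured by Theorem~\ref{thmIntroCFTProperties}, normalizes it so that $T(\Omega \otimes \eta) = r_2^{L_0}\eta$ (using the sewing and monoidal properties together with the explicit computations for the disk and the annulus), and then proves by induction on the length of $\xi$ as a word in creation and annihilation operators that $T(r_1^{-L_0}\xi \otimes \eta)$ agrees with the vertex operator expression. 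The inductive step feeds explicit elements of $H^2(\bbP_x)$ such as $((z-w)^n,\, r_1^{n+1/2}z^n,\, r_2^{1/2}(r_2 z - w)^n)$ into the commutation relations and matches the outcome against the Borcherds product formula; the operator-norm convergence estimates fall out of the same induction, since each step bounds the new sum by a geometric series in $|w|$ or $|r_2/w|$ times the previous inductive estimate.

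The difficulty with your ordering is step (i). Your sketch asserts that ``$(r_1^{L_0}\xi)_n$ is trace class with trace norm decaying rapidly in $|n|$,'' but this already fails for the generating fields: if $\xi = a(z^{-1})\Omega$ then $(r_1^{L_0}\xi)_n = r_1^{1/2} a(z^n)$, which has operator norm $r_1^{1/2}$ independently of $n$ and is never trace class. What one actually needs is that $\xi_n r_2^{L_0}$ is bounded with $\sum_n \|\xi_n r_2^{L_0}\|\,|w|^{-n-1} < \infty$, and even this is not delivered by a naive Wick-polynomial count: for a general finite-energy $\xi$ the Borcherds formula expresses $\xi_n$ as an \emph{infinite} sum of products of free fermion modes, and controlling those sums in operator norm uniformly in $n$ is precisely the kind of bootstrap the paper's induction supplies. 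The paper's point is that the hard analytic statement is a \emph{consequence} of the CFT existence theorem, not a prerequisite for it; your route would have to manufacture those estimates from scratch, and the argument you outline does not do so.
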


Most of the content of this paper is adapted from the author's Ph.D. thesis \cite{Ten14}.

\subsection{Acknowledgements}
We would like to thank Vaughan Jones for supporting this project while the author was a graduate student, and after.
We would also like to thank Antony Wassermann for his guidance throughout the project, especially in indicating the role of the Cauchy transform, and the relation between vertex operators and Segal CFT.
We would like to thank Andr\'e Henriques for many enlightening conversations on geometric conformal field theory.
We would also like to thank Dietmar Bisch, Terry Gannon, Yasuyuki Kawahigashi, Roberto Longo, Peter Teichner, and Feng Xu for helpful conversations.

We gratefully acknowledge the support and hospitality of the Max Planck Institute for Mathematics, Bonn.
This work was also supported in part by NSF DMS grant 0856316.

\settocdepth{subsection}
\section{Background}
\label{secBackground}

\subsection{Representations of $\CAR(H)$ and $\Diff(S^1)$}
\label{secFABackground}

\subsubsection{(Super) Hilbert spaces}

Let $H$ and $K$ be complex Hilbert spaces. We write $\B(H,K)$ for the Banach space of bounded linear maps $x: H \to K$, equipped with the operator norm. 
We write $\B_p(H,K)$ for the ideal of $\B(H,K)$ consisting of $x \in \B(H,K)$ which satisfy
$$
\norm{x}_p := \tr((x^*x)^{p/2})^{1/p} < \infty.
$$ 
Elements of $\B_1(H,K)$ are called trace class maps, and elements of $\B_2(H,K)$ are called Hilbert-Schmidt maps. 
The inner product $\ip{x,y} = \tr(y^*x)$ makes $\B_2(H,K)$ into a Hilbert space. 

When $H = K$ we simply write $\B(H)$ and $\B_p(H)$. 
In this case we define $\P(H)$ and $\U(H)$ to be the set of projections ($p^*=p^2=p$) and the group of unitary operators ($u^*=u^{-1}$) on $H$.

Trace class maps have a partial trace operation $\tr_L:\B_1(H \otimes L, K \otimes L) \to \B_1(H, K)$. 
The partial trace is continuous for the trace norms on $\B_1(H \otimes L, K \otimes L)$ and $B_1(H, K)$, and it is characterized by the property that if $x_1 \in \B_1(H,K)$ and $x_2 \in \B_1(L)$ then 
$$
\tr_L(x_1 \otimes x_2) = x_1 \tr(x_2).
$$ 
From this characterization one can deduce the tracial property 
$$
\tr_L((\Id_K \otimes x_2)y) = \tr_L(y(\Id_H \otimes x_2))
$$
for all $y \in \B_1(H \otimes L, K \otimes L)$.

A {\it super Hilbert space} is a Hilbert space $H$ with a $\Z/2$-grading, i.e. a decomposition $H = H^0 \oplus H^1$. Elements of $H^0$ (resp. $H^1$) are called even (resp. odd) homogeneous elements. A super Hilbert space comes with a grading involution $d_H$ which acts by $\Id$ on $H^0$ and by $-\Id$ on $H^1$.

The tensor product of super Hilbert spaces $H \otimes K$ is again a super Hilbert space, with 
$$
(H \otimes K)^0 := (H^0 \otimes K^0) \oplus (H^1 \otimes K^1), \quad (H \otimes K)^1 := (H^0 \otimes K^1) \oplus (H^1 \otimes K^0).
$$

Super Hilbert spaces have a symmetric braiding $H \otimes K \overset{\beta_{H,K}}{\to} K \otimes H$, given on homogeneous elements by
\begin{equation}\label{eqSuperGrading}
\beta_{H,K}(\xi \otimes \eta) = (-1)^{p(\xi)p(\eta)} \eta \otimes \xi,
\end{equation}
where $p(\xi),p(\eta) \in \{0,1\}$ are the parities. Since $\beta$ is symmetric, for every permutation $\sigma \in S_n$ we have unitary isomorphisms
$$
H_1 \otimes \cdots \otimes H_n \overset{\beta(\sigma)}{\to} H_{\sigma(1)} \otimes \cdots \otimes H_{\sigma(n)}
$$
compatible with composition in $S_n$.

The symmetric braiding allows us to talk about the unordered tensor product of super Hilbert spaces $\bigotimes_{i \in I} H_i$, over a finite index set $I$. 
A map of unordered tensor products 
$$
x: \bigotimes_{i \in I} H_i \to \bigotimes_{j \in J} H^\prime_j
$$
is defined to be a family of maps between every ordered tensor product of the $\{H_i\}$ and $\{H_j^\prime\}$, compatible with the braiding. 
That is, for every pair of bijections $\alpha:\{1, \ldots, \abs{I}\} \to I$ and $\alpha^\prime:\{1, \ldots, \abs{J}\} \to J$, we have a linear map
$$
x_{\alpha,\alpha^\prime} : H_{\alpha(1)} \otimes \cdots H_{\alpha(\abs{I})} \to H^\prime_{\alpha^\prime(1)} \otimes \cdots \otimes H^\prime_{\alpha^\prime(1)} \otimes  \cdots \otimes H^\prime_{\alpha^\prime(\abs{J})},
$$
and these maps should satisfy
$$
x_{\alpha_2, \alpha_2^\prime} = \beta((\alpha_1^\prime)^{-1} \circ \alpha_2^\prime) x_{\alpha_1, \alpha_1^\prime} \beta((\alpha_2)^{-1} \circ \alpha_1).
$$
for all bijections $\alpha_i:\{1, \ldots, \abs{I}\} \to I$ and all $\alpha_i^\prime: \{1, \ldots, \abs{J}\} \to J$.
There are obvious notions of sum, composition and tensor product of maps of unordred tensor products obtained by applying the operations to compatible representatives.

Note that every $x:H_1 \otimes \cdots \otimes H_n \to K_1 \otimes \cdots \otimes K_m$ is a representative of some map of unordered tensor products, corresponding to the family $\beta(\sigma^\prime)x\beta(\sigma)$, where $\sigma \in S_n$ and $\sigma^\prime \in S_m$. 
We refer to this as the map of unoriented tensor products associated to $x$, and will denote it again by $x$ when there is no risk of confusion.

If $H$ and $K$ are super Hilbert spaces, then $\B(H,K)$ has a $\Z/2$-grading corresponding to the involution $x \mapsto d_K x d_H$. We identify $\B(H_1 \otimes H_2, K_1 \otimes K_2)$ with the graded tensor product of algebras $\B(H_1,K_1) \grotimes \B(H_2, K_2)$ as follows. 

If $x_i \in \B(H_i,K_i)$, define 
$$
x_1 \grotimes x_2 := x_1 d_{H_1}^{p(x_2)} \otimes x_2 \in \B(H_1 \otimes H_2, K_1 \otimes K_2)
$$
if the $x_i$ are homogeneous, and by extending linearly otherwise. If $y_i \in \B(K_i, L_i)$ we have
$$
(y_1 \grotimes y_2)(x_1 \grotimes x_2) = (-1)^{p(y_2)p(x_1)}(y_1x_1 \grotimes y_2x_2).
$$

We denote by $H^*$ the continuous dual of $H$, and write $\xi \mapsto \xi^*$ for the canonical conjugate linear isomorphism. 

There is a natural isomorphism $\mu_{H,K}:K \otimes H^* \to B_2(H,K)$ given by
\begin{equation}\label{eqHilbertSchmidtIso}
\psi \otimes \eta^* \mapsto\ip{\;\cdot\;, \eta} \psi.
\end{equation}
Observe that we have adopted the convention that inner products are linear in the first variable.

There is a natural $\B(K)- \B(H^*)^{op}$ bimodule structure on $K \otimes H^*$, and a natural $\B(K)-\B(H)$ bimodule structure on $\B_2(H,K)$. 
We pause to observe an intertwining relation between these structures.

For $x \in \B(H,K)$, let $\overline{x} \in \B(H^*,K^*)$ be given by $\overline{x}\xi^* = (x\xi)^*$.
\begin{Proposition}\label{propMuBimodularity}
If $\xi \in K \otimes H^*$, $x \in \B(H)$ and $y \in \B(K)$, then 
$$
\mu_{H,K}((\Id \grotimes \overline{x})\xi) = d_K^{p(x)}\mu_{H,K}(\xi)x^*, \quad
\mu_{H,K}((y \grotimes \Id)\xi) = y \; \mu_{H,K}(\xi).
$$
\end{Proposition}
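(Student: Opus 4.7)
The plan is to reduce to elementary tensors and then chase the super signs through the definitions of $\mu_{H,K}$ and the graded tensor product.

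First I would note that both sides of each identity are linear (and continuous) in $\xi$ and in $x$ or $y$, so by decomposing into homogeneous components it suffices to verify each equation on a simple tensor $\xi = \psi \otimes \eta^*$ with $\psi, \eta, x, y$ all homogeneous. Furthermore, since $\overline{x} \in \B(H^*)$ has the same parity as $x \in \B(H)$, and since the parity of $\eta^* \in H^*$ equals that of $\eta$, all the Koszul signs are determined by the parities of $\psi$, $\eta$, $x$, $y$.

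For the second identity, I would simply compute from the definition of the graded tensor action: because $p(\Id) = 0$, one has $(y \grotimes \Id)(\psi \otimes \eta^*) = y\psi \otimes \eta^*$, and applying $\mu_{H,K}$ yields $\ip{\,\cdot\,,\eta}y\psi = y\,\mu_{H,K}(\psi \otimes \eta^*)$, with no sign.

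For the first identity, the graded action gives
\[
(\Id \grotimes \overline{x})(\psi \otimes \eta^*) = d_K^{p(x)}\psi \otimes \overline{x}\eta^* = (-1)^{p(x)p(\psi)}\,\psi \otimes (x\eta)^*,
\]
so applying $\mu_{H,K}$ produces the operator $h \mapsto (-1)^{p(x)p(\psi)}\ip{h,x\eta}\psi$. On the other hand,
\[
\bigl(d_K^{p(x)}\mu_{H,K}(\psi \otimes \eta^*)x^*\bigr)(h) = d_K^{p(x)}\ip{x^*h,\eta}\psi = (-1)^{p(x)p(\psi)}\ip{h,x\eta}\psi,
\]
and the two expressions agree. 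Extending linearly to non-homogeneous $\xi$ and $x$ completes the argument.

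The only real step is keeping the sign bookkeeping straight: the sign $(-1)^{p(x)p(\psi)}$ appears once from moving $\overline{x}$ past $\psi$ via the graded tensor convention $x_1 \grotimes x_2 = x_1 d_{H_1}^{p(x_2)} \otimes x_2$ on the left, and once from the grading operator $d_K^{p(x)}$ acting on $\psi$ on the right, so they cancel. There is no deeper obstacle; the statement is essentially a compatibility check between the Hilbert--Schmidt identification $\mu_{H,K}$ and the super-module structure.
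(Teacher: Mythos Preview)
Your proof is correct and follows essentially the same approach as the paper: both reduce to homogeneous simple tensors $\psi \otimes \eta^*$ and then verify each identity by direct computation, tracking the Koszul sign $(-1)^{p(x)p(\psi)}$ through the definition of $\grotimes$ and the action of $d_K^{p(x)}$. Your explanation of where the sign arises is slightly more explicit than the paper's, but the argument is the same.
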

\begin{proof}
It suffices to check the relations when $\xi = \psi \otimes \eta^*$, when $\psi \in K$ and $\eta^* \in H^*$ are homogeneous vectors. We then have
\begin{align*}
\mu_{H,K}((\Id \grotimes \overline{x})(\psi \otimes \eta^*)) &= (-1)^{p(\psi)p(x)}\mu_{H,K} (\psi \otimes (x\eta)^*)\\
&= (-1)^{p(\psi)p(x)}\ip{ \; \cdot \;, x\eta}\psi\\
&= d_K^{p(x)}\ip{ x^* \; \cdot \;, \eta}\psi\\
&= d_K^{p(x)}\mu_{H,K}(\psi \otimes \eta^* )x^*
\end{align*}
which establishes the first relation. The second is calculated similarly:
\begin{align*}
\mu_{H,K}((y \grotimes \Id)(\psi \otimes \eta^*)) &= \mu_{H,K}(y\psi \otimes \eta^*)\\
&= \ip{\; \cdot \;, \eta}y\psi\\
&= y \; \mu_{H,K}(\psi \otimes \eta^*).
\end{align*}

\end{proof}

Define the {\it supertrace} $\tr^s :\B_1(H) \to \C$ by $\tr^s(x) = \tr(xd_H)$. Similarly, the partial supertrace 
$$
\tr^s_L : \B_1(H \otimes L, K \otimes L) \to \B_1(H,K)
$$
is defined by $\tr^s_L(x) = \tr_L(x(\Id \grotimes d_L))$. 

More generally, if 
$$
x \in \B_1(H_1 \otimes \cdots \otimes H_m, K_1 \otimes \cdots K_n)
$$
is a map of (ordered) tensor products and $H_{i^0} = K_{j^1} =: L$, then we define $\tr^s_{j^1i^0}(x)$ by using the braiding to move $H_{i^0}$ and $K_{j^1}$ all the way to the right, and then applying the definition of $\tr^s_L$ above. Specifically, let
$$
\beta: H_1 \otimes \cdots \otimes H_m \to H_1 \otimes \cdots \otimes H_{i^0 -1} \otimes H_{i^0 + 1} \otimes \cdots \otimes H_{m} \otimes H_{i^0}
$$
be the braiding, and similarly let $\beta^\prime$ be the braiding
$$
\beta^\prime: K_1 \otimes \cdots \otimes K_n \to K_1 \otimes \cdots \otimes K_{j^1 -1} \otimes K_{j^1 + 1} \otimes \cdots \otimes K_{n} \otimes K_{j^1}.
$$
Then we define
\begin{equation}\label{eqPartialSupertraceNotLast}
\tr^s_{j^1i^0}(x) := \tr^s_L(\beta^\prime x \beta^{-1} ).
\end{equation}

Now let $x:\bigotimes_{i \in I} H_i \to \bigotimes_{j \in J} K_j$ be a trace class map of \emph{unordered} tensor products, and fix $i^0 \in I$ and $j^1 \in J$ with $H_{i^0} = K_{j^1}=:L$. Then we can define a partial supertrace $\tr^s_{j^1i^0}(x)$ as a map of unordered tensor products
$$
\tr^s_{j^1i^0}(x): \bigotimes_{i \in I \setminus \{i^0\}} H_i \to \bigotimes_{j \in J \setminus \{j^1\}} K_j
$$
as follows. Given bijections $\alpha:\{1, \ldots, \abs{I}-1\} \to I \setminus \{i^0\}$ and $\alpha^\prime:\{1, \ldots, \abs{J}-1\} \to I \setminus \{j^1\}$, extend them to orderings $\tilde \alpha$ and $\tilde \alpha^\prime$ of $I$ and $J$, respectively, by putting $i^0$ and $j^1$ last.  
Now set
$$
\tr^s_{j^1i^0}(x)_{\alpha,\alpha^\prime} := \tr^s_{L}(x_{\tilde \alpha, \tilde \alpha^\prime}).
$$
It is straightforward to check that $\tr^s_{j^1i^0}(x)$ is a map of unordered tensor products, i.e., the maps $\tr_{j^1i^0}(x)_{\alpha,\alpha^\prime}$ satisfy the appropriate compatibility with the braiding.

Straightforward computation yields the following basic properties of the partial supertrace.
\begin{Proposition}\label{propSupertraceProperties} 
Let $x \in \B_1(H \otimes L, K \otimes L)$.
\begin{enumerate}
\item \label{itmSupertraceBimodular} If $y_1 \in \B(M, H)$ and $y_2 \in \B(K, M)$, then 
$$
\tr^s_L(x)y_1 = \tr^s_L(x(y_1 \grotimes \Id)), \quad \mbox{ and } \quad y_2\tr^s_L(x) = \tr^s_L((y_2 \grotimes \Id)x).
$$
\item \label{itmSupertracial} If $z \in \B(L)$, then 
$$
\tr^s_L((1 \grotimes z)x) = (-1)^{p(x)} \tr^s_L(x(1 \grotimes z))= (-1)^{p(x)p(z)} \tr^s_L(x(1 \grotimes z)).
$$
\end{enumerate}
\end{Proposition}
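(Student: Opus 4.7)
The plan is to derive each identity from the corresponding property of the ordinary partial trace $\tr_L$ by unfolding the two definitions
$$
\tr^s_L(x)=\tr_L\bigl(x(\Id\otimes d_L)\bigr),\qquad a\grotimes b=a\,d_{H_1}^{p(b)}\otimes b,
$$
and then tallying the Koszul signs produced by the identity $d_K\,x=(-1)^{p(x)}x\,d_H$ for homogeneous $x$.

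For item~(\ref{itmSupertraceBimodular}), the key observation is that $p(\Id_L)=0$, so $y_i\grotimes\Id_L=y_i\otimes\Id_L$. Since $(y_1\otimes\Id_L)$ and $(\Id\otimes d_L)$ act on disjoint tensor factors, they commute, giving
$$
\tr^s_L\bigl(x(y_1\grotimes\Id_L)\bigr)=\tr_L\bigl(x(\Id\otimes d_L)(y_1\otimes\Id_L)\bigr).
$$
Applying the right-bimodularity $\tr_L(y(a\otimes\Id))=\tr_L(y)a$ of the ordinary partial trace (stated just before the proposition) to $y=x(\Id\otimes d_L)$ yields $\tr^s_L(x)y_1$. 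The second identity of~(\ref{itmSupertraceBimodular}) is symmetric, using left-bimodularity of $\tr_L$.

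For item~(\ref{itmSupertracial}), I would assume $x$ and $z$ are homogeneous (and then extend bilinearly). Unfolding $1_K\grotimes z=d_K^{p(z)}\otimes z$ and using left-bimodularity to pull $d_K^{p(z)}$ outside the partial trace reduces the left-hand side to
$$
d_K^{p(z)}\,\tr_L\bigl((\Id\otimes z)x(\Id\otimes d_L)\bigr).
$$
Similarly, unfolding $1_H\grotimes z$ and pulling out $d_H^{p(z)}$ on the right turns the right-hand side into $\tr_L\bigl(x(\Id\otimes z)(\Id\otimes d_L)\bigr)d_H^{p(z)}$, up to the prefactor we are trying to identify. The two partial traces are swapped via the tracial property $\tr_L\bigl((\Id\otimes c)y\bigr)=\tr_L\bigl(y(\Id\otimes c)\bigr)$ in the $L$-factor, at the cost of one sign $(-1)^{p(z)}$ coming from $d_Lz=(-1)^{p(z)}zd_L$. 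The remaining mismatch between $d_K^{p(z)}$ on the left and $d_H^{p(z)}$ on the right is eliminated by noting that $\tr^s_L$ preserves parity (a short direct check from the definition), so $\tr^s_L(x(\Id\otimes z))$ is homogeneous of degree $p(x)+p(z)$ and therefore $d_K^{p(z)}\tr^s_L(x(\Id\otimes z))=(-1)^{(p(x)+p(z))p(z)}\tr^s_L(x(\Id\otimes z))d_H^{p(z)}$. Summing exponents, $(p(x)+p(z))p(z)+p(z)\equiv p(x)p(z)\pmod 2$, which produces the stated sign $(-1)^{p(x)p(z)}$.

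The only real obstacle is the sign bookkeeping in~(\ref{itmSupertracial}); item~(\ref{itmSupertraceBimodular}) is essentially formal once one recognises that $\grotimes$ with an identity is the ordinary tensor product, and all the needed properties of the ungraded partial trace are already collected in the preceding subsection.
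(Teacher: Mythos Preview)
Your proof is correct. The paper itself gives no proof of this proposition, merely asserting that ``straightforward computation yields'' the result, so your argument is precisely the kind of verification the paper omits. Your observation for item~(\ref{itmSupertraceBimodular}) that $y_i\grotimes\Id_L=y_i\otimes\Id_L$ immediately reduces the problem to the ungraded bimodularity already recorded for $\tr_L$, and your sign bookkeeping for item~(\ref{itmSupertracial}) is accurate; the key step that $\tr_L$ (and hence $\tr^s_L$) preserves parity follows from the fact that an odd operator on $L$ has only off-diagonal blocks with respect to the grading and therefore vanishing ordinary trace.

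One remark worth making: you correctly derive the sign $(-1)^{p(x)p(z)}$, but you do not address the first displayed equality $(-1)^{p(x)}$ in the statement. In fact that first form is not valid for general homogeneous $z$; taking $L$ purely even and $z=\Id_L$ with $x$ odd already gives a counterexample. The two signs agree precisely when $p(z)=1$, which is the only case ever invoked later in the paper (in the sewing argument of Section~\ref{subsubsecSewing}, where $z$ is always an $a(f)$ or $a(g)^*$). So your derivation of $(-1)^{p(x)p(z)}$ is the correct general statement, and the $(-1)^{p(x)}$ version should be read as the special case $p(z)=1$.
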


The partial supertrace also enjoys the expected associativity property.
\begin{Proposition}\label{propIteratedPartialTrace}
Let $x \in \B_1 (H \otimes L_1 \otimes L_2, K \otimes L_1 \otimes L_2)$. Then $\tr^s_{L_1 \otimes L_2}(x) = \tr^s_{L_1} \tr^s_{L_2}(x)$.
\end{Proposition}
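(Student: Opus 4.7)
The plan is to reduce to the analogous statement for the ordinary (non-super) partial trace, for which associativity $\tr_{L_1 \otimes L_2} = \tr_{L_1} \tr_{L_2}$ is standard. Two observations make this reduction straightforward. First, the grading on $L_1 \otimes L_2$ is defined by parities of factors, so $d_{L_1 \otimes L_2}$ and $d_{L_1} \otimes d_{L_2}$ act as the same scalar on each homogeneous summand $L_1^{p_1} \otimes L_2^{p_2}$, namely $(-1)^{p_1+p_2}$; hence $d_{L_1 \otimes L_2} = d_{L_1} \otimes d_{L_2}$. Second, the grading involutions are even, so for any even operator $e$ one has $\Id \grotimes e = \Id \otimes e$; thus the twisting operators appearing in the definition of $\tr^s$ can be handled as ordinary tensor products.

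The execution goes as follows. Using the definition, $\tr^s_{L_1 \otimes L_2}(x) = \tr_{L_1 \otimes L_2}\bigl(x(\Id_H \otimes d_{L_1} \otimes d_{L_2})\bigr)$ by the two observations above. On the iterated side, view $x \in \B_1((H \otimes L_1) \otimes L_2, (K \otimes L_1) \otimes L_2)$ so that $\tr^s_{L_2}(x) = \tr_{L_2}\bigl(x(\Id_{H \otimes L_1} \otimes d_{L_2})\bigr)$, and then $\tr^s_{L_1}\tr^s_{L_2}(x) = \tr_{L_1}\bigl(\tr^s_{L_2}(x)(\Id_H \otimes d_{L_1})\bigr)$. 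The key step is to push $(\Id_H \otimes d_{L_1})$ through the inner $\tr_{L_2}$ using the bimodularity property of the ordinary partial trace (the non-super analog of Proposition \ref{propSupertraceProperties}(\ref{itmSupertraceBimodular})):
$$
\tr_{L_2}\bigl(x(\Id_{H \otimes L_1} \otimes d_{L_2})\bigr)(\Id_H \otimes d_{L_1}) = \tr_{L_2}\bigl(x(\Id_H \otimes d_{L_1} \otimes d_{L_2})\bigr).
$$
Finally, apply the ordinary associativity $\tr_{L_1} \tr_{L_2} = \tr_{L_1 \otimes L_2}$ to conclude that both quantities equal $\tr_{L_1 \otimes L_2}\bigl(x(\Id_H \otimes d_{L_1} \otimes d_{L_2})\bigr)$.

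There is no real obstacle here; the argument is pure bookkeeping, and the only thing to be careful about is that no signs are introduced when commuting $d_{L_1}$ past the twist $\Id \otimes d_{L_2}$ or past $x$ inside the trace, which is guaranteed because $d_{L_1}$ is even. The same scheme, applied inside each representative $x_{\tilde\alpha, \tilde\alpha'}$ of an unordered tensor map, yields the corresponding statement for unordered tensor products; this requires no further work beyond noting that the braiding isomorphisms used to put the traced-out factor last are compatible on both sides.
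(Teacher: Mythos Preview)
Your argument is correct. The paper does not actually give a proof of this proposition; it merely states it as ``the expected associativity property'' following the remark that the basic properties of the partial supertrace follow from straightforward computation. Your reduction to the ordinary partial trace via $d_{L_1 \otimes L_2} = d_{L_1} \otimes d_{L_2}$ and the evenness of the grading involutions is exactly the kind of bookkeeping the paper has in mind, so there is nothing to compare.
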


Finally, we observe that the partial supertrace implements composition of maps of unordered tensor products.
\begin{Proposition}\label{propSupertraceComposition}
Let $x_1 \in \B_1(H, K \otimes L)$ and $x_2 \in \B_1(L \otimes M, N)$. We then have the identity of maps of unordered tensor products $\tr^s_L(x_2 \grotimes x_1) = (x_2 \grotimes \Id_K) \circ (\Id_M \grotimes x_1)$.
\end{Proposition}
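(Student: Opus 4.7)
The plan is to reduce to the case of rank-one $x_1$ and $x_2$ and verify the equality by direct computation on a general homogeneous input, the main work being careful bookkeeping of signs arising from the graded tensor product and the braidings. Conceptually, the proposition expresses the fact that the partial supertrace implements composition of morphisms in the symmetric monoidal super-category of Hilbert spaces, which underlies the sewing axiom of Theorem~\ref{thmIntroCFTProperties}.

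First, by bilinearity of $\grotimes$ and a standard density argument (finite-rank operators are trace-norm dense in $\B_1$, and the right-hand side is operator-norm continuous in $(x_1, x_2)$), it suffices to establish the identity when $x_1$ and $x_2$ are rank-one. Using the isomorphism $\mu$ of \eqref{eqHilbertSchmidtIso} together with further bilinearity, one may then take $x_1 = \mu_{H, K \otimes L}((k \otimes l_1) \otimes \phi^*)$ and $x_2 = \mu_{L \otimes M, N}(\chi \otimes (l_2 \otimes m_0)^*)$ with all vectors homogeneous. Next I would choose compatible ordered representatives on both sides: represent $x_2 \grotimes x_1$ as a map $L \otimes M \otimes H \to N \otimes K \otimes L$ and apply \eqref{eqPartialSupertraceNotLast} to move the domain $L$ to the last position via the graded braiding; represent the right-hand side composition of maps of unordered tensor products as $(x_2 \grotimes \Id_K) \circ \sigma \circ (\Id_M \grotimes x_1)$, where $\sigma: M \otimes K \otimes L \to L \otimes M \otimes K$ is the braiding sending $L$ to the front.

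Applying both expressions to a homogeneous vector $m \otimes h \in M \otimes H$ and unwinding the defining formulas $y \grotimes z = y\, d^{p(z)} \otimes z$ for the operator tensor product and $\sigma(u \otimes v) = (-1)^{p(u) p(v)} v \otimes u$ for the braiding, both sides reduce to the same scalar $\langle l_1, l_2 \rangle \langle m, m_0 \rangle \langle h, \phi \rangle$ multiplying the same vector $\chi \otimes k \in N \otimes K$, with an overall sign in front. The main obstacle --- essentially the entire content of the argument --- is verifying that the two sign exponents agree modulo $2$; after invoking the parity constraints $p(l_1) = p(l_2)$, $p(m) = p(m_0)$, $p(h) = p(\phi)$ forced by non-vanishing of the inner products, both exponents reduce to the common expression $p(k) p(l_1) + p(k) p(m_0) + p(\phi) p(m_0)$. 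This confirms the identity for rank-one inputs, and the general case follows from the density argument.
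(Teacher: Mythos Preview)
Your proposal is correct and takes essentially the same approach as the paper: both arguments first write out explicit ordered representatives for each side using the braiding and the definition \eqref{eqPartialSupertraceNotLast}, then reduce via trace-norm density to operators that are ``simple'' in the $L$-tensor factor, and finish with a direct sign check. The only cosmetic difference is that the paper keeps $x_1(\eta)=y_1(\eta)\otimes\lambda_1$ and $x_2(\lambda\otimes\mu)=\ip{\lambda,\lambda_0}y_2(\mu)$ with $y_1,y_2$ arbitrary trace-class maps rather than fully rank-one, which slightly shortens the sign computation; your full rank-one reduction is of course also sufficient.
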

\begin{proof}
Note that $x_2 \grotimes x_1 \in \B_1(L \otimes M \otimes H, N \otimes K \otimes L)$, and so the partial super trace $\tr^s_L(x_2 \grotimes x_1)$ is defined by precomposing with a braiding as in \eqref{eqPartialSupertraceNotLast}. 
That is, 
$$
\tr^s_L(x_2 \grotimes x_1) =\tr^s_L( (x_2 \grotimes x_1)\beta) = \tr_L( (x_2 \grotimes x_1)\beta(\Id_{M \otimes H} \otimes d_L)),
$$
where $\beta: M \otimes H \otimes L \to L \otimes M \otimes H$ is the braiding.

Also observe that $x_2 \grotimes \Id_K \in \B(L \otimes M \otimes K, N \otimes K)$ and $\Id_M \grotimes x_1 \in \B(M \otimes H, M \otimes K \otimes L)$. Thus a representative of the composition of maps of unordered tensor products $(x_2 \grotimes \Id_K) \circ (\Id_M \grotimes x_1)$ is given by $(x_2 \grotimes \Id_K)\beta^\prime(\Id_M \grotimes x_1)$, where
$$
\beta^\prime: M \otimes K \otimes L \to L \otimes M \otimes K
$$
is the braiding.

In light of the preceding discussion, we must prove that
\begin{equation}\label{eqSupertraceCompositionLiteral}
\tr_L( (x_2 \grotimes x_1)\beta(\Id_{M \otimes H} \otimes d_L)) = (x_2 \grotimes \Id_K)\beta^\prime(\Id_M \grotimes x_1)
\end{equation}
for all $x_1 \in \B_1(H, K \otimes L)$ and $x_2 \in \B_1(L \otimes M, N)$.

By the continuity of the partial trace, it suffices to check \eqref{eqSupertraceCompositionLiteral} when $x_1$ is given by 
$$
x_1(\eta) = y_1(\eta) \otimes \lambda_1
$$
for some homogeneous $y_1 \in \B_1(H,K)$ and a homogeneous $\lambda_1 \in L$. 
Similarly, we assume 
$$
x_2(\lambda \otimes \mu) = \ip{\lambda, \lambda_0} y_2(\mu)
$$
for a homogeneous $y_2 \in \B_1(M, N)$ and a homogeneous $\lambda_0 \in L$. 

For $\mu \in M$, $\eta \in H$ and $\lambda \in L$, we have
\begin{align*}
(x_2 \grotimes x_1)\beta(\Id_{M \otimes H} \otimes d_L)(\mu \otimes \eta \otimes \lambda) &= (x_2 d_{L \otimes M}^{p(x_1)} \otimes x_1)\beta(\Id_{M \otimes H} \otimes d_L)(\mu \otimes \eta \otimes \lambda)\\
 &=(-1)^{p(\lambda)(1+p(\eta)+p(\mu)) + p(x_1)(p(\lambda)+p(\mu))}(x_2 \otimes x_1)(\lambda \otimes \mu \otimes \eta)\\
&=(-1)^{p(\lambda)(1+p(\eta)+p(\mu)) + p(x_1)(p(\lambda)+p(\mu))} \ip{\lambda,\lambda_0}y_2(\mu)\otimes y_1(\eta) \otimes \lambda_1.
\end{align*}
Hence 
\begin{equation}\label{eqSupertraceCompositionComputation1}
\tr_L( (x_2 \grotimes x_1)\beta(\Id_{M \otimes H} \otimes d_L))(\mu \otimes \eta) =(-1)^{p(\lambda_1)(1+p(\eta)+p(\mu)) + p(x_1)(p(\lambda_1)+p(\mu))} \ip{\lambda_1, \lambda_0} y_2(\mu) \otimes y_1(\eta).
\end{equation}

On the other hand, 
\begin{align}\label{eqSupertraceCompositionComputation2}
(x_2 \grotimes \Id_K)\beta^\prime(\Id_M \grotimes x_1)(\mu \otimes \eta) &= (x_2 \otimes \Id_K)\beta^\prime(d_M^{p(x_1)} \otimes x_1)(\mu \otimes \eta)\nonumber\\
&=(-1)^{p(x_1)p(\mu)} (x_2 \otimes \Id_K)\beta^\prime(\mu \otimes y_1(\eta) \otimes \lambda_1)\nonumber\\
&=(-1)^{p(x_1)p(\mu)+p(\lambda_1)(p(y_1\eta)+p(\mu))} \ip{\lambda_1, \lambda_0} y_2(\mu) \otimes y_1(\eta).
\end{align}

It is clear that \eqref{eqSupertraceCompositionComputation1} and \eqref{eqSupertraceCompositionComputation2} agree up to sign.  

We can simplify the sign in \eqref{eqSupertraceCompositionComputation1} by working mod $2$, and we get
\begin{align}\label{eqSupertraceCompositionSign1}
p(x_1)p(\mu)+p(\lambda_1)(p(y_1\eta)+p(\mu)) &= (p(y_1)+p(\lambda_1))p(\mu) + p(\lambda_1)(p(y_1) + p(\eta) + p(\mu))\nonumber\\
&= p(\lambda_1)(p(y_1)+p(\eta)) + p(y_1)p(\mu).
\end{align}
On the other hand, simplifying the sign in \eqref{eqSupertraceCompositionComputation2} yields
\begin{align}\label{eqSupertraceCompositionSign2}
p(\lambda_1)(1+p(\eta)+p(\mu)) + p(x_1)(p(\lambda_1)+p(\mu)) &= p(\lambda_1)(1+p(\eta)+p(\mu)) + (p(\lambda_1)+p(y_1))(p(\lambda_1)+p(\mu))\nonumber\\
&= p(\lambda_1)(p(y_1)+p(\eta)) + p(y_1)p(\mu).
\end{align}
Since \eqref{eqSupertraceCompositionSign1} and \eqref{eqSupertraceCompositionSign2} agree, the signs in \eqref{eqSupertraceCompositionComputation1} and \eqref{eqSupertraceCompositionComputation2} agree, and thus we have established \eqref{eqSupertraceCompositionLiteral}, as desired.
\end{proof}

\subsubsection{Fermionic Fock space}\label{subsecFockSpace}

Given a complex Hilbert space $H$, the $*$-algebra $\CAR(H)$ is the universal unital $C^*$-algebra with generators $a(f)$ for $f \in H$ which are linear in $f$ and satisfy the canonical anticommutation relations
\begin{align*}
a(f)a(g) + a(g)a(f) &= 0,\\
a(f)a(g)^* + a(g)^* a(f) &= \ip{f,g} \Id.
\end{align*}

\begin{Remark}
The reader is welcome to replace $C^*$-algebra with $*$-algebra in the above definition with no loss of information, since the algebraic version has a unique $C^*$-norm.
\end{Remark}

There is an irreducible, faithful representation of $\CAR(H)$ on the Hilbert space
$$
\Lambda H = \bigoplus_{k =0}^\infty \Lambda^k H
$$
densely defined by $a(f)\zeta = f \wedge \zeta$. These operators are bounded, and $\norm{a(f)} = \norm{f}$. The exterior Hilbert space $\Lambda H$ is naturally a super Hilbert space, with $\Z/2$-grading inherited from the number grading. That is,
\begin{equation}\label{eqFockGrading}
(\Lambda H)^i = \bigoplus_{k =0}^\infty \Lambda^{2k+i} H.
\end{equation}
The subspace $\Lambda^0 H$ is spanned by a distinguished unit vector $\Omega$ which satisfies $a(f)^*\Omega = 0$ for all $f \in H$. 

There is a family of irreducible, faithful representations of $\CAR(H)$ indexed by $p \in \mathcal{P}(H)$ given as follows. 
Let $H_p = (pH)^* \oplus (\Id-p)H$, and define the representation $\pi_p:\CAR(H) \to \B(\Lambda H_p)$ by
$$
\pi_p(a(f)) = a((pf)^*)^* + a((\Id-p)f).
$$
We call $\Lambda H_p$ \textit{fermionic Fock space}, and denote it by $\mathcal{F}_{H,p}$, or simply $\mathcal{F}_p$ or $\F$ when the decorations are clear from context. Note that $\pi_p(a(f))$ is an odd operator on $\mathcal{F}_{H,p}$.

The distinguished unit vector $\Omega_p \in \Lambda^0 H_p$ is characterized, up to scalar multiples, by the equations
\begin{align}
\pi_p(a(f))\Omega_p &= 0 & \mbox{ for } & f \in pH,\label{eqnVacEqn1}\\
\pi_p(a(g))^*\Omega_p &= 0 & \mbox{ for } & g \in (\Id-p)H \label{eqnVacEqn2} .
\end{align}
In fact, the representation $(\mathcal{F}_p, \pi_p)$ is characterized up to unitary equivalence by the existence of a cyclic vector satisfying these equations (via the GNS construction).
The relations \eqref{eqnVacEqn1} and \eqref{eqnVacEqn2} are called ``vacuum equations.'
\begin{Definition}\label{defVacuumEquations}
Let $(\mathcal{K},\pi)$ be a representation of $\CAR(H)$, and let $q$ be a projection on $H$. A vector $\tilde \Omega_q \in \mathcal{K}$ is said to satisfy the $q$-vacuum equations if
\begin{align*}
\pi(a(f))\tilde \Omega_q &= 0 & \mbox{ for } & f \in qH,\\
\pi(a(g))^*\tilde \Omega_q &= 0 & \mbox{ for } & g \in (\Id-q)H .
\end{align*}
\end{Definition}

A crucial property of the Fock space construction is that it takes (unordered) direct sums to (unordered) tensor products.
\begin{Proposition}\label{propFockSumToTensor}
As super Hilbert spaces, we have natural isomorphisms $$\mathcal{F}_{H \oplus K, p \oplus q} \cong \mathcal{F}_{H,p} \otimes \mathcal{F}_{K,q}.$$ 
The isomorphism $\mathcal{F}_{H,p} \otimes \mathcal{F}_{K,q} \to \mathcal{F}_{K,q} \otimes \mathcal{F}_{H,p}$ induced by $H \oplus K \cong K \oplus H$ is the braiding of super Hilbert spaces.
The induced action of $\CAR(H \oplus K)$ on $\mathcal{F}_{H,p} \otimes \mathcal{F}_{K,q}$ is
\begin{equation}\label{eqTensorCARReps}
a(h+k) \mapsto \pi_p(a(h)) \grotimes \Id + \Id \grotimes \pi_q(a(k)).
\end{equation}
\end{Proposition}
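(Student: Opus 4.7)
The plan is to produce the isomorphism via the uniqueness statement for cyclic representations of $\CAR$ possessing a vacuum vector, and then check that the braiding claim and the CAR-action formula are simultaneously verified by that same isomorphism.

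\textbf{Step 1: verify \eqref{eqTensorCARReps} defines a $\CAR(H\oplus K)$-representation.} Set $A(h) := \pi_p(a(h)) \grotimes \Id$ and $B(k) := \Id \grotimes \pi_q(a(k))$, both of which are odd. Using the composition rule $(x_1 \grotimes x_2)(y_1 \grotimes y_2) = (-1)^{p(x_2)p(y_1)}(x_1y_1)\grotimes(x_2y_2)$, the diagonal terms $A(h)A(h')^* + A(h')^*A(h)$ and $B(k)B(k')^* + B(k')^*B(k)$ reduce directly to $\ip{h,h'}\Id$ and $\ip{k,k'}\Id$ by $\CAR(H)$ and $\CAR(K)$. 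The cross terms such as $A(h)B(k)^* + B(k)^*A(h)$ vanish because the two odd factors sit in disjoint tensor slots, and the graded composition rule produces exactly the cancelling sign $1+(-1)^{1\cdot 1}=0$. The same sign argument gives the pure-$aa$ and pure-$a^*a^*$ anticommutators, so the assignment $a(h+k)\mapsto A(h)+B(k)$ extends to a $*$-representation of $\CAR(H\oplus K)$.

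\textbf{Step 2: identify the vacuum and apply uniqueness.} The vector $\Omega_p \otimes \Omega_q$ is even, and for $f \in pH$, $g \in qK$ we have $A(f)(\Omega_p\otimes\Omega_q) = \pi_p(a(f))\Omega_p \otimes \Omega_q = 0$ and $B(g)(\Omega_p\otimes\Omega_q) = d_H\Omega_p \otimes \pi_q(a(g))\Omega_q = 0$, and analogously $A(f')^*(\Omega_p\otimes\Omega_q)=B(g')^*(\Omega_p\otimes\Omega_q) = 0$ for $f' \in (\Id-p)H$, $g' \in (\Id-q)K$. Thus $\Omega_p\otimes\Omega_q$ satisfies the $(p\oplus q)$-vacuum equations. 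Cyclicity follows because, applied to $\Omega_p\otimes\Omega_q$, words in the $A(h)^{(*)}$ produce $\xi\otimes\Omega_q$ for all $\xi$ in a dense subset of $\F_{H,p}$, words in the $B(k)^{(*)}$ produce $\Omega_p\otimes\eta$ (since $\Omega_p$ is even, the $d_H$ in $B(k)$ acts trivially), and concatenating yields all $\xi\otimes\eta$ up to a parity sign. The characterization of $(\F_{H\oplus K,p\oplus q}, \pi_{p\oplus q})$ by the existence of a cyclic vacuum vector then produces a unique unitary intertwiner $U: \F_{H\oplus K, p\oplus q} \to \F_{H,p}\otimes\F_{K,q}$ sending $\Omega_{p\oplus q}$ to $\Omega_p\otimes\Omega_q$. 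Since both vacua are even and every $\pi_{p\oplus q}(a(h+k))$ is carried to an odd operator, $U$ preserves the $\Z/2$-grading, establishing the first two claims together with \eqref{eqTensorCARReps}.

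\textbf{Step 3: the swap isomorphism is the super-braiding.} The linear isomorphism $H\oplus K \xrightarrow{\sim} K\oplus H$ intertwines the projections $p\oplus q$ and $q\oplus p$ and hence induces by functoriality of the construction in Step~2 a unitary $V:\F_{H,p}\otimes\F_{K,q}\to\F_{K,q}\otimes\F_{H,p}$ with $V(\Omega_p\otimes\Omega_q)=\Omega_q\otimes\Omega_p$ and $V A(h) = (\Id\grotimes\pi_p(a(h)))V$, $V B(k) = (\pi_q(a(k))\grotimes\Id)V$. The super-braiding $\beta_{\F_{H,p},\F_{K,q}}$ satisfies these same identities: a direct sign count using \eqref{eqSuperGrading} gives
\begin{equation*}
\beta\bigl((x\grotimes\Id)(\xi_1\otimes\xi_2)\bigr) = (-1)^{(p(x)+p(\xi_1))p(\xi_2)}\,\xi_2\otimes x\xi_1 = (\Id\grotimes x)\beta(\xi_1\otimes\xi_2),
\end{equation*}
and symmetrically for the other slot. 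Thus $V$ and $\beta$ are two unitary intertwiners between the same cyclic CAR representations sending vacuum to vacuum, so they coincide.

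\textbf{Main obstacle.} The only real technical burden is bookkeeping signs coming from the $d_H^{p(\cdot)}$ factor in the graded tensor product: one must be careful that the $d_H$ appearing in $B(k)=\Id\grotimes\pi_q(a(k))$ acts trivially on the distinguished vacuum (used in Step~2), and that the signs in the $V$-intertwining relations match exactly those produced by $\beta$ (used in Step~3). Both are routine once the $(-1)^{p(x_2)p(y_1)}$ rule is applied consistently.
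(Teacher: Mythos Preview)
The paper states this proposition without proof, treating it as a standard fact about fermionic Fock spaces. Your argument is correct and is exactly the natural route: build the representation \eqref{eqTensorCARReps} on $\F_{H,p}\otimes\F_{K,q}$, verify that $\Omega_p\otimes\Omega_q$ is a cyclic $(p\oplus q)$-vacuum, and invoke the GNS-type uniqueness recorded just before Definition~\ref{defVacuumEquations} to produce the isomorphism; the braiding claim then follows from the same uniqueness. The sign bookkeeping in Steps~1 and~3 is handled correctly.
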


\begin{Remark}\label{rmkFockUnorderedTensorProduct}
The naturality of the isomorphisms from Proposition \ref{propFockSumToTensor} make $\F_{H \oplus K, p \oplus q}$ a model for the unordered tensor product. That is, maps to and from $\F_{H \oplus K, p \oplus q}$ are equivalent to maps to and from the unordered tensor product $\bigotimes \left\{\F_{H,p}, \F_{K,q}\right\}$. As a result, we will not distinguish between $\F_{H \oplus K, p \oplus q}$ and $\bigotimes \left\{\F_{H,p}, \F_{K,q}\right\}$.  We will freely identify $\pi_{p \oplus q}$ and the representation given in equation \eqref{eqTensorCARReps}.
\end{Remark}

Since $H_{\Id-p} = H_p^*$, we have a natural unitary $\Phi: \F_{\Id-p} \to \F_p^*$ given by
$$
\Phi(\xi_1^* \wedge \cdots \wedge \xi_n^*) = (\xi_n \wedge \cdots \wedge \xi_1)^*
$$
for $\xi_i \in H_p$.
\begin{Proposition}\label{propPhiConjugation}
For all $f \in H$ we have 
$$\Phi \pi_{1-p}(a(f))\Phi^* = \overline{\pi_p(a((2p-\Id)f))^*}d_{\F_{p}^*}$$
and
$$\Phi \pi_{1-p}(a(f))^*\Phi^* = -\overline{\pi_p(a((2p-\Id)f))}d_{\F_p^*}.$$
\end{Proposition}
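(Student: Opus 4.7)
The plan is to prove both identities by reducing them to a single conjugation formula for $\Phi$, which I then specialize according to the decomposition $f = pf + (1-p)f$. The key formula I would aim to establish is: for any $v \in H_p$ (so that $v^* \in H_p^* = H_{1-p}$),
\begin{equation*}
\Phi \, a(v^*)\, \Phi^* \;=\; \overline{a(v)}\, d_{\F_p^*},
\end{equation*}
where on the left $a(v^*)$ denotes creation in $\Lambda H_{1-p}$ and on the right $a(v)$ denotes creation in $\Lambda H_p$. To prove this, I would compute $\Phi$ on a general wedge vector $v^* \wedge \xi_1^* \wedge \cdots \wedge \xi_n^*$: the definition of $\Phi$ gives $(\xi_n \wedge \cdots \wedge \xi_1 \wedge v)^*$, and moving $v$ to the front past the $n$ preceding factors produces a sign $(-1)^n$, which equals the action of $d_{\F_{1-p}}$ on $\Lambda^n H_{1-p}$ and transports to $d_{\F_p^*}$ because $\Phi$ preserves the $\Z/2$-grading. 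Taking adjoints and using $(\overline T)^* = \overline{T^*}$ together with the anticommutation of $d_{\F_p^*}$ with the odd operator $\overline{a(v)^*}$ yields the companion formula
\begin{equation*}
\Phi\, a(v^*)^*\, \Phi^* = -\overline{a(v)^*}\, d_{\F_p^*}.
\end{equation*}

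To deduce the first identity of the proposition, I would split $f = pf + (1-p)f$ and treat the two summands separately. For $f \in pH$: the defining formula for $\pi_{1-p}$ reduces $\pi_{1-p}(a(f))$ to pure creation by $f \in pH \subseteq H_{1-p}$; applying the key formula with $v = f^* \in (pH)^* \subseteq H_p$ (so that $v^* = f$) gives $\Phi \pi_{1-p}(a(f))\Phi^* = \overline{a(f^*)}\, d_{\F_p^*}$. On the right-hand side of the claim, $(2p-\Id)f = f$ and $\pi_p(a(f))^* = a(f^*)$ (creation in $\Lambda H_p$), so the two expressions match. For $f \in (1-p)H$: $\pi_{1-p}(a(f))$ is annihilation by $f^*$ in $\Lambda H_{1-p}$, and the adjoint of the key formula with $v = f \in (1-p)H \subseteq H_p$ gives $\Phi \pi_{1-p}(a(f))\Phi^* = -\overline{a(f)^*}\, d_{\F_p^*}$. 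On the right-hand side, $(2p-\Id)f = -f$ and $\pi_p(a(f))^* = a(f)^*$, again producing $-\overline{a(f)^*}\, d_{\F_p^*}$. The second identity then follows by taking the adjoint of the first and using that $\overline{\pi_p(a((2p-\Id)f))}$ is an odd operator on $\F_p^*$, so that $d_{\F_p^*}$ and $\overline{\pi_p(a((2p-\Id)f))}$ anticommute; this converts the $d_{\F_p^*}$ on the left of the adjoint into $-d_{\F_p^*}$ on the right, producing exactly the sign displayed in the statement.

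The main obstacle is the sign bookkeeping. The crucial points are that the order reversal in $\Phi$'s definition produces precisely the factor $(-1)^n$ that becomes $d_{\F_p^*}$ under the grading, and that the sign contributed by the $(2p-\Id)$ twist on the $(1-p)H$ summand is exactly what is needed to reconcile the sign produced when $d_{\F_p^*}$ anticommutes through the odd operator $\overline{a^*}$ in the adjoint key formula. An additional notational subtlety lies in tracking the canonical identification $H_p^* = H_{1-p}$, which sends $v \in pH \subseteq H_p$ to $(v^*)^* \in H_{1-p}$ via double duality, and $v \in (1-p)H \subseteq H_p$ to $v^* \in ((1-p)H)^* \subseteq H_{1-p}$.
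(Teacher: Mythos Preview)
Your proposal is correct and follows essentially the same approach as the paper. Both arguments rest on the direct computation of $\Phi$ on wedge vectors and the observation that the two identities are adjoint to one another; the only organizational difference is that the paper proves the first identity for $f \in pH$ and the second for $f \in (1-p)H$ (noting these suffice by equivalence), whereas you isolate the creation-operator conjugation as a single key formula and then specialize it and its adjoint to both summands of $f$.
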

\begin{proof}
The two identities are clearly equivalent for every fixed $f \in H$. We prove the first for $f \in pH$ and the second for $f \in (\Id-p)H$. 

If $f \in pH$, then the first identity reads 
$$
\Phi \pi_{1-p}(a(f))\Phi^* = \overline{\pi_p(a(f))^*}d_{\F_{p}^*}.
$$ 
Applying the left-hand side to $\omega^* \in (\Lambda^n H_p)^*$ yields $(\omega \wedge f^*)^*$, and applying the right-hand side yields $(-1)^n (f^* \wedge \omega)^*$. The proof of the second identity when $f \in (\Id-p)H$ is similar.
\end{proof}

The natural question of when $\pi_p$ and $\pi_q$ are unitarily equivalent is answered by the following theorem.
\begin{Theorem}\label{thmSegalEq}
The following are equivalent:
\begin{enumerate}[label=(\roman*)]
\item \label{itmSegeq1}$(\F_{H,p}, \pi_p)$ and $(\F_{H,q},\pi_q)$ are unitarily equivalent representations of $\CAR(H)$.
\item \label{itmSegeq2}There exists a unit vector $\tilde \Omega_q \in \F_{H,p}$, which will be unique up to phase, satisfying the $q$-vacuum equations.
\item \label{itmSegeq3}$p - q$ is a Hilbert-Schmidt operator on $H$.
\end{enumerate}
\end{Theorem}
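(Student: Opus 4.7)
The plan is to prove (i) $\Leftrightarrow$ (ii) directly via a GNS/cyclicity argument, and then handle (ii) $\Leftrightarrow$ (iii) by analyzing the Fock-space expansion of a $q$-vacuum vector in $\F_{H,p}$.

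For (i) $\Rightarrow$ (ii), I would transport $\Omega_q$ across any intertwining unitary $U \colon \F_{H,q} \to \F_{H,p}$: the identities $U\pi_q(a(f)) = \pi_p(a(f))U$ (and their $a^*$-versions) show that $\tilde\Omega_q := U\Omega_q$ satisfies the $q$-vacuum equations. Conversely, the $q$-vacuum equations together with the CAR relations inductively determine every matrix coefficient $\ip{\pi_p(x)\tilde\Omega_q, \tilde\Omega_q}$ as a polynomial in $q$ (the standard quasi-free/Wick computation), so this state on $\CAR(H)$ agrees with $\omega_q(x) := \ip{\pi_q(x)\Omega_q, \Omega_q}$. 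Cyclicity of $\Omega_q$ (from irreducibility of $\pi_q$) then extends the densely defined map $\pi_q(x)\Omega_q \mapsto \pi_p(x)\tilde\Omega_q$ to a unitary equivalence. Uniqueness of $\tilde\Omega_q$ up to phase is automatic: two such vectors yield equivalences $U, U'$ with $U^*U' \in \pi_q(\CAR(H))' = \C\cdot\Id$ by irreducibility.

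For (iii) $\Rightarrow$ (ii), I would construct $\tilde\Omega_q$ as an explicit Gaussian in $\F_{H,p}$. Using the polarization $H = pH \oplus (\Id-p)H$, the Hilbert-Schmidt off-diagonal block of $q$ corresponds, via \eqref{eqHilbertSchmidtIso}, to a vector $\tau$ in the Hilbert-Schmidt tensor factor $(pH)^* \otimes (\Id-p)H \subset \Lambda^2 H_p$. I would set $\tilde\Omega_q$ proportional to $\exp(\tfrac12 \tau)\Omega_p = \sum_{k \ge 0} \tfrac{1}{k!\, 2^k}\tau^{\wedge k}$; the squared norm of the $k$th term is controlled by a multiple of $\|\tau\|_2^{2k}$, so convergence in $\F_{H,p}$ is equivalent to Hilbert-Schmidt-ness of $\tau$. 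A direct computation using $\pi_p(a(f)) = a((pf)^*)^* + a((\Id-p)f)$ and the CAR relations then verifies the $q$-vacuum equations, once $\tau$ is matched to the correct blocks of $q$.

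For (ii) $\Rightarrow$ (iii), the main obstacle, I would reverse this construction. Expand $\tilde\Omega_q = \sum_n \psi_n$ with $\psi_n \in \Lambda^n H_p$; the $q$-vacuum equations translate into recursions of the shape $\mathrm{contract}_{(pf)^*}(\psi_{n+1}) = -((\Id-p)f) \wedge \psi_{n-1}$ for $f \in qH$, together with analogous relations from the $a^*$-vacuum equations for $g \in (\Id-q)H$. Normalizing $\psi_0 = c\,\Omega_p$, the $n=1$ case pins down $\psi_2$ as the element of $(pH)^* \otimes (\Id-p)H$ representing the off-diagonal block $(\Id-p)qp$, so $\|\psi_2\|<\infty$ already forces this block to be Hilbert-Schmidt. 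Combining this with $q^2 = q$ (which makes $pqp - (pqp)^2 = pq(\Id-p)qp$ trace class) and a parity/Fredholm-index comparison of the vacuum sectors of $\pi_p$ and $\pi_q$ (matching the kernel dimensions of the diagonal blocks) upgrades the estimate to Hilbert-Schmidt-ness of $p-q$. This last upgrade is the one delicate point; the off-diagonal estimate from $\|\psi_2\|<\infty$ and the Gaussian construction in the other direction are essentially bookkeeping.
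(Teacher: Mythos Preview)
The paper itself does not prove this theorem; it cites the literature (Thieullen, Wassermann, de la Harpe--Jones) and only records the explicit product formula for $\tilde\Omega_q$ later in the proof of Lemma~\ref{lemExistenceCriterion}. So there is no ``paper's own proof'' to compare against, only your sketch to evaluate on its merits.

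Your (i)$\Leftrightarrow$(ii) argument is correct and standard.

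Both your (iii)$\Rightarrow$(ii) and (ii)$\Rightarrow$(iii) arguments share the same genuine gap: they only treat the \emph{generic} case where $pH\cap(\Id-q)H=0$ and $(\Id-p)H\cap qH=0$. For (iii)$\Rightarrow$(ii), the pure Gaussian $\exp(\tau)\Omega_p$ can only produce a $q$-vacuum when $qH$ is the graph of a bounded operator $pH\to(\Id-p)H$; if either intersection above is nonzero (which the Hilbert--Schmidt hypothesis permits, provided it is finite-dimensional), no Gaussian with $\Lambda^0$-component $\Omega_p$ can satisfy the vacuum equations, and one must premultiply by a finite word in creation/annihilation operators to shift into the correct ``charge sector.'' This is exactly the prefactor $y$ appearing in the formula \eqref{eqnProductFormulaForVacuum} the paper quotes later, and it is why Proposition~\ref{propVacuumHomogeneous} records a parity that can be odd.

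For (ii)$\Rightarrow$(iii) the same issue bites: you normalize $\psi_0=c\,\Omega_p$ with $c\neq 0$, but nothing you have said rules out $\psi_0=0$. Indeed, if say $(\Id-p)H\cap qH\neq 0$, pick $0\neq f$ in this intersection; then $\pi_p(a(f))\tilde\Omega_q=0$ forces the $\Lambda^1$-component $a((\Id-p)f)\psi_0$ to vanish, hence $\psi_0=0$. Your recursion then never gets started, and the ``parity/Fredholm-index comparison'' you allude to is doing all the work without being explained. The honest fix is to first locate the lowest nonvanishing component $\psi_m\in\Lambda^m H_p$, show it is a pure wedge spanning exactly the (finite-dimensional) intersections, and then run your $\psi_2$-argument on the quotient. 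Without that step the implication is not proved.

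A secondary imprecision: in (ii)$\Rightarrow$(iii) you identify $\psi_2$ with the off-diagonal block $(\Id-p)qp$, but the recursion actually pins down $\psi_2$ as ($-c$ times) the \emph{angle operator} $T$ sending $pf\mapsto(\Id-p)f$ for $f\in qH$, which differs from $(\Id-p)qp$ by composition with $pqp$. This does not break the argument---Hilbert--Schmidtness of $T$ still forces $p-q\in\B_2$ via the explicit formula $p-q=\begin{pmatrix}T^*T(1+T^*T)^{-1} & * \\ * & *\end{pmatrix}$---but the identification as stated is not correct.
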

This result is often called the Shale-Stinespring equivalence condition, and there are many proofs in the literature. A simple version of the argument may be found in the textbook \cite[Thm. 10.7]{Th92}. A more concise version of the constructive proof that \ref{itmSegeq3} implies \ref{itmSegeq2} and \ref{itmSegeq1} is in \cite[\S3]{Wa98}, and an abstract proof using von Neumann algebra techniques is given in \cite[Thm.  8.23]{dlHJ}.

If $u \in \mathcal{U}(H)$,  the \textit{Bogoliubov automorphism} $\alpha_u$ of $\CAR(H)$ is characterized by $\alpha_u(a(f)) = a(uf)$. We say that an automorphism $\alpha$ of a $C^*$-algebra $A$ is \textit{implemented} in a representation $\pi: A \to \mathcal{B}(\H_\pi)$ if there is a unitary $U \in \mathcal{U}(\H_\pi)$ such that $\operatorname{Ad} U \circ \pi = \pi \circ \alpha$. 
If $\pi$ is irreducible then an implementing unitary $U$ will be unique up to phase.

\begin{Corollary}\label{corImplementationOfBogoluibov}
The Bogoliubov automorphism $\alpha_u$ is implemented in $\pi_p$ if and only if $[u,p]$ is Hilbert-Schmidt. If $\alpha_u$ is implemented by $U$, then $U\Omega_p = \tilde \Omega_q$ for $q = upu^*$. In particular, $\Omega_p$ is an eigenvector for $U$ if and only if $[u,p] = 0$. 
\end{Corollary}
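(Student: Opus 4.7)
The plan is to reduce everything to the Shale--Stinespring criterion (Theorem \ref{thmSegalEq}), using the vacuum-equation characterization of Fock representations as the bridge between unitary implementability and the Hilbert--Schmidt condition. The key observation is that $\alpha_u$ being implemented in $\pi_p$ is, by irreducibility of $\pi_p$, exactly the statement that $\pi_p$ and the twisted representation $\pi_p \circ \alpha_u$ are unitarily equivalent as representations of $\CAR(H)$ on $\F_{H,p}$.

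For the equivalence of \textbf{(i)}--\textbf{(iii)}, I would first compute what vacuum equations the vector $\Omega_p$ satisfies for the twisted representation $\pi_p \circ \alpha_u$. Since $(\pi_p \circ \alpha_u)(a(f))\Omega_p = \pi_p(a(uf))\Omega_p$ vanishes iff $uf \in pH$, i.e.\ iff $f \in (u^*pu)H$, and similarly for the starred equation, the vector $\Omega_p$ is a cyclic unit vector for $\pi_p \circ \alpha_u$ satisfying the $(u^*pu)$-vacuum equations. Applying Theorem \ref{thmSegalEq}\ref{itmSegeq2}$\Leftrightarrow$\ref{itmSegeq3} now tells us that $\pi_p \circ \alpha_u$ is unitarily equivalent to $\pi_p$ iff $p - u^*pu$ is Hilbert--Schmidt. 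Writing $p - u^*pu = -u^*(up - pu) = -u^*[u,p]$ and using unitarity of $u^*$ converts this to the condition that $[u,p]$ is Hilbert--Schmidt.

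For the identification $U\Omega_p = \tilde\Omega_q$, assume $U$ implements $\alpha_u$, so $U\pi_p(a(f))U^* = \pi_p(a(uf))$. Rearranging and applying to $\Omega_p$ yields, for any $g \in H$,
$$
\pi_p(a(g))(U\Omega_p) = U\,\pi_p(a(u^*g))\Omega_p,
$$
which vanishes exactly when $u^*g \in pH$, i.e.\ when $g \in (upu^*)H$. The analogous calculation with $a(g)^*$ shows that $U\Omega_p$ satisfies the $q$-vacuum equations for $q = upu^*$. Uniqueness in Theorem \ref{thmSegalEq}\ref{itmSegeq2} then forces $U\Omega_p = \tilde\Omega_q$ after absorbing a phase into $U$ (the phase ambiguity is exactly the standard ambiguity in the implementer of an automorphism of an irreducible representation). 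For the final assertion, $\Omega_p$ is an eigenvector of $U$ iff $\tilde\Omega_{upu^*}$ is proportional to $\Omega_p$; but $\Omega_p$ itself satisfies the $p$-vacuum equations, and a nonzero vector determines the projection in its vacuum equations uniquely, so this is equivalent to $upu^* = p$, i.e.\ $[u,p] = 0$.

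There is no real obstacle: once one notices that $\Omega_p$ is a cyclic vector for the twisted representation $\pi_p \circ \alpha_u$, and that $U\Omega_p$ satisfies vacuum equations for a shifted projection, Theorem \ref{thmSegalEq} delivers everything. The only thing to keep straight is the difference between the projection $u^*pu$ (governing $\Omega_p$ in the twisted representation) and the projection $q = upu^*$ (governing $U\Omega_p$ in the original representation); both conditions $p - u^*pu \in \B_2$ and $p - upu^* \in \B_2$ reduce to $[u,p] \in \B_2$ after one unitary twist.
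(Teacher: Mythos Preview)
Your proof is correct and follows exactly the approach the paper intends: the paper states this result as an immediate corollary of Theorem~\ref{thmSegalEq} without writing out a proof, and what you have done is precisely the standard unpacking of that implication via the vacuum-equation and GNS characterizations. The bookkeeping distinguishing $u^*pu$ (for $\Omega_p$ in the twisted representation) from $upu^*$ (for $U\Omega_p$ in the original representation) is handled correctly, and your justification that a nonzero vector determines its vacuum projection uniquely is sound (if $\Omega$ satisfied both $p$- and $q$-vacuum equations with $p\ne q$, one finds a nonzero $h$ with $a(h)\Omega=a(h)^*\Omega=0$, contradicting the CAR).
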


\begin{Definition}\label{defUres}
Define the \textit{restricted general linear group} 
$$
GL_{res}(H,p) = \{x \in GL(H) : [x,p] \in \B_2(H) \}.
$$
and the \textit{restricted unitary group} 
$$
\U_{res}(H,p) = GL_{res}(H,p) \cap \U(H).
$$
\end{Definition}

We give $\U_{res}$ the topology generated by the strong operator topology, along with the pseudometric
$
\norm{[u-v,p]}_{2}.
$
With this topology, $\U_{res}$ is a topological group, but we will not need this fact.

In light of Corollary \ref{corImplementationOfBogoluibov}, there is a natural projective representation of $\U_{res}(H,p)$ on $\F_{H,p}$ called the \textit{basic representation}, which we will write $u \mapsto U$. The basic representation is characterized by
\begin{equation}\label{eqnBasicRep}
U\pi_p(a(f))U^* = \pi_p(a(uf))
\end{equation}
for all $f \in H$. The basic representation restricts to an honest representation on the subgroup of unitary operators $u$ commuting with $p$.  On this subgroup, a lift to $\U(\F_{H,p})$ is given by choosing $U$ so that $U\Omega = \Omega$.

\begin{Theorem}\label{thmBasicRepContinuous}
The basic representation is strongly continuous (i.e. continuous as a map into the projective unitary group $\mathcal{PU}(\F_{H,p})$ given the quotient topology of the strong operator topology).
\end{Theorem}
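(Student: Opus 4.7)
The plan is to reduce to continuity at the identity $1 \in \U_{res}(H,p)$ and then verify it there using the Shale--Stinespring construction of the vacuum. Given $u_n \to u$ in $\U_{res}(H,p)$, I would first show that $v_n := u^{-1} u_n \to 1$ in $\U_{res}(H,p)$. Strong operator convergence is immediate, and for the Hilbert--Schmidt part one writes
$$
[v_n,p] = u^{-1}[u_n,p] + [u^{-1},p]u_n
$$
and uses that $[u^{-1},p](u_n - u) \to 0$ in $\B_2$. The latter follows from a dominated convergence argument applied to the spectral decomposition of $[u^{-1},p]^{*}[u^{-1},p]$, using that $(u_n - u)(u_n - u)^{*} \to 0$ in SOT with $\|u_n - u\|$ uniformly bounded. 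Since composition and inversion in $\mathcal{PU}(\F_{H,p})$ are continuous for the quotient SOT, it suffices to prove: if $v_n \to 1$ in $\U_{res}(H,p)$, then for an appropriate choice of implementers $V_n$, we have $V_n \to \Id$ in SOT on $\F_{H,p}$.

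For this reduced problem, set $q_n := v_n p v_n^{*}$, so $\|q_n - p\|_{\B_2} \to 0$. Eventually $q_n$ is close enough to $p$ that $q_n H$ is the graph of a Hilbert--Schmidt operator $T_n : pH \to (\Id - p)H$ with $\|T_n\|_{\B_2} \to 0$. Following the classical coherent-state construction (see e.g.\ \cite[\S 3]{Wa98}), one forms a quadratic creation operator $\hat T_n$ on $\F_{H,p}$ by second-quantizing $T_n$ along an orthonormal basis. The vector $\exp(\hat T_n)\Omega_p$ lies in the even component of $\F_{H,p}$, satisfies the $q_n$-vacuum equations of Definition \ref{defVacuumEquations}, and has squared norm $\det(\Id + T_n^{*} T_n)$. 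By Theorem \ref{thmSegalEq}, uniqueness of the $q_n$-vacuum up to phase identifies this vector, after normalization, with $V_n \Omega_p$ for a unique implementer $V_n$ of $\alpha_{v_n}$:
$$
V_n \Omega_p = \det(\Id + T_n^{*} T_n)^{-1/2}\exp(\hat T_n)\Omega_p.
$$
As $\|T_n\|_{\B_2} \to 0$, the determinant tends to $1$, the first-order term $\hat T_n\Omega_p \in \Lambda^2 H_p$ satisfies $\|\hat T_n\Omega_p\| = \|T_n\|_{\B_2} \to 0$, and the higher-order contributions are controlled by the same determinant formula. Hence $V_n\Omega_p \to \Omega_p$ in norm.

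To extend to a dense subspace, I would apply the intertwining relation \eqref{eqnBasicRep}:
$$
V_n \pi_p(a(f_1))^{\#_1}\cdots \pi_p(a(f_k))^{\#_k}\Omega_p = \pi_p(a(v_n f_1))^{\#_1}\cdots \pi_p(a(v_n f_k))^{\#_k} V_n\Omega_p.
$$
Since $\|\pi_p(a(v_n f_j)) - \pi_p(a(f_j))\| = \|v_n f_j - f_j\| \to 0$ and $V_n\Omega_p \to \Omega_p$, the right-hand side converges in norm to $\pi_p(a(f_1))^{\#_1}\cdots\pi_p(a(f_k))^{\#_k}\Omega_p$. Such vectors are dense by cyclicity of $\Omega_p$, and since $\|V_n\| = 1$ a $3\varepsilon$ argument yields $V_n \to \Id$ in SOT, and in particular $[V_n] \to [\Id]$ in $\mathcal{PU}(\F_{H,p})$.

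The main analytic obstacle is the second step. Theorem \ref{thmSegalEq} supplies only the existence of $\tilde\Omega_{q_n}$, so the substantive content is the explicit second-quantization $T \mapsto \hat T$, the norm identity $\|\exp(\hat T)\Omega_p\|^2 = \det(\Id + T^{*} T)$, and the quantitative bound $\|\exp(\hat T_n)\Omega_p - \Omega_p\| = O(\|T_n\|_{\B_2})$. Verifying the vacuum equations for the coherent state and controlling the exponential series by the determinant formula is where the bulk of the technical work lies; everything else is a formal consequence of the intertwining relation and uniform unitary boundedness.
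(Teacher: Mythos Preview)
Your outline is correct and, in fact, is essentially the argument given in the reference the paper cites: the paper does not supply its own proof but defers to \cite[\S 3]{Wa98}, and your coherent-state construction of $\tilde\Omega_{q_n}=\det(\Id+T_n^*T_n)^{-1/2}\exp(\hat T_n)\Omega_p$ together with the intertwining relation and a density/$3\varepsilon$ argument is exactly the method found there. One small remark: in your reduction step, the claim that $[u^{-1},p](u_n-u)\to 0$ in $\B_2$ is cleanest via $\|[u^{-1},p](u_n-u)\|_2=\|(u_n-u)^*[u^{-1},p]^*\|_2$, using that $(u_n-u)^*\to 0$ strongly (which follows from unitarity) and that strong convergence times a fixed Hilbert--Schmidt operator goes to zero in $\B_2$ by dominated convergence over a singular value expansion; this is equivalent to what you wrote but perhaps more transparent.
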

A proof of this theorem is given in \cite[\S 3]{Wa98}.

Note that the grading operator $d_{\F_{H,p}}$ for the $\Z/2$ grading on $\F_{H,p}$ given by \eqref{eqFockGrading} implements the Bogoliubov automorphism $\alpha_{-\Id}$. We will simply write $d$ for this grading operator when the Fock space that it acts on is clear.

\begin{Proposition}\label{propVacuumHomogeneous}
The vectors $\tilde \Omega_q$ from Theorem \ref{thmSegalEq} are homogeneous. The parity of $\tilde \Omega_q$ is the parity of $\dim \left(pH \cap (1-q)H\right) + \dim \left((1-p)H \cap qH\right)$.
\end{Proposition}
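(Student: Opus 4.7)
For homogeneity, the plan is a short uniqueness argument. The grading operator $d = d_{\F_{H,p}}$ implements the Bogoliubov automorphism $\alpha_{-\Id}$, so $d\,\pi_p(a(f)) = -\pi_p(a(f))\,d$ for every $f \in H$. If $\tilde\Omega_q$ satisfies the $q$-vacuum equations, this relation shows that $d\tilde\Omega_q$ also satisfies them. The uniqueness clause \ref{itmSegeq2} of Theorem \ref{thmSegalEq} then gives $d\tilde\Omega_q = c\,\tilde\Omega_q$ for some unit scalar $c$, and $d^2 = \Id$ forces $c = \pm 1$, establishing homogeneity.

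For the parity, the plan is to invoke the canonical orthogonal decomposition of $H$ associated to the pair of projections $(p,q)$:
$$
H \;=\; H_{00} \oplus H_{01} \oplus H_{10} \oplus H_{11} \oplus H_{\mathrm{gen}},
$$
where $H_{ij} := \{v \in H : pv = iv,\ qv = jv\}$ and $H_{\mathrm{gen}}$ is the ``generic part''. All five summands are preserved by both $p$ and $q$, so by Proposition \ref{propFockSumToTensor} the Fock space factors correspondingly as a graded tensor product, with the action of $\CAR(H)$ splitting as in \eqref{eqTensorCARReps}. Applying Theorem \ref{thmSegalEq} factorwise identifies $\tilde\Omega_q$, up to scalar, with a graded tensor product of the factor $q$-vacuum vectors, so its parity is the mod-$2$ sum of their parities. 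On $H_{00}$ and $H_{11}$ we have $p = q$ and the factor vacuum is the even vector $\Omega_p$. Since $p - q$ acts as $\pm\Id$ on $H_{10}$ and $H_{01}$, these are finite-dimensional eigenspaces of the compact operator $p - q$; reading the $q$-vacuum equations on each summand forces the factor vacuum into the top exterior power, contributing parities $\dim H_{10} \bmod 2$ and $\dim H_{01} \bmod 2$ respectively.

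The only real obstacle is showing that the factor vacuum on $H_{\mathrm{gen}}$ is even. Here I plan to use the standard normal form for two projections in generic position: $H_{\mathrm{gen}} \cong L \oplus L$, with $p$ acting as $(\Id,0)$ and $q$ as $\bigl(\begin{smallmatrix}C^2 & CS \\ CS & S^2\end{smallmatrix}\bigr)$ for commuting positive operators $C,S$ on $L$ satisfying $C^2 + S^2 = \Id$ and $0 < C,S < \Id$. A short computation shows that $(p-q)^2|_{H_{\mathrm{gen}}}$ is unitarily equivalent to $S^2 \oplus S^2$, so the Hilbert-Schmidt hypothesis on $p - q$ forces $S$ to be Hilbert-Schmidt and hence diagonalizable; spectrally decomposing $S$ then breaks $H_{\mathrm{gen}}$ into an orthogonal sum of two-dimensional blocks, each preserved by $p$ and $q$. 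A direct calculation on a single block shows the $q$-vacuum vector has the form $\alpha \cdot \Omega + \beta \cdot (e_1^* \wedge e_2)$ and is therefore even, and a graded tensor product of even vectors is even. Since $H_{10} = pH \cap (1-q)H$ and $H_{01} = (1-p)H \cap qH$, summing parities yields the claimed formula.
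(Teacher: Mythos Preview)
Your homogeneity argument is identical to the paper's. For the parity, the paper simply reads it off from the explicit product formula for $\tilde\Omega_q$ in the cited references (the same formula reproduced later as \eqref{eqnProductFormulaForVacuum}), whereas you give a self-contained derivation via the Halmos two-projection decomposition. Your route is correct and in fact recovers the structure of that product formula: the finite word $y$ in \eqref{eqnProductFormulaForVacuum} accounts for the top-exterior-power contributions from $H_{10}$ and $H_{01}$, and your $2$-dimensional block calculation on $H_{\mathrm{gen}}$ produces exactly the even factors $(1+\lambda_k x_k)$.

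The one soft spot is the final clause ``a graded tensor product of even vectors is even'': Proposition \ref{propFockSumToTensor} is stated for finite direct sums, while your spectral decomposition of $S$ yields infinitely many $2$-dimensional blocks. This is easily patched---either observe that your block calculation shows the $q$-vacuum on $H_{\mathrm{gen}}$ is the convergent product $\prod_n(c_n\Id - s_n\,\pi_p(a(e_{2,n}))\pi_p(a(e_{1,n}))^*)\Omega_p$ of even operators applied to $\Omega_p$ (convergence following from $\sum s_n^2 = \|S\|_2^2 < \infty$), which is precisely the formula the paper cites; or note that on $H_{\mathrm{gen}}$ the path $t\mapsto q_t$ obtained by scaling each angle $\theta_n \to t\theta_n$ connects $q$ to $p$ through projections with $p-q_t$ Hilbert--Schmidt, along which the homogeneous vacuum varies continuously and hence has constant parity. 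Either way your argument goes through, and has the advantage of being self-contained rather than deferring to an external reference.
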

\begin{proof}
The homogeneity of $\tilde \Omega_q$ follows immediately from the fact that $d \tilde \Omega_q$ again satisfies the $q$-vacuum equations, and thus $\tilde \Omega_q$ is an eigenvector for the grading operator. The parity can be read off from an explicit formula for $\tilde \Omega_q$ (see e.g. \cite[\S 3]{Wa98} or \cite[Thm 10.6]{Th92}).
\end{proof}

The following proposition is an immediate corollary.

\begin{Proposition}\label{propBasicRepHomogeneous}
If $U$ implements the Bogoliubov automorphism $\alpha_u$ in $\F_{H,p}$, then $U$ is homogeneous. The parity of $U$ is the same as the parity of $\tilde \Omega_q$, where $q = upu^*$.
\end{Proposition}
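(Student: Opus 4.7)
The plan is to leverage the uniqueness (up to phase) of the implementing unitary in an irreducible representation, together with the fact that the grading operator $d$ itself implements the Bogoliubov automorphism $\alpha_{-\Id}$, to show that $U$ is an eigenvector of the conjugation $\text{Ad}\, d$ with eigenvalue $\pm 1$.

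First, I would observe that since $\alpha_{-\Id}$ is central in the group of Bogoliubov automorphisms (it simply flips sign on every $a(f)$), the conjugate $dUd^{-1}$ still implements $\alpha_u$. Explicitly, using $d \pi_p(a(f)) d^{-1} = -\pi_p(a(f))$, one computes
\begin{align*}
(dUd^{-1}) \pi_p(a(f)) (dUd^{-1})^{-1} &= d U \bigl(-\pi_p(a(f))\bigr) U^{-1} d^{-1}\\
&= -d \pi_p(a(uf)) d^{-1} = \pi_p(a(uf)).
\end{align*}
Since $\pi_p$ is irreducible, the remark preceding Corollary \ref{corImplementationOfBogoluibov} gives $dUd^{-1} = \lambda U$ for some phase $\lambda \in \C$. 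Applying the same relation twice and using $d^2 = \Id$ yields $U = \lambda^2 U$, so $\lambda = \pm 1$. Hence $U$ is homogeneous.

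For the parity claim, I would use that $\Omega_p \in \Lambda^0 H_p$ is even and that, by Corollary \ref{corImplementationOfBogoluibov}, $U\Omega_p = \tilde\Omega_q$ up to a scalar. Applying $d$ to both sides and using $dUd^{-1} = \lambda U$ gives
$$
\lambda \, U\Omega_p = dU\Omega_p = d \tilde\Omega_q = (-1)^{p(\tilde\Omega_q)} \tilde\Omega_q,
$$
so $\lambda = (-1)^{p(\tilde\Omega_q)}$, which is exactly the assertion that $U$ and $\tilde\Omega_q$ have the same parity.

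There is no real obstacle here; the only subtlety is to confirm that $dUd^{-1}$ implements the same automorphism as $U$, which boils down to the centrality of $\alpha_{-\Id}$ in the group of Bogoliubov automorphisms. Everything else is a direct consequence of irreducibility of $\pi_p$ (to force $dUd^{-1} \in \C U$) and the evenness of $\Omega_p$ (to transfer the parity from $\tilde\Omega_q$ to $U$). Proposition \ref{propVacuumHomogeneous}, which guarantees that $\tilde\Omega_q$ itself is homogeneous, is what makes the second statement meaningful.
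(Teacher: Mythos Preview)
Your proof is correct and is precisely the argument the paper has in mind: the paper states this proposition as ``an immediate corollary'' of Proposition \ref{propVacuumHomogeneous} without further proof, and your argument---showing $dUd^{-1}$ again implements $\alpha_u$, invoking irreducibility to get $dUd^{-1}=\pm U$, and then reading off the sign from $U\Omega_p=\tilde\Omega_q$---is exactly how one unpacks that corollary.
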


\subsubsection{Representations of $\Diff(S^1)$}\label{subsecDiffReps}

We will use fermionic Fock space $\F_{H,p}$ primarily in the case where $H=L^2(S^1)$ and $pH$ is the Hardy space $H^2(\D)$. Here $S^1$ is the unit circle in $\C$ and
$$
H^2(\D) = \cl(\operatorname{span} \{ z^n : n \ge 0\}).
$$

Let $\Diff(S^1)$ be the group of diffeomorphisms of the circle, and let $\Diff_+(S^1)$ and $\Diff_-(S^1)$ be the orientation preserving and orientation reversing diffeomorphisms, respectively. If $\gamma \in \Diff(S^1)$, define $\epsilon(\gamma) = \pm 1$ if $\gamma \in \Diff_\pm(S^1)$. 

If $f:S^1 \to \C$ is a smooth function, then we define the complex derivative $f^\prime: S^1 \to \C$ by
$$
f'(z) := \frac{1}{iz} \left(\left.\frac{d}{d\theta} f(e^{i\theta})\right|_{e^{i\theta}=z}\right).
$$
Of course, if $f$ extends to a holomorphic function in a neighborhood of $S^1$ then this definition agrees with the usual complex derivative. 

We now define a pair of central extensions of $\Diff(S^1)$ by $\Z/2$, which are the groups of  Ramond and Neveu-Schwarz spin diffeomorphisms. They are given as subgroups of $C^\infty(S^1)^\times \rtimes \Diff(S^1)$ by
\begin{align*}
\Diff^{NS}(S^1) &:= \{ (\psi,\gamma) \in C^\infty(S^1)^\times \rtimes \Diff(S^1) : \psi^2 = (\gamma^{-1})^\prime\},\\
\Diff^{R}(S^1) &:= \{ (\psi,\gamma) \in C^\infty(S^1)^\times \rtimes \Diff(S^1) : \psi^2 = \epsilon(\gamma)\abs{(\gamma^{-1})^\prime}\}.
\end{align*}

In the following, let $\sigma \in \{NS, R\}$.  One can see that the $\Diff^\sigma(S^1)$ are non-isomorphic central extensions, since $\Diff_+^R(S^1)$ is a split extension of $\Diff_+(S^1)$ and $\Diff_+^{NS}(S^1)$ is not.

Define the spinor representations $u_\sigma:\Diff^\sigma(S^1) \to \U(H)$ by
$$
u_\sigma(\psi, \gamma)f = \psi \cdot (f \circ \gamma^{-1}).
$$

\begin{Proposition}\label{propDiffsQuantized}
For $\sigma \in \{NS,R\}$, $u_\sigma(\Diff^\sigma_+(S^1)) \subset \U_{res}(H,p)$. If $\Diff^\sigma_+(S^1)$ is given the $C^3$ topology then the embedding of $\Diff^\sigma_+(S^1)$ into $\U_{res}(H,p)$ is continuous.
\end{Proposition}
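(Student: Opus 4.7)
My plan is to translate membership in $\U_{res}(H,p)$ into a decay estimate for off-diagonal matrix coefficients in the Fourier basis, and verify that estimate via integration by parts in an oscillatory integral. Take the orthonormal basis $e_n(e^{i\theta}) = e^{in\theta}$ of $H = L^2(S^1)$, so $p$ is projection onto $\cl(\operatorname{span}\{e_n : n \geq 0\})$. Lift $\gamma^{-1}$ to a $C^3$ strictly increasing function $h:\R\to\R$ with $h(\theta+2\pi) = h(\theta)+2\pi$, so $\gamma^{-1}(e^{i\theta}) = e^{ih(\theta)}$, and set $a(\theta) := \psi(e^{i\theta})$. The defining relations of $\Diff^\sigma(S^1)$ force $\abs{\psi}^2 = \abs{(\gamma^{-1})'}$, so $a$ inherits $C^2$ regularity from $\gamma \in C^3$. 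The matrix coefficients of $u := u_\sigma(\psi,\gamma)$ are oscillatory integrals
$$
u_{mn} := \ip{u\, e_n, e_m} = \int_0^{2\pi} a(\theta)\, e^{i(nh(\theta) - m\theta)}\, \frac{d\theta}{2\pi},
$$
and since the matrix of $[u,p]$ is supported on pairs $(m,n)$ with $mn<0$, the $\U_{res}$ condition becomes $\sum_{mn<0}\abs{u_{mn}}^2 < \infty$.

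Setting $\Phi(\theta) := nh(\theta) - m\theta$, I note that $\gamma \in \Diff_+(S^1)$ provides constants $0 < c_1 \leq c_2$, each continuous in the $C^3$ topology on $\gamma$, with $c_1 \leq h'(\theta) \leq c_2$. For $mn<0$, both terms in $\Phi'(\theta) = nh'(\theta) - m$ have the same sign, giving $\abs{\Phi'(\theta)} \geq c_1\abs{n} + \abs{m} \geq c_1(\abs{m}+\abs{n})$. Writing $e^{i\Phi} = (i\Phi')^{-1}\partial_\theta e^{i\Phi}$ and integrating by parts twice yields
$$
u_{mn} = \int_0^{2\pi} e^{i\Phi(\theta)}\, (D^2 a)(\theta)\, \frac{d\theta}{2\pi}, \qquad D := -\partial_\theta \circ (i\Phi')^{-1},
$$
whose integrand involves $a$ through order $2$ and $h$ through order $3$. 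The crucial observation is that each differentiation of $(\Phi')^{-1}$ introduces a factor $nh^{(j)}/(\Phi')^2$, whose $\abs{n}$ is uniformly absorbed by one copy of $\abs{\Phi'}^{-1}$; this gives $\abs{u_{mn}} \leq C(\abs{m}+\abs{n})^{-2}$ with $C$ depending polynomially on $\norm{a}_{C^2}$, $\norm{h}_{C^3}$, and $c_1^{-1}$. Summing $\abs{u_{mn}}^2$ over $mn<0$ converges, establishing $u_\sigma(\psi,\gamma) \in \U_{res}(H,p)$.

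For the continuity claim, strong operator convergence along $C^3$-convergent nets follows from pointwise convergence of the kernels and uniform boundedness via dominated convergence (on trigonometric polynomials first, then by density). For the Hilbert--Schmidt seminorm, I apply the same expansion to the difference $u_{mn}^{(k)} - u_{mn}$, grouping terms so that the integrand depends linearly on $a_k - a$ (through order 2) and $h_k - h$ (through order 3); this produces a bound
$$
\abs{u_{mn}^{(k)} - u_{mn}} \leq C'(\abs{m}+\abs{n})^{-2}\bigl(\norm{a_k - a}_{C^2} + \norm{h_k - h}_{C^3}\bigr)
$$
uniform on $C^3$-bounded neighborhoods of $(\psi,\gamma)$, whence $\norm{[u_k - u, p]}_2 \to 0$. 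The main obstacle is the integration-by-parts bookkeeping: verifying that exactly two iterations (hence $C^3$ and not a stronger Sobolev or $C^k$ norm) suffice for both square-summability and continuity of the constants, which hinges on the precise cancellation between $\abs{n}$ in the numerators arising from differentiating $(\Phi')^{-1}$ and the $(\abs{m}+\abs{n})^{-1}$ denominators from the lower bound on $\abs{\Phi'}$.
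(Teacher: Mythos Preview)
Your approach is genuinely different from the paper's: you give a self-contained oscillatory-integral argument, whereas the paper cites \cite[Prop.~5.3]{Se81} for membership in $\U_{res}$ and then, for continuity of the Hilbert--Schmidt seminorm, uses the group-theoretic decomposition
\[
\norm{[u_k-u,p]}_2 \le \norm{[u_k u^{-1},p]}_2 + \norm{(u_k u^{-1}-1)[u,p]}_2,
\]
handling the first term via the estimate at the identity and the second via the trace-class property of $[u,p]$. Your membership argument is correct (the support of $[u,p]$ also includes the boundary cases $m=0,n<0$ and $m<0,n=0$, but your phase lower bound covers these too).

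There is, however, a gap in your continuity argument: the claimed bound $\abs{u_{mn}^{(k)}-u_{mn}} \le C'(\abs{m}+\abs{n})^{-2}\bigl(\norm{a_k-a}_{C^2}+\norm{h_k-h}_{C^3}\bigr)$ does not follow from two integrations by parts with the available regularity. The variation of the phase contributes a term of the form $\int in(h_k-h)\,a\,e^{i\Phi_s}\,d\theta$, and after two integrations by parts the extra factor of $\abs{n}$ in the amplitude is absorbed by only one power of $\abs{\Phi'}^{-1}$, yielding decay $(\abs{m}+\abs{n})^{-1}$ rather than $(\abs{m}+\abs{n})^{-2}$; a third integration by parts would require $a\in C^3$ and $h\in C^4$, which you do not have. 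Since $\sum_{mn<0}(\abs{m}+\abs{n})^{-2}$ diverges, this weaker bound alone does not give $\norm{[u_k-u,p]}_2\to 0$. The repair is easy: combine your uniform bound $\abs{u_{mn}^{(k)}},\abs{u_{mn}}\le C(\abs{m}+\abs{n})^{-2}$ with the weaker difference bound via
\[
\abs{u_{mn}^{(k)}-u_{mn}}^2 \le \bigl(\abs{u_{mn}^{(k)}}+\abs{u_{mn}}\bigr)\,\abs{u_{mn}^{(k)}-u_{mn}},
\]
which gives summable decay $(\abs{m}+\abs{n})^{-3}$ times a factor tending to zero. Alternatively, you can adopt the paper's decomposition, since your own estimate already shows that every term of $D^2a$ tends to zero uniformly as $(\psi,\gamma)\to(\Id,\operatorname{id})$, hence $\norm{[u_k u^{-1},p]}_2\to 0$.
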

\begin{proof}
It is clear that if $\sigma_n \to \sigma$ in the $C^3$ topology then $u_{\sigma_n} \to u_\sigma$ in the strong operator topology.
It remains to show that $u_\sigma(\Diff^\sigma_+(S^1)) \subset \U_{res}(H,p)$, and that $\norm{[u_{\sigma_n} - u_\sigma, p]}_2 \to 0$.

The first assertion is proven in \cite[Prop. 5.3]{Se81}. 
To see the second, observe that 
$$
\norm{[u_{\sigma_n} - u_\sigma, p]}_2 \le \norm{[u_{\sigma_n \circ \sigma^{-1}}, p]}_2 + \norm{(u_\sigma - u_{\sigma_n})[u_{\sigma},p]}_2.
$$
Since $\sigma_n \circ \sigma^{-1} \to \operatorname{id}$ in the $C^3$ topology, one may apply the estimate from the proof of \cite[Prop. 5.3]{Se81} to see that $\norm{[u_{\sigma_n \circ \sigma^{-1}}, p]}_2 \to 0$.
On the other hand, from \cite[Prop. 5.3]{Se81} one can also see that $[u_\sigma, p]$ is trace class, and thus can be factored as a product of Hilbert-Schmidt operators, say $[u_\sigma, p] = xy$. Then $(u_\sigma - u_{\sigma_n})x \to 0$ in operator norm, and so 
$$
(u_\sigma - u_{\sigma_n})[u_\sigma, p] = (u_\sigma - u_{\sigma_n})xy \to 0
$$
in Hilbert-Schmidt norm.
\end{proof}

In light of Proposition \ref{propDiffsQuantized}, one has a pair of projective representations $U_\sigma: \Diff_+^\sigma(S^1) \to \U(\F_{H,p})$ by composing $u_\sigma$ with the basic representation of $\U_{res}(H,p)$.

\begin{Corollary}\label{corDiffsEven}
The representations $U_\sigma$ are strongly continuous, and $U_\sigma(\psi,\gamma)$ is even for all $(\psi,\gamma) \in \Diff_+^\sigma(S^1)$.
\end{Corollary}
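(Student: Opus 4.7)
The plan is to establish both claims by composing continuity results already in place (for strong continuity) and by combining homogeneity with a connected-components analysis of $\Diff^\sigma_+(S^1)$ (for evenness).

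For strong continuity, I would observe that $U_\sigma$ is defined as the composition of $u_\sigma : \Diff^\sigma_+(S^1) \to \U_{res}(H,p)$ with the basic representation $\U_{res}(H,p) \to \mathcal{PU}(\F_{H,p})$. The first map is continuous for the $C^3$ topology on $\Diff^\sigma_+(S^1)$ by Proposition \ref{propDiffsQuantized}, and the second is strongly continuous by Theorem \ref{thmBasicRepContinuous}; the composition is therefore strongly continuous.

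For evenness, Proposition \ref{propBasicRepHomogeneous} tells us that each $U_\sigma(\psi,\gamma)$ is homogeneous, so it has a well-defined $\Z/2$-parity. The first key step is to show that the parity map $\Diff^\sigma_+(S^1) \to \{0,1\}$ is locally constant. I would argue this from the observation that the subsets of even and of odd projective unitaries are disjoint closed subsets of the homogeneous locus in $\mathcal{PU}(\F)$: they are the images of $\{U \in \U(\F) : dUd^{-1} = \pm U\}$ under the proper $U(1)$-quotient map. Combined with strong continuity of $U_\sigma$, this forces the parity to factor through $\pi_0(\Diff^\sigma_+(S^1))$.

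It then remains to check parity on each connected component. For $\sigma = NS$, lifting the rotation loop $t \mapsto R_{2\pi t}$ in $\Diff_+(S^1)$ shows that $(1,\operatorname{id})$ and $(-1,\operatorname{id})$ lie in the same component of $\Diff^{NS}_+(S^1)$, so the latter is connected; since the identity of the group maps to $\Id \in \U(\F)$, which is even, parity is even throughout. For $\sigma = R$, the defining relation $\psi^2 = |(\gamma^{-1})'|$ admits the global section $\gamma \mapsto (+\sqrt{|(\gamma^{-1})'|},\gamma)$, so $\Diff^R_+(S^1)$ has two connected components distinguished by the global sign of $\psi$. The identity component is handled as above; on the other I would evaluate at $(-1,\operatorname{id})$, where $u_R(-1,\operatorname{id}) = -\Id_H$ implements the Bogoliubov automorphism $\alpha_{-\Id}$. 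As noted in the paragraph preceding Proposition \ref{propVacuumHomogeneous}, this automorphism is implemented by the grading operator $d$ itself, which preserves the $\Z/2$-grading and is therefore even. The main obstacle is the local-constancy step, which requires some care in the projective setting, since homogeneity is detected only up to the sign in $dUd^{-1} = \pm U$ and one cannot read parity off a generic representative of the projective class. Closedness of the $U(1)$-quotient handles this abstractly, but an equivalent and perhaps more hands-on option is to take continuous local lifts of $U_\sigma$ along the principal $U(1)$-bundle $\U(\F) \to \mathcal{PU}(\F)$ and examine the continuous map $g \mapsto d\tilde U_\sigma(g) d^{-1} \tilde U_\sigma(g)^{-1}$ into $\{\pm\Id\}$ directly.
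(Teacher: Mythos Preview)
Your proof is correct and follows essentially the same approach as the paper's: strong continuity via the composition of Proposition~\ref{propDiffsQuantized} and Theorem~\ref{thmBasicRepContinuous}, homogeneity from Proposition~\ref{propBasicRepHomogeneous}, and reduction to the two central elements $(\pm 1,\operatorname{id})$ by a connectedness argument. The paper is terser---it simply asserts that every $(\psi,\gamma)$ is path-connected to $(1,\operatorname{id})$ or $(-1,\operatorname{id})$ and that both are even---whereas you spell out why parity is locally constant in the projective setting and treat the $NS$ and $R$ component structures separately; this extra care is sound but not strictly needed beyond what the paper uses.
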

\begin{proof}
Combining Proposition \ref{propDiffsQuantized} with the continuity of the basic representation (Theorem \ref{thmBasicRepContinuous}) shows that $U_\sigma$ is strongly continuous. By Proposition \ref{propBasicRepHomogeneous} each $U_\sigma(\psi,\gamma)$ is homogeneous. Any $(\psi, \gamma) \in \Diff_+^\sigma(S^1)$ can be connected via a path to $(1, id)$ or $(-1, id)$, and since $U_\sigma(1, id)$ and $U_\sigma(-1, id)$ are even, so is $U_\sigma(\psi,\gamma)$ for every $(\psi, \gamma) \in \Diff_+(S^1)$.
\end{proof}

\begin{Remark}
If $(\psi,\gamma) \in \Diff^\sigma_-(S^1)$, then $u_\sigma(\psi,\gamma) pu_\sigma(\psi,\gamma)^* - (\Id-p)$ is Hilbert-Schmidt, and consequently one can define projective unitaries $U_\sigma(\psi,\gamma): \F_{H,p} \to \F_{H,p}^*$ which are compatible with the action of orientation preserving spin diffeomorphisms on $\F_{H,p}$ and $\F_{H,p}^*$.
\end{Remark}

Let $r_\theta \in \Diff_+(S^1)$ be counterclockwise rotation by $\theta$. Since $u_R(1,r_\theta)$ and $u_{NS}(e^{-i \theta/2}, r_\theta)$ commute with $p$, we obtain a pair of one parameter (honest) unitary groups acting on $\F_{H,p}$, namely
$$
\Rot_R(\theta) := U_R(1,r_\theta), \qquad \Rot_{NS}(\theta) := U_{NS}(e^{-i \theta/2}, r_\theta).
$$
By Stone's theorem we can find self-adjoint operators $L_0^R$ and $L_0^{NS}$ such that
$$
\Rot_R(\theta) = e^{2\pi i\theta L_0^R}, \qquad \Rot_{NS}(\theta) = e^{2\pi i\theta L_0^{NS}}.
$$
The generators of these one parameter groups are positive operators, which can be verified by diagonalization. 

\begin{Proposition}\label{propGeneratorRelations} Let $S$ be a finite subset of $\Z$, and suppose that $S = \{n_1, \ldots, n_p, m_1, \ldots, m_q\}$, where
$$
n_1 < \cdots < n_p < 0 \le m_1 < \cdots < m_q.
$$
Then the vectors
$$
\xi_S = \pi_{p}(a(z^{n_p}) \cdots a(z^{n_1}) a(z^{m_1})^* \cdots a(z^{m_q})^*)\Omega_p,
$$
form an orthonormal basis for $\F_{H,p}$ consisting of eigenvectors for $L_0^R$ and $L_0^{NS}$. Their eigenvalues are given by
$$
L_0^R \xi_S = \left(\sum_{i=1}^p -n_i + \sum_{i=1}^q m_i \right)\xi_S
$$
and
$$
L_0^{NS} \xi_S = \left(\sum_{i=1}^p -(n_i+\tfrac12) + \sum_{i=1}^q (m_i+\tfrac12) \right)\xi_S.
$$
In particular, for all $n \in \Z$ we have
\begin{equation}\label{eqnLCommRel}
[L_0^{NS}, \pi_{p}(a(z^n))] = -(n+\tfrac12)\pi_p(a(z^n)),\quad [L_0^R, \pi_{p}(a(z^n))] = -n\pi_{p}(a(z^n))
\end{equation}
and
\begin{equation}\label{eqnLCommRelStar}
[L_0^{NS}, \pi_{p}(a(z^n))^*] = (n+\tfrac12)\pi_p(a(z^n))^*,\quad [L_0^R, \pi_{p}(a(z^n))^*] = n\pi_{p}(a(z^n))^*.
\end{equation}
\end{Proposition}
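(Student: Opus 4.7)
The plan is to verify the two statements separately: first that the $\xi_S$ form an orthonormal basis consisting of eigenvectors, and second that the eigenvalues take the claimed form. The commutation relations \eqref{eqnLCommRel} and \eqref{eqnLCommRelStar} are established along the way and fall out as a byproduct.

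First I would unpack $\pi_p(a(z^n))$ using the definition $\pi_p(a(f)) = a((pf)^*)^* + a((1-p)f)$. For $n < 0$ we have $pz^n = 0$, so $\pi_p(a(z^n)) = a(z^n)$ is exterior multiplication by the vector $z^n \in (\Id-p)H \subseteq H_p$. For $m \ge 0$ we have $(\Id-p)z^m = 0$, so $\pi_p(a(z^m))^* = a((z^m)^*)$ is exterior multiplication by $(z^m)^* \in (pH)^* \subseteq H_p$. Applying these to $\Omega_p$ in the prescribed order yields, up to a reordering sign,
\[
\xi_S = z^{n_p} \wedge \cdots \wedge z^{n_1} \wedge (z^{m_1})^* \wedge \cdots \wedge (z^{m_q})^*.
\]
Because $\{z^n : n<0\} \cup \{(z^m)^* : m \ge 0\}$ is an orthonormal basis of $H_p$, these wedges are orthonormal, and as $S$ ranges over all finite subsets of $\Z$ we exhaust (signed) wedges of distinct basis elements, hence an orthonormal basis of $\Lambda H_p = \F_{H,p}$.

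For the commutation relations, I would differentiate the basic-representation identity \eqref{eqnBasicRep}. Since $u_R(1, r_\theta)$ and $u_{NS}(e^{-i\theta/2}, r_\theta)$ commute with $p$, the one-parameter groups $\Rot_R$ and $\Rot_{NS}$ are honest unitary groups on $\F_{H,p}$, chosen so that $\Rot_\sigma(\theta)\Omega_p = \Omega_p$. Each $z^n$ is an eigenvector for $u_\sigma(\cdot, r_\theta)$: writing the eigenvalue as $e^{2\pi i \theta (-n)}$ in the Ramond case and $e^{2\pi i \theta(-(n+1/2))}$ in the Neveu-Schwarz case (with the $e^{-i\theta/2}$ factor producing the half-shift), equation \eqref{eqnBasicRep} gives
\[
\Rot_\sigma(\theta)\,\pi_p(a(z^n))\,\Rot_\sigma(\theta)^{-1} \;=\; e^{2\pi i \theta \lambda_\sigma(n)}\,\pi_p(a(z^n)),
\]
where $\lambda_R(n) = -n$ and $\lambda_{NS}(n) = -(n+\tfrac12)$. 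Differentiating at $\theta = 0$ using $\Rot_\sigma(\theta) = e^{2\pi i \theta L_0^\sigma}$ yields \eqref{eqnLCommRel}; taking adjoints yields \eqref{eqnLCommRelStar}.

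For the eigenvalue computation, note $L_0^\sigma \Omega_p = 0$ (since $\Omega_p$ is fixed by $\Rot_\sigma$). Then for each creator appearing in $\xi_S$, the commutation relations give
\[
L_0^R\,\pi_p(a(z^{n_i}))\eta \;=\; \pi_p(a(z^{n_i}))(L_0^R\eta) \;+\; (-n_i)\pi_p(a(z^{n_i}))\eta,
\]
and similarly $L_0^R\,\pi_p(a(z^{m_j}))^*\eta = \pi_p(a(z^{m_j}))^*(L_0^R\eta) + m_j\,\pi_p(a(z^{m_j}))^*\eta$, with $m_j$ replaced by $m_j + \tfrac12$ and $-n_i$ by $-(n_i + \tfrac12)$ for $L_0^{NS}$. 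Pushing $L_0^\sigma$ through all the operators to reach $\Omega_p$ (which is annihilated) and summing the contributions produces exactly the eigenvalue formulas in the proposition.

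There is no real obstacle: the argument is a direct bookkeeping exercise built on the basic representation identity and the fact that $\Omega_p$ is rotation-invariant. The only subtlety is matching conventions so that the factor $2\pi$ in $\Rot_\sigma(\theta) = e^{2\pi i \theta L_0^\sigma}$ is compatible with the spectral scaling of $u_\sigma(\cdot, r_\theta)$ on monomials $z^n$; once the parameterization of $r_\theta$ is fixed consistently, everything aligns and the eigenvalues come out integers in the Ramond sector and half-integers (shifted by the chosen square root) in the Neveu-Schwarz sector.
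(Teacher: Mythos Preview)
Your argument is correct and is exactly the natural way to establish the proposition; the paper in fact states this result without proof (only remarking beforehand that positivity ``can be verified by diagonalization''), so there is nothing to compare against. One small remark: no ``reordering sign'' is actually needed in your formula for $\xi_S$, since applying the creation operators in the given order already produces $z^{n_p}\wedge\cdots\wedge z^{n_1}\wedge (z^{m_1})^*\wedge\cdots\wedge(z^{m_q})^*$ directly, and your caveat about the $2\pi$ normalization is apt---the paper's conventions for $r_\theta$ and $\Rot_\sigma(\theta)=e^{2\pi i\theta L_0^\sigma}$ must be read consistently for the eigenvalues to come out as integers and half-integers rather than scaled by $2\pi$.
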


\subsection{Spin structures}
\subsubsection{Spin structures on Riemann surfaces}
Let $\Sigma$ be a compact Riemann surface with boundary. The complex structure on $\Sigma$ induces an almost complex structure $J$. That is, $J$ is a smooth family of endomorphisms $J_p$ of the tangent spaces $T_p\Sigma$ such that $J_p^2 = -\Id$ for all $p \in \Sigma$. In any local holomorphic coordinate $z = x + iy$, one has 
$$
J \frac{\partial}{\partial x} = \frac{\partial}{\partial y}, \quad J \frac{\partial}{\partial x} = -\frac{\partial}{\partial y}.
$$
Set $T\Sigma_\C = T\Sigma \otimes_\R \C$, and let $T^{(1,0)}\Sigma$ and $T^{(0,1)}\Sigma$ be the bundles of eigenspaces of $J$ for $i$ and $-i$, respectively.

With respect to a local holomorphic coordinate $z:U \to \C$, we have sections
\begin{align*}
 \frac{\partial}{\partial z} := \frac12\left(\frac{\partial}{\partial x} - i \frac{\partial}{\partial y}\right), \quad \mbox{ and } \;\quad
  \frac{\partial}{\partial \overline{z}} := \frac12\left(\frac{\partial}{\partial x} + i \frac{\partial}{\partial y}\right) 
\end{align*}
of $T^{(1,0)}U$ and $T^{(0,1)}U$, respectively.

We give $T^{(1,0)}\Sigma$ and $T^{(0,1)}\Sigma$ the complex structure $J$. 
For $T^{(1,0)}\Sigma$ this coincides with the complex structure inherited from $T\Sigma_\C$, but on $T^{(0,1)}\Sigma$ the complex structure is conjugate to the inherited one. 
The bundles $T^{(1,0)}\Sigma$ and $T^{(0,1)}\Sigma$ are called the {\it holomorphic} and {\it antiholomorphic tangent bundles}, respectively.

Define the {\it holomorphic cotangent bundle} (or {\it canonical bundle}) $K_\Sigma$ by 
$$
K_\Sigma = (T^{(1,0)}\Sigma)^*.
$$
If $(z,U)$ is a local holomorphic coordinate, a trivialization of $K_U$ is given by the section $dz = dx + idy$. 
We also have a trivialization of $(T^{(0,1)}U)^*$ given by $d\overline{z} = dx - idy$. 
If $u \in C^\infty(\Sigma)$, define a section $\partial u$ of $K_\Sigma$ in local holomorphic coordinates by
$$
\partial u = \frac{\partial u}{\partial z} dz.
$$
Similarly define a section $\overline{\partial}u$ of $(T^{(0,1)}\Sigma)^*$ by
$$
\overline{\partial}u = \frac{\partial u}{\partial \overline{z}} d\overline{z}.
$$
The Dolbeault operators $\partial$ and $\overline{\partial}$ are related to the de Rahm differential by $d = \partial + \overline{\partial}$.

\begin{Definition}A {\it spin structure} on $\Sigma$ is a holomorphic line bundle $L$ over $\Sigma$ along with a holomorphic isomorphism $\Phi:L \otimes L \to K_\Sigma$ (that acts identically on the base space).
\end{Definition}

We will refer to a Riemann surface along with a choice of spin structure as a {\it spin Riemann surface}.

\begin{Remark}
This definition of a spin structure is particular to Riemann surfaces. The equivalence of this definition with the standard one is established in \cite[Sec. 3]{At71}.
\end{Remark}

If $L_1$ and $L_2$ are spin structures on $\Sigma_1$ and $\Sigma_2$, then an isomorphism of spin structures $L_1 \to L_2$ is a holomorphic isomorphism of bundles $B:L \to L^\prime$ such that the diagram
$$
\begin{CD}
L_1 \otimes L_1 @>{B \otimes B}>> L_2 \otimes L_2\\
@V\Phi_1 VV @VV\Phi_2 V\\
K_{\Sigma_1} @<{B|_{\Sigma_1}}^*<< K_{\Sigma_2}
\end{CD}
$$
commutes, where $B|_\Sigma^*$ is the pullback.

\begin{Example}\label{exSpinDisk}
Let $\D$ be the closed unit disk in $\C$. Then the (Neveu-Schwarz) spin disk $(\D, NS)$ is given by the following spin structure. We take $L = \D \times \C$.  The spin structure $\Phi:L \otimes L \to K_{\D}$ acts on sections $f \otimes g \in C^\infty(\D) \otimes C^\infty(\D)$ by $\Phi_* (f \otimes g) = fg dz$. Up to isomorphism, this is the only spin structure on $\D$.
\end{Example}

\begin{Example}\label{exSpinAnnuli}
Let $\A_r$ denote the closed annulus 
$$
\A_r = \{z \in \C : r \le \abs{z} \le 1\}.
$$
We define two spin structures on $\A_r$, called the Neveu-Schwarz and Ramond spin structures. Both are given by the trivial bundle $L = \A_r \times \C$. For $\sigma \in \{NS, R\}$ the spin structure $\Phi_{\sigma}$ acts on sections $f \otimes g \in C^\infty(\A_r) \otimes C^\infty(\A_r)$ by
$$
(\Phi_\sigma)_*(f \otimes g) = \left\{ \begin{array}{cl} 
f(z)g(z)\frac{dz}{i}, & \sigma = NS \vspace{.05in}\\ 
f(z)g(z)\frac{dz}{iz}, & \sigma = R
\end{array}\right.
$$
We refer to these spin surfaces as the spin annuli $(\A_r, \sigma)$.
\end{Example}

\begin{Example}\label{exSpinPants}
Let $w \in \D$ and $r_1,r_2 \in (0,1)$, and assume they satisfy $r_1 + r_2 < \abs{w} < 1 - r_1$. Define the pair of pants
$$
\bbP_{w,r_1,r_2} = \D \setminus \left( (r_1 \interior{\D} + w) \cup r_2 \interior{\D}\right),
$$
where $\interior{\D}$ is the open unit disk.
We define a pair of spin surfaces $(\bbP_{w, r_1, r_2}, \sigma)$ for $\sigma \in \{NS, R\}$ as in Example \ref{exSpinAnnuli}. That is, we let $L = \bbP_{w, r_1, r_2} \times \C$ and define spin structures $\Phi_\sigma$ which act on sections by
$$
(\Phi_\sigma)_*(f \otimes g) = \left\{ \begin{array}{cl} 
f(z)g(z)\frac{dz}{i}, & \sigma = NS\\ 
f(z)g(z)\frac{dz}{iz}, & \sigma = R
\end{array}\right.
$$
\end{Example}

\subsubsection{Spin structures on circles}

Let $Y$ be a smooth, closed 1-manifold. 

\begin{Definition}
A {\it spin structure} on $Y$ is a smooth, complex line bundle $L$ and an isomorphism of complex line bundles $\phi:L \otimes L \to T^*Y_\C$, where $T^*Y_\C = T^*Y \otimes_\R \C$.
\end{Definition}

We will refer to the triple $(Y,L,\phi)$ as a {\it (smooth, closed) spin $1$-manifold}.

\begin{Remark}
One could alternatively define a spin structure on $Y$ via real line bundles and an isomorphism to the real cotangent bundle $T^*Y$, and these definitions are equivalent since the real structure on $T^*Y_\C$ induces a real structure on $L$. We have chosen the definition given above because it makes the relationship with spin structures on surfaces more transparent.
\end{Remark}

\begin{Proposition}\label{propRestrictedSpinStructure}
There is a natural identification $K_\Sigma|_\Gamma \cong T^*\Gamma_\C$. Thus if $\Sigma$ is a compact Riemann surface with boundary $\Gamma$ and $(L,\Phi)$ is a spin structure on $\Sigma$, then $(L|_\Gamma, \Phi|_\Gamma)$ naturally becomes a spin structure on $\Gamma$.
\end{Proposition}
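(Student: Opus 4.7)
The plan is to exhibit the natural isomorphism $K_\Sigma|_\Gamma \to T^*\Gamma_\C$ at the level of cotangent fibers and verify in local coordinates that it is a smooth isomorphism of complex line bundles; the spin structure claim will then follow immediately by restricting $L$ and $\Phi$. At a point $p \in \Gamma$, an element $\omega \in K_{\Sigma,p} = (T^{(1,0)}_p\Sigma)^*$ is a $\C$-linear functional on $T^{(1,0)}_p\Sigma$. The projection $T_p\Sigma_\C = T^{(1,0)}_p\Sigma \oplus T^{(0,1)}_p\Sigma \to T^{(1,0)}_p\Sigma$ lets us view $\omega$ as a $\C$-valued $\R$-linear functional on $T_p\Sigma$, which we then restrict to the real tangent line $T_p\Gamma \subset T_p\Sigma$, obtaining an element of $T^*_p\Gamma_\C := \Hom_\R(T_p\Gamma,\C)$. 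Pointwise this is linear in $\omega$, so it assembles into a morphism of complex line bundles $K_\Sigma|_\Gamma \to T^*\Gamma_\C$.

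To check it is an isomorphism, I would choose local holomorphic coordinates $z = x + iy$ near a boundary point in which $\Gamma$ is cut out by $y = 0$; such coordinates exist by the standard Schottky double construction, which embeds $\Sigma$ into a closed Riemann surface in which $\Gamma$ is the fixed locus of an anti-holomorphic involution. In these coordinates $dz = dx + i\,dy$ trivializes $K_\Sigma$, while $dx$ trivializes $T^*\Gamma_\C$. Since $dy$ annihilates $T\Gamma$, tracing through the construction gives $dz \mapsto dx$ on $\Gamma$, so our map carries a nowhere-vanishing frame to a nowhere-vanishing frame on each coordinate chart. This proves it is a smooth isomorphism of complex line bundles.

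Given this identification, the spin structure assertion is immediate by restriction. The bundle $L|_\Gamma$ is a smooth complex line bundle on $\Gamma$, and $\Phi$ restricts to a smooth bundle isomorphism
$$
L|_\Gamma \otimes L|_\Gamma \; \overset{\Phi|_\Gamma}{\longrightarrow} \; K_\Sigma|_\Gamma \; \overset{\sim}{\longrightarrow} \; T^*\Gamma_\C,
$$
which is by definition a spin structure on $\Gamma$. I do not anticipate any genuine obstacle here: the whole statement is a naturality check that reduces to the one-line coordinate computation $dz|_{\{y=0\}} = dx$, and the only mild point requiring care is the existence of holomorphic coordinates flattening the boundary, handled by the double.
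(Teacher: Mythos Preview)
Your proof is correct and constructs the same natural map as the paper: restrict a $(1,0)$-form, viewed as a $\C$-valued functional on the real tangent bundle, to $T\Gamma$. The only difference is in how the isomorphism is verified. The paper argues abstractly: the projection $T\Sigma \hookrightarrow T\Sigma_\C \to T^{(1,0)}\Sigma$ is an $\R$-linear isomorphism by dimension count (since $T\Sigma \cap T^{(0,1)}\Sigma = 0$), so the image of $T\Gamma$ is a real line in the complex line $T^{(1,0)}\Sigma|_\Gamma$, making the latter its complexification, and then one dualizes. You instead check $dz \mapsto dx$ in boundary-flattening coordinates obtained from the Schottky double. Your route is more concrete and perhaps more reassuring, but it imports an auxiliary construction (the double) that the paper's coordinate-free argument avoids; conversely, the paper's argument leaves the reader to unpack why a real line in a complex line exhibits the latter as its complexification, which your coordinate check makes explicit.
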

\begin{proof}
First, observe that there is a natural $\R$-linear isomorphism $T\Sigma \to T^{(1,0)}\Sigma$. Indeed, $T\Sigma$ sits naturally as a real linear subspace of $T\Sigma_\C$, and since $T\Sigma \cap T^{(0,1)}\Sigma = \{0\}$, the projection of $T\Sigma_\C$ onto $T^{(1,0)}\Sigma$, with respect to the decomposition $T^{(1,0)}\Sigma \oplus T^{(0,1)}\Sigma$, is injective on $T\Sigma$. By dimension counting, this projection induces the desired $\R$-linear isomorphism $T\Sigma \cong T^{(1,0)}\Sigma$.

Now $T\Gamma$ gives a $1$-real-dimensional subbundle of $T\Sigma|_\Gamma$, and transporting along the isomorphism constructed above gives a $1$-real-dimensional subbundle of $T^{(1,0)}\Sigma|_\Gamma$. 

All that remains is to note that if $W$ is a complexification of $V$, then $W^*$ is naturally a complexification of $V^*$, by embedding $V^*$ in $W^*$ as linear functionals taking real values on $V$.
\end{proof}

A morphism of spin structures $(Y_1, L_1) \to (Y_2, L_2)$ is a smooth bundle map $\beta:L_1 \to L_2$ such that
\begin{equation}\label{eqnCircleSpinIso}
\begin{CD}
L_1 \otimes L_1 @>{\beta \otimes \beta}>> L_2 \otimes L_2\\
@V\phi_1 VV @VV\phi_2 V\\
{T^*Y_1}_\C @<{\beta|_{Y_1}}^*<< {T^*Y_2}_\C
\end{CD}
\end{equation}
commutes. Note that ${\beta|_{Y_1}}^*$ is a real linear bundle map $T^*Y_2 \to T^*Y_1$, and thus induces a unique complex linear map bundle map between the complexifications.

%

\begin{Example}
We define a pair of spin structures on $S^1$, called the the {\it Neveu-Schwarz} and {\it Ramond} spin structures. Both are given by the trivial bundle $L = S^1 \times \C$. For $\sigma \in \{NS, R\}$, the spin structure $\phi_\sigma$ is given on sections $f \otimes g \in C^\infty(S^1) \otimes C^\infty(S^1)$ by 
\begin{equation}\label{eqCirclSpinDefinitions}
(\phi_\sigma)_*(f \otimes g) = \left\{ \begin{array}{cl} f(z)g(z)\frac{dz}{i} & \sigma = NS \vspace{0.05in}\\
f(z)g(z) \frac{dz}{iz} & \sigma = R \end{array}\right.
\end{equation}
We denote these spin circles by $(S^1, NS)$ and $(S^1, R)$.
\end{Example}

\begin{Example}
The restriction of the spin disk $(\D, NS)$ to the boundary circle is isomorphic to $(S^1, NS)$. For $\sigma \in \{NS, R\}$, the restriction of the spin annulus $(\A_r, \sigma)$ to either boundary component is isomorphic to $(S^1, \sigma)$. The restriction of $(\bbP_{w,r_1,r_2}, \sigma)$ to the boundary circles $S^1$ and $r_2S^1$ is isomorphic to $(S^1, \sigma)$, but the restriction to $r_1S^1 + w$ is isomorphic to $(S^1, NS)$ in either case.
\end{Example}

For $\sigma \in \{NS,R\}$, let $\Aut(S^1,\sigma)$ denote the group of spin structure automorphisms of the spin circle $(S^1, \sigma)$. Note that these automorphisms are not required to act identically on the base space.
\begin{Proposition}\label{propSpinDiffsAreSpinAuts}
$\Aut(S^1,\sigma)$ is naturally isomorphic to $\Diff^\sigma(S^1)$. Under this isomorphism, diffeomorphisms $(\psi, \gamma) \in \Diff^\sigma(S^1)$ act on sections of the spin bundle via the spin representation $u_\sigma$.
\end{Proposition}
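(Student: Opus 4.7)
The plan is to use triviality of the spin bundle $L = S^1 \times \C$ to reduce $\Aut(S^1, \sigma)$ to a computation with pairs $(\psi, \gamma) \in C^\infty(S^1)^\times \rtimes \Diff(S^1)$. Any smooth bundle isomorphism $\beta: L \to L$ admits a unique factorization $\beta(z, v) = (\gamma(z), \Psi(z) v)$ with $\gamma \in \Diff(S^1)$ and $\Psi \in C^\infty(S^1)^\times$. Setting $\psi := \Psi \circ \gamma^{-1}$ gives a bijection between smooth bundle automorphisms of $L$ (ignoring the spin condition) and $C^\infty(S^1)^\times \rtimes \Diff(S^1)$; I would verify this is a group isomorphism by computing $\beta_1 \circ \beta_2(z, v) = (\gamma_1\gamma_2(z), \Psi_1(\gamma_2(z))\Psi_2(z) v)$ and checking that under $\psi_i := \Psi_i \circ \gamma_i^{-1}$ this reproduces the semidirect multiplication $(\psi_1, \gamma_1)(\psi_2, \gamma_2) = (\psi_1 \cdot (\psi_2 \circ \gamma_1^{-1}), \gamma_1 \gamma_2)$.

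Next, to identify the subgroup satisfying the spin compatibility diagram \eqref{eqnCircleSpinIso} with $\Diff^\sigma(S^1)$, I would evaluate the diagram on the canonical nowhere-vanishing section $1 \otimes 1$ of $L \otimes L$ and use the pullback formula $\gamma^*(dz|_{\gamma(z)}) = \gamma'(z) \, dz|_z$. In the Neveu--Schwarz case, commutativity reduces to $\Psi(z)^2 \gamma'(z) = 1$, which upon the substitution $w = \gamma(z)$ becomes precisely $\psi^2 = (\gamma^{-1})'$. In the Ramond case the same computation yields $\Psi(z)^2 \gamma'(z)/\gamma(z) = 1/z$. Writing $z = e^{i\theta}$ and $\gamma(z) = e^{i\alpha(\theta)}$, the paper's definition of the complex derivative gives $\gamma'(z) = \alpha'(\theta) \gamma(z)/z$, whence the condition becomes $\psi(\gamma(z))^2 = 1/\alpha'(\theta)$; the analogous computation for $\gamma^{-1}$ identifies this with $\epsilon(\gamma)|(\gamma^{-1})'(\gamma(z))|$, matching the defining relation of $\Diff^R(S^1)$.

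The compatibility with the spinor representation is then immediate: a bundle automorphism $\beta$ covering $\gamma$ acts on a section $f: S^1 \to \C$ of $L$ by pushforward, yielding $(\beta \cdot f)(z) = \Psi(\gamma^{-1}(z)) f(\gamma^{-1}(z)) = \psi(z) f(\gamma^{-1}(z)) = (u_\sigma(\psi, \gamma) f)(z)$, which is the spinor representation defined in Section \ref{subsecDiffReps}.

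The only substantive bookkeeping is the Ramond computation, which requires a careful translation between the paper's complex derivative $\gamma'$ and the real-variable derivative $\alpha'(\theta)$ in order to match the spin-diagram condition with the $\epsilon(\gamma)|(\gamma^{-1})'|$ appearing in the definition of $\Diff^R(S^1)$. Every other step is a direct unpacking of definitions.
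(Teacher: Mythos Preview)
Your proof is correct and follows essentially the same route as the paper's: parametrize bundle automorphisms of the trivial line bundle by pairs $(\psi,\gamma)$ via the pushforward action on sections, then read off the constraint $\psi^2=(\gamma^{-1})'$ (resp.\ $\psi^2=\epsilon(\gamma)\abs{(\gamma^{-1})'}$) by chasing the spin compatibility diagram on a generating section. The only difference is cosmetic: the paper introduces $\psi$ directly via $\beta_*f=\psi\cdot(f\circ\gamma^{-1})$ rather than first writing $\beta(z,v)=(\gamma(z),\Psi(z)v)$ and then setting $\psi=\Psi\circ\gamma^{-1}$, and it leaves the group-law verification implicit, whereas you spell it out.
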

\begin{proof}
Let $L = S^1 \times \C$ and let $\beta:L \to L$ be an automorphism of $\Aut(S^1, \sigma)$. It suffices to show that there exists a $(\psi, \gamma) \in \Diff^\sigma(S^1)$ such that $\beta_*f = u_\sigma(\psi,\gamma) f$ for all sections $f$ of $L$.

Let $\gamma = \beta|_{S^1} \in \Diff(S^1)$ and let $K = (T^*S^1)_\C$. By definition, the diagram
\begin{equation*}
\begin{CD}
L \otimes L @>{\beta \otimes \beta}>> L \otimes L\\
@V\phi_\sigma VV @VV\phi_\sigma V\\
K @<{\gamma}^*<< K
\end{CD}
\end{equation*}
commutes. 

Since $\beta:L \to L$ is a bundle isomorphism, it acts on sections by
\begin{equation*}
\beta_*f = \psi(z)f(\gamma^{-1}(z))
\end{equation*}
for some $\psi \in C^\infty(S^1)^\times$. 

Assume first that $\sigma = NS$, and let $f \otimes g$ be a section of $L \otimes L$. By definition we have
\begin{equation}\label{eqActionOfSpinAuts1}
(\phi_{NS})_*(f \otimes g) = -i f(z)g(z) dz.
\end{equation}
Following the commutative diagram the other way around, we get
\begin{equation}\label{eqActionOfSpinAuts2}
\gamma^*(\phi_{NS})_*(\beta \otimes \beta)_* (f \otimes g) = -if(z)g(z)\psi(\gamma(z))^2\gamma^\prime(z) dz.
\end{equation}

Since the diagram commutes, \eqref{eqActionOfSpinAuts1} and \eqref{eqActionOfSpinAuts2} coincide for all $f$ and $g$, and so we must have $\psi(\gamma(z))^{-2} = \gamma^\prime(z)$ for all $z \in S^1$. That is, $\psi^2 = (\gamma^{-1})^\prime$. We now identify $\beta$ with $(\psi,\gamma) \in \Diff^{NS}(S^1)$, and $\beta$ acts on sections by $u_{NS}(\psi,\gamma)$ as was to be shown.

The case $\sigma = R$ is similar, except in this case the commutativity of the diagram is equivalent to the condition
$$
\psi(\gamma(z))^{-2} = \frac{z\gamma^\prime(z)}{\gamma(z)}.
$$
The right-hand side is equal to $\epsilon(\gamma) \abs{\gamma^\prime(z)}$, where $\epsilon(\gamma) = \pm 1$ if $\gamma \in \Diff_{\pm}(S^1)$. Hence $\psi^2 = \epsilon(\gamma)\abs{(\gamma^{-1})^\prime}$ and we have $(\psi,\gamma) \in \Diff^R(S^1)$. We now identify $\beta$ with $(\psi,\gamma)$, and $\beta$ acts on sections by $u_{R}(\psi,\gamma)$.
\end{proof}

The automorphism corresponding to $(-1,\operatorname{id}) \in \Diff^\sigma(S^1)$ is called the {\it spin involution}.

\begin{Proposition}
The Neveu-Schwarz and Ramond spin structures on $S^1$ are not isomorphic, and every spin structure on $S^1$ is isomorphic to $(S^1, NS)$ or $(S^1, R)$.
\end{Proposition}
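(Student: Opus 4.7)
The plan is to attach to any spin structure on $S^1$ a $\Z/2$-valued winding-number invariant, show that it distinguishes the Neveu--Schwarz and Ramond structures, and show that it takes exactly two values. Since $H^2(S^1, \Z) = 0$, every smooth complex line bundle on $S^1$ is trivializable, so I may fix a trivialization $L \cong S^1 \times \C$; then $\phi$ is determined by the single nowhere-vanishing complex $1$-form $\omega := \phi_*(1 \otimes 1)$ on $S^1$. Using the canonical nowhere-vanishing section $\frac{dz}{iz}$ of $T^*S^1_\C$, I write $\omega = h(z) \, \frac{dz}{iz}$ with $h \in C^\infty(S^1, \C^\times)$, and define $w(L,\phi) \in \Z/2\Z$ to be the winding number of $h$ reduced mod $2$.

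Next, I verify that $w$ depends only on the isomorphism class. Suppose $\beta : (L_1, \phi_1) \to (L_2, \phi_2)$ is a morphism of spin structures, covering a diffeomorphism $\gamma$ of $S^1$, and trivialize both $L_i$. Expressing $\beta$ on sections as $\beta_* f(z) = \psi(z) f(\gamma^{-1}(z))$ with $\psi \in C^\infty(S^1, \C^\times)$, a direct chase of the commutative diagram \eqref{eqnCircleSpinIso} yields
$$
h_1(w) = \psi(\gamma(w))^2 \, h_2(\gamma(w)) \, \frac{w \gamma'(w)}{\gamma(w)}.
$$
The factor $w\gamma'(w)/\gamma(w)$ is a nowhere-vanishing real function on $S^1$ (positive if $\gamma$ preserves orientation, negative otherwise) and so has winding number zero; the factor $\psi(\gamma(\cdot))^2$ contributes an even winding number; and composing $h_2$ with $\gamma$ changes its winding number only by a sign. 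Reducing modulo $2$ shows $w(L_1,\phi_1) = w(L_2, \phi_2)$, so $w$ is a genuine isomorphism invariant.

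Reading off \eqref{eqCirclSpinDefinitions} gives $h_R \equiv 1$ and $h_{NS}(z) = z$, so $w(S^1, R) = 0 \neq 1 = w(S^1, NS)$, which proves non-isomorphism. Conversely, given any $(L, \phi)$ with associated function $h$: if $h$ has even winding number, then $1/h$ admits a smooth square root $\psi \in C^\infty(S^1,\C^\times)$, and retrivializing $L$ by multiplication by $\psi$ produces an isomorphism $(L,\phi) \cong (S^1, R)$; if $h$ has odd winding number, the same argument applied to $z/h$ (which has even winding number) produces an isomorphism $(L,\phi) \cong (S^1, NS)$. The existence of the required square roots is the standard fact that squaring $\C^\times \to \C^\times$ is a connected double cover, so a smooth map $S^1 \to \C^\times$ lifts through it precisely when its winding number is even. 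The only substantive step is the transformation formula for $h$ above; once it is established, the remainder reduces to bookkeeping with $\pi_1(\C^\times) = \Z$.
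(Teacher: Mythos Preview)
Your proof is correct and follows essentially the same approach as the paper: both reduce to the winding-number-mod-$2$ invariant of the function $h$ with $\phi_*(1\otimes 1)=h\,\tfrac{dz}{iz}$, i.e., the class of $h$ in $C^\infty(S^1)^\times/(C^\infty(S^1)^\times)^2\cong\Z/2$. The only organizational difference is that the paper first reduces to base-preserving isomorphisms (using that every diffeomorphism of $S^1$ lifts to a spin automorphism, as in Proposition~\ref{propSpinDiffsAreSpinAuts}) and then classifies, whereas you handle general isomorphisms directly via your transformation formula for $h$; the content is the same.
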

\begin{proof}
Let $(L,\phi)$ be a spin structure on $S^1$. For every $\gamma \in \Diff(S^1)$, $(L,\phi)$ has an automorphism that acts on $S^1$ by $\gamma$. Hence it suffices to classify spin structures on $S^1$ up to isomorphisms that act identically on the base space.

Since every complex line bundle on $S^1$ is trivializable, we may assume $L=S^1 \times \C$, in which case $\phi$ is characterized by the non-vanishing section $\omega:=\phi_*(1 \otimes 1)$ of $K_\C|_{S^1}$, where $1$ is the constant function. 
If $\omega_1$ and $\omega_2$ correspond to a pair of spin structures, then base space preserving isomorphisms between these spin structures correspond one-to-one with non-vanishing smooth functions $h \in C^\infty(S^1)^\times$ such that $\omega_1 = h^2 \omega_2$. 
Thus the isomorphism classes of spin structures on $S^1$ are a torsor for $C^\infty(S^1)^\times / (C^\infty(S^1)^\times)^2 \cong \Z/2$. Since $z^{-1}$ is not a square of a smooth function, the spin structures defined by $\omega_1 := \frac{dz}{i}$ and $\omega_2 := \frac{dz}{iz}$ are not isomorphic, and form a complete set of representatives of isomorphism classes.
\end{proof}

\subsubsection{Conjugate spin structures}\label{subsubsecConjugateSpinStructures}
Let $\Sigma$ be a Riemann surface, and let $L$ be a complex line bundle over $\Sigma$. 
We denote by $\overline{\Sigma}$ the Riemann surface obtained by taking the conjugate complex structure on $\Sigma$, and by $\overline{L}$ the line bundle obtained by taking the conjugate complex structure on each fiber of $L$. 
If $L$ has a holomorphic structure, then $\overline{L}$ has a natural holomorphic structure over $\overline{\Sigma}$. 
As real bundles, we have $L_\R = \overline{L}_\R$, and a smooth section of $L_\R$ is a holomorphic section of $L$ if and only if it is a holomorphic section of $\overline{L}$.

Observe that $\overline{T^{(0,1)}\Sigma}=T^{(1,0)}\overline{\Sigma}$. Complex conjugation on $T\Sigma_\C$ exchanges $T^{(1,0)}\Sigma$ and $T^{(0,1)}\Sigma$, and thus induces a holomorphic isomorphism $\overline{T^{(1,0)}\Sigma} \to T^{(1,0)}\overline{\Sigma}$. Dualizing, we get a holomorphic isomorphism $\overline{K_\Sigma} \cong K_{\overline{\Sigma}}$.

Now given a spin structure $\Phi:L \otimes L \to K_{\Sigma}$, there is a natural conjugate spin structure $\overline{\Phi}: \overline{L} \otimes \overline{L} \to K_{\overline{\Sigma}}$ given by
$$
\overline{\Phi}= \overline{L} \otimes \overline{L} \overset{\Phi}{\longrightarrow} \overline{K_\Sigma} \overset{\sim}{\longrightarrow} K_{\overline{\Sigma}}.
$$

Similarly, if $(L, \phi)$ is a spin structure on 1-manifold $Y$, we can define a conjugate spin structure by allowing $\phi$ to act on the conjugate vector spaces. The conjugate spin structure $(\overline{L}, \overline{\phi})$ is given by
$$
\overline{\phi} = \overline{L} \otimes \overline{L} \overset{\phi}{\longrightarrow} \overline{T^*Y_\C} \overset{\sim}{\longrightarrow} T^*Y_\C,
$$	
where the second arrow is complex conjugation.

\begin{Proposition}
Let $(\Sigma,L,\Phi)$ be a spin Riemann surface. Then $\overline{\Phi|_\Gamma} = \overline{\Phi}|_\Gamma$.
\end{Proposition}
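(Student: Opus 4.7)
The plan is to unpack both sides using the definitions of conjugate spin structures from Section \ref{subsubsecConjugateSpinStructures} together with the boundary restriction identification from Proposition \ref{propRestrictedSpinStructure}, and then observe that the two constructions arise from the same underlying conjugation. Identifying $\overline{L}|_\Gamma$ with $\overline{L|_\Gamma}$ (immediate from the definitions, since conjugation is a fiberwise operation commuting with restriction), both $\overline{\Phi|_\Gamma}$ and $\overline{\Phi}|_\Gamma$ become $\C$-linear bundle maps with common source $\overline{L}|_\Gamma \otimes \overline{L}|_\Gamma$ and common target $T^*\Gamma_\C$. Using that restriction to $\Gamma$ commutes with taking the conjugate bundle, the claimed equality reduces to checking commutativity of the diagram
$$
\begin{CD}
\overline{K_\Sigma|_\Gamma} @>{\sim}>> K_{\overline{\Sigma}}|_\Gamma\\
@VV{\sim}V @VV{\sim}V\\
\overline{T^*\Gamma_\C} @>{\sim}>> T^*\Gamma_\C
\end{CD}
$$
in which the top arrow is the restriction to $\Gamma$ of the conjugation isomorphism $\overline{K_\Sigma} \cong K_{\overline{\Sigma}}$ used in the definition of $\overline{\Phi}$, the bottom arrow is the canonical complex conjugation on the complexification of the real bundle $T^*\Gamma$, and the vertical arrows come from Proposition \ref{propRestrictedSpinStructure} applied to $\Sigma$ and to $\overline{\Sigma}$.

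To verify this diagram commutes, I would unfold the construction in the proof of Proposition \ref{propRestrictedSpinStructure}: the isomorphism $K_\Sigma|_\Gamma \to T^*\Gamma_\C$ is obtained by dualizing the $\R$-linear composition $T\Gamma \hookrightarrow T\Sigma|_\Gamma \xrightarrow{\sim} T^{(1,0)}\Sigma|_\Gamma$ (the second arrow being the projection along $T^{(0,1)}\Sigma$) and then complexifying via the real structure of $T\Gamma_\C$. The key observation is that complex conjugation on $T\Sigma_\C$ exchanges $T^{(1,0)}\Sigma$ and $T^{(0,1)}\Sigma = T^{(1,0)}\overline{\Sigma}$ while fixing the real subbundle $T\Sigma = T\overline{\Sigma}$. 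Consequently, the $\R$-linear projections $T\Sigma \to T^{(1,0)}\Sigma$ and $T\Sigma \to T^{(1,0)}\overline{\Sigma}$ are intertwined by this conjugation; dualizing and complexifying then yields the required commutative square.

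The main obstacle is purely notational bookkeeping: one must carefully distinguish between conjugation of a complex bundle (reversing its fiberwise complex structure) and the canonical complex conjugation on the complexification of a real bundle, since both flavors of conjugation appear in the diagram. Once these natural identifications are correctly threaded through all the steps, the commutativity is a formal consequence of naturality, and the identity $\overline{\Phi|_\Gamma} = \overline{\Phi}|_\Gamma$ follows.
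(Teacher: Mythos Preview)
Your proposal is correct and follows essentially the same approach as the paper: reduce the equality to the commutativity of the square relating $\overline{K_\Sigma}|_\Gamma \cong K_{\overline{\Sigma}}|_\Gamma$ with complex conjugation on $T^*\Gamma_\C$, and verify this by dualizing the tangent-level statement that conjugation on $T\Sigma_\C$ intertwines the projections onto $T^{(1,0)}\Sigma$ and $T^{(1,0)}\overline{\Sigma}$ while fixing the real subbundle $T\Sigma$. If anything, you supply more detail than the paper, which essentially just records the two diagrams and asserts the first commutes by the construction in Proposition~\ref{propRestrictedSpinStructure}.
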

\begin{proof}
Recall from Proposition \ref{propRestrictedSpinStructure} that we chose an isomorphism $T^{(1,0)}\Sigma|_\Gamma\to T\Gamma_\C$ so that the diagram
$$
\begin{CD}
\overline{T^{(1,0)}\Sigma|_\Gamma} @>{}>> T^{(1,0)}\overline{\Sigma}|_\Gamma\\
 @VVV @VVV\\
\overline{T\Gamma_\C} @>>> T\Gamma_\C
\end{CD}
$$
commutes, where the top arrow is the isomorphism induced by complex conjugation on $T\Sigma_\C$ and the bottom arrow is complex conjugation on $T\Gamma_\C$.

The above diagram induces a diagram of isomorphisms of dual spaces
$$
\begin{CD}
\overline{L|_\Gamma} \otimes \overline{L|_\Gamma} @>\Phi|_\Gamma>> \overline{K_\Sigma} @>{}>> K_{\overline{\Sigma}}\\
@. @VVV @VVV\\
@. \overline{T^*\Gamma_\C} @>>> T^*\Gamma_\C
\end{CD}
$$
The two paths around this diagram are $\overline{\Phi|_\Gamma}$ and $\overline{\Phi}|_\Gamma$

\end{proof}

\subsubsection{Conformal welding}
One of the fundamental operations in Segal CFT is that of gluing two Riemann surfaces along boundary circles. More generally, we will consider the operation of {\it sewing} a Riemann surface along a pair of boundary circles, which may lie on the same connected component. One wants the (topologically) sewn surface to again be a Riemann surface, and so one must construct a complex structure. It turns out that if the sewing map is a diffeomorphism, then the sewn surface has a natural complex structure.

If $\Sigma$ is a Riemann surface with boundary, a holomorphic function on $\Sigma$ is defined to be a smooth function on $\Sigma$ that is holomorphic in the interior. That is, we require that the function extend continuously to $\partial\Sigma$, and that the restriction to $\partial\Sigma$ be a smooth function.

\begin{Theorem}[Conformal welding]
\label{thmWelding}
Let $\Sigma$ be a Riemann surface, and $C_1$ and $C_2$ be distinct connected components of $\partial\Sigma$, and let $\gamma:C_1 \to C_2$ be an orientation reversing diffeomorphism. Then the topological manifold $\hat \Sigma$ obtained by sewing $C_1$ to $C_2$ along $\gamma$ has a unique complex structure such that the holomorphic functions on $\hat \Sigma$ are naturally in one-to-one correspondence with holomorphic functions $F$ on $\Sigma$ such that $F|_{C_2}\circ \gamma = F|_{C_1}$.
\end{Theorem}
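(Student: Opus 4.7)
The approach is to prove uniqueness by a removable-singularity argument and existence by reducing, locally near the seam, to the classical conformal welding problem.

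For uniqueness, let $S \subset \hat\Sigma$ denote the image of $C_1 \cup C_2$, a smooth embedded $1$-submanifold. Suppose $\mathcal{J}_1$ and $\mathcal{J}_2$ are two complex structures on $\hat\Sigma$ satisfying the stated characterization. Applied to open subsets of $\hat\Sigma \setminus S$, the characterization identifies the sheaf of holomorphic functions for either $\mathcal{J}_i$ with the one inherited from the original structure on $\Sigma \setminus (C_1 \cup C_2)$, so $\mathcal{J}_1$ and $\mathcal{J}_2$ agree off $S$. The identity $(\hat\Sigma, \mathcal{J}_1) \to (\hat\Sigma, \mathcal{J}_2)$ is then a homeomorphism that is biholomorphic off a smooth embedded arc, and by a standard Morera-type removable-singularity argument (continuous plus holomorphic off a smooth arc implies holomorphic) it is biholomorphic on all of $\hat\Sigma$.

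For existence, it suffices to construct holomorphic charts near each point of $S$. Fix $p \in C_1$ and $q = \gamma(p) \in C_2$, and pick holomorphic charts $\phi_1\colon U_1 \to D^+$ and $\phi_2\colon U_2 \to D^-$ sending $p, q$ to $0$, where $D^\pm$ are closed half-disks in $\C$ and $C_i \cap U_i$ maps to a common open real interval $I$. Then $\tilde\gamma := \phi_2 \circ \gamma \circ \phi_1^{-1}\colon I \to I$ is a smooth diffeomorphism, orientation-preserving as a map of real intervals because $\gamma$ reverses the boundary orientations induced from $\Sigma$ while those two boundary orientations correspond under $\phi_1, \phi_2$ to opposite orientations on $I$. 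By the classical conformal welding theorem for smooth boundary diffeomorphisms, one obtains biholomorphisms $\psi^\pm\colon D^\pm \to \Omega^\pm$ onto Jordan domains in $\C$ meeting along a smooth Jordan arc $\Gamma$, with $\psi^+(x) = \psi^-(\tilde\gamma(x))$ for $x \in I$. The composition $\psi^\pm \circ \phi_i$ descends to a homeomorphism $\hat\phi$ from a neighborhood of the image of $p$ in $\hat\Sigma$ onto $\Omega^+ \cup \Gamma \cup \Omega^-$. Declaring these $\hat\phi$, together with the charts from the original structure on $\Sigma \setminus (C_1 \cup C_2)$, to be holomorphic charts yields a complex structure on $\hat\Sigma$, since the transition functions between any two such charts are compositions of biholomorphisms off a smooth arc and continuous across it, hence holomorphic by the removable-singularity lemma above.

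To verify the characterization, note that a function $F$ on $\hat\Sigma$ is holomorphic in a seam chart iff $F \circ \hat\phi^{-1}$ is continuous on $\Omega^+ \cup \Gamma \cup \Omega^-$ and holomorphic on each $\Omega^\pm$. Pulling back via $\psi^\pm \circ \phi_i$, this is precisely the condition that $F$ is holomorphic on each $U_i$ and satisfies $F|_{C_2} \circ \gamma = F|_{C_1}$. The main obstacle in the proof is the classical conformal welding theorem for smooth diffeomorphisms; while well-known, it is a genuinely substantive input, provable by extending $\tilde\gamma$ to a quasi-conformal homeomorphism of a neighborhood of $I$ and solving the resulting Beltrami equation via the measurable Riemann mapping theorem, then composing with the Riemann maps of the two resulting Jordan domains.
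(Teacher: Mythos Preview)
The paper does not give its own proof of this theorem; it is quoted as background with a pointer to the survey \cite{Sharon2006}. Your argument is the standard one and is essentially correct: uniqueness by a Morera/removable-singularity argument across a smooth arc, existence by localizing near the seam and invoking classical planar conformal welding for smooth diffeomorphisms.

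Two small points are worth tightening. In the uniqueness step you apply the characterization to open subsets of $\hat\Sigma$, while the statement as written speaks only of global holomorphic functions; the sheaf-level reading is the intended one (and necessary, since if $\hat\Sigma$ happened to be closed the global functions would only be constants), but you should say so explicitly, or else phrase uniqueness as: any admissible structure makes the quotient $\Sigma\to\hat\Sigma$ holomorphic off the seam, and then invoke removability. In the existence step, classical welding is formulated for diffeomorphisms of $S^1$ (or $\mathbb{R}$) rather than of an interval; your local application follows by extending $\tilde\gamma$ smoothly to a diffeomorphism of $S^1$, welding, and restricting to a neighborhood of the seam point. The independence of the resulting chart from this choice of extension is precisely what the removable-singularity lemma handles when you verify compatibility of overlapping seam charts, so that step is doing real work and deserves to be made explicit.
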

A survery of conformal welding is given in \cite{Sharon2006}.


More generally, we are interested in the conformal welding of spin Riemann surfaces.

\begin{Theorem}
\label{thmWeldingSpinStructures}
Let $(L,\Phi)$ be a spin structure on a Riemann surface $\Sigma$, and let $C_1$ and $C_2$ be distinct boundary components of $\Sigma$.  Suppose that $\beta:L|_{C_1} \to L|_{C_2}$ is an isomorphism of spin structures, and that $\gamma:=\beta|_{C_1}$ is orientation reversing. Then the topological bundle $\hat L$ over $\hat \Sigma$ obtained by sewing along $\beta$ is naturally a spin structure, and the holomorphic sections of $\hat L$ are naturally in one-to-one correspondence with holomorphic sections $F$ of $L$ such that $\beta^* F|_{C_2} = F|_{C_1}$.
\end{Theorem}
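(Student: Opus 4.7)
The plan is to construct $\hat L$ as a holomorphic line bundle over $\hat\Sigma$ by gluing $L$ to itself along $\beta$, and then to define the spin structure $\hat\Phi:\hat L\otimes\hat L\to K_{\hat\Sigma}$ by gluing $\Phi$ on each side of the seam. The complex structure on $\hat\Sigma$ is supplied by Theorem \ref{thmWelding}; unpacking that theorem in terms of $1$-forms, a holomorphic section of $K_{\hat\Sigma}$ near the seam is exactly a pair $(\omega_1,\omega_2)$ of holomorphic $1$-forms on the two sides whose boundary restrictions satisfy $\omega_1|_{C_1}=\gamma^*\omega_2|_{C_2}$. First, $\hat L$ is constructed as a smooth complex line bundle by identifying the fibers of $L|_{C_1}$ and $L|_{C_2}$ via $\beta$; smoothness across the seam follows from standard bundle gluing since $\beta$ is a smooth bundle isomorphism. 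The map $\hat\Phi$ is then defined piecewise from $\Phi$, and the compatibility required for it to be well-defined on the seam is exactly the commuting square in \eqref{eqnCircleSpinIso} that makes $\beta$ an isomorphism of spin structures on $1$-manifolds, combined with the description of $K_{\hat\Sigma}$ above.

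The main work is endowing $\hat L$ with a holomorphic structure extending the given ones on each side of the seam and making $\hat\Phi$ holomorphic. I would proceed locally. Given any seam point, use Theorem \ref{thmWelding} to pick a local holomorphic coordinate $\hat z$ on $\hat\Sigma$, and let $z_i$ denote its restriction to a neighborhood of the corresponding arc of $C_i$ in $\Sigma$. Using $\Phi$, choose a local holomorphic section $\ell_1$ of $L$ near $C_1$ with $\Phi(\ell_1\otimes\ell_1)=dz_1$, and similarly $\ell_2$ near $C_2$. The spin compatibility of $\beta$ together with $\gamma^*(dz_2|_{C_2})=dz_1|_{C_1}$ forces $\Phi(\beta(\ell_1)\otimes\beta(\ell_1))|_{C_2}=dz_2|_{C_2}$, so $\beta(\ell_1)|_{C_1}$ agrees with $\ell_2|_{C_2}$ up to a sign, which can be absorbed by replacing $\ell_2$ with $-\ell_2$. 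The pair then assembles into a continuous local section $\hat\ell$ of $\hat L$ that is holomorphic on each side. Declaring such $\hat\ell$ to be holomorphic local frames defines the holomorphic structure on $\hat L$ near the seam. Independence of the choices (of $\hat z$ and of signs) reduces to the observation that any two such frames differ by a smooth unit whose square is a holomorphic ratio of coordinate differentials, hence is itself holomorphic across the seam by Theorem \ref{thmWelding}.

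For the identification of holomorphic sections, a holomorphic section of $\hat L$ is locally a holomorphic multiple of $\hat\ell$, and so restricts on each side to a holomorphic section of $L$. The welding statement for functions in Theorem \ref{thmWelding} shows that the multiplier extends holomorphically across the seam iff the two restrictions of the corresponding section $F$ of $L$ satisfy $\beta(F|_{C_1})=F|_{C_2}\circ\gamma$, which is exactly the condition $\beta^*F|_{C_2}=F|_{C_1}$ stated in the theorem. Away from the seam the correspondence is tautological.

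The hard part is the construction and well-definedness of the holomorphic structure on $\hat L$ near the seam. Morally this is the line-bundle analog of conformal welding for functions, but one must track the sign ambiguities coming from the fact that $L$ is a square root of $K_\Sigma$, and verify that the local frame $\hat\ell$ is independent of choices up to multiplication by holomorphic units — which is precisely what reduces the statement to Theorem \ref{thmWelding}.
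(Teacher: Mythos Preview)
Your proof is correct, but it takes a more hands-on route than the paper does. You build the holomorphic structure on $\hat L$ explicitly: choose a welded coordinate $\hat z$, take local frames $\ell_i$ satisfying $\Phi(\ell_i\otimes\ell_i)=dz_i$, match signs across the seam via the spin compatibility of $\beta$, and check that any two such frames differ by a unit whose square is the holomorphic function $d\hat z/d\hat z'$, hence is itself holomorphic. The paper instead invokes a general principle from \cite[\S3]{At71}: a topological line bundle $L$ together with a \emph{continuous} isomorphism $\Phi:L\otimes L\to K_\Sigma$ already determines a unique holomorphic structure on $L$ making $\Phi$ holomorphic. With that in hand, the paper only needs to observe that the topologically sewn $\hat\Phi:\hat L\otimes\hat L\to K_\Sigma/\gamma\cong K_{\hat\Sigma}$ is continuous, and the holomorphic structure on $\hat L$ comes for free. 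Your argument is essentially a proof of Atiyah's principle in this specific situation; what it buys you is that the construction is self-contained and makes the section correspondence transparent, while the paper's approach is shorter but leans on an external observation. Both arrive at the same characterization of holomorphic sections of $\hat L$ as those continuous sections that are holomorphic on each side of the seam.
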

\begin{proof}
As remarked in \cite[Sec. 3]{At71}, spin structures on $\Sigma$ are in one-to-one correspondence with topological line bundles $L$ along with {\it continuous} isomorphisms $\Phi: L \otimes L \to K_\Sigma$, as such a $\Phi$ gives $L$ a natural complex structure making $\Phi$ holomorphic. 

Now observe that the projection $\Sigma \to \hat \Sigma$ induces a continuous isomorphism of the topologically sewn bundle $K_\Sigma / \gamma$ with $K_{\hat \Sigma}$. We thus get a continuous isomorphism
$$
\hat \Phi = \hat L \otimes \hat L \longrightarrow K_\Sigma / \gamma \longrightarrow K_{\hat \Sigma}. 
$$
By the above discussion, the complex structure on $K_{\hat \Sigma}$ gives $\hat L$ the structure of a holomorphic bundle, for which $\hat \Phi$ is holomorphic. The holomorphic sections of $\hat L$ are precisely those continuous sections which are holomorphic away from the circle along which $\Sigma$ was sewn.
\end{proof}

One application of Theorem \ref{thmWeldingSpinStructures} is that one can easily embed a compact spin Riemann surface with boundary in an open spin Riemann surface

\begin{Corollary}\label{corEmbeddingInOpenSpinSurface}
Let $(\Sigma, L, \Phi)$ be a compact, connected Riemann surface with non-empty boundary $\Gamma$. Then $(\Sigma, L, \Phi)$ can be embedded in an open spin Riemann surface $(\tilde \Sigma, \tilde L, \tilde \Phi)$.
\end{Corollary}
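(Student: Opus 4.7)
The plan is to extend $(\Sigma, L, \Phi)$ to an open spin Riemann surface by conformally welding a spin annulus onto each boundary component via Theorem \ref{thmWeldingSpinStructures}, and then deleting the newly created boundary circles. This is the spin-structured analogue of the standard ``collar'' construction: gluing on annuli pushes the boundary of $\Sigma$ into the interior of the enlarged surface, after which removing the new boundary yields an open surface containing $\Sigma$.

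In detail, let $C_1,\ldots,C_n$ be the boundary components of $\Sigma$. By Proposition \ref{propRestrictedSpinStructure}, each $(L|_{C_i}, \Phi|_{C_i})$ is a spin structure on $C_i$, and by the classification of spin structures on $S^1$ it is isomorphic to either $(S^1,NS)$ or $(S^1,R)$; call this type $\sigma_i$. For each $i$ I select a spin annulus $(\A_{r_i},\sigma_i)$ as in Example \ref{exSpinAnnuli} for some $r_i \in (0,1)$. The restriction of $(\A_{r_i},\sigma_i)$ to its outer boundary $\{\abs{z}=1\}$ is isomorphic as a spin circle to $(S^1,\sigma_i)$, hence to $(L|_{C_i},\Phi|_{C_i})$, so a spin isomorphism $\beta_i^{(0)} : L|_{\{\abs{z}=1\}} \to L|_{C_i}$ exists. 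By Proposition \ref{propSpinDiffsAreSpinAuts}, the full set of such isomorphisms is a torsor for $\Aut(S^1,\sigma_i) \cong \Diff^{\sigma_i}(S^1)$. Since $\Diff^{\sigma_i}(S^1)$ contains orientation-reversing elements---for either spin type, given an orientation-reversing $\gamma \in \Diff(S^1)$, the right-hand side of the defining equation for $\psi$ is a strictly negative real-valued smooth function on $S^1$, and so admits a smooth nowhere-vanishing square root---one can post-compose $\beta_i^{(0)}$ with such an element to obtain a $\beta_i$ whose base diffeomorphism is orientation-reversing.

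With these $\beta_i$ in hand, I apply Theorem \ref{thmWeldingSpinStructures} iteratively to the sewing data on the disjoint union $\Sigma \sqcup \bigsqcup_i \A_{r_i}$, producing a compact spin Riemann surface $(\Sigma', L', \Phi')$ that contains $(\Sigma, L, \Phi)$ and whose boundary equals $\bigsqcup_i \{\abs{z}=r_i\}$. Setting $\tilde \Sigma := \Sigma' \setminus \partial \Sigma'$ and $(\tilde L, \tilde \Phi) := (L'|_{\tilde \Sigma},\, \Phi'|_{\tilde \Sigma})$ yields the desired open spin Riemann surface in which $(\Sigma, L, \Phi)$ sits as a closed subsurface carrying its original spin structure. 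The only non-formal step is the construction of the orientation-reversing spin isomorphisms $\beta_i$ addressed above; everything else follows directly from Theorem \ref{thmWeldingSpinStructures} and the explicit descriptions of the spin annuli.
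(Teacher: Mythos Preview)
Your approach is exactly the paper's: classify the boundary spin circles, weld on matching spin annuli via Theorem~\ref{thmWeldingSpinStructures}, and pass to the interior of the enlarged surface. The paper's proof is a two-line sketch of precisely this construction.

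One small correction to your justification of the orientation-reversing step: in the Neveu--Schwarz case the defining relation is $\psi^2 = (\gamma^{-1})'$, and for orientation-reversing $\gamma$ this function is \emph{not} real-valued (e.g.\ for $\gamma(z)=\bar z$ one gets $(\gamma^{-1})'(z) = -z^{-2}$). Your argument as written only covers the Ramond case. The conclusion is nonetheless correct: for any orientation-reversing $\gamma\in\Diff(S^1)$, the function $(\gamma^{-1})'$ has winding number $-2$ about the origin (write $\gamma^{-1}(e^{i\theta})=e^{i\phi(\theta)}$ with $\phi$ strictly decreasing by $2\pi$ over a period and compute), hence even, so a smooth square root $\psi$ exists. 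With this fix the argument goes through.
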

\begin{proof}
The restriction of $L$ to each connected component of $\Gamma$ is isomorphic to some spin circle $(S^1,\sigma)$ for $\sigma \in \{NS, R\}$. Thus one can embed $\Sigma$ in a new spin Riemann surface $\Sigma^\prime$ by welding a spin annulus $(\A_r,\sigma)$ to each boundary component via Theorem \ref{thmWeldingSpinStructures}. The desired $\tilde \Sigma$ is any sufficiently small neighborhood of $\Sigma$ in $\Sigma^\prime$.
\end{proof}

One of the advantages of embedding a spin Riemann surface with boundary in an open spin Riemann surface is that we may apply the following result on triviality of holomorphic vector bundles.

\begin{Theorem}\label{thmTrivialityOfVectorBundles}
Every holomorphic vector bundle over an open Riemann surface is holomorphically trivializable.
\end{Theorem}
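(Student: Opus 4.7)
The plan is to reduce the statement to two well-known inputs: the Behnke--Stein theorem, which identifies open (i.e.\ non-compact) Riemann surfaces with Stein manifolds of complex dimension one, and Grauert's Oka principle, which says that on a Stein manifold the holomorphic classification of complex vector bundles coincides with their continuous (i.e.\ topological) classification.

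First I would state (or cite) Behnke--Stein so that the open Riemann surface $X$ in the hypothesis may be treated as a Stein manifold; this is what unlocks the vanishing theorems of Cartan B for coherent analytic sheaves. Second, I would invoke Grauert's theorem on Oka pairs: for any complex Lie group $G$, the natural map from holomorphic principal $G$-bundles to topological principal $G$-bundles over a Stein space is a bijection on isomorphism classes. Applied to $G=GL(n,\C)$, this reduces the triviality of a rank-$n$ holomorphic vector bundle $E \to X$ to the triviality of its underlying topological $\C^n$-bundle.

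Third, I would argue that the underlying topological bundle is trivial. Every connected open Riemann surface is homotopy equivalent to a (possibly infinite) $1$-dimensional CW complex: one can exhaust $X$ by compact bordered subsurfaces and retract each onto a spine. Since $BU(n)$ is simply connected, every continuous map from a $1$-dimensional CW complex into $BU(n)$ is null-homotopic, so every continuous complex vector bundle on such $X$ is trivial. Handling disconnected $X$ is immediate by treating each component separately.

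The only delicate point is invoking Grauert's Oka principle in the form needed; the topological vanishing and Behnke--Stein are standard. One can also give a more elementary but less uniform argument: for line bundles, combine the long exact cohomology sequence of $0 \to \Z \to \O \to \O^\times \to 0$ with $H^1(X, \O)=0$ (Cartan B on a Stein curve) and $H^2(X,\Z)=0$ (non-compact real surface) to conclude $H^1(X,\O^\times)=0$; then proceed by induction on rank, using that on a Stein curve every coherent sheaf is generated by global sections and that extensions $0 \to \O \to E \to E' \to 0$ split since $\mathrm{Ext}^1(E',\O) \subseteq H^1(X, \O \otimes (E')^\vee) = 0$. Either way, the combination of Stein-ness with the $1$-dimensional homotopy type of $X$ is the essential mechanism, and the main conceptual obstacle is precisely matching the two layers (holomorphic vs.\ topological) via Oka.
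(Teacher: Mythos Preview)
Your proposal is correct, but note that the paper does not actually prove this theorem: it simply cites \cite[\S30]{Fo81} (Forster, \emph{Lectures on Riemann Surfaces}) as a reference and uses the result as a black box. So there is no paper proof to compare against in detail.

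That said, your two sketched routes are both standard and valid. The second, more elementary route (trivialize line bundles via the exponential sequence and $H^1(X,\O)=H^2(X,\Z)=0$, then induct on rank using that extensions by $\O$ split on a Stein curve) is essentially the argument carried out in the cited section of Forster, though Forster phrases the inductive step slightly differently: he shows directly that every holomorphic vector bundle on a noncompact Riemann surface admits a nowhere-vanishing holomorphic section, then passes to the quotient. Your first route via Behnke--Stein plus Grauert's Oka principle is more conceptual and works uniformly in all ranks at once, at the cost of invoking a substantially deeper theorem than is strictly necessary in complex dimension one. Either would be acceptable here; for the purposes of this paper only the line-bundle case is ever used (to trivialize the spin bundle $L$), so the elementary argument for $n=1$ alone would already suffice.
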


See \cite[\S30]{Fo81} for an extended discussion of Theorem \ref{thmTrivialityOfVectorBundles}.


\section{Spin Riemann surfaces and their Hardy spaces}
\label{secRiggedSpinRiemannSurfaces}

\subsection{Notation, definitions, and examples}

The following notational conventions will be used throughout the remainder of the paper. Let $(\Sigma, L, \Phi)$ be a spin Riemann surface. Let $\Gamma$ be the boundary of $\Sigma$, and let $\pi_0(\Gamma)$ be the set of connected components of $\Gamma$. Let $\beta := (\beta_j)_{j \in \pi_0(\Gamma)}$ be a trivialization of the spin structure $L|_\Gamma$. That is, we have a function $\sigma:\pi_0(\Gamma) \to \{NS,R\}$ and isomorphisms of spin structures
$$
\beta_j:(S^1, \sigma(j)) \to L|_j.
$$
Note that $\sigma$ is uniquely determined by the spin structure on $\Sigma$. 

For $j \in \pi_0(\Gamma)$, let $\gamma_j$ be the isomorphism of 1-manifolds $\beta_j|_{S^1}:S^1 \to j$. 
Riemann surfaces have natural orientations given by the complex structure, and we give $\Gamma$ the orientation induced by restriction.
Now the family $\gamma_j$ induces a partition of the boundary $\Gamma = \Gamma^0 \sqcup \Gamma^1$ into closed connected submanifolds by declaring that $j \subset \Gamma^1$ if and only if $\gamma_j$ is orientation preserving, where $S^1$ is given the standard counterclockwise orientation. 
For a fixed partition $\Gamma = \Gamma^1 \sqcup \Gamma^0$ the collection of compatible boundary trivializations $\beta$ is a torsor for the group $\prod_{j \in \pi_0(\Gamma)} \Diff_+^{\sigma(j)}(S^1)$ by Proposition \ref{propSpinDiffsAreSpinAuts}.

\begin{Definition}
A {\it spin Riemann surface with boundary parametrization} is a quadruple $(\Sigma, L, \Phi, \beta)$ as above. 
That is, $(\Sigma, L, \Phi)$ is a spin Riemann surface, and $\beta_j:(S^1, \sigma(j)) \to L|_j$ is an isomorphism of spin structures.
 We denote by $\cR$ the collection of such $(\Sigma, L, \Phi, \beta)$ with the additional property that $\Sigma$ has no closed components. 
 An isomorphism of spin Riemann surfaces with boundary parametrizations $(\Sigma_1, L_1, \Phi_1, \beta_1) \to (\Sigma_2, L_2, \Phi_2, \beta_2)$ is an isomorphism of spin structures $B:L_1 \to L_2$ such that $B \circ \beta_1 = \beta_2$.
\end{Definition}

\begin{Example}\label{exRiggedSpinDisk}
The spin disk $(\D, NS)$ defined in Example \ref{exSpinDisk} has a boundary trivialization given by the identity map $S^1 \times \C \to S^1 \times \C$.
\end{Example}

\begin{Example}\label{exRiggedSpinAnnuli}
The spin annuli $(\A_r, \sigma)$ defined in Example \ref{exSpinAnnuli} have families of standard boundary trivializations. When $\sigma = R$, this family is parametrized by $q \in rS^1$ and the isomorphisms $\beta_j:S^1 \times \C \to j \times \C$, for $j \in \pi_0(\Gamma)$, are given by
$$
\beta_{j} (z, \alpha) = \left\{\begin{array}{cl}(z, \alpha) & j = S^1\\ (qz, \alpha) & j = rS^1\end{array}\right.
$$
We refer to this spin Riemann surface with boundary parametrization as $(\A_q, R)$. 

When $\sigma = NS$, the standard boundary trivializations depend on $q \in rS^1$ as well as a square root $q^{1/2}$ of $q$. We then define
$$
\beta_{j} (z, \alpha) = \left\{\begin{array}{cl}(z, \alpha) & j = S^1\\ (qz, q^{-1/2}\alpha) & j = rS^1\end{array}\right.
$$
We refer to this spin Riemann surface with boundary parametrization as $(\A_{q,q^{1/2}}, NS)$, or by abuse of notation simply as $(\A_q, NS)$, leaving implicit the choice of $q^{1/2}$.
\end{Example}

\begin{Example}\label{exRiggedSpinPants}
Let $w \in \D$ and $r_1, r_2 \in (0,1)$, and suppose that $r_1 + r_2 < \abs{w} < 1- r_1$, so that we have spin pairs of pants $(\bbP_{w,r_1,r_2}, NS)$ as in Example \ref{exSpinPants}. We define boundary trivializations
$$
\beta_{j} (z, \alpha) = \left\{
\begin{array}{cl}(z, \alpha) & j = S^1\\ 
(q_1z + w, q_1^{-1/2}\alpha) & j = r_1S^1 + w\\
(q_2z, q_2^{-1/2}\alpha) & j = r_2S^1
\end{array}\right.
$$
We refer to this spin Riemann surface with boundary parametrization as $(\bbP_{w,q_1,q_1^{1/2},q_2,q_2^{1/2}}, NS)$, or by abuse of notation as $(\bbP_{w,q_1,q_2}, NS)$, leaving the dependence on the choice of square roots implicit. The moduli space of parametrized standard Neveu-Schwarz spin pairs of pants is
$$
\M_{NS} = \{(w, q_1, q_1^{1/2}, q_2, q_2^{1/2}) \in \C^5 : 0<\abs{q_1} + \abs{q_2} < \abs{w} < 1 - \abs{q_1} \}.
$$
\end{Example}

Let $X=(\Sigma, L, \Phi, \beta) \in \cR$, and let $\Gamma$ be the boundary of $\Sigma$. Define the pre-quantized boundary Hilbert space $H_\Gamma$ by
$$
H_\Gamma = \bigoplus_{j \in \pi_0(\Gamma)} L^2(S^1)
$$
and similarly for $i \in \{0,1\}$ let
$$
H_{\Gamma}^i = \bigoplus_{j \in \pi_0(\Gamma^i)} L^2(S^1).
$$
Note that while $H_\Gamma$ only depends on the manifold $\Gamma$, the partition $\Gamma = \Gamma^0 \sqcup \Gamma^1$, and thus the decomposition $H_\Gamma = H_{\Gamma}^0 \oplus H_{\Gamma}^1$, depend on the spin structure $L$ and the boundary trivialization $\beta$.

Let $X \in \cR$ and denote by $\O(\Sigma;L)$ the collection of sections of $L$ which are holomorphic on the interior of $\Sigma$ and restrict to smooth sections of $L|_\Gamma$.  

\begin{Definition}
The Hardy space $H^2(X) \subset H_\Gamma$ is 
defined by
$$
H^2(X) = \cl \{ \beta^*F|_{\Gamma} : F \in \O(\Sigma;L)\}.
$$
\end{Definition}

\begin{Remark}\label{rmkClassicalHardy}
Elements of the closed subspace $H^2(X)$ have an explicit description in terms of holomorphic sections on the interior of $\Sigma$ with $L^2$ boundary values. The equivalence of these two descriptions is given in the planar case in \cite[\S6]{Bell92}, and the same proof goes through in the case of Riemann surfaces. We will not use this description of the Hardy space.
\end{Remark}

\begin{Proposition}\label{propHardyIsoInvariant}
Let $X_1,X_2 \in \cR$ and suppose that $X_1$ and $X_2$ are isomorphic as spin Riemann surfaces with boundary parametrizations. Then $H^2(X_1) = H^2(X_2)$. 
\end{Proposition}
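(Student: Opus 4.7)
The proposition is essentially a naturality statement, so the proof plan is to unwind the definitions and verify that an isomorphism $B \colon X_1 \to X_2$ induces a bijection between the dense subsets whose closures are $H^2(X_1)$ and $H^2(X_2)$.

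First, I would fix notation: write $X_i = (\Sigma_i, L_i, \Phi_i, \beta_i)$, and let $B \colon L_1 \to L_2$ be an isomorphism of spin Riemann surfaces with boundary parametrizations, so that $B \circ \beta_1 = \beta_2$. Since $B$ is in particular a biholomorphism $\Sigma_1 \to \Sigma_2$ covered by a holomorphic bundle isomorphism, it induces a bijection $\pi_0(\Gamma_1) \to \pi_0(\Gamma_2)$ on boundary components. The compatibility $B \circ \beta_1 = \beta_2$ forces the boundary component $j \in \pi_0(\Gamma_1)$ and its image $B(j) \in \pi_0(\Gamma_2)$ to have the same type in $\{NS, R\}$ and to lie on the same side of the incoming/outgoing partition, so the identification of indexing sets gives a canonical unitary identification of the ambient spaces $H_{\Gamma_1} = H_{\Gamma_2} =: H_\Gamma$. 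This is the setting in which the equality $H^2(X_1) = H^2(X_2)$ is to be interpreted.

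Next, I would observe that pullback by $B$ gives a bijection
$$
B^* \colon \O(\Sigma_2; L_2) \longrightarrow \O(\Sigma_1; L_1),
$$
since $B$ is a holomorphic isomorphism that restricts to a diffeomorphism of boundaries, and holomorphy and smoothness are preserved under such pullbacks. For any $F \in \O(\Sigma_2; L_2)$, the identity $B \circ \beta_1 = \beta_2$ gives the pointwise equality of sections on the boundary,
$$
\beta_2^* F\big|_{\Gamma_2} = \beta_1^*\bigl(B^*F\bigr)\big|_{\Gamma_1},
$$
as elements of $H_\Gamma$ (again, using the canonical identification from the previous step). Thus the set of boundary values appearing in the definition of $H^2(X_2)$ is contained in the corresponding set for $H^2(X_1)$; applying the same argument to $B^{-1}$ yields the reverse inclusion, and so the two dense subsets coincide before taking closures.

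Taking closures inside $H_\Gamma$ then yields $H^2(X_1) = H^2(X_2)$, as desired. I do not anticipate any genuine obstacle: the only minor care needed is in matching the indexing of boundary components and checking that the partition $\Gamma = \Gamma^0 \sqcup \Gamma^1$ is preserved under $B$, which follows because $B$ is orientation preserving on $\Sigma_i$ and hence on each boundary circle, so that the orientation-preserving/reversing character of each $\gamma_j$ is transported correctly.
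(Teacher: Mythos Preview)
Your proof is correct and follows essentially the same approach as the paper: both use the compatibility $B \circ \beta_1 = \beta_2$ to show $\beta_1^*(B^*F) = \beta_2^*F$, hence the dense sets of boundary values coincide. Your version adds helpful detail about identifying the ambient Hilbert spaces $H_{\Gamma_1}$ and $H_{\Gamma_2}$ via the induced bijection on boundary components, which the paper leaves implicit.
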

\begin{proof}
Let $B:X_1 \to X_2$ be an isomorphism. That is, $B$ is an isomorphism of the spin structures of $X_1$ and $X_2$ such that $B \circ \beta_1 = \beta_2$. Then $\O(\Sigma_1; L_1) = B^* \O(\Sigma_2; L_2)$, and thus
\begin{align*}
\{ \beta_1^*F|_{\Gamma_1} : F \in \O(\Sigma_1;L_1)\} &= \{ \beta_1^*B^* F|_{\Gamma_2} : F \in \O(\Sigma_2;L_2)\}\\
&= \{ \beta_2^* F|_{\Gamma_2} : F \in \O(\Sigma_2;L_2)\}.
\end{align*}
\end{proof}

\subsection{Operations on spin Riemann surfaces}

We now introduce several operations on spin Riemann surfaces with boundary parametrizations, starting with the most straightforward, disjoint union.

\subsubsection{Disjoint union}

\begin{Definition}
Given $X = (\Sigma, L, \Phi, \beta) \in \cR$ and $X^\prime = (\Sigma^\prime, L^\prime, \Phi^\prime, \beta^\prime)$, we define the disjoint union 
$$
X \sqcup X^\prime := (\Sigma \sqcup \Sigma^\prime, L \sqcup L^\prime, \Phi \sqcup \Phi^\prime, \beta \sqcup \beta^\prime) \in \cR
$$
in the obvious way.
\end{Definition}

\begin{Proposition}\label{propMonoidalHardy}
Let $X_1,X_2 \in \cR$. Then $H^2(X_1 \sqcup X_2) = H^2(X_1) \oplus H^2(X_2)$.
\end{Proposition}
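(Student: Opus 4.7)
The plan is to unwind the definitions and observe that essentially everything splits along the disjoint union. Write $X_i = (\Sigma_i, L_i, \Phi_i, \beta_i)$ and $\Gamma_i = \partial \Sigma_i$, so that the disjoint union has boundary $\Gamma = \Gamma_1 \sqcup \Gamma_2$ and $\pi_0(\Gamma) = \pi_0(\Gamma_1) \sqcup \pi_0(\Gamma_2)$. The definition of the pre-quantized boundary space then gives immediately $H_\Gamma = H_{\Gamma_1} \oplus H_{\Gamma_2}$.

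Next I would observe that a section of $L_1 \sqcup L_2$ on $\Sigma_1 \sqcup \Sigma_2$ that is holomorphic on the interior and smooth on the boundary is, tautologically, the same as a pair $(F_1, F_2)$ with $F_i \in \O(\Sigma_i; L_i)$; that is,
$$
\O(\Sigma_1 \sqcup \Sigma_2; L_1 \sqcup L_2) = \O(\Sigma_1; L_1) \oplus \O(\Sigma_2; L_2).
$$
Since $\beta = \beta_1 \sqcup \beta_2$, restriction and pullback also split, so
$$
\{\beta^* F|_{\Gamma} : F \in \O(\Sigma_1 \sqcup \Sigma_2; L_1 \sqcup L_2)\} = \{\beta_1^* F_1|_{\Gamma_1} : F_1 \in \O(\Sigma_1; L_1)\} \oplus \{\beta_2^* F_2|_{\Gamma_2} : F_2 \in \O(\Sigma_2; L_2)\}
$$
as subsets of $H_{\Gamma_1} \oplus H_{\Gamma_2}$.

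To finish, I would invoke the standard fact that for subspaces $A \subseteq H_1$ and $B \subseteq H_2$, the closure of $A \oplus B$ in $H_1 \oplus H_2$ is $\cl(A) \oplus \cl(B)$ (one inclusion is trivial; the reverse follows from the continuity of the orthogonal projections onto the two summands). Applying this to the identity above yields $H^2(X_1 \sqcup X_2) = H^2(X_1) \oplus H^2(X_2)$. There is no real obstacle here—every step is a direct unwrapping of the definitions, and the only nontrivial input is the elementary closure-of-direct-sum lemma.
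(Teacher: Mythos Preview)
Your proof is correct and is precisely the unwinding of definitions that the paper has in mind; the paper's own proof is simply the one-line statement ``This is immediate from the definitions.'' You have spelled out in detail exactly what that sentence means.
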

\begin{proof}
This is immediate from the definitions.
\end{proof}
\subsubsection{Reparametrization}

In Proposition \ref{propSpinDiffsAreSpinAuts}, we identified spin structure automorphisms $\phi:(S^1, \sigma) \to (S^1, \sigma)$ with $(\psi,\gamma) \in \Diff^\sigma(S^1)$ in such a way that 
\begin{equation}\label{eqRelAutsAndDiffs}
\phi_*f = u_\sigma(\psi,\gamma)f, \quad f \in C^\infty(S^1).
\end{equation}
Now given $X=(\Sigma, L, \Phi, \beta) \in \cR$ and 
$$
(\psi,\gamma):= \prod_{j \in \pi_0(\Gamma)} (\psi_j,\gamma_j) \in \prod_{j \in \pi_0(\Gamma)} \Diff_+^{\sigma(j)}(S^1)
$$
we define the action $(\psi,\gamma) \cdot \beta$ by
$$
((\psi,\gamma) \cdot \beta)_j = \beta_j \circ \phi_j^{-1},
$$
where $\phi_j$ is the spin structure automorphism of $(S^1, \sigma(j))$ associated to $(\psi_j,\gamma_j)$ as in \eqref{eqRelAutsAndDiffs}.

\begin{Definition}
Let $X \in \cR$ and let $(\psi,\gamma) \in \prod_{j \in \pi_0(\Gamma)} \Diff_+^{\sigma(j)}(S^1)$. Then the reparametrization of $X$ by $(\psi, \gamma)$ is given by
$$
(\psi, \gamma) \cdot X := (\Sigma, L, \Phi, (\psi, \gamma) \cdot \beta).
$$
\end{Definition}

\begin{Proposition}\label{propReparametrizedHardy}
For $X \in \cR$ and $(\psi,\gamma) \in \prod_{j \in \pi_0(\Gamma)} \Diff_+^{\sigma(j)}(S^1)$, we have
$$
H^2((\psi,\gamma) \cdot X) = \left(\bigoplus_{j \in \pi_0(\Gamma)} u_{\sigma(j)}(\psi_j,\gamma_j) \right) H^2(X).
$$
\end{Proposition}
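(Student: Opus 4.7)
The proof should be essentially a direct unwinding of the definitions, using Proposition \ref{propSpinDiffsAreSpinAuts} to identify pullbacks of sections under spin bundle automorphisms with the spin representation $u_\sigma$. The plan is as follows.

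First, I would start from the definition of the Hardy space applied to the reparametrized surface, obtaining
$$
H^2((\psi,\gamma) \cdot X) = \cl\{ ((\psi,\gamma)\cdot\beta)^* F|_\Gamma : F \in \O(\Sigma;L)\}.
$$
On the $j$-th boundary component, $((\psi,\gamma)\cdot\beta)_j = \beta_j \circ \phi_j^{-1}$, where $\phi_j$ is the spin bundle automorphism of $(S^1,\sigma(j))$ corresponding to $(\psi_j,\gamma_j)$ under Proposition \ref{propSpinDiffsAreSpinAuts}. Therefore
$$
((\psi,\gamma)\cdot\beta)_j^* F = (\phi_j^{-1})^*(\beta_j^* F|_j),
$$
reducing the claim to identifying the operator $(\phi_j^{-1})^*$ on sections of the trivial spin bundle over $S^1$.

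Next, the key computation: Proposition \ref{propSpinDiffsAreSpinAuts} gives that the pushforward of sections by $\phi_j$ is $(\phi_j)_* f = u_{\sigma(j)}(\psi_j,\gamma_j) f$. Since $(\phi_j^{-1})^*$ is the inverse of $(\phi_j^{-1})_*$, and $(\phi_j^{-1})_* = ((\phi_j)_*)^{-1}$, we get $(\phi_j^{-1})^* = (\phi_j)_*$, so
$$
((\psi,\gamma)\cdot\beta)_j^* F|_j = u_{\sigma(j)}(\psi_j,\gamma_j)(\beta_j^* F|_j).
$$
Assembling over $j \in \pi_0(\Gamma)$, the set $\{((\psi,\gamma)\cdot\beta)^* F|_\Gamma : F \in \O(\Sigma;L)\}$ is precisely the image under $\bigoplus_{j} u_{\sigma(j)}(\psi_j,\gamma_j)$ of the set defining $H^2(X)$.

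Finally, since each $u_{\sigma(j)}(\psi_j,\gamma_j)$ is unitary on $L^2(S^1)$, the direct sum is a unitary (hence norm-continuous) operator on $H_\Gamma$, and therefore commutes with taking closures. Applying this observation yields the claimed identity. There is no real obstacle here — the only subtlety is keeping straight the direction of the pullback/pushforward convention, so that the inverse in $((\psi,\gamma)\cdot\beta)_j = \beta_j \circ \phi_j^{-1}$ matches up correctly with the formula $(\phi_j)_* f = u_{\sigma(j)}(\psi_j,\gamma_j)f$ from Proposition \ref{propSpinDiffsAreSpinAuts}.
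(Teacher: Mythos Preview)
Your proposal is correct and follows essentially the same approach as the paper's proof: unwind the definition of the reparametrized boundary trivialization to obtain $H^2((\psi,\gamma)\cdot X) = (\phi^{-1})^* H^2(X) = \phi_* H^2(X)$, and then invoke Proposition~\ref{propSpinDiffsAreSpinAuts} to identify $\phi_*$ with $\bigoplus_j u_{\sigma(j)}(\psi_j,\gamma_j)$. The paper's version is simply a more compressed rendering of the same computation, and your care with the pullback/pushforward bookkeeping is exactly the point being used.
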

\begin{proof}
Let $\phi_j$ be the automorphism of $(S^1, \sigma(j))$ corresponding to $(\psi_j,\gamma_j)$ as in \eqref{eqRelAutsAndDiffs} and Proposition \ref{propSpinDiffsAreSpinAuts}, and let $\phi = \prod_{j \in \pi_0(\Gamma)} \phi_j$. From the definition of the Hardy space we have
$$
H^2((\psi,\gamma) \cdot X) = (\phi^{-1})^* H^2(X)
= \phi_* H^2(X).
$$
But $\phi_* H^2(X)$ coincides with the desired expression for $H^2((\psi,\gamma) \cdot X)$ by construction.
\end{proof}


\subsubsection{Conjugation}\label{subsecConjugation}

To formulate the unitarity condition for a Segal CFT, we need a notion of complex conjugation on $\cR$. 
The involution sends a spin Riemann surface $(\Sigma, L, \Phi)$ to the conjugate spin Riemann surface $(\overline{\Sigma}, \overline{L}, \overline{\Phi})$, as defined in Section \ref{subsubsecConjugateSpinStructures}. 
It remains to define the involution $\beta \mapsto \overline{\beta}$ on boundary trivializations $\beta:\prod_{j \in \pi_0(\Gamma)} (S^1, \sigma(j)) \to L|_\Gamma$. 

Let $L = S^1 \times \C$, and for $\sigma \in \{NS, R\}$ let $\rho_\sigma: L \to \overline{L}$ be the bundle isomorphism characterized by
\begin{equation}\label{eqCirclConjugationOnSections}
{\rho_{NS}}_*f(z) = \overline{zf(z)}, \quad {\rho_{R}}_*f(z) = \overline{f(z)}.
\end{equation}

\begin{Caution}\label{ctnConjugates}
Fiberwise, the bundle maps $\rho_\sigma$ give complex linear maps $\C \mapsto \overline{\C}$. 
The reader is cautioned that our notation does not distinguish between elements of $\C$ and $\overline{\C}$ (or, more generally, between elements of $V$ and $\overline{V}$ when $V$ is a complex vector space). 
For example, we write the natural conjugate linear map $V\to \overline{V}$ by $v \mapsto v$.
The notation $\alpha \mapsto \overline{\alpha}$ is used exclusively for complex conjugation, which in the definition of $\rho_\sigma$ we think of as a complex linear map $\C \to \overline{\C}$.

Moreover, whenever we write a map $V \to \overline{V}$, we think of this as being the given map $V \to V$, composed with the (transparant) real isomorphism $V \to \overline{V}$. For example, if we define a map $V \to \overline{V}$ by $v \mapsto iv$, the complex structure is that of $V$, not $\overline{V}$. If $x:V \to W$, we use the same symbol $x$ to refer to the induced map $\overline{V} \to \overline{W}$. Thankfully, once we establish Proposition \ref{propConjugateHardy} we will no longer need these considerations.
\end{Caution}

\begin{Proposition}\label{propRhoIsIso}
$\rho_\sigma:(S^1, \sigma) \to \overline{(S^1, \sigma)}$ is an isomorphism of spin structures.
\end{Proposition}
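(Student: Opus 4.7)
The plan is to unpack the definitions and verify the commutative square \eqref{eqnCircleSpinIso} defining an isomorphism of spin structures, with $Y_1 = Y_2 = S^1$, $L_1 = L$, $L_2 = \overline{L}$, $\phi_1 = \phi_\sigma$, $\phi_2 = \overline{\phi_\sigma}$, and $\beta = \rho_\sigma$. The crucial observation is that $\rho_\sigma$ covers the identity map on $S^1$ (it only twists fibers, not the base), so the bottom arrow $(\rho_\sigma|_{S^1})^\ast : T^*S^1_\C \to T^*S^1_\C$ is the identity. It therefore suffices to show that $\overline{\phi_\sigma} \circ (\rho_\sigma \otimes \rho_\sigma) = \phi_\sigma$ as maps $L \otimes L \to T^*S^1_\C$.

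Since $L = S^1 \times \C$ is trivial, both $\phi_\sigma$ and $\rho_\sigma$ are determined by their value on the global section $1$. The plan is to compute $\overline{\phi_\sigma} \circ (\rho_\sigma \otimes \rho_\sigma)(1 \otimes 1)$ directly and compare with $\phi_\sigma(1 \otimes 1)$. By the definition of the conjugate spin structure in Section \ref{subsubsecConjugateSpinStructures}, $\overline{\phi_\sigma}$ is $\phi_\sigma$ applied to the underlying real vector, followed by complex conjugation on $T^*S^1_\C$. For the Neveu--Schwarz case, the computation is
\begin{align*}
\overline{\phi_{NS}}\bigl((\rho_{NS} \otimes \rho_{NS})(1 \otimes 1)\bigr)(z) &= \overline{\phi_{NS}(\overline{z} \otimes \overline{z})} = \overline{\overline{z}^2 \tfrac{dz}{i}},
\end{align*}
and using $\overline{z} = z^{-1}$ on $S^1$ together with $dz/i = z\, d\theta$ this simplifies to $\overline{z^{-1} d\theta} = z\, d\theta = dz/i = \phi_{NS}(1 \otimes 1)$, as required. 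For the Ramond case, $\rho_R$ acts by ordinary complex conjugation on fibers, so $(\rho_R \otimes \rho_R)(1 \otimes 1) = 1 \otimes 1$, and $\overline{\phi_R}(1 \otimes 1) = \overline{dz/(iz)} = \overline{d\theta} = d\theta = \phi_R(1 \otimes 1)$.

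The substantive point — and the only real pitfall — is bookkeeping between elements of $L$ and $\overline{L}$, which share an underlying set but carry conjugate complex structures (the Caution \ref{ctnConjugates} highlights exactly this issue). I would carry out the computation twice, once abstractly using the presentation $\overline{\phi_\sigma}(\xi \otimes \eta) = \overline{\phi_\sigma(\xi \otimes \eta)}$ and once in explicit $d\theta$ coordinates using $dz = iz\, d\theta$ on $S^1$ to double-check, since the argument depends on the identity $\overline{z} = z^{-1}$ available only on the unit circle, which is precisely what distinguishes the two spin structures and explains why the twist by $z$ appears in $\rho_{NS}$ but not in $\rho_R$. There is no further analytic difficulty; once the formulas are written out, complex linearity of $\rho_\sigma$ (noted in the paragraph preceding the proposition) lets the verification on the generator $1 \otimes 1$ extend to arbitrary sections.
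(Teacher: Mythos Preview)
Your proposal is correct and follows essentially the same approach as the paper: both verify the commutative square \eqref{eqnCircleSpinIso} directly by computing in the trivialization, using $\overline{z}=z^{-1}$ on $S^1$ and the realness of $d\theta = \tfrac{dz}{iz}$. The paper carries out the computation on a general section $f\otimes g$ rather than the single generator $1\otimes 1$; your reduction to the generator is legitimate, but the reason is that all maps involved are bundle maps (hence $C^\infty(S^1)$-linear on sections), not merely ``complex linearity'' as you phrase it---a nowhere-vanishing section determines a bundle map from a line bundle, which is what makes the check on $1\otimes 1$ sufficient.
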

\begin{proof}
To check that $\rho_\sigma$ is an isomorphism of spin structures, we must verify that the following diagram commutes
$$
\begin{CD}
L \otimes L @>{\rho_\sigma \otimes \rho_\sigma}>> \overline{L} \otimes \overline{L}\\
@V\phi_\sigma VV @VV \phi_\sigma V\\
(T^*S^1)_\C @>{c}>> \overline{(T^*S^1)_\C}
\end{CD}
$$
where the map $c:(T^*S^1)_\C \to \overline{(T^*S^1)_\C}$ is complex conjugation with respect to the real subbundle $T^*S^1$.

Since $d\theta = \frac{dz}{i z}$ is a real section of ${T^*S^1}_\C$, we have
$$
c_*f(z) \frac{dz}{iz} = \overline{f(z)} \frac{dz}{iz}.
$$
Note that as described in Caution \ref{ctnConjugates}, the complex multiplication $\overline{f(z)} \frac{dz}{iz}$ takes place in ${T^*S^1}_\C $ and not $\overline{{T^*S^1}_\C }$.

Recall that if $L = S^1 \times \C$ and $f \otimes g$ is a section of $L \otimes L$, then the action of the spin structures $\phi_{NS}$ and $\phi_{R}$ on $S^1$ are given by
$$
{\phi_{NS}}_*(f \otimes g)(z) = z \,f(z)g(z)\frac{dz}{iz}, \quad {\phi_{R}}_*(f \otimes g)(z) = f(z)g(z)\frac{dz}{iz},
$$
and using the convention of Caution \ref{ctnConjugates} the action of $(\phi_\sigma)_*$ on sections of the conjugate bundle is given by the same formula.

We can check that
\begin{align*}
c_*(\phi_{NS})_*(\rho_{NS}^{\otimes 2})_*(f \otimes g)(z) &=c_* \overline{zf(z)g(z)}\frac{dz}{iz}\\
&= f(z)g(z) \frac{dz}{i}\\
&= {\phi_{NS}}_* (f \otimes g)(z).
\end{align*}
The argument when $\sigma = R$ is similar.
\end{proof}

\begin{Definition}
If $X=(\Sigma, L, \Phi, \beta) \in \cR$, the conjugate $\overline{X}$ is given by $\overline{X} = (\overline{\Sigma}, \overline{L}, \overline{\Phi}, \overline{\beta})$, where 
$$
\overline{\beta}_j = (S^1, \sigma(j)) \overset{\rho_{\sigma(j)}}{\longrightarrow} \overline{(S^1, \sigma(j))} \overset{\beta_j}{\longrightarrow} \overline{L|_j}.
$$
\end{Definition} 
Note that $X \mapsto \overline{X}$ reverses the orientation of $\Sigma$, and so exchanges $\Gamma^0$ and $\Gamma^1$.

The relationship between the Hardy spaces $H^2(X)$ and $H^2(\overline{X})$ is given by the following proposition.

\begin{Proposition}\label{propConjugateHardy}
Let $X = (\Sigma, L, \Phi, \beta) \in \cR$. Then
$$
H^2(\overline{X}) = \left\{ \overline{M_z^{NS} f} : f \in H^2(X) \right\}
$$
where $M_z^{NS} \in \U(H_\Gamma)$ is given by multiplication by the function $z$ on copies of $L^2(S^1)$ indexed by $j \in \pi_0(\Gamma)$ with $\sigma(j) = NS$, and the identity on copies of $L^2(S^1)$ indexed by $j$ with $\sigma(j) = R$.
\end{Proposition}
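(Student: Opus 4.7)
My plan is to track how the boundary values of holomorphic sections transform under the conjugation operation $X \mapsto \overline{X}$ and then compare with the formula $\overline{M_z^{NS}f}$. The key tools are the natural bijection $\O(\Sigma; L) \leftrightarrow \O(\overline{\Sigma}; \overline{L})$ on the level of real sections established in Section \ref{subsubsecConjugateSpinStructures}, together with the factorization $\overline{\beta}_j = \beta_j \circ \rho_{\sigma(j)}$ built into the definition of $\overline{X}$.

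I will first reduce to a pointwise identity on each boundary component. Writing $f_j := \beta_j^* F|_j$ for $F \in \O(\Sigma; L)$ and letting $\overline{F} \in \O(\overline{\Sigma}; \overline{L})$ denote the section corresponding to $F$, we have
$$
\overline{\beta}_j^* \overline{F}|_j = \rho_{\sigma(j)}^*\bigl(\beta_j^*\overline{F}|_j\bigr),
$$
and under the real identifications of Caution \ref{ctnConjugates}, $\beta_j^*\overline{F}|_j$ agrees with $f_j$ as a real section. Thus the core of the argument is the inversion of the pushforward formulas $(\rho_{NS})_*h(z) = \overline{zh(z)}$ and $(\rho_R)_*h(z) = \overline{h(z)}$ from \eqref{eqCirclConjugationOnSections}.

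For $\sigma = NS$, solving $\overline{zh(z)} = f_j(z)$ for $h$ gives $h(z) = \overline{f_j(z)}/z = \overline{zf_j(z)}$ on $S^1$, using $\overline{z} = z^{-1}$ on the unit circle. For $\sigma = R$, the analogous calculation yields simply $h = \overline{f_j}$. Both expressions are precisely the $j$-components of $\overline{M_z^{NS}f}$, where $f \in H_\Gamma$ is the vector with components $f_j$. Assembling over all boundary components produces the identity
$$
\overline{\beta}^*\overline{F}|_\Gamma = \overline{M_z^{NS}\,\beta^* F|_\Gamma}.
$$

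To conclude, since $f \mapsto \overline{M_z^{NS}f}$ is a real-linear isometry of $H_\Gamma$, it commutes with taking closures, so this identity on smooth boundary values extends to the claimed equality of Hardy spaces. The main obstacle is the pointwise calculation in step three: one must carefully interpret each occurrence of $\overline{\cdot}$ under the conventions of Caution \ref{ctnConjugates}, distinguishing complex conjugation within $\C$ from the canonical complex-linear map $\C \to \overline{\C}$, since a silent misidentification changes the formula (and explains why the correction factor $M_z^{NS}$ appears in the NS case but not in the R case).
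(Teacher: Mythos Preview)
Your proof is correct and follows essentially the same approach as the paper: both arguments reduce to the identity $\overline{\beta_j}^*F = \rho_{\sigma(j)}^*\beta_j^*F = \overline{M_z^{NS}\beta_j^*F}$ on the dense subspace of boundary values of holomorphic sections, and then pass to the closure. The paper states this identity as immediate from the definition of $\rho_{\sigma(j)}$, whereas you spell out the inversion of the pushforward formulas \eqref{eqCirclConjugationOnSections} explicitly; this is a welcome elaboration, not a different method.
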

\begin{proof}
Let $F \in \O(\Sigma;L)$, and note that $F$ is also a holomorphic section of the conjugate bundle $\overline{L}$ over $\overline{\Sigma}$. Then by construction 
$$
\overline{\beta_j}^*F = \rho_{\sigma(j)}^* \beta_j^* F = \overline{M_z^{NS} \beta_j^*F}.
$$
by the defintion of $\rho_{\sigma(j)}$ in \eqref{eqCirclConjugationOnSections}. The desired result now follows from the definition of the Hardy space.
\end{proof}


\subsubsection{Sewing}
\label{subsecSewing}

Let $X = (\Sigma, L, \Phi, \beta) \in \cR$, let $j^0 \in \Gamma^0$ and $j^1 \in \Gamma^1$, and assume that $\sigma(j^0) = \sigma(j^1)$. 
Then 
$$
\beta_{j^1} \circ \beta_{j^0}^{-1} : L|_{j^0} \to L|_{j^1}
$$
is an isomorphism of spin structures that is orientation reversing on the base space. 
Sewing $L|_{j^0}$ and $L_{j^1}$ via this isomorphism yields a spin Riemann surface $(\hat \Sigma, \hat L, \hat \Phi)$ by conformal welding (Theorem \ref{thmWeldingSpinStructures}). We set $\hat \beta_j = \beta_j$ for $j \in \pi_0(\hat \Gamma) \subset \pi_0(\Gamma)$, where $\hat \Gamma$ is the boundary of $\hat \Sigma$.

\begin{Definition}
Let $\cR_*$ be the collection of triples $(X,j^0,j^1)$, where $X \in \cR$ and $j^i \in \pi_0(\Gamma^i)$, such that $\sigma(j^0) = \sigma(j^1)$ and the sewn surface $\hat \Sigma$ has no closed components. We call such a $(X, j^0, j^1)$ a {\it marked spin Riemann surface with boundary parametrization}. 
\end{Definition}

\begin{Definition}
Given $(X, j^0, j^1) \in \cR_*$ we define the sewn spin Riemann surface $\hat X := (\hat \Sigma, \hat L, \hat \Phi, \hat \beta) \in \cR$.
\end{Definition}

We now observe basic properties relating sewing, conjugation and the Hardy space.

\begin{Proposition}\label{propSewnHardy}
Let $(X, j^0, j^1) \in \cR_*$. The subspace of $H^2(\hat X)$ consisting of $(f_j)_{j \in \pi_0(\hat \Gamma)} \in H^2(\hat X)$ which satisfy
\begin{itemize}
\item $f_j \in C^\infty(S^1)$ for all $j \in \pi_0(\hat \Gamma)$,
\item there exists a $f_{j^0} = f_{j^1} \in C^\infty(S^1)$ such that $(f_j)_{j \in \pi_0(\Gamma)} \in H^2(X)$.
\end{itemize}
is dense in $H^2(\hat X)$.
\end{Proposition}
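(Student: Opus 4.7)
The plan is to identify a natural dense subset of $H^2(\hat X)$ whose elements visibly lie in the described subspace. By the very definition of the Hardy space, the set $\{\hat\beta^* \hat F|_{\hat \Gamma} : \hat F \in \O(\hat \Sigma; \hat L)\}$ is dense in $H^2(\hat X)$, so it suffices to show that each such boundary value lies in the described subspace.

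Given $\hat F \in \O(\hat\Sigma; \hat L)$, I would invoke Theorem \ref{thmWeldingSpinStructures} applied to the sewing isomorphism $\beta_{j^1} \circ \beta_{j^0}^{-1}$ to produce a corresponding section $F \in \O(\Sigma; L)$ satisfying the matching condition $\beta_{j^1}^* F|_{j^1} = \beta_{j^0}^* F|_{j^0}$ as sections of $(S^1, \sigma(j^0))$. Set $f_j := \beta_j^* F|_j$ for all $j \in \pi_0(\Gamma)$. Then $(f_j)_{j \in \pi_0(\Gamma)} \in H^2(X)$ by the definition of $H^2(X)$, and by construction $f_{j^0} = f_{j^1}$. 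Moreover $(f_j)_{j \in \pi_0(\hat\Gamma)} = \hat\beta^* \hat F|_{\hat \Gamma}$, so this element comes from the dense generating set for $H^2(\hat X)$.

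It remains to check the smoothness conditions. For $j \in \pi_0(\hat\Gamma)$, $f_j$ is smooth because $\hat F \in \O(\hat\Sigma; \hat L)$ has smooth boundary values by the convention following Theorem \ref{thmWelding}. For the common value $f_{j^0} = f_{j^1}$, the seam circle (along which $\Sigma$ was sewn) lies in the \emph{interior} of $\hat\Sigma$, so $\hat F$ is holomorphic, hence smooth, on an open neighborhood of it; its restriction to the seam is therefore smooth, and pulling back through $\beta_{j^0}$ or $\beta_{j^1}$ yields a smooth function on $S^1$. This places $\hat\beta^*\hat F|_{\hat\Gamma}$ in the described subspace, which is therefore dense.

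The only nontrivial input is Theorem \ref{thmWeldingSpinStructures}; the rest is a bookkeeping check that the bijection of holomorphic sections it provides interacts correctly with the boundary smoothness built into $\O(\Sigma;L)$ and $\O(\hat\Sigma;\hat L)$. The one point that warrants care is the smoothness on $j^0$ and $j^1$ themselves, and this is immediate from the observation that the seam sits in the interior of $\hat\Sigma$, not on its boundary.
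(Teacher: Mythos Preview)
Your proof is correct and follows essentially the same approach as the paper, which simply says the result ``follows immediately from the definition of the Hardy space, and the characterization of sections of $\hat L$ given in Theorem \ref{thmWeldingSpinStructures}.'' You have carefully unpacked this one-liner, including the observation that smoothness at $j^0$ and $j^1$ comes for free because the seam lies in the interior of $\hat\Sigma$.
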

\begin{proof}
This follows immediately from the definition of the Hardy space, and the characterization of sections of $\hat L$ given in Theorem \ref{thmWeldingSpinStructures}.
\end{proof}

\begin{Proposition}\label{propConjugationAndSewingCommute}
Let $(X, j_0, j_1) \in \cR_*$. Then $(\overline{X}, j^1, j^0) \in \cR_*$ and $H^2(\hat {\overline{X}}) = H^2(\overline{\hat X})$.
\end{Proposition}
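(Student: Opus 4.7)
The plan is to verify both assertions directly, first by tracking how the $\cR_*$ conditions behave under conjugation, and then by combining the explicit descriptions of conjugate Hardy spaces (Proposition \ref{propConjugateHardy}) and sewn Hardy spaces (Proposition \ref{propSewnHardy}).

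For the first assertion, I observe that conjugation $X \mapsto \overline{X}$ reverses the orientation of $\Sigma$, so the partition $\Gamma = \Gamma^0 \sqcup \Gamma^1$ is swapped: $\pi_0(\overline{\Gamma}^0) = \pi_0(\Gamma^1)$ and $\pi_0(\overline{\Gamma}^1) = \pi_0(\Gamma^0)$. Hence $j^1 \in \pi_0(\overline{\Gamma}^0)$ and $j^0 \in \pi_0(\overline{\Gamma}^1)$, as required. The spin-type function $\sigma$ is unchanged under conjugation, so the hypothesis $\sigma(j^0) = \sigma(j^1)$ carries over. Finally, the sewn surface $\hat{\overline{X}}$ is the same underlying topological manifold as the conjugate of $\hat X$, and since $\hat X$ has no closed components by hypothesis, neither does $\hat{\overline{X}}$. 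This shows $(\overline{X}, j^1, j^0) \in \cR_*$.

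For the second assertion, I describe a common dense subspace of both $H^2(\hat{\overline{X}})$ and $H^2(\overline{\hat X})$. By Proposition \ref{propSewnHardy}, $H^2(\hat X)$ has a dense subspace $D$ consisting of smooth families $(f_j)_{j \in \pi_0(\hat\Gamma)}$ with $f_{j^0} = f_{j^1}$ such that $(f_j)_{j \in \pi_0(\Gamma)} \in H^2(X)$. Applying Proposition \ref{propConjugateHardy} to $\hat X$, the image $\{(\overline{M_z^{NS} f_j})_j : (f_j) \in D\}$ is dense in $H^2(\overline{\hat X})$. On the other hand, combining the two propositions in the other order, $H^2(\hat{\overline{X}})$ has a dense subspace of smooth families $(g_j)_{j \in \pi_0(\hat\Gamma)}$ with $g_{j^0} = g_{j^1}$ and $(g_j)_{j \in \pi_0(\Gamma)} \in H^2(\overline{X})$; by Proposition \ref{propConjugateHardy} applied to $X$, such families are exactly of the form $g_j = \overline{M_z^{NS} f_j}$ for some $(f_j) \in H^2(X)$, smooth and satisfying $\overline{M_z^{NS} f_{j^0}} = \overline{M_z^{NS} f_{j^1}}$.

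The subtlety — and the only real point to check — is that the sewing condition $\overline{M_z^{NS} f_{j^0}} = \overline{M_z^{NS} f_{j^1}}$ is equivalent to $f_{j^0} = f_{j^1}$. This is precisely where the hypothesis $\sigma(j^0) = \sigma(j^1)$ is used: because both sewn components have the same spin type, the operator $M_z^{NS}$ acts by the same unitary (either multiplication by $z$ or the identity) on both coordinates, so it can be canceled. Conjugation is also injective. Thus the two dense subspaces coincide, and taking closures gives $H^2(\hat{\overline{X}}) = H^2(\overline{\hat X})$.
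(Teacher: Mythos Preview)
Your proof is correct, but it takes a different route from the paper. The paper argues geometrically at the level of the spin bundles: since $\overline{\beta_j} = \beta_j \circ \rho_{\sigma(j)}$ and $\sigma(j^0) = \sigma(j^1)$, the sewing isomorphism $\overline{\beta_{j^1}} \circ \overline{\beta_{j^0}}^{-1}$ for $\overline{X}$ literally equals $\beta_{j^1} \circ \beta_{j^0}^{-1}$, so by Theorem~\ref{thmWeldingSpinStructures} the holomorphic sections of $\hat{\overline{L}}$ and $\overline{\hat L}$ coincide, and since the remaining boundary parametrizations also agree, the Hardy spaces are equal. Your argument instead stays entirely on the boundary, combining Propositions~\ref{propConjugateHardy} and~\ref{propSewnHardy} in both orders and checking that the resulting dense subspaces of $H_{\hat\Gamma}$ match; the key cancellation of $M_z^{NS}$ on the sewn components is the boundary shadow of the paper's observation that the $\rho_\sigma$ factors cancel in the sewing map. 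The paper's approach is shorter and makes the geometric reason transparent, while yours has the virtue of being self-contained at the Hilbert-space level and not needing to revisit the bundle-theoretic welding.
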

\begin{proof}
Recall that by definition $\overline{\beta_j} = \beta_j \circ \rho_{\sigma(j)}$ where $\rho_\sigma:(S^1, \sigma) \to \overline{(S^1, \sigma)}$ is a fixed isomorphism of spin circles constructed in Section \ref{subsecConjugation}. Since $\sigma(j_0) = \sigma(j_1)$, we have 
$$
\overline{\beta_{j^1}} \circ \overline{\beta_{j^0}}^{-1} = \beta_{j^1} \circ \rho_{\sigma(j^1)} \circ \rho_{\sigma(j^0)}^{-1} \circ \beta_{j^0}^{-1} = \beta_{j^1} \circ \beta_{j^0}^{-1}. 
$$

Let $\alpha = \beta_{j^1} \circ \beta_{j^0}^{-1}$. Recalling that a section of the holomorphic bundle $L \to \Sigma$ is holomorphic if and only if the corresponding section of $\overline{L} \to \overline{\Sigma}$ is, we see by Theorem \ref{thmWeldingSpinStructures} that holomorphic sections of $\hat{\overline{L}}$ and $\overline{\hat L}$ both correspond to holomorphic sections $F$ of $\overline{L}$ such that $F|_{j^1} \circ \alpha = F|_{j^0}$. The desired result immediately follows.
\end{proof}

Proposition \ref{propSewnHardy} gives the expected relation between $H^2(\hat X)$ and $H^2(X)$,  describing the compatibility of the Hardy space construction with the sewing of spin Riemann surfaces. In Section \ref{secSegalCFT} we will also require the analogous compatibility relation between $H^2(\hat X)^\perp$ and $H^2(X)^\perp$, where the orthogonal complements are taken in $H_{\hat \Gamma}$ and $H_{\Gamma}$, respectively. This precise statement of the compatibility relation is given below as Lemma \ref{lemSewnHardyPerp}.

The compatibliity for orthogonal complements is not a consequence of Proposition \ref{propSewnHardy}. The proof of Lemma \ref{lemSewnHardyPerp} requires the formula for $H^2(\hat X)$ given in Theorem \ref{thmHardyPerp} using the Cauchy transform.

\begin{Lemma}\label{lemSewnHardyPerp}
Let $(X, j^0, j^1) \in \cR_*$. The subspace of $H^2(\hat X)^\perp \subset H_{\hat \Gamma}$ consisting of $(f_j)_{j \in \pi_0(\hat \Gamma)} \in H^2(\hat X)^\perp$ which satisfy
\begin{itemize}
\item $f_j \in C^\infty(S^1)$ for all $j \in \pi_0(\hat \Gamma)$,
\item there exist $f_{j^0} = -f_{j^1} \in C^\infty(S^1)$ such that $(f_j)_{j \in \pi_0(\Gamma)} \in H^2(X)^\perp$.
\end{itemize}
is dense in $H^2(\hat X)$.
\end{Lemma}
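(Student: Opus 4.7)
The strategy is to combine the Cauchy transform description of $H^2(\hat X)^\perp$ promised in Theorem \ref{thmHardyPerp} with the sewn-side density already established in Proposition \ref{propSewnHardy}. Concretely, Theorem \ref{thmHardyPerp} should realize the smooth vectors in $H^2(X)^\perp$ as boundary data of holomorphic sections on an \emph{exterior} spin Riemann surface obtained by welding a standard piece (spin disk or annulus) to each component of $\partial\Sigma$, via a Plemelj-type identification. Density of such boundary data in $H^2(X)^\perp$ is already part of Theorem \ref{thmHardyPerp}, so the present lemma is really a compatibility statement for sewing transported to the exterior side.

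First I would embed $X$ into an open spin Riemann surface $\tilde X$ by welding a standard spin disk or annulus $D_j$ to each $j \in \pi_0(\Gamma)$, using Corollary \ref{corEmbeddingInOpenSpinSurface} together with Theorem \ref{thmWeldingSpinStructures}. Because $\sigma(j^0) = \sigma(j^1)$, the standard pieces $D_{j^0}$ and $D_{j^1}$ admit a further welding to one another along matching boundary circles, producing an open extension $\tilde{\hat X}$ of $\hat X$ in which the complementary region at the sewn circle is precisely the sewing of $D_{j^0}$ to $D_{j^1}$. Applied to this complementary surface, the direct analogue of Proposition \ref{propSewnHardy} shows that smooth holomorphic sections on the sewn complementary region are dense among its boundary Hardy-space elements, and that they are exactly those smooth holomorphic sections on the unsewn complementary region whose boundary values match across the welded circle. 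Pulling this density back through Theorem \ref{thmHardyPerp} yields density of the claimed smooth subspace in $H^2(\hat X)^\perp$.

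The crucial point, and the main obstacle, is accounting for the sign $f_{j^0} = -f_{j^1}$ rather than the interior-side matching $f_{j^0} = f_{j^1}$ of Proposition \ref{propSewnHardy}. This sign should arise because the induced orientations of $\partial X$ and $\partial(\tilde X \setminus X)$ along the welded circles are opposite, so the Plemelj jump in Theorem \ref{thmHardyPerp} identifies a boundary value of a complementary holomorphic section with the \emph{negative} of the associated element of $H^2(X)^\perp$ at that component. Under the exterior welding of $D_{j^0}$ to $D_{j^1}$, the interior matching condition on the complementary side therefore translates into the antimatching $f_{j^0} = -f_{j^1}$ on the $X$-side. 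Verifying that this sign is intrinsic --- independent of the trivializations $\beta_{j^0}$, $\beta_{j^1}$ of the spin bundles, which is ensured by the compatibility $\sigma(j^0) = \sigma(j^1)$ built into $\cR_*$ --- and compatible with the spin welding of Theorem \ref{thmWeldingSpinStructures}, is the technical heart of the argument.
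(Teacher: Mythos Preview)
Your approach has a real gap: you have misread Theorem \ref{thmHardyPerp}. That theorem does not realize $H^2(X)^\perp$ as boundary data of holomorphic sections on an \emph{exterior} surface built by welding standard pieces onto $\partial\Sigma$; it gives the exact identity $H^2(X)^\perp = M_{\pm} H^2(\overline{X})$, where $\overline{X}$ is the complex conjugate spin surface (same underlying manifold, conjugate complex structure). There is no exterior welding in the statement, and your proposed construction of sewing $D_{j^0}$ to $D_{j^1}$ does not produce a ``complementary region at the sewn circle'' for $\hat X$: that circle lies in the \emph{interior} of $\hat X$, so there is nothing exterior to it. The Plemelj-jump picture you invoke is heuristically related to the conjugate surface via a Schottky-double construction, but that is not how Theorem \ref{thmHardyPerp} is formulated or proved, and it does not supply the objects your argument needs.

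The paper's proof is much shorter and uses Theorem \ref{thmHardyPerp} as actually stated. From $H^2(\hat X)^\perp = M_{\pm} H^2(\overline{\hat X})$ and Proposition \ref{propConjugationAndSewingCommute} one obtains $H^2(\hat X)^\perp = M_{\pm} H^2(\hat{\overline{X}})$. Now apply Proposition \ref{propSewnHardy} directly to $\overline{X}$: the dense smooth vectors $(g_j)$ in $H^2(\hat{\overline{X}})$ extend by $g_{j^0}=g_{j^1}$ to elements of $H^2(\overline{X})$. Setting $(f_j) = M_{\pm}(g_j)$ then gives the desired dense subspace of $H^2(\hat X)^\perp$, and since $M_{\pm}$ acts by $+1$ at the outgoing component $j^1$ and by $-1$ at the incoming component $j^0$, the matching $g_{j^0}=g_{j^1}$ becomes $f_{j^0}=-f_{j^1}$. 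The sign falls out of $M_{\pm}$ with no further analysis.
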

\begin{proof}
By Theorem \ref{thmHardyPerp}, we have 
\begin{equation}\label{eqSewnHardyPerp1}
H^2(X)^\perp = M_{\pm} H^2(\overline{X}),
\end{equation}
where $M_{\pm}$ is given by multiplication by $1$ on copies of $L^2(S^1)$ indexed by $j \in \pi_0(\Gamma^1)$ and multiplication by $-1$ on copies of $L^2(S^1)$ indexed by $j \in \pi_0(\Gamma^0)$. Combining this with Proposition \ref{propConjugationAndSewingCommute}, we have 
\begin{equation}\label{eqSewnHardyPerp2}
H^2(\hat X)^\perp = M_{\pm} H^2(\hat{\overline{X}}).
\end{equation}
Applying Proposition \ref{propSewnHardy} to $\overline{X}$ completes the proof.
\end{proof}


\section{The free fermion Segal CFT}
\label{secSegalCFT}

\subsection{Definition of the free fermion Segal CFT}

We continue to use the notation introduced at the beginning of Section \ref{secRiggedSpinRiemannSurfaces}.

The free fermion Segal CFT assigns to the circle a Hilbert space $\F$, and to a spin Riemann surface with boundary parametrization $X = (\Sigma, L, \Phi, \beta) \in \cR$ a one-dimensional space of trace class maps of unordered tensor products
$$
E(X) \subset \B_1\left( \bigotimes_{j \in \pi_0(\Gamma^0)} \F, \bigotimes_{j \in \pi_0(\Gamma^1)} \F\right).
$$
We will characterize the operators $E(X)$ in terms of certain commutation relations derived from the Hardy space $H^2(X)$, which we now describe.

Let $H^0$ and $H^1$ be Hilbert spaces, and let $p_i \in \mathcal{P}(H^i)$. From this data we construct the Fock spaces $\F_{H^i,p_i}$, which are super Hilbert spaces carrying representations $\pi_{p_i}$ of $\CAR(H^i)$, as described in Section \ref{secFABackground}.

\begin{Definition}\label{defCommutationRelations}
Given  a closed subspace $K \subset H^1 \oplus H^0$, we say that a homogeneous bounded operator $T:\F_{H^0,p_0} \to \F_{H^1,p_1}$ satisfies the {\it $K$ commutation relations} if 
\begin{equation}\label{eqImageCommRels}
\pi_{p_1}(a(f^1))T = (-1)^{p(T)} T\pi_{p_0}(a(f^0))
\end{equation}
for all $(f^1, f^0) \in K$, and
\begin{equation}\label{eqPerpCommRels}
\pi_{p_1}(a(g^1))^*T = -(-1)^{p(T)} T\pi_{p_0}(a(g^0))^*
\end{equation}
for all $(g^1, g^0) \in K^\perp$. We have written elements of $H^1 \oplus H^0$ as $(f^1,f^0)$ with respect to the given direct sum decomposition. 
For non-homogeneous operators $T$, we extend the $K$ commutation relations by linearity, so that an operator satisfies the $K$ commutation relations if and only if its even and odd parts do.
\end{Definition}


We now fix notation for the free fermion Segal CFT. 

\begin{Notation}\label{ntnFockStuff}
Let $H=L^2(S^1)$, and let $p \in \P(H)$ be the projection onto the classical Hardy space
$$
pH = \cl \operatorname{span} \{z^n : n \ge 0\}.
$$

Given $X = (\Sigma, L, \Phi, \beta) \in \cR$, we set
$$
H^i_{\Gamma} = \bigoplus_{j \in \pi_0(\Gamma^i)} H,
$$
and $H_\Gamma = H_\Gamma^1 \oplus H_\Gamma^0$.
Define $p_i \in \P(H_{\Gamma}^i)$ by 
\begin{equation}\label{eqnBoundaryProjectionsRecall}
p_i = \bigoplus_{j \in \pi_0(\Gamma^i)} p.
\end{equation}
Let $\F_\Gamma^i = \F_{H^i_{\Gamma},p_i}$. 
\end{Notation}

\begin{Remark}\label{rmkUnorderedTensorProducts}
There is a natural isomorphism between $\F_\Gamma^i$ and the unordered tensor product
$$
\bigotimes_{j \in \pi_0(H_\Gamma^i)} \F_{H,p}
$$
via Proposition \ref{propFockSumToTensor}. In light of this, we identify bounded maps of unordered tensor products
$$
\bigotimes_{j \in \pi_0(H_\Gamma^0)} \F_{H,p} \to \bigotimes_{j \in \pi_0(H_\Gamma^1)} \F_{H,p}
$$
with elements of $\B(\F_\Gamma^0, \F_\Gamma^1)$.
\end{Remark}

\begin{Definition}[The free fermion]
The free fermion Segal CFT assigns to a spin Riemann surface with boundary parametrization $X \in \cR$ the space of all trace class maps $T \in \B_1(\F_\Gamma^0,\F_\Gamma^1)$ which satisfy the $H^2(X)$ commutation relations. We denote this space by $E(X)$.
\end{Definition}

The following theorem, one of the main theorems of the paper, summarizes the most important properties of the free fermion Segal CFT.
\begin{Theorem}\label{thmCFTProperties} Let $X=(\Sigma,L,\Phi,\beta) \in \cR$.
\begin{enumerate}
\item \label{cftPropExistence} (Existence) $E(X)$ is one-dimensional, and its elements are homogeneous and trace class.
\item \label{cftPropNondegeneracy} (Non-degeneracy) If every connected component of $\Sigma$ has an outgoing boundary component, then non-zero elements of $E(X)$ are injective. If every connected component of $\Sigma$ has an incoming boundary component, then non-zero elements of $E(X)$ have dense image.
\item \label{cftPropMonoidal} (Monoidal) If $Y \in \cR$, then $E(X \sqcup Y) = E(X) \grotimes E(Y)$.
\item \label{cftPropSewing} (Sewing) If $(X,j^0,j^1) \in \cR_*$, then the partial supertrace $\tr_{j^0j^1}^s$ induces an isomorphism $E(X) \to E(\hat X)$.
\item \label{cftPropReparametrization} (Reparametrization) If $(\psi_j,\gamma_j) \in \prod_{j \in \pi_0(\Gamma)} \Diff^{\sigma(j)}_+(S^1)$, then
$$
E((\psi_j,\gamma_j) \cdot X) = \left( \bigotimes_{j \in \pi_0(\Gamma^1)} U_{\sigma(j)}(\psi_j,\gamma_j) \right)E(X)\left( \bigotimes_{j \in \pi_0(\Gamma^0)} U_{\sigma(j)}(\psi_j,\gamma_j)^* \right)
$$
where $U_\sigma: \Diff_+^\sigma(S^1) \to \U(\F_{H,p})$ are the spin representations (see Section \ref{subsecDiffReps}).
\item \label{cftPropUnitarity} (Unitarity) $E(\overline{X}) = E(X)^*$, where $E(X)^*$ denotes taking the adjoint elementwise.
\end{enumerate}
\end{Theorem}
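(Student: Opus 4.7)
The plan is to reduce each assertion to a property of the Hardy space $H^2(X)$ and then apply the Shale-Stinespring theorem (Theorem \ref{thmSegalEq}). The central reformulation is this: via the Hilbert-Schmidt isomorphism $\F_\Gamma^1 \otimes (\F_\Gamma^0)^* \to \B_2(\F_\Gamma^0, \F_\Gamma^1)$ of Proposition \ref{propMuBimodularity} and the unitary $(\F_\Gamma^0)^* \cong \F_{H_\Gamma^0, 1-p_0}$ implicit in Proposition \ref{propPhiConjugation}, a Hilbert-Schmidt $T$ satisfying the $H^2(X)$ commutation relations corresponds to a vector $\Omega_T \in \F_{H_\Gamma, p_1 \oplus (1-p_0)}$ satisfying the $q$-vacuum equations of Definition \ref{defVacuumEquations}, where $q$ is the orthogonal projection onto $H^2(X)$. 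The sign choices in \eqref{eqImageCommRels} and \eqref{eqPerpCommRels} are exactly what is needed for Proposition \ref{propPhiConjugation} to convert right action by $a(f^0)$ and $a(g^0)^*$ into left action by $a^*$ and $a$ on $(\F_\Gamma^0)^*$.

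With this reformulation, item \ref{cftPropExistence} reduces to verifying that $q - (p_1 \oplus (1-p_0))$ is Hilbert-Schmidt on $H_\Gamma$; Theorem \ref{thmSegalEq} then produces a one-dimensional solution space, and Proposition \ref{propVacuumHomogeneous} gives homogeneity. This Hilbert-Schmidt difference is the central analytic input, established using the Cauchy-transform analogs of the Plemelj and Kerzman-Stein formulas developed in Section \ref{secCauchyTransform}. The stronger trace class assertion I would obtain by writing $X$ as a sewing of an interior surface against spin annuli $(\A_r, \sigma)$ with $r < 1$, whose associated operators $r^{L_0^\sigma}$ are visibly trace class by Proposition \ref{propGeneratorRelations}, and then propagating the trace class property along the sewing isomorphism of item \ref{cftPropSewing}.

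Items \ref{cftPropMonoidal}, \ref{cftPropReparametrization}, and \ref{cftPropUnitarity} follow formally once item \ref{cftPropExistence} is in hand. Monoidality uses $H^2(X \sqcup Y) = H^2(X) \oplus H^2(Y)$ (Proposition \ref{propMonoidalHardy}) together with Proposition \ref{propFockSumToTensor}: the commutation relations for the direct sum decouple under the tensor-product decomposition of Fock spaces. Reparametrization follows by conjugating \eqref{eqImageCommRels} and \eqref{eqPerpCommRels} by the basic representation unitaries $U_{\sigma(j)}(\psi_j, \gamma_j)$: by \eqref{eqnBasicRep} this turns $a(f)$ into $a(u_\sigma(\psi,\gamma) f)$, and Proposition \ref{propReparametrizedHardy} identifies the conjugated Hardy space with $H^2((\psi,\gamma) \cdot X)$. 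Unitarity comes from taking adjoints of the commutation relations, which swaps $a$ with $a^*$ and exchanges the roles of $H^2(X)$ and its orthogonal complement; identifying the resulting relations with those defining $E(\overline X)$ uses Proposition \ref{propConjugateHardy}.

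Sewing (item \ref{cftPropSewing}) is the most substantial step. The plan is to test the $H^2(\hat X)$ commutation relations for $\tr^s_{j^0 j^1}(T)$ using the density statements of Proposition \ref{propSewnHardy} and Lemma \ref{lemSewnHardyPerp}, which reduce to smooth elements lifting to $H^2(X)$ or $H^2(X)^\perp$ with matching (resp.\ sign-reversed) values on the sewn components. The tracial identity \ref{itmSupertracial} of Proposition \ref{propSupertraceProperties} then allows commuting $a(f_{j^1})$ and $a(f_{j^0})$ past the partial supertrace, the grading sign cancelling against the condition $f_{j^0} = \pm f_{j^1}$ to produce exactly the $H^2(\hat X)$ relations for $\tr^s_{j^0 j^1}(T)$. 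Non-degeneracy (item \ref{cftPropNondegeneracy}) reduces by unitarity to injectivity when every component has outgoing boundary: a nontrivial kernel vector sewn against disks on the incoming components yields, via the commutation relations and the irreducibility of $\CAR(H)$, a contradiction. The main obstacles I anticipate are the Hilbert-Schmidt estimate behind item \ref{cftPropExistence} and the non-vanishing $\tr^s_{j^0 j^1}(T) \neq 0$ needed to make the sewing map an isomorphism of one-dimensional spaces; I would handle the latter by reducing via monoidality and reparametrization to explicit calculations on standard annuli and pairs of pants in the spirit of Theorem \ref{thmIntroVertexOperators}.
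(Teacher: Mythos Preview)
Your overall architecture matches the paper's: each item is reduced to a property of $H^2(X)$, and existence goes through the vacuum-vector reformulation via Propositions \ref{propMuBimodularity} and \ref{propPhiConjugation}, exactly as in Lemma \ref{lemExistenceCriterion}. Items \ref{cftPropMonoidal}, \ref{cftPropReparametrization}, \ref{cftPropUnitarity} are handled essentially as you describe (for unitarity you need Theorem \ref{thmHardyPerp}, i.e.\ $H^2(X)^\perp = M_\pm H^2(\overline X)$, not just Proposition \ref{propConjugateHardy}; this is where the Cauchy-transform work enters a second time).

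There are, however, two points where your plan diverges from the paper in ways that matter.

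\textbf{Trace class.} Your proposed route---factor $X$ through annuli and use that $r^{L_0^\sigma}$ is trace class, then invoke the sewing isomorphism of item \ref{cftPropSewing}---is circular as stated: the partial supertrace in item \ref{cftPropSewing} is only defined on $\B_1$, so you cannot use the sewing isomorphism until you already know elements of $E(X)$ are trace class. One can try to repair this by first proving only the composition direction of Corollary \ref{corOperadicComposition} directly at the level of commutation relations, but you would still need non-vanishing of the composite, which is the hard half of sewing. The paper avoids all of this: in Lemma \ref{lemExistenceCriterion} it proves directly that if $(p_1 \oplus (\Id - p_0)) - q_K$ is \emph{trace class} (not just Hilbert--Schmidt), then $T$ is trace class, by expanding the explicit product formula \eqref{eqnProductFormulaForVacuum} for $\tilde\Omega_q$ and estimating $\|T\|_1$ termwise. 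The required trace-class input on the Hardy side is exactly Theorem \ref{thmProjectionDifferenceTraceClass}, proved via the Kerzman--Stein formula. So the analytic work in Section \ref{secCauchyTransform} does double duty, and existence is established cleanly before sewing.

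\textbf{Non-vanishing of the sewing map.} Your plan to reduce $\tr^s_{j^0j^1}(T)\neq 0$ to explicit annulus and pants calculations would require a decomposition theory for arbitrary $X\in\cR$ into such pieces, together with a compatibility check at every step; this is substantially longer and not what the paper does. The paper's argument (Theorem \ref{thmCFTSewing}) is intrinsic: assuming $\tr^s_{j^0j^1}(T)=0$, one uses the density statement Proposition \ref{propHardyDenseComponents} (that the projection of $H^2(X)$ onto the two sewn components has dense image) together with the supertrace identities of Proposition \ref{propSupertraceProperties} to show $\tr^s_{j^0j^1}((y\grotimes x)T)=0$ for all $y$ and for $x$ ranging over a strongly dense set in $\B(\F)$. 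Taking $y\grotimes x$ approximating $d_{\F_\Gamma^0}T^*$ gives $\tr(T^*T)=0$, hence $T=0$. This is the step where the hypothesis that $\hat\Sigma$ has no closed components (built into $\cR_*$) is actually used.

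Your non-degeneracy sketch is close in spirit; the paper's version (Proposition \ref{propCFTInjectiveDenseImage}) simply shows $\ker T$ is $\CAR(H_\Gamma^0)$-invariant using the same density input, Proposition \ref{propHardyDenseComponents}, with no sewing against disks needed.
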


As a result of the monoidal and sewing properties, we obtain the usual relationship between gluing of surfaces and composition of operators. 
\begin{Corollary}\label{corOperadicComposition} Let $X,Y \in \cR$, and let $S \subset \pi_0(\Gamma_Y^0)$ and $T \subset \pi_0(\Gamma_X^1)$. Suppose we have a bijection $s: S \to T$ such that $\sigma(s(j))=\sigma(j)$ for all $j \in S$. Let $Z$ be the spin Riemann surface obtained by sewing boundary components of $X$ and $Y$ along $s$, and suppose that $Z$ has no closed components. Then elements of $E(Z)$ are compositions of elements of $E(Y)$ and $E(X)$. More explicitly, we have
$$
E(Z) = \{(y \grotimes \Id_{T^c})(x \grotimes \Id_{S^c}) : x \in E(X), y \in E(Y)\}
$$
where the composition is that of morphisms of unordered tensor products. Here $\Id_{T^c}$ is given by
$$
\Id_{T^c} := \bigotimes_{j \in \pi_0(\Gamma_X^1) \setminus T} \Id_\F
$$
and similarly for $\Id_{S^c}$.
\end{Corollary}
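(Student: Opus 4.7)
The plan is to combine the monoidal property and the sewing property of Theorem \ref{thmCFTProperties} with the partial supertrace composition formula from Proposition \ref{propSupertraceComposition}.

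First I would form the disjoint union $X \sqcup Y \in \cR$ and invoke the monoidal property (item \ref{cftPropMonoidal}) to identify $E(X \sqcup Y) = E(X) \grotimes E(Y)$, so that every element of $E(X \sqcup Y)$ is a (sum of) $y \grotimes x$ with $x \in E(X)$ and $y \in E(Y)$. The bijection $s:S \to T$ respects the assignment $\sigma$ by hypothesis, so for each $j \in S$ the triple $(X \sqcup Y, j, s(j))$ lies in $\cR_*$ after any intermediate sewings (the condition that no closed components appear is inherited from the hypothesis that $Z$ itself has no closed components).

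Next, enumerate $S = \{j_1,\ldots,j_n\}$ and iteratively apply the sewing property (item \ref{cftPropSewing}) along the pairs $(j_k, s(j_k))$. Each step yields an isomorphism via the partial supertrace $\tr^s_{s(j_k) j_k}$, so the composite gives an isomorphism $E(X \sqcup Y) \to E(Z)$. By Proposition \ref{propIteratedPartialTrace}, this composite equals the single partial supertrace $\tr^s_{T,S}$ in which every pair of factors indexed by $s(j)$ in $\bigotimes_{\pi_0(\Gamma_X^1)} \F$ and $j$ in $\bigotimes_{\pi_0(\Gamma_Y^0)} \F$ is contracted. Hence
\begin{equation*}
E(Z) = \tr^s_{T,S}\bigl(E(Y) \grotimes E(X)\bigr).
\end{equation*}

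Finally, I would apply Proposition \ref{propSupertraceComposition} (iteratively, once for each pair $(j,s(j))$) to the element $y \grotimes x$. With $x$ viewed as a map into $\bigotimes_{\pi_0(\Gamma_X^1)\setminus T}\F \otimes \bigotimes_T \F$ and $y$ viewed as a map out of $\bigotimes_S \F \otimes \bigotimes_{\pi_0(\Gamma_Y^0)\setminus S}\F$, the formula of Proposition \ref{propSupertraceComposition} reads, on unordered tensor products,
\begin{equation*}
\tr^s_{T,S}(y \grotimes x) = (y \grotimes \Id_{T^c}) \circ (\Id_{S^c} \grotimes x),
\end{equation*}
which is exactly the right-hand side in the statement. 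Combining this with the previous step yields the claim.

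The step I expect to require the most care is the iterated application of Proposition \ref{propSupertraceComposition}. When only one pair of factors is contracted the formula is clean, but when several pairs are contracted simultaneously the braiding signs inherent in the unordered tensor product convention must be tracked. The cleanest way to handle this is to invoke Proposition \ref{propIteratedPartialTrace} to reduce the multi-factor trace to a single trace over the graded tensor product $\bigotimes_{j \in S} \F$ and to reduce the multi-factor composition to a single composition through that same object; then Proposition \ref{propSupertraceComposition} applies directly, and the compatibility of the braiding with the unordered tensor product (Remark \ref{rmkFockUnorderedTensorProduct}) ensures the identification of this composition with the formula displayed in the corollary.
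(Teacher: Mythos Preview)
Your proposal is correct and follows essentially the same approach as the paper: use the monoidal property to get $E(X \sqcup Y)=E(X)\grotimes E(Y)$, iterate the sewing property along the pairs $(j,s(j))$, collapse the iterated partial supertraces via Proposition \ref{propIteratedPartialTrace}, and then invoke Proposition \ref{propSupertraceComposition} to identify the result with the displayed composition. Your extra discussion of braiding signs is more careful than the paper's brief sketch but amounts to the same argument.
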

\begin{proof}
By Property \eqref{cftPropMonoidal} of Theorem \ref{thmCFTProperties}, $E(X \sqcup Y) = E(X) \grotimes E(Y)$. Repeatedly applying Property \eqref{cftPropSewing} yields 
$$
E(Z) = \left(\prod_{j \in S} \tr^s_{j,s(j)} \right) E(X) \grotimes E(Y).
$$
By Proposition \ref{propIteratedPartialTrace} the iterated partial supertrace is given by taking the partial supertrace over $\bigotimes_{j \in S} \F_{H,p}$ (identified with the corresponding factors of the codomain via $s$). By Proposition \ref{propSupertraceComposition}, this partial supertrace corresponds to composition of operators, which gives the desired formula for $E(Z)$.
\end{proof}

\subsection{Verification of properties}\label{subsecCFTProof}

In each subsection below, we will establish one of the numbered results from Theorem \ref{thmCFTProperties}.
The technique we will use is to first establish a corresponding property for the Hardy space $H^2(X)$, and show that the property of the CFT is a consequence. 
We continue to use the notation of Notation \ref{ntnFockStuff}.

\subsubsection{Existence/uniqueness}

The main tool for establishing $\dim E(X) = 1$ is the Segal equivalence criterion (Theorem \ref{thmSegalEq}), of which the following is essentially a restatement.

\begin{Lemma}\label{lemExistenceCriterion}
Let $H^0$ and $H^1$ be Hilbert spaces, and let $p_i \in \P(H^i)$. Let $K$ be a closed subspace of $H^1 \oplus H^0$ and let $q_K$ be the corresponding projection. Then the following are equivalent.
\begin{enumerate}
\item \label{itmDiffHS} $(p_1 \oplus (\Id - p_0)) - q_K$ is Hilbert-Schmidt.
\item \label{itmExistsCommRels} There exists a non-zero Hilbert-Schmidt operator $T \in \B_2(\F_{H^0,p_0}, \F_{H^1, p_1})$ which satisfies the $K$ commutation relations (Definition \ref{defCommutationRelations}).
\end{enumerate}
If the above conditions are satisfied, then the operator $T$ is homogeneous and any other Hilbert-Schmidt operator satisfying the $K$ commutation relations is a scalar multiple of $T$. If $(p_1 \oplus (\Id - p_0)) - q_K$ is trace class, then so is $T$.
\end{Lemma}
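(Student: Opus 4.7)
The strategy is to transfer the problem about operators satisfying the $K$ commutation relations to a problem about vectors in a single Fock space satisfying the $q_K$-vacuum equations, whereupon the lemma follows from the Shale-Stinespring / Segal equivalence criterion (Theorem \ref{thmSegalEq}).

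First, I construct an isomorphism
$$
\Psi: \F_{H^1 \oplus H^0,\, p_1 \oplus (\Id - p_0)} \overset{\sim}{\longrightarrow} \B_2(\F_{H^0, p_0}, \F_{H^1, p_1})
$$
by composing three natural identifications: the Fock-of-a-sum to tensor isomorphism (Proposition \ref{propFockSumToTensor}), which identifies the domain with $\F_{H^1, p_1} \otimes \F_{H^0, \Id - p_0}$; the map $\Id \grotimes \Phi$, where $\Phi: \F_{H^0, \Id - p_0} \to \F_{H^0, p_0}^*$ is the unitary from Proposition \ref{propPhiConjugation}; and the Hilbert-Schmidt isomorphism $\mu$ of \eqref{eqHilbertSchmidtIso}.

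Second, I verify that $\Psi$ carries the $q_K$-vacuum equations on the Fock space side to the $K$ commutation relations on the operator side. The key computation uses the bimodularity of $\mu$ (Proposition \ref{propMuBimodularity}) combined with the conjugation formulas of Proposition \ref{propPhiConjugation}. Concretely, if $\xi$ is homogeneous with $T = \Psi(\xi)$, then left multiplication by $\pi_{p_1}(a(f^1))$ on $T$ corresponds to the action of $\pi_{p_1}(a(f^1)) \grotimes \Id$ on the first tensor factor of $\xi$; right multiplication by $\pi_{p_0}(a(f^0))$ on $T$ corresponds, up to explicit signs, to the action of $\Id \grotimes \pi_{\Id - p_0}(a(f^0))$ on the second factor. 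The reparametrization $f^0 \mapsto (2p_0 - \Id) f^0$ produced by Proposition \ref{propPhiConjugation} is exactly what is needed to match the opposite signs between equations \eqref{eqImageCommRels} and \eqref{eqPerpCommRels}, so that the two defining vacuum equations for $\tilde\Omega_{q_K}$ translate into the two defining commutation relations for $T$.

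Third, I apply Theorem \ref{thmSegalEq} inside $\F_{H^1 \oplus H^0,\, p_1 \oplus (\Id - p_0)}$: a nonzero $q_K$-vacuum vector exists if and only if condition \ref{itmDiffHS} holds, and is unique up to scalar. Homogeneity of $\tilde\Omega_{q_K}$ is Proposition \ref{propVacuumHomogeneous}. Transporting through $\Psi$ yields the equivalence of \ref{itmDiffHS} and \ref{itmExistsCommRels}, together with the uniqueness up to scalar and the homogeneity of $T$.

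Finally, for the trace class assertion, I would argue from an explicit formula for $\tilde\Omega_{q_K}$. The constructive proof of Shale-Stinespring (cf.\ \cite[\S3]{Wa98}) expresses $\tilde\Omega_{q_K}$ as a normalized ``Gaussian'' vector, namely the image of $\Omega_{p_1 \oplus (\Id - p_0)}$ under the exponential of a pair-creation operator built from the off-diagonal block of $q_K$ with respect to the splitting induced by $p_1 \oplus (\Id - p_0)$. Under $\Psi$, this translates into an explicit formula for $T$ from which the singular values of $T$ can be read off as elementary functions of the singular values of $(p_1 \oplus (\Id - p_0)) - q_K$; the trace class hypothesis on the projection difference then implies summability of the singular values of $T$. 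I expect this last step to be the main technical obstacle, since the abstract Segal criterion only delivers the Hilbert-Schmidt bound on $T$, and the sharper trace norm estimate requires genuine analysis of the Gaussian representative of $\tilde\Omega_{q_K}$.
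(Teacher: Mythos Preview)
Your approach is essentially the same as the paper's, and correct in outline, but there is a bookkeeping slip in your second step. The map $\Psi$ you describe does \emph{not} carry the $q_K$-vacuum equations directly to the $K$ commutation relations. When you unwind Proposition \ref{propPhiConjugation} and Proposition \ref{propMuBimodularity}, the $q_K$-vacuum equation for $(f^1, f^0) \in K$ becomes
\[
\pi_{p_1}(a(f^1))\, T = (-1)^{p(T)}\, T\, \pi_{p_0}\big(a((2p_0 - \Id)f^0)\big),
\]
which is the commutation relation for the reflected subspace $(\Id \oplus (2p_0 - \Id))K$, not for $K$ itself. The reflection $r_0 := \Id - 2p_0$ is not a global sign and cannot be absorbed into the $\pm$ distinguishing \eqref{eqImageCommRels} from \eqref{eqPerpCommRels}; that sign difference comes instead from the explicit minus sign in the \emph{second} formula of Proposition \ref{propPhiConjugation}. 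The fix, which the paper makes, is to work with the conjugated projection $q := (\Id \oplus r_0)\, q_K\, (\Id \oplus r_0)$ in place of $q_K$. Since $[r_0, p_0] = 0$, condition \ref{itmDiffHS} is unchanged by this conjugation, and the $q$-vacuum equations then translate exactly into the $K$ commutation relations under $\Psi$.

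For the trace class assertion, your plan is in the right spirit but the paper's execution differs from what you anticipate: rather than reading off singular values of $T$, the paper expands the explicit product formula for $\tilde\Omega_q$ (from \cite{Th92} or \cite{Wa98}) as a sum of simple tensors indexed by finite subsets $S \subset \Z_{\ge 1}$, observes that $\mu((\Id \otimes \Phi)\,\cdot\,)$ sends a simple tensor of norm $\alpha$ to an operator of trace norm $\alpha$, and bounds $\norm{T}_1$ by $2^m \prod_k (1 + 4\lambda_k)$, where the $\lambda_k$ are the distinct eigenvalues of $\abs{(p_1 \oplus (\Id - p_0)) - q}$. This is a direct combinatorial estimate on the tensor expansion rather than a singular value computation, and is perhaps more robust than trying to diagonalize $T$ exactly.
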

\begin{proof}
First assume condition \eqref{itmDiffHS} holds. Let $r_0 := (\Id- 2p_0) \in \U(H^0)$ be reflection across $\Id - p_0$, and set $r := \Id \oplus r_0 \in \B(H^1 \oplus H^0)$. Since $[r_0, p_0] = 0$, the modified projection
$
q := r q_K r
$
 also satisfies condition \eqref{itmDiffHS}. Thus by Theorem \ref{thmSegalEq} there exists a non-zero $\tilde \Omega_{q} \in \F_{H \oplus K, p_1 \oplus (\Id-p_0)}$ satisfying the vacuum equations for $q$ (Definition \ref{defVacuumEquations}).
By Proposition \ref{propVacuumHomogeneous}, $\tilde \Omega_q$ is homogeneous. 
Identifying this Fock space with $\F_{H^1,p^1} \otimes \F_{H^0,\Id-p_0}$ as in Proposition \ref{propFockSumToTensor}, these vacuum equations read
$$
(\pi_{p_1}(a(f^1)) \grotimes \Id +\Id \grotimes \pi_{\Id-p_0}(a(f^0))) \tilde \Omega_{q} = 0
$$
for all $(f^1,f^0) \in \operatorname{Im}(q)$ and
$$
(\pi_{p_1}(a(g^1))^* \grotimes \Id + \Id \grotimes\pi_{1-p_0}(a(g^0))^*)\tilde \Omega_{q} = 0
$$
for all $(g^1, g^0) \in \operatorname{Im}(q)^\perp$.

Let $\Phi:\F_{H^0, (\Id - p_0)} \to \F_{H^0,p_0}^*$ be the unitary defined in Section \ref{subsecFockSpace}. By Proposition \ref{propPhiConjugation}, we have
\begin{equation}\label{eqnVacCondOne}
(\pi_{p_1}(a(f^1))\grotimes \Id - \Id \grotimes \overline{\pi_{p_0}(a(r_0f^0))^*}d ) (\Id \otimes \Phi)\tilde\Omega_{q} = 0
\end{equation}
and
\begin{equation}\label{eqnVacCondTwo}
(\pi_{p_1}(a(g^1))^* \grotimes \Id +\Id \grotimes \overline{\pi_{p_0}(a(r_0g^0))}d) (\Id \otimes \Phi)\tilde\Omega_{q} = 0
\end{equation}
where $d$ is the grading involution.

Let $\mu:\F_{H^1,p_1} \otimes \F_{H^0,p_0}^* \to \B_2(\F_{H^0,p_0}, \F_{H^1,p_1})$ be the natural isomorphism, and let $T_q = \mu((\Id \otimes \Phi)\tilde \Omega_q)$.
Since $\tilde \Omega_q$ is homogeneous, so is $T_q$.
Applying Proposition \ref{propMuBimodularity} to Equation \eqref{eqnVacCondOne} gives
\begin{equation}\label{eqnPreCommRel1}
\pi_{p_1}(a(f^1))T_q = d\, T_qd \;  \pi_p(a(r_0 f^0)) = (-1)^{p(T_q)} T_q \pi_p(a(r_0 f^0))
\end{equation}
for all $(f^1, f^0) \in \operatorname{Im}(q)$. By construction, $(f^1, f^0) \in \operatorname{Im}(q)$ if and only if $(f^1, r_0f^0) \in K$, and so $T_q$ satisfies the first half of the $K$ commutation relations, equation \eqref{eqImageCommRels}.

Similarly, if $(g^1, g^0) \in \operatorname{Im}(q)^\perp$, then applying Proposition \ref{propMuBimodularity} to equation \eqref{eqnVacCondTwo} yields
\begin{equation}\label{eqnPreCommRel2}
\pi_{p_1}(a(g^1))^*T_q = -(-1)^{p(T_q)}T_q \pi_{p_0}(a(r_0 g^0))^*
\end{equation}
whenever $(g^1, g^0) \in \operatorname{Im}(q)^\perp$.  Hence $T_q$ satisfies the second half of the $K$ commutation relations, equation \eqref{eqPerpCommRels}.
This completes the proof that \eqref{itmDiffHS} implies \eqref{itmExistsCommRels}.

In fact, the proof shows that the grading preserving map $\F_{H^1, p_1} \otimes \F_{H^0,\Id-p_0} \to \B_2(H_0,H_1)$ given by $\xi \mapsto \mu((\Id \grotimes \Phi)\xi)$ induces an isomorphism between the space of vectors satisfying the $q$ commutation and the space of Hilbert-Schmidt maps satisfying the $K$ commutation relations. 
By Theorem \ref{thmSegalEq}, the space of vectors satisfying the $q$ commutation relations has dimension zero or one, with dimension one exactly when \eqref{itmDiffHS} is satisfied. Thus \eqref{itmDiffHS} holds if and only if \eqref{itmExistsCommRels} holds.

It remains to show that if $(p_1 \oplus (\Id - p_0) )- q$ is trace class, then $T_q = \mu((\Id \otimes \Phi)\tilde \Omega_q)$ is trace class. 
From the explicit formula for $\tilde \Omega_q$ in, e.g., \cite[Thm. 10.6]{Th92} or \cite[\S3]{Wa98}, there exist unit vectors $f_k, g_k, h_j \in H^1 \oplus H^0$ such that
\begin{equation}\label{eqnProductFormulaForVacuum}
\tilde \Omega_q = y\prod_{k=1}^\infty (\Id + \lambda_k x_k) (\Omega \otimes \Omega),
\end{equation}
where 
$$
x_k = \pi_{p_1 \oplus (\Id-p_0)}(a(f_k)a(g_k)^*), \qquad
y = \pi_{p_1 \oplus (\Id-p_0)}(a(h_1) \cdots a(h_n) a(h_{n+1})^* \cdots a(h_m)^*)
$$
and $\lambda_k \in \R_{\ge 0}$ are distinct eigenvalues of $\abs{(p_1 \oplus (\Id - p_0)) - q}$.

If $f=(f^1,f^0) \in H^1 \oplus H^0$, then $\pi_{p_1 \oplus (\Id - p_0)}(a(f)) = \pi_{p_1}(a(f^1)) \grotimes \Id + \Id \grotimes \pi_{\Id-p_0}(a(f^0))$.
Thus if $\norm{f} \le 1$ and $\xi \in \F_{H^1, p_1} \otimes \F_{H^0, \Id - p_0}$ is a linear combination of at most $C$ simple tensors, each with norm at most $\alpha$, then $\pi_{p_1 \oplus (\Id - p_0)}(a(f))\xi$ is a linear combination of at most $2C$ simple tensors, each with norm at most $\alpha$.

Hence, expanding the product \eqref{eqnProductFormulaForVacuum} for $\tilde \Omega_q$, we can write $\tilde \Omega_q \in \F_{H^1, p_1} \otimes \F_{H^0,\Id-p_0}$ as a sum of vectors $\xi_S$ indexed by finite subsets $S \subset \Z_{\ge 1}$, such that $\xi_S$ is a sum of at most $2^{2\abs{S}+m}$ simple tensors, each with norm at most $\sum_{k \in S} \lambda_k$.

If $\xi \in \F_{H^1, p_1} \otimes \F_{H^0,\Id-p_0}$ is a simple tensor, then so is $(\Id \grotimes \Phi)\xi$, and $\norm{\mu(\Id \grotimes \Phi)\xi}_1 = \norm{\xi}$. 
Hence
$$
\norm{T_q}_1 = \norm{\mu(\Id \grotimes \Phi)\tilde \Omega_q}_1 \le \sum_{S} \norm{\xi_S} \le 2^m \sum_{S} 4^{\abs{S}} \sum_{k \in S} \lambda_k = 2^m \prod_{k=1}^\infty (1+4\lambda_k).
$$
The last term is finite because $\sum \lambda_k \le \norm{p_1 \oplus (\Id - p_0) - q}_1$, and so $T_q$ is trace class.
\end{proof}

Establishing that condition \eqref{itmDiffHS} of Lemma \ref{lemExistenceCriterion} holds for the Hardy spaces $H^2(X) \subset H_\Gamma$ is one of the main results of Section \ref{secCauchyTransform}, which allows us to establish the existence property for $E(X)$.

\begin{Theorem}\label{thmExistence}
If $X \in \cR$, then $\dim E(X) = 1$ and the elements of $E(X)$ are homogeneous and trace class.
\end{Theorem}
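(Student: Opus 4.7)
The plan is to reduce Theorem \ref{thmExistence} to Lemma \ref{lemExistenceCriterion} by identifying the relevant data. Take $H^i := H_\Gamma^i$, $p_i$ as in \eqref{eqnBoundaryProjectionsRecall}, and $K := H^2(X) \subset H_\Gamma^1 \oplus H_\Gamma^0 = H_\Gamma$. With these identifications, $E(X)$ is exactly the space of trace class operators $T : \F_\Gamma^0 \to \F_\Gamma^1$ satisfying the $H^2(X)$ commutation relations in the sense of Definition \ref{defCommutationRelations}. The uniqueness and homogeneity parts of Theorem \ref{thmExistence} will then be immediate from the final sentence of Lemma \ref{lemExistenceCriterion}, once the hypothesis on the projection difference has been verified.

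Thus the entire theorem reduces to establishing the following operator-theoretic fact: letting $q_{H^2(X)}$ denote the orthogonal projection of $H_\Gamma$ onto $H^2(X)$, the difference
\[
D_X \;:=\; \bigl(p_1 \oplus (\Id_{H_\Gamma^0} - p_0)\bigr) - q_{H^2(X)}
\]
is trace class. Hilbert--Schmidtness alone would yield a non-zero Hilbert--Schmidt $T_X$ by Lemma \ref{lemExistenceCriterion}, but we need the stronger trace class conclusion to get elements of $E(X)$ in $\B_1(\F_\Gamma^0, \F_\Gamma^1)$; the final sentence of Lemma \ref{lemExistenceCriterion} says trace class of $D_X$ implies trace class of $T_X$.

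Proving $D_X \in \B_1(H_\Gamma)$ is the hard part, and it is exactly the point where the analytic machinery of Section \ref{secCauchyTransform} (which is unseen at this point of the paper) has to do real work. The natural strategy, mirroring the classical planar case, is to use the Cauchy transform for the spin Riemann surface $X$ to produce an explicit ``near projection'' onto $H^2(X)$: on smooth boundary data it realizes the Plemelj--type boundary values of holomorphic extensions, and the Kerzman--Stein--type identity (advertised in the introduction as an analog of the classical one) should express the discrepancy between this operator and the orthogonal projection $q_{H^2(X)}$ as a smoothing, hence trace class, operator. Similarly, the projection $p_1 \oplus (\Id-p_0)$ is the model Cauchy projection for a disjoint union of standard disks and their ``outsides'' parametrizing the incoming/outgoing boundary components, so after subtracting the model one is left with an integral operator with smooth kernel on the boundary circles, which is trace class.

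Once $D_X$ is trace class, Lemma \ref{lemExistenceCriterion} produces a homogeneous, non-zero trace class $T_X$ satisfying the $H^2(X)$ commutation relations, and shows that every other Hilbert--Schmidt (a fortiori trace class) operator satisfying these relations is a scalar multiple of $T_X$. Consequently $E(X) = \C \cdot T_X$ is one-dimensional, and its non-zero elements are homogeneous and trace class, completing the proof. The only non-routine input that must be cited from later in the paper is the trace class property of $D_X$, which is precisely the Cauchy-transform content the author flags immediately before the theorem statement.
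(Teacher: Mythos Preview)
Your proposal is correct and follows exactly the same approach as the paper: set $H^i = H_\Gamma^i$, $p_i$ as in \eqref{eqnBoundaryProjectionsRecall}, $K = H^2(X)$, invoke the trace class property of $(p_1 \oplus (\Id - p_0)) - q_{H^2(X)}$ established via the Cauchy transform (Theorem \ref{thmProjectionDifferenceTraceClass}), and then apply Lemma \ref{lemExistenceCriterion}. Your additional sketch of the Kerzman--Stein mechanism behind Theorem \ref{thmProjectionDifferenceTraceClass} is accurate and matches what the paper does in Section \ref{secCauchyTransform}.
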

\begin{proof}
By Theorem \ref{thmProjectionDifferenceTraceClass}, condition \eqref{itmDiffHS} of Lemma \ref{lemExistenceCriterion} holds for $H^i = H_\Gamma^i$, with $p_i$ as in \eqref{eqnBoundaryProjectionsRecall}, and $K = H^2(X)$. Moreover, from the same theorem, $(p_1 \oplus \Id - p_0) - q_K$ is trace class. Thus the conclusion follows immediately from Lemma \ref{lemExistenceCriterion}.
\end{proof}

\subsubsection{Non-degeneracy}

Before establishing the non-degeneracy property of the CFT (Theorem \ref{thmCFTProperties} \eqref{cftPropNondegeneracy}), we need the corresponding property of the Hardy space.

\begin{Proposition}\label{propHardyDenseComponents}
Let $X \in \cR$, and let $S \subset \pi_0(\Gamma)$. Let $H_\Gamma = \bigoplus_{j \in \pi_0(\Gamma)} L^2(S^1)$, and let $p_S$ be the projection of $H_\Gamma$ onto the copies of $L^2(S^1)$ indexed by $S$. If each connected component of $\Sigma$ has a boundary component not contained in $S$, then $p_S H^2(X)$ and $p_S H^2(X)^\perp$ are dense in $\bigoplus_S L^2(S^1)$.
\end{Proposition}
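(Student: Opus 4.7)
The plan is a duality reduction followed by a boundary uniqueness argument. View $W := \bigoplus_{j \in S} L^2(S^1)$ as the closed subspace of $H_\Gamma$ of functions vanishing on every component $j \in S^c := \pi_0(\Gamma) \setminus S$. For any $\phi \in W$ and $f \in H_\Gamma$ we have $\langle \phi, p_S f \rangle = \langle \phi, f \rangle$ since $p_S \phi = \phi$. Consequently, $p_S H^2(X)$ is dense in $W$ iff $W \cap H^2(X)^\perp = \{0\}$, and $p_S H^2(X)^\perp$ is dense in $W$ iff $W \cap H^2(X) = \{0\}$. It therefore suffices to prove two uniqueness claims: (A) any $g \in H^2(X)$ with $p_{S^c} g = 0$ is zero, and (B) any $g \in H^2(X)^\perp$ with $p_{S^c} g = 0$ is zero.

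For (A), I would invoke Remark \ref{rmkClassicalHardy} to realize $g$ as the $L^2$ boundary value $\beta^* F|_\Gamma$ of a holomorphic section $F$ of $L$ on the interior of $\Sigma$. The hypothesis then says $F$ has vanishing $L^2$ boundary values on each $j \in S^c$. Boundary uniqueness for holomorphic sections---via Schwarz reflection in a local holomorphic trivialization of $L$ near the boundary, combined with the classical Privalov-type uniqueness for Hardy functions vanishing on a boundary arc---forces $F$ to vanish identically on any connected component of $\Sigma$ that meets $S^c$. By hypothesis every component does, so $F \equiv 0$ and hence $g = 0$.

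For (B), I would use Theorem \ref{thmHardyPerp} (proven later via the Cauchy transform) to write $H^2(X)^\perp = M_\pm H^2(\overline X)$, where $M_\pm$ is componentwise multiplication by $\pm 1$ depending on whether the component is incoming or outgoing. Since $M_\pm$ preserves $W$, an element $g \in H^2(X)^\perp$ with $p_{S^c} g = 0$ corresponds to an element of $H^2(\overline X)$ with the same vanishing property. The conjugate $\overline X$ has the same underlying topology and boundary components as $X$, so the hypothesis carries over verbatim, and applying (A) to $\overline X$ completes the argument.

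The main obstacle is the boundary uniqueness step at $L^2$ regularity rather than for smooth sections: one must propagate the vanishing of $L^2$ boundary values on an arc to identical vanishing throughout the component. The cleanest route uses Remark \ref{rmkClassicalHardy}; alternatively, one could argue directly via the Cauchy transform of Section \ref{secCauchyTransform}, using its Plemelj-type boundary formula to reconstruct $F$ from the Hardy datum and then applying reflection to the resulting holomorphic section.
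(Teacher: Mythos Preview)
Your approach is correct but genuinely different from the paper's. The paper argues \emph{directly} for density: after reducing to the connected case and trivializing $L$ over an ambient open surface $\tilde\Sigma$ (via Corollary~\ref{corEmbeddingInOpenSpinSurface} and Theorem~\ref{thmTrivialityOfVectorBundles}), it invokes Bishop's approximation theorem to show that every continuous function on $\bigsqcup_{j\in S} j$ is a uniform limit of holomorphic functions on $\tilde\Sigma$, which immediately gives density of $p_S H^2(X)$. The statement for $H^2(X)^\perp$ then follows from Theorem~\ref{thmHardyPerp}, exactly as in your (B).

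By contrast, you dualize to a \emph{uniqueness} problem: density of $p_S H^2(X)$ becomes $W\cap H^2(X)^\perp=\{0\}$, and vice versa, which you then handle via boundary uniqueness for Hardy sections. This is a perfectly valid route, and arguably more conceptual, but it leans on the $L^2$-boundary-value description of Hardy elements from Remark~\ref{rmkClassicalHardy}, which the paper explicitly declines to develop (``We will not use this description of the Hardy space''). Your suggested fallback through the Cauchy transform is the right instinct, since Section~\ref{secCauchyTransform} shows $C$ is a bounded idempotent with range $H^2(\Sigma,\gamma)$, so the Cauchy integral furnishes the interior holomorphic extension for arbitrary $L^2$ Hardy data; combined with the multiplication-by-$f$ passage between $H^2(\Sigma,\gamma)$ and $H^2(X)$ (as in the proof of Theorem~\ref{thmHardyPerp}), this closes the gap without Remark~\ref{rmkClassicalHardy}. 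The tradeoff: the paper's proof imports one external approximation theorem (Bishop) and is done in three lines, while yours stays closer to Hardy-space first principles but requires a careful boundary-uniqueness argument at $L^2$ regularity that you have sketched rather than executed.
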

\begin{proof}
In light of Proposition \ref{propMonoidalHardy}, we may assume without loss of generality that $\Sigma$ is connected.
By Corollary \ref{corEmbeddingInOpenSpinSurface}, we may assume that $(\Sigma, L, \Phi)$ is embedded in an open spin Riemann surface $(\tilde \Sigma, \tilde L, \tilde \Phi)$.
By Theorem \ref{thmTrivialityOfVectorBundles}, we may assume that $L$ is the trivial $\C$-bundle. 
By Bishop's approximation theorem \cite[Cor. 2]{Bishop58}, every continuous function on $\bigsqcup_{j \in S} j$ can be uniformly approximated by holomorphic functions on $\tilde \Sigma$, and thus $p_S H^2(X)$ is dense in $\bigoplus_S L^2(S^1)$.

By Theorem \ref{thmHardyPerp}, $H^2(X)^\perp = M_{\pm} H^2(\overline{X})$, where $M_{\pm}$ is multiplication by $1$ and $-1$ on copies of $L^2(S^1)$ indexed by outgoing and incoming boundary componenents, respectively. Thus the density of $p_S H^2(X)^\perp$ follows from that of $p_S H^2(\overline{X})$.
\end{proof}

And now non-degeneracy of the CFT follows from Proposition \ref{propHardyDenseComponents}.

\begin{Proposition}\label{propCFTInjectiveDenseImage}
Let $X=(\Sigma,L,\Phi,\beta) \in \cR$.
\begin{enumerate}
\item \label{itmCFTInjective} If every connected component of $\Sigma$ has an outgoing boundary component, then non-zero elements of $E(X)$ are injective. 
\item \label{itmCFTDenseImage} If every connected component of $\Sigma$ has an incoming boundary component, then non-zero elements of $E(X)$ have dense image.
\end{enumerate}
\end{Proposition}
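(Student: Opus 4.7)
My plan is to prove both assertions by the standard argument: show that $\ker T$ (respectively $\ker T^*$) is a closed subspace of $\F_\Gamma^0$ (respectively $\F_\Gamma^1$) that is invariant under the irreducible representation of $\CAR(H_\Gamma^0)$ (respectively $\CAR(H_\Gamma^1)$), and then invoke irreducibility together with $T \neq 0$ to conclude the kernel is trivial.

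For \eqref{itmCFTInjective}, let $T \in E(X)$ be non-zero, assume every connected component of $\Sigma$ has an outgoing boundary component, and let $\xi \in \ker T$. The $H^2(X)$ commutation relations \eqref{eqImageCommRels} and \eqref{eqPerpCommRels} show immediately that for any $(f^1, f^0) \in H^2(X)$,
$$
T\pi_{p_0}(a(f^0))\xi = \pm \pi_{p_1}(a(f^1))T\xi = 0,
$$
and similarly $T \pi_{p_0}(a(g^0))^* \xi = 0$ whenever $(g^1,g^0) \in H^2(X)^\perp$. Hence $\ker T$ is invariant under $\pi_{p_0}(a(f^0))$ for every $f^0 \in p_S H^2(X)$ and under $\pi_{p_0}(a(g^0))^*$ for every $g^0 \in p_S H^2(X)^\perp$, where $S = \pi_0(\Gamma^0)$. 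Since every component of $\Sigma$ has a boundary component outside $S$, Proposition \ref{propHardyDenseComponents} tells us that both $p_S H^2(X)$ and $p_S H^2(X)^\perp$ are dense in $H_\Gamma^0$. Combining this with the norm bound $\|\pi_{p_0}(a(f))\| = \|f\|$ and the fact that $\ker T$ is closed, we conclude $\ker T$ is invariant under $\pi_{p_0}(a(f))$ and $\pi_{p_0}(a(f))^*$ for \emph{every} $f \in H_\Gamma^0$. Since $\pi_{p_0}$ is an irreducible representation of $\CAR(H_\Gamma^0)$ and $T \neq 0$ forces $\ker T \neq \F_\Gamma^0$, we conclude $\ker T = 0$.

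For \eqref{itmCFTDenseImage}, I would run the dual version of the same argument. Observe that $\overline{\operatorname{im} T}^\perp = \ker T^*$, so it suffices to show $\ker T^* = 0$. Taking adjoints in \eqref{eqImageCommRels} and \eqref{eqPerpCommRels} gives
$$
T^*\pi_{p_1}(a(f^1))^* = \pm \pi_{p_0}(a(f^0))^* T^*, \quad
T^*\pi_{p_1}(a(g^1)) = \mp \pi_{p_0}(a(g^0)) T^*,
$$
for $(f^1,f^0) \in H^2(X)$ and $(g^1,g^0) \in H^2(X)^\perp$ respectively. Hence $\ker T^*$ is invariant under $\pi_{p_1}(a(f^1))^*$ for $f^1 \in p_S H^2(X)$ and under $\pi_{p_1}(a(g^1))$ for $g^1 \in p_S H^2(X)^\perp$, now with $S = \pi_0(\Gamma^1)$. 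The hypothesis that every component of $\Sigma$ has an incoming boundary component ensures that every component has a boundary component outside $S$, so Proposition \ref{propHardyDenseComponents} again gives density in $H_\Gamma^1$. By the same closedness/continuity and irreducibility argument, $\ker T^* = 0$, and $T$ has dense image.

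The main (very minor) obstacle is bookkeeping the signs coming from the parity of $T$ in Definition \ref{defCommutationRelations}, but these play no role since we only use the commutation relations to show $\ker T$ and $\ker T^*$ are invariant subspaces. The real content of the proposition is packaged in Proposition \ref{propHardyDenseComponents} (which rests on Bishop's approximation theorem and Theorem \ref{thmHardyPerp}) together with irreducibility of the Fock space representation of $\CAR$.
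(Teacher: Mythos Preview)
Your proof is correct and follows essentially the same approach as the paper: for part \eqref{itmCFTInjective} both use Proposition \ref{propHardyDenseComponents} with $S=\pi_0(\Gamma^0)$, the $H^2(X)$ commutation relations, continuity of $f \mapsto a(f)$, and irreducibility of $\pi_{p_0}$ to show $\ker T$ is $\CAR(H_\Gamma^0)$-invariant. For part \eqref{itmCFTDenseImage} the paper remarks that one can either run the dual argument on $\ker T^*$ (exactly as you do) or deduce it from part \eqref{itmCFTInjective} combined with the unitarity property $E(\overline{X}) = E(X)^*$.
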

\begin{proof}
Assume first that every connected componenet of $\Sigma$ has an outgoing boundary component. 
Let $T \in E(X)$.
That is, $T \in \B_1(\F_{H_{\Gamma}^0,p_0}, \F_{H_{\Gamma}^1,p_1})$ and satisfies the commutation relations for $H^2(X) \subset H_{\Gamma}^1 \oplus H_\Gamma^0$.
We will show that $\ker T$ is invariant under $\CAR(H_\Gamma^0)$, and since $\CAR(H_\Gamma^0)$ acts irreducibly on $\F_{H_{\Gamma}^0,p_0}$ this will imply the desired result.

Applying Proposition \ref{propHardyDenseComponents} with $S = \pi_0(\Gamma^0)$, we get that the projection of $H^2(X)$ onto $H_\Gamma^0$ has dense image. 
Call this subspace $K$. 
By definition, for every $f^0 \in K$, there exists an $f^1 \in H_\Gamma^1$ such that $(f^1, f^0) \in H^2(X)$. 

Now let $\xi \in \ker T$. 
Since $T$ satisfies the $H^2(X)$ commutation relations, we have 
$$
Ta(f^0)\xi = (-1)^{p(T)}a(f^1)T\xi = 0
$$
for every $f^0 \in K$. 
Since $K$ is dense in $H_{\Gamma}^0$, $\ker T$ is invariant under $a(f)$ for all $f \in H_\Gamma^0$.
A similar argument, using the projection of $H^2(X)^\perp$ onto incoming boundary componenets, shows that $\ker T$ is invariant under $a(f)^*$ for all $f \in H_\Gamma^0$, which completes the proof of item \eqref{itmCFTInjective}.

The proof of item \eqref{itmCFTDenseImage} is similar, or alternatively \eqref{itmCFTDenseImage} follows from \eqref{itmCFTInjective} and the unitarity property Proposition \ref{propCFTUnitarity}.
\end{proof}

\subsubsection{Monoidal property}

\begin{Proposition}\label{propCFTMonoidal}
If $X,Y \in \cR$, then $E(X \sqcup Y) = E(X) \grotimes E(Y)$.
\end{Proposition}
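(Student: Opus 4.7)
The plan is to use the three key facts already in hand: Proposition \ref{propMonoidalHardy} which gives $H^2(X \sqcup Y) = H^2(X) \oplus H^2(Y)$ and hence also $H^2(X \sqcup Y)^\perp = H^2(X)^\perp \oplus H^2(Y)^\perp$; Proposition \ref{propFockSumToTensor} together with Remark \ref{rmkFockUnorderedTensorProduct}, which identifies $\F_{\Gamma_{X \sqcup Y}}^i$ with $\F_{\Gamma_X}^i \otimes \F_{\Gamma_Y}^i$ and realises $\pi_{p_1 \oplus p_1'}(a(f^X + f^Y))$ as $\pi(a(f^X)) \grotimes \Id + \Id \grotimes \pi(a(f^Y))$; and Theorem \ref{thmExistence} (existence), which tells us $E(X)$ and $E(Y)$ are one-dimensional and spanned by homogeneous trace class operators $T_X$ and $T_Y$. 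Since both $E(X \sqcup Y)$ and $E(X) \grotimes E(Y)$ are one-dimensional, it suffices to exhibit one non-zero element of the intersection; the natural candidate is $T_X \grotimes T_Y$, which is trace class as the graded tensor product of trace class homogeneous operators.

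The substantive step is therefore to verify that $T_X \grotimes T_Y$ satisfies the $H^2(X \sqcup Y)$ commutation relations. I would take a typical element $(f^1_X + f^1_Y,\, f^0_X + f^0_Y) \in H^2(X) \oplus H^2(Y)$ and expand $a(f^1_X + f^1_Y)(T_X \grotimes T_Y)$ using the graded multiplication rule
$$
(y_1 \grotimes y_2)(x_1 \grotimes x_2) = (-1)^{p(y_2)p(x_1)}(y_1 x_1 \grotimes y_2 x_2)
$$
from Section \ref{secFABackground}, remembering that the $a(\cdot)$ are odd. Applying the $H^2(X)$ and $H^2(Y)$ commutation relations to the two resulting summands produces
$$
(-1)^{p(T_X)} T_X a(f^0_X) \grotimes T_Y \;+\; (-1)^{p(T_X) + p(T_Y)} T_X \grotimes T_Y a(f^0_Y).
$$
Expanding $(T_X \grotimes T_Y) a(f^0_X + f^0_Y)$ via the same multiplication rule and multiplying by $(-1)^{p(T_X \grotimes T_Y)} = (-1)^{p(T_X) + p(T_Y)}$ yields the identical expression, establishing \eqref{eqImageCommRels}. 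The proof of \eqref{eqPerpCommRels} for $(g^1, g^0) \in H^2(X)^\perp \oplus H^2(Y)^\perp = H^2(X \sqcup Y)^\perp$ is entirely analogous, using the $(-1)$ on the right-hand side of \eqref{eqPerpCommRels} in place of the $(-1)^{p(T)}$ convention but with the same graded bookkeeping.

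Once this is in place, $T_X \grotimes T_Y$ is a non-zero element of $E(X \sqcup Y)$ lying in the subspace $E(X) \grotimes E(Y)$, and since both of these one-dimensional spaces contain it, they coincide. The only real obstacle is keeping the grading signs straight, which is a bookkeeping exercise rather than a conceptual difficulty; the identification of the CAR-action on the tensor product Fock space provided by Proposition \ref{propFockSumToTensor} does all the analytic work, so no further estimates or welding arguments are needed.
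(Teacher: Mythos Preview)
Your proposal is correct and follows exactly the same approach as the paper: use Proposition \ref{propMonoidalHardy} for the Hardy space decomposition, verify that $T_X \grotimes T_Y$ satisfies the $H^2(X) \oplus H^2(Y)$ commutation relations, and conclude equality from one-dimensionality via Theorem \ref{thmExistence}. The paper simply calls the graded sign verification ``a simple exercise,'' whereas you have carried out those details explicitly (and correctly).
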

\begin{proof}
By Proposition \ref{propMonoidalHardy}, we have $H^2(X \sqcup Y) = H^2(X) \oplus H^2(Y)$. It is now a simple exercise to check that if $T_1$ satisfies the $L_1$ commutation relations and $T_2$ satisfies the $L_2$ commutation relations, then $T_1 \grotimes T_2$ satisfies the $L_1 \oplus L_2$ commutation relations. This gives us an inclusion $E(X) \grotimes E(Y) \subseteq E(X \sqcup Y)$, but since both spaces are 1-dimensional by Theorem \ref{thmExistence}, this is an equality.
\end{proof}

\subsubsection{Reparametrization}

We saw in Proposition \ref{propReparametrizedHardy} that reparametrizing the boundary components of $X \in \cR$ acted on $H^2(X)$ by unitary operators coming from the spin representations $u_\sigma$ of $\Diff_+^\sigma(S^1)$ (see Section \ref{subsecDiffReps}). The following proposition describes the corresponding action on maps satisfying the $H^2(X)$ commutation relations.

\begin{Proposition}\label{propUnitaryChangeCommRels}
Let $H^0$ and $H^1$ be Hilbert spaces, and let $K \subset H^1 \oplus H^0$ be a closed subspace. Let $p_i \in \P(H_i)$ and let $u_i \in \U_{res}(H^i,p_i)$. Let $u_i \mapsto U_i$ denote the basic representation (see Section \ref{subsecFockSpace}). If $T \in \B(\F_{H^0,p_0}, \F_{H^1,p_1})$ satisfies the $K$ commutation relations, then $d^{p(U_1) + p(U_2)} U_1TU_0^*$ satisfies the $(u_1 \oplus u_0)K$ commutation relations. 
\end{Proposition}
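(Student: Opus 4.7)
The plan is to verify the two halves of the $(u_1 \oplus u_0)K$ commutation relations for $T' := d^{p(U_1)+p(U_0)} U_1 T U_0^*$ by direct substitution, reducing each to the $K$ commutation relations for $T$. By Definition \ref{defCommutationRelations}, linearity lets me assume $T$ is homogeneous; by Proposition \ref{propBasicRepHomogeneous} the $U_i$ are homogeneous as well, so $p(T') = p(T) + p(U_1) + p(U_0)$ (the grading factor $d^{p(U_1)+p(U_0)}$ is even). I interpret the $p(U_2)$ in the statement as a typo for $p(U_0)$.

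First I would handle the image part. Any $(f^1, f^0) \in (u_1 \oplus u_0)K$ is the image of $(u_1^* f^1, u_0^* f^0) \in K$, so by hypothesis
$$\pi_{p_1}(a(u_1^* f^1)) T = (-1)^{p(T)} T \pi_{p_0}(a(u_0^* f^0)).$$
Using the defining property \eqref{eqnBasicRep} of the basic representation, $U_1 \pi_{p_1}(a(u_1^* f^1)) = \pi_{p_1}(a(f^1)) U_1$ and $\pi_{p_0}(a(u_0^* f^0)) U_0^* = U_0^* \pi_{p_0}(a(f^0))$; multiplying on the left by $U_1$ and on the right by $U_0^*$ then yields
$$\pi_{p_1}(a(f^1)) \, (U_1 T U_0^*) = (-1)^{p(T)} (U_1 T U_0^*) \, \pi_{p_0}(a(f^0)).$$
This almost matches the target relation, but the parity sign is $(-1)^{p(T)}$ rather than $(-1)^{p(T')}$. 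The fix is to left-multiply by $d^{p(U_1)+p(U_0)}$: since $\pi_{p_1}(a(f^1))$ is odd, pushing $d^{p(U_1)+p(U_0)}$ through it contributes a factor $(-1)^{p(U_1)+p(U_0)}$, which combines with $(-1)^{p(T)}$ to give exactly $(-1)^{p(T')}$, establishing \eqref{eqImageCommRels} for $T'$.

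The second step is entirely analogous for \eqref{eqPerpCommRels}. Since $u_1 \oplus u_0$ is unitary we have $(u_1 \oplus u_0)K^\perp = ((u_1 \oplus u_0)K)^\perp$, so vectors $(g^1, g^0)$ in the latter correspond to $(u_1^* g^1, u_0^* g^0) \in K^\perp$. The same manipulation with creation operators in place of annihilation operators preserves the extra minus sign in \eqref{eqPerpCommRels} throughout, and the sign bookkeeping is identical.

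I do not anticipate any real obstacle. The entire content of the proposition is the parity accounting, and the grading factor $d^{p(U_1)+p(U_0)}$ has evidently been built into the statement precisely to absorb the shift in parity introduced by conjugation by the possibly odd unitaries $U_i$; without it, the parity of $U_1 T U_0^*$ would fail to match the sign dictated by Definition \ref{defCommutationRelations}.
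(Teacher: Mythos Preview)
Your proposal is correct and follows essentially the same approach as the paper's proof: a direct verification using the defining intertwining property \eqref{eqnBasicRep} of the basic representation together with the oddness of the creation/annihilation operators to track the parity sign. The only cosmetic differences are that the paper parametrizes elements of $(u_1\oplus u_0)K$ as $(u_1 f^1,u_0 f^0)$ with $(f^1,f^0)\in K$ and places the grading factor $d^{p(U_1)+p(U_0)}$ in from the start rather than multiplying it on at the end; your observation that $p(U_2)$ is a typo for $p(U_0)$ is also correct.
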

\begin{proof}
Let $(u_1 f^1,u_0f^0) \in (u_1 \oplus u_0)K$. Then we have
\begin{align*}
\pi_{p_1}(a(u_1f_1))d^{p(U_1)+p(U_2)} U_1T U_0^* &=(-1)^{p(U_1)+p(U_2)}d^{p(U_1)+p(U_2)}U_1\pi_{p_1}(a(f_1)) T U_0^*\\
&= (-1)^{p(U_1)+p(U_2)+p(T)}d^{p(U_1)+p(U_2)}U_1T U_0^*\pi_{p_0}(a(u_0f_0)).
\end{align*}
Thus  $d^{p(U_1)+p(U_2)}U_1T U_0^*$ satisfies the first half of the $(u_1 \oplus u_0)K$ commutation relations. The relations for $(u_1g^1, u_0g^0) \in (u_1 \oplus u_0)K^\perp$ are similar.
\end{proof}

In our case, the spaces $H^i$ will be given as a direct sum 
$$
H^i = \bigoplus_{j \in \pi_0(\Gamma^i)} H.
$$
Thus we also need to know how the basic representation on $\F_{H^1}$ relates to the basic representation on $\bigotimes_j\F_{H}$ under the isomorphism of Proposition \ref{propFockSumToTensor}.

\begin{Proposition}\label{propFockSumBasicRep}
Let $H_1$ and $H_2$ be Hilbert spaces, with $p_i \in \P(H_i)$. Suppose $u_i \in \U_{res}(H_i,p_i)$, and $U_i \in \U(\F_{H_i,p_i})$ is the image of $u_i$ under the basic representation. 
Let $U \in \U_{res}(H_1 \oplus H_2, p_1 \oplus p_2)$ be the image of $u_1 \oplus u_2$ under the basic representation. 
Then, up to a scalar multiple, the isomorphism $\F_{H_1 \oplus H_2,p_1 \oplus p_2} \cong \F_{H_1, p_1} \otimes \F_{H_2, p_2}$ identifies $U$ with $U_1 d^{p(U_2)} \grotimes U_2 d^{p(U_1)}$.
\end{Proposition}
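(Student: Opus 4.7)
The plan is to exploit the uniqueness (up to phase) of unitaries implementing a Bogoliubov automorphism in an irreducible representation. The basic representation $U$ of $u_1 \oplus u_2$ on $\F_{H_1 \oplus H_2, p_1 \oplus p_2}$ is characterized, up to scalar, by the property that $\operatorname{Ad} U \circ \pi_{p_1 \oplus p_2} = \pi_{p_1 \oplus p_2} \circ \alpha_{u_1 \oplus u_2}$, by Corollary \ref{corImplementationOfBogoluibov}. Under the identification $\F_{H_1 \oplus H_2, p_1 \oplus p_2} \cong \F_{H_1, p_1} \otimes \F_{H_2, p_2}$ of Proposition \ref{propFockSumToTensor}, the CAR representation is given by \eqref{eqTensorCARReps}. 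So it suffices to show that $V := U_1 d^{p(U_2)} \grotimes U_2 d^{p(U_1)}$ also implements $\alpha_{u_1 \oplus u_2}$ on the tensor product.

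First I would unpack $V$ using the definition $x_1 \grotimes x_2 = x_1 d^{p(x_2)} \otimes x_2$. Since $d$ is even, $p(U_2 d^{p(U_1)}) = p(U_2)$, so the $d^{p(U_2)}$ factor inherent in the graded tensor product cancels the one built into the left factor of $V$, giving the ordinary tensor product $V = U_1 \otimes U_2 d^{p(U_1)}$. This is manifestly unitary.

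Next I would verify the two required intertwining relations separately. For $h \in H_1$, we have $\pi_{p_1 \oplus p_2}(a(h,0)) = \pi_{p_1}(a(h)) \otimes \Id$, and the conjugation $V (\pi_{p_1}(a(h)) \otimes \Id) V^{-1}$ collapses immediately since $U_2 d^{p(U_1)} \cdot d^{p(U_1)} U_2^* = \Id$, yielding $\pi_{p_1}(a(u_1 h)) \otimes \Id$ by the defining property \eqref{eqnBasicRep} of $U_1$. For $k \in H_2$, we have $\pi_{p_1 \oplus p_2}(a(0,k)) = \Id \grotimes \pi_{p_2}(a(k)) = d \otimes \pi_{p_2}(a(k))$. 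Here the sign bookkeeping matters: conjugating by $V$ introduces one sign $(-1)^{p(U_1)}$ from pushing $d$ through $U_1$ in the first tensor factor (the grading/parity relation $U_1 d = (-1)^{p(U_1)} d U_1$), and a second sign $(-1)^{p(U_1)}$ from pushing $d^{p(U_1)}$ through the odd operator $\pi_{p_2}(a(k))$ in the second factor. These two signs cancel, and the result simplifies to $d \otimes \pi_{p_2}(a(u_2 k)) = \Id \grotimes \pi_{p_2}(a(u_2 k))$, using the defining property of $U_2$ together with $d^{p(U_1)} d^{p(U_1)} = \Id$.

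The main obstacle is not conceptual but rather the sign bookkeeping in the second case; the asymmetric-looking factors $d^{p(U_2)}$ on the left and $d^{p(U_1)}$ on the right in the statement are precisely the ones that make the two contributions from anticommuting $d$ past odd operators cancel. Once both relations are verified, $V$ and $U$ implement the same Bogoliubov automorphism $\alpha_{u_1 \oplus u_2}$ in the irreducible representation $\pi_{p_1 \oplus p_2}$, and so by the uniqueness clause of Corollary \ref{corImplementationOfBogoluibov} they agree up to a scalar, which is exactly the claim.
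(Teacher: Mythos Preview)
Your proposal is correct and follows exactly the approach the paper takes: the paper's proof consists of the single sentence ``It suffices to check that $U_1 d^{p(U_2)} \grotimes U_2 d^{p(U_1)}$ implements the Bogoliubov automorphism corresponding to $u_1 \oplus u_2$ in the representation of $\CAR(H_1 \oplus H_2)$ on $\F_{H_1} \otimes \F_{H_2}$ \ldots\ This computation is straightforward.'' You have carried out that straightforward computation explicitly and correctly, including the simplification $V = U_1 \otimes U_2 d^{p(U_1)}$ and the sign cancellation in the $(0,k)$ case.
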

\begin{proof}
It suffices to check that $U_1 d^{p(U_2)} \grotimes U_2 d^{p(U_1)}$ implements the Bogoliubov automorphism corresponding to $u_1 \oplus u_2$ in the representation of $\CAR(H_1 \oplus H_2)$ on $\F_{H_1} \otimes \F_{H_2}$ (given by Equation \eqref{eqTensorCARReps}). This computation is straightforward.
\end{proof}

We can now prove the reparametrization property for the CFT.

\begin{Proposition}\label{propCFTReparametrization}
If $(\psi_j,\gamma_j) \in \prod_{j \in \pi_0(\Gamma)} \Diff^{\sigma(j)}_+(S^1)$, then
$$
E((\psi_j,\gamma_j) \cdot X) = \left( \bigotimes_{j \in \pi_0(\Gamma^1)} U_{\sigma(j)}(\psi_j,\gamma_j) \right)E(X)\left( \bigotimes_{j \in \pi_0(\Gamma^0)} U_{\sigma(j)}(\psi_j,\gamma_j)^* \right)
$$
\end{Proposition}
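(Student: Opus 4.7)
The plan is to combine three earlier results: Proposition \ref{propReparametrizedHardy}, which describes how $H^2(X)$ transforms under boundary reparametrization; Proposition \ref{propUnitaryChangeCommRels}, which transports $K$-commutation relations along elements of the restricted unitary group via the basic representation; and Proposition \ref{propFockSumBasicRep}, which identifies the basic representation on the Fock space of a direct sum with a (graded) tensor product of basic representations on the factors. The one-dimensionality of $E(X)$ from Theorem \ref{thmExistence} will reduce the problem to exhibiting a single nonzero element of the right-hand side inside $E((\psi,\gamma)\cdot X)$.

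Set $u_i := \bigoplus_{j\in \pi_0(\Gamma^i)} u_{\sigma(j)}(\psi_j,\gamma_j)$, which lies in $\U_{res}(H^i_\Gamma, p_i)$ by Proposition \ref{propDiffsQuantized}. Proposition \ref{propReparametrizedHardy} gives the key identity $H^2((\psi,\gamma)\cdot X) = (u_1\oplus u_0)\,H^2(X)$. By Corollary \ref{corDiffsEven}, each $U_{\sigma(j)}(\psi_j,\gamma_j)$ is even, so iterating Proposition \ref{propFockSumBasicRep} the grading corrections $d^{p(\cdot)}$ in that proposition all become the identity and the graded tensor products coincide with ordinary tensor products. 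Thus, up to an inconsequential scalar, the basic representation image of $u_i$ on $\F_{H^i_\Gamma, p_i}$ is identified under Remark \ref{rmkUnorderedTensorProducts} with
$$
U_i := \bigotimes_{j\in\pi_0(\Gamma^i)} U_{\sigma(j)}(\psi_j,\gamma_j).
$$

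Now pick a nonzero $T\in E(X)$; by Theorem \ref{thmExistence}, $T$ is homogeneous and trace class. Since $U_0$ and $U_1$ are even, the sign $d^{p(U_1)+p(U_0)}$ appearing in Proposition \ref{propUnitaryChangeCommRels} is trivial, and that proposition tells us $U_1 T U_0^*$ satisfies the $(u_1\oplus u_0)H^2(X)$ commutation relations, i.e.\ the $H^2((\psi,\gamma)\cdot X)$ commutation relations. Composition with unitaries preserves the trace class, so $U_1 T U_0^*$ is a nonzero element of $E((\psi,\gamma)\cdot X)$. By Theorem \ref{thmExistence} the target space is one-dimensional, so $E((\psi,\gamma)\cdot X) = \C\cdot U_1 T U_0^* = U_1 E(X) U_0^*$, which is the claimed formula. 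The only subtlety worth flagging is the projective phase ambiguity in the basic representation and in Proposition \ref{propFockSumBasicRep}; this is precisely what the formulation as an equality of one-dimensional subspaces (rather than of individual operators) is designed to absorb, so it is not an obstacle.
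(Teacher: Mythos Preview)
Your proof is correct and follows essentially the same route as the paper: apply Proposition \ref{propReparametrizedHardy} to compute the reparametrized Hardy space, invoke Proposition \ref{propUnitaryChangeCommRels} together with the evenness from Corollary \ref{corDiffsEven} to transport the commutation relations, and use Proposition \ref{propFockSumBasicRep} to pass between the basic representation on the direct sum and the tensor product of the $U_{\sigma(j)}(\psi_j,\gamma_j)$. Your version is slightly more explicit about the evenness eliminating the grading corrections, the preservation of the trace class, and the role of one-dimensionality and projective ambiguity, but these are exactly the implicit steps behind the paper's terser argument.
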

\begin{proof}
By Proposition \ref{propReparametrizedHardy}, 
$$
H^2((\psi_j,\gamma_j) \cdot X) = \left(\bigoplus_{j \in \pi_0(\Gamma)} u_{\sigma(j)}(\psi_j,\gamma_j) \right) H^2(X).
$$
Let $U_i$ be the image of $\bigoplus_{j \in \pi_0(\Gamma^i)} u_{\sigma(j)}(\psi_j,\gamma_j)$ under the basic representation on $\F_{\Gamma}^i$. 
By Proposition \ref{propUnitaryChangeCommRels} and the fact that the $U_i$ are even (Corollary \ref{corDiffsEven}), we have $E((\psi_j, \gamma_j) \cdot X) = U_1 E(X) U_0^*$. The desired result now follows from Proposition \ref{propFockSumBasicRep}.
\end{proof}

\subsubsection{Unitarity}

As with the other properties of the CFT, to establish unitarity we first need to understand what happens at the level of Hardy spaces.

\begin{Proposition}\label{propAdjointCommRels}
Let $K \subset H^1 \oplus H^0$ be a closed subspace, and let $p_i \in \P(H^i)$. Then $T:\F_{H^0,p_0} \to\F_{H^1,p_1}$ satisfies the $K$ commutation relations if and only if $T^*$ satisfies the commutation relations for $M_{\pm}K^\perp \subset H^0 \oplus H^1$, where $M_{\pm} = \Id_{H^0} \oplus -\Id_{H^1}$.
\end{Proposition}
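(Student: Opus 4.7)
The plan is a direct verification by taking adjoints of the two commutation relations defining the $K$ relations, reindexing, and observing that the resulting identities are exactly the $M_\pm K^\perp$ commutation relations for $T^*$. Since $M_\pm$ is a self-inverse unitary and $T \mapsto T^*$ is an involution, the ``only if'' argument will automatically yield the ``if'' direction as well.

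First I would reduce to the homogeneous case. The operator $T^*$ has the same parity as $T$ (the grading involutions $d_{\F^0_\Gamma}, d_{\F^1_\Gamma}$ are self-adjoint), so by the linearity convention in Definition \ref{defCommutationRelations} it suffices to consider $T$ homogeneous. Assuming this, I would then take the Hilbert-space adjoint of each of the relations \eqref{eqImageCommRels} and \eqref{eqPerpCommRels}, yielding
$$
\pi_{p_0}(a(f^0))^* T^* = (-1)^{p(T)} T^* \pi_{p_1}(a(f^1))^* \qquad \text{for } (f^1,f^0) \in K,
$$
$$
\pi_{p_0}(a(g^0))\, T^* = -(-1)^{p(T)} T^* \pi_{p_1}(a(g^1)) \qquad \text{for } (g^1,g^0) \in K^\perp.
$$

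The key step is the reindexing. Using the identification $H^1 \oplus H^0 \cong H^0 \oplus H^1$ and the fact that $M_\pm$ is unitary with $M_\pm^2 = \Id$, one has $(M_\pm K^\perp)^\perp = M_\pm K$. Writing a general element of $M_\pm K^\perp$ as $(h^0,h^1) = (g^0,-g^1)$ with $(g^1,g^0)\in K^\perp$, and using $a(-g^1) = -a(g^1)$, the second identity above becomes
$$
\pi_{p_0}(a(h^0))\, T^* = (-1)^{p(T^*)} T^* \pi_{p_1}(a(h^1)),
$$
which is precisely the ``image'' half of the $M_\pm K^\perp$ commutation relations for $T^*$. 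Similarly, writing $(k^0,k^1) = (f^0,-f^1) \in M_\pm K = (M_\pm K^\perp)^\perp$ with $(f^1,f^0)\in K$, the first adjoint identity becomes
$$
\pi_{p_0}(a(k^0))^* T^* = -(-1)^{p(T^*)} T^* \pi_{p_1}(a(k^1))^*,
$$
the ``perp'' half of the $M_\pm K^\perp$ commutation relations for $T^*$.

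I do not expect any genuine obstacle here; the main thing to keep track of is bookkeeping of signs. The $-1$ in the $K^\perp$ half of the original relations gets absorbed by the $-\Id_{H^1}$ component of $M_\pm$, while the sign-free $K$ half of the original relations produces the $-1$ on the $(M_\pm K^\perp)^\perp$ side after the same reparametrization; the sign flips thus swap between the two halves, matching Definition \ref{defCommutationRelations} exactly. The converse follows by applying the established direction to $T^*$ in place of $T$ and using $(M_\pm K^\perp)^\perp = M_\pm K$ together with $T^{**}=T$.
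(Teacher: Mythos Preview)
Your proof is correct and takes exactly the same approach as the paper's: the paper's proof simply says ``take adjoints in the definition of the $K$ commutation relations'' and notes the converse is equivalent, while you have carefully written out the sign bookkeeping and the reindexing via $M_\pm$ that this entails. There is no meaningful difference in strategy.
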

\begin{proof}
It suffices to prove that $T^*$ satisfies the $M_{\pm} K^\perp$ commutation relations, since the converse is equivalent. The statement for $T^*$ follows immediately from taking adjoints in the definition of the $K$ commutation relations (Definition \ref{defCommutationRelations}).
\end{proof}

Unitarity now follows as an easy consequence of the formula for $H^2(X)^\perp$ calculated in Section \ref{secCauchyTransform}.

\begin{Proposition}\label{propCFTUnitarity}
$E(\overline{X}) = E(X)^*$
\end{Proposition}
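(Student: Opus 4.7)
My plan is to chain together Proposition \ref{propAdjointCommRels} (which converts $K$ commutation relations for $T$ into $M_\pm K^\perp$ commutation relations for $T^*$) with Theorem \ref{thmHardyPerp} (which identifies $H^2(X)^\perp$ with $M_\pm H^2(\overline{X})$), and then use the one-dimensionality from Theorem \ref{thmExistence} to upgrade an inclusion to an equality.

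Concretely, first I would fix $T \in E(X)$, so $T \in \B_1(\F_\Gamma^0,\F_\Gamma^1)$ satisfies the $H^2(X)$ commutation relations where $H^2(X) \subset H_\Gamma^1 \oplus H_\Gamma^0$. Taking adjoints, $T^*$ is again trace class (and homogeneous, of the same parity as $T$, by Theorem \ref{thmExistence}), and it lives in $\B_1(\F_\Gamma^1, \F_\Gamma^0)$. By Proposition \ref{propAdjointCommRels}, $T^*$ satisfies the $M_\pm H^2(X)^\perp$ commutation relations, where $M_\pm$ is the diagonal unitary acting as $+\Id$ on the $H_\Gamma^0$ summand and $-\Id$ on the $H_\Gamma^1$ summand (or the opposite; the overall sign does not affect the subspace).

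Now, since $X \mapsto \overline{X}$ exchanges $\Gamma^0$ and $\Gamma^1$, we have $H_{\overline{\Gamma}}^1 \oplus H_{\overline{\Gamma}}^0 = H_\Gamma^0 \oplus H_\Gamma^1$, so $H^2(\overline{X})$ is naturally a subspace of the codomain-plus-domain decomposition for $T^*$. Theorem \ref{thmHardyPerp} (used already in the proof of Lemma \ref{lemSewnHardyPerp}) states $H^2(X)^\perp = M_\pm H^2(\overline{X})$, so applying $M_\pm$ to both sides gives $M_\pm H^2(X)^\perp = H^2(\overline{X})$. Hence $T^*$ satisfies the $H^2(\overline{X})$ commutation relations, and therefore $T^* \in E(\overline{X})$. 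This shows $E(X)^* \subseteq E(\overline{X})$. Since both spaces are one-dimensional by Theorem \ref{thmExistence}, the inclusion is an equality.

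There is no real obstacle here once Theorem \ref{thmHardyPerp} is in hand; the only thing to be careful about is bookkeeping the sign conventions of the two $M_\pm$ operators (the one in Proposition \ref{propAdjointCommRels} and the one appearing in Theorem \ref{thmHardyPerp}), but since they differ only by an overall sign, and multiplication by a nonzero scalar preserves subspaces, this causes no difficulty. All the analytic content is concentrated in Theorem \ref{thmHardyPerp}; the unitarity statement is then a purely formal consequence of the adjoint manipulation of the CAR commutation relations.
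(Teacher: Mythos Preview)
Your proof is correct and follows essentially the same approach as the paper: apply Theorem \ref{thmHardyPerp} to rewrite $M_\pm H^2(X)^\perp$ as $H^2(\overline{X})$, then invoke Proposition \ref{propAdjointCommRels} to conclude $E(X)^* \subseteq E(\overline{X})$. The paper closes with ``and vice versa'' (i.e., applying the same argument to $\overline{X}$) rather than invoking one-dimensionality, but the two endings are interchangeable.
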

\begin{proof}
By Theorem \ref{thmHardyPerp}, we have $H^2(X)^\perp = M_{\pm} H^2(\overline{X})$. Thus by Proposition \ref{propAdjointCommRels}, adjoints of elements of $E(X)$ lie in $E(\overline{X})$, and vice versa.
\end{proof}

\subsubsection{Sewing}\label{subsubsecSewing}
Suppose $(X, j^0, j^1)  \in \cR_*$, and let $\hat X$ be the result of sewing $X$ along $j^0$ and $j^1$ (see Section \ref{subsecSewing}). 
Recall that by the definition of $\cR_*$, $\hat X$ has no closed components.

The partial supertrace $\tr_{j^0j^1}^s$ gives a map 
$$
\tr_{j^0j^1}^s:\B_1(\F_\Gamma^0, \F_\Gamma^1) \to \B_1(\F_{\hat \Gamma}^0, \F_{\hat \Gamma}^1),
$$
where $\hat \Gamma = \partial \hat \Sigma$.
\begin{Theorem}\label{thmCFTSewing}
Let $(X, j^0, j^1) \in \cR_*$ and let $\hat X \in \cR$ be the result of sewing $j^0$ to $j^1$. Then $\tr^s_{j^0j^1}$ induces an isomorphism $E(X) \to E(\hat X)$.
\end{Theorem}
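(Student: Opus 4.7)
For $T \in E(X)$, define $\hat T := \tr^s_{j^0 j^1}(T) \in \B_1(\F^0_{\hat\Gamma}, \F^1_{\hat\Gamma})$; the partial supertrace preserves trace class and (as seen from its definition via $\tr_L$ composed with multiplication by $\Id \grotimes d_L$) preserves parity. Since $E(X)$ and $E(\hat X)$ are both one-dimensional by Theorem \ref{thmExistence}, the proof reduces to two points: (a) $\hat T \in E(\hat X)$, so $\tr^s_{j^0 j^1}$ is a well-defined linear map $E(X) \to E(\hat X)$; and (b) $\hat T \neq 0$ when $T \neq 0$, so the map is an isomorphism.

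For (a), I verify the $H^2(\hat X)$ commutation relations. By Proposition \ref{propSewnHardy}, a dense subspace of $H^2(\hat X)$ is obtained from tuples $(\tilde f^1, \tilde f^0)$ that extend to elements $(f^1, f^0) \in H^2(X)$ with $f^1_{j^1} = f^0_{j^0} =: f$. Using Proposition \ref{propFockSumToTensor} and Remark \ref{rmkFockUnorderedTensorProduct}, the creation operators decompose as $a(f^i) = a(\tilde f^i) \grotimes \Id + \Id \grotimes a(f)$, and the $H^2(X)$ commutation relation for $T$ reads
\begin{equation*}
(a(\tilde f^1) \grotimes \Id) T + (\Id \grotimes a(f)) T = (-1)^{p(T)}\bigl[T (a(\tilde f^0) \grotimes \Id) + T (\Id \grotimes a(f))\bigr].
\end{equation*}
Applying $\tr^s_{j^0 j^1}$ and invoking the bimodularity and supertracial identities (Proposition \ref{propSupertraceProperties}), the sign $(-1)^{p(T) p(a)} = (-1)^{p(T)}$ produced by sliding $\Id \grotimes a(f)$ across the partial supertrace matches the $(-1)^{p(T)}$ on the right-hand side, so the ``sewn'' contributions cancel, leaving $a(\tilde f^1) \hat T = (-1)^{p(\hat T)} \hat T a(\tilde f^0)$. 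The $a^*$ relations are handled analogously using Lemma \ref{lemSewnHardyPerp}: the sign convention $f^0_{j^0} = -f^1_{j^1}$ combines with the extra minus sign in \eqref{eqPerpCommRels} to produce the same cancellation. Continuity and density then upgrade these identities from the distinguished dense subspace to all of $H^2(\hat X)$ and $H^2(\hat X)^\perp$.

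The main obstacle is (b), the non-vanishing. The cleanest case is when $j^0$ and $j^1$ lie on distinct connected components, so $X = X_1 \sqcup X_2$; then the monoidal property (Proposition \ref{propCFTMonoidal}) gives $T = T_1 \grotimes T_2$, and Proposition \ref{propSupertraceComposition} identifies $\hat T$ with the composition $(T_2 \grotimes \Id) \circ (\Id \grotimes T_1)$ of maps of unordered tensor products; non-vanishing then follows from the injectivity/dense-image properties of the $T_i$ (Proposition \ref{propCFTInjectiveDenseImage}) on the sewn tensor factor. For the general case, when $j^0$ and $j^1$ lie on the same component, I would reduce to the disjoint-union situation by first cutting $\Sigma$ along an auxiliary simple closed curve separating the two distinguished boundary components (producing a disconnected $X' \in \cR$ that sews back to $\hat X$ in two moves), or, alternatively, compute directly: express $T$ via Lemma \ref{lemExistenceCriterion} as $\mu((\Id \grotimes \Phi)\tilde\Omega_{q_X})$ using the product formula \eqref{eqnProductFormulaForVacuum}, and show by manipulation of CAR generators on the sewn $\F$-factor that $\tr^s_{j^0 j^1}(T)$ is a nonzero scalar multiple of the analogous construction $\mu((\Id \grotimes \Phi)\tilde\Omega_{q_{\hat X}})$ for $\hat X$, which is nonzero by Theorem \ref{thmExistence} applied to $\hat X$.
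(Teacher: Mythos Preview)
Your verification that $\hat T := \tr^s_{j^0j^1}(T)$ satisfies the $H^2(\hat X)$ commutation relations (part (a)) is correct and essentially identical to the paper's argument.

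Part (b), the non-vanishing, is where your proposal diverges from the paper and where the gaps lie. Your ``cleanest case'' argument for the disjoint situation is imprecise: Proposition~\ref{propCFTInjectiveDenseImage} gives injectivity or dense image of $T_i$ only under hypotheses on \emph{all} connected components of $X_i$, and these hypotheses need not hold (e.g.\ $X_1$ could be a disk with only the outgoing boundary $j^1$, so $T_1$ has no reason to have dense image). With enough case analysis this might be salvageable, but as written it does not establish non-vanishing of $(T_2 \grotimes \Id)(\Id \grotimes T_1)$.

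More seriously, your connected-case reduction (i) is circular. Cutting along a curve $C$ that separates $j^0$ from $j^1$ does produce a disconnected $X' = X'_1 \sqcup X'_2$, but after you sew $X'$ along $(j^0,j^1)$ to obtain $X''$, the two $C$-boundaries now lie on the \emph{same} component of $X''$, so the final sewing $E(X'') \to E(\hat X)$ is again a self-sewing---precisely the case you are trying to prove. Option (ii), manipulating the product formula \eqref{eqnProductFormulaForVacuum} for $\tilde\Omega_q$, is not a real argument: there is no evident mechanism by which the partial supertrace of that infinite product over one $\F$-factor collapses to a nonzero scalar times $\tilde\Omega_{q_{\hat X}}$.

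The paper's approach to (b) is quite different and is really the heart of the proof. One assumes $\hat T = 0$ and shows $T = 0$ via an irreducibility argument: starting from $a(\tilde f^1)\hat T = 0$ and running your computation from (a) backwards, the density of the projection of $H^2(X)$ onto the sewn copies of $L^2(S^1)$ (Proposition~\ref{propHardyDenseComponents}, which requires exactly the $\cR_*$ hypothesis that another boundary component is present) lets one isolate $\tr^s_{j^0j^1}\bigl((\Id \grotimes a(f))T\bigr)=0$ for arbitrary $f$. Iterating with words in $a(f)$, $a(g)^*$ and using irreducibility of $\CAR$ on $\F$ yields $\tr^s_{j^0j^1}\bigl((y \grotimes x)T\bigr)=0$ for all $x$ in a strongly dense set and all $y$, hence for all operators; in particular $\tr(T^*T)=0$. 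This argument handles the connected and disconnected cases uniformly.
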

\begin{proof}
We first show that $\tr^s_{j^0j^1}(E(X)) \subset E(\hat X)$. That is, for $T \in E(X)$ we show that $\tr^s_{j^0j^1}(T)$ satisfies the $H^2(\hat X)$ commutation relations. 

Fix $f = (f^1 , f^0) \in H^2(\hat X) \subseteq H_{\hat \Gamma}^1 \oplus H_{\hat \Gamma}^0$ and $g = (g^1, g^0) \in H^2(\hat X)^\perp$.
We write
$$
(f^1,f^0) = (f_j) \in \bigoplus_{j \in \pi_0(\hat \Gamma)} L^2(S^1),
$$
and similarly for $(g^1,g^0)$.

We must show that 
$$
a(f^1)\tr^s_{j^0j^1}(T) = (-1)^{p(\tr^s_{j^0j^1}(T))}\tr^s_{j^0j^1}(T)a(f^0)
$$
and that 
$$
a(g^1)^*\tr^s_{j^0j^1}(T) = -(-1)^{p(\tr^s_{j^0j^1}(T))}\tr^s_{j^0j^1}(T)a(g^0)^*.
$$
It suffices to verify these identities for $(f^1, f^0)$ lying in a dense subspace of $H^2(\hat X)$, and for $(g^1,g^0)$ lying in a dense subspace of $H^2(\hat X)^\perp$. Hence by Proposition \ref{propSewnHardy} we may assume without loss of generality that there exists a $h = (h^0, h^1) \in H^2(X)$ such that $h_j = f_j$ for $j \ne j^i$, and $h_{j^1} = h_{j^0}$.

To reduce notational complexity, we will simply write $a(f)$ instead of $\pi_{p_i}(a(f))$ for the action of $\CAR(H_\Gamma^i)$ on $\F_{\Gamma}^i$.

We embed $H^i_{\hat \Gamma}$ as a subspace of $H^i_{\Gamma}$ by the natural inclusion coming from $\pi_0(\hat \Gamma) \subset \pi_0(\Gamma)$. 
We then have $h^i = f^i + h_{j^i}$, with respect to the decomposition $H_\Gamma^i = H_{\hat \Gamma}^i \oplus L^2(S^1)$. 
By Proposition \ref{propFockSumToTensor}, this implies that 
$$
a(h^i) = a(f^i) \grotimes \Id_{j^i} + \Id_{\pi_0(\Gamma^i) \setminus j^i} \grotimes a(h_{j^i}).
$$
Using the partial supertrace properties from Proposition \ref{propSupertraceProperties}, we now have have
\begin{align*}
a(f^1) \tr^s_{j^0j^1}(T) &=\tr^s_{j^0j^1}\left((a(f^1) \grotimes \Id_{j^1} ) T\right)\\
&=\tr^s_{j^0j^1}\left(\left(a(h^1)  - \Id_{\pi_0(\Gamma^1) \setminus j^1} \grotimes a(h_{j^1})\right) T\right)\\
&= (-1)^{p(T)}\tr^s_{j^0j^1}\left( T \left(a(h^0) - \Id_{\pi_0(\Gamma^0)\setminus j^0} \grotimes a(h_{j^1})\right)\right)\\
&= (-1)^{p(T)}\tr^s_{j^0j^1}\left(T\left(a(f^0) \grotimes \Id_{j^0}\right)\right)\\
&= (-1)^{p(T)}\tr^s_{j^0j^1}(T) a(f^0).
\end{align*}
Hence $\tr^s_{j^0j^1}(T)$ satisfies the first $H^2(\hat X)$ commutation relations  \eqref{eqImageCommRels}.

The same proof establishes the corresponding relations for $(g^1,g^0) \in H^2(\hat X)^\perp$. By Lemma \ref{lemSewnHardyPerp}, we may assume without loss of generality that there exists $(k^1,k^0) \in H^2(X)^\perp$ such that $k_j = g_j$ for $j \ne j^i$, and $k_{j^1} = -k_{j^0}$. The same computation as above now yields 
$$
a(g^1)^*\tr^s_{j^0j^1}(T) = -(-1)^{p(T)} \tr^s_{j^0j^1}(T)a(g^0)^*.
$$
We conclude that $\tr^s_{j^0j^1}(T) \in E( \hat X)$.

To complete the proof, we must show that $\tr_{j^0j^1}^s: E(X) \to E(\hat X)$ is an isomorphism. Since both spaces are one-dimensional, it suffices to prove that if $\tr^s_{j^0j^1}(T) = 0$ then $T = 0$. 

Assume first that $j^0$ and $j^1$ lie on the same connected component of $\Sigma$, and suppose that $\tr^s_{j^0j^1}(T) = 0$.

By the monoidal property, we may assume without loss of generality that $\Sigma$ is connected.

Suppose that $(h^1, h^0) \in H^2(X)$, write $h^i = f^i + h^i_{j^i}$ with respect to the decomposition $H^i_\Gamma = H_{\hat \Gamma}^i \oplus L^2(S^1)$. 
Calculating as above, we have
\begin{align*}
0 =& a(f^1) \tr^s_{j^0j^1}(T) \\
=& \tr^s_{j^0j^1}\left((a(f^1) \grotimes \Id_{j^1})T\right) \\
=& \tr^s_{j^0j^1}\left((a(h^1) - (\Id_{\pi_0(\hat \Gamma^1)} \grotimes a(h^1_{j^1}))T\right) \\
=& (-1)^{p(T)} \tr^s_{j^0j^1}\left(Ta(h^0)\right) - \tr^s_{j^0j^1}\left((\Id_{\pi_0(\hat \Gamma^1)} \grotimes a(h^1_{j^1}))T\right) \\
=& (-1)^{p(T)} \tr^s_{j^0j^1}\left(T(\Id \grotimes a(h_{j^0}^0))\right) + (-1)^{p(T)} \tr^s_{j^0j^1}(T)a(f^0)- \\
&-\tr^s_{j^0j^1}\left((\Id_{\pi_0(\hat \Gamma^1)} \grotimes a(h^1_{j^1}))T\right) \\
=&(-1)^{p(T)} \tr^s_{j^0j^1}\left(T(\Id_{\pi_0(\hat \Gamma^0)} \grotimes a(h_{j^0}^0))\right) - \tr^s_{j^0j^1}\left((\Id_{\pi_0(\hat \Gamma^1)} \grotimes a(h^1_{j^1}))T\right)  \numberthis \label{eqnProvingSewingIso}
\end{align*}

Since $(X, j^0, j^1) \in \cR_*$, the connected component of $\Sigma$ containing $j^0$ and $j^1$ has at least one more boundary component, and so the projection of $H^2(X)$ onto $\bigoplus_{j \in \{j^0, j^1\}} L^2(S^1)$ has dense image by Proposition \ref{propHardyDenseComponents}. Thus given any $f \in L^2(S^1)$ we may take a sequence $(h^{1,n}, h^{0,n}) \in H^2(X)$ with $h^{0,n}_{j^0} \to 0$ and $h^{1,n}_{j^1} \to f$. 
Hence 
\begin{equation}\label{eqnProvingSewingIsoLimits}
T(\Id_{\pi_0(\hat \Gamma^0)} \grotimes a(h^{0,n}_{j^0})) \to 0, 
\quad \mbox{ and } \quad
(\Id_{\pi_0(\hat \Gamma^1)} \grotimes a(h^1_{j^1}))T \to (\Id_{\pi_0(\hat \Gamma^1)} \grotimes  a(f))T
\end{equation}
in the trace norm.
We can apply the result of the calculation \eqref{eqnProvingSewingIso} to $(h^{1,n},h^{0,n})$, and by \eqref{eqnProvingSewingIsoLimits} and the continuity of the partial trace, we have
$$
\tr^s_{j^0j^1}\left((\Id_{\pi_0(\hat \Gamma^1)} \grotimes a(f))T\right) = 0.
$$

Applying this argument repeatedly using elements of $H^2(X)$ and $H^2(X)^\perp$ yields
\begin{equation}\label{eqnStillProvingSewingIso}
\tr^s_{j^0j^1}\left((\Id_{\pi_0(\hat \Gamma^1)} \grotimes x)T\right) = 0
\end{equation}
when $x$ is an arbitrary word in $a(f)$'s and $a(g)^*$'s. 

Now for arbitrary $y \in \B(\F_{\hat \Gamma}^1, \F_{\hat \Gamma}^0)$, by Proposition \ref{propSupertraceProperties} we have
\begin{equation}\label{eqnStillStillProvingSewingIso}
0 = y \tr^s_{j^0j^1}\left((\Id_{\pi_0(\hat \Gamma^1)} \grotimes x)T\right) = \tr^s_{j^0j^1}\left((y \grotimes x)T\right).
\end{equation}

Let $\mathcal{A}$ be the the linear span of operators $y \grotimes x$ with $x$ and $y$ as above.  
Since $\CAR(L^2(S^1))$ acts irreducibly on $\F_{L^2(S^1),p}$, $\mathcal{A}$ is dense in $\B(\F_{\Gamma}^1, \F_{\Gamma}^0)$ in the strong operator topology.
A standard argument using the Kaplansky density theorem shows that every element of $\B(\F_{\Gamma}^1, \F_{\Gamma}^0)$ is a limit of a sequence in $\mathcal{A}$.

If $S_n$ is a sequence of operators on a Hilbert space converging strongly, and $T$ is trace class, then $S_nT \to ST$ in the trace norm.
Hence by the continuity of the partial supertrace, we have $\tr^s_{j^0j^1}(ST) = 0$ for all $S \in \B(\F_{\Gamma}^1, \F_{\Gamma}^0)$. 
In particular, 
$$
\tr(T^*T) = \tr^s_{\F_{\hat \Gamma}^0}(\tr^s_{j^0j^1}(d_{\F_\Gamma^0} T^*T)) = 0.
$$
It follows that $T = 0$, which completes the proof of injectivity in the case where $j^1$ and $j^0$ lie on the same connected component of $\Sigma$.

Now consider when $X = X_0 \sqcup X_1$, with $j^i$ a boundary component of the surface underlying $X_i$. Since $(X, j^1, j^0) \in \cR_*$, either $X_0$ or $X_1$ has a boundary component which is neither $j^1$ nor $j^0$. If it is $X_1$ that has the additional boundary component, then we may use the same argument as above, and may even take $h_{j^0}^{0,n} = 0$ for all $n$. On the other hand, if $X_0$ has the additional boundary component, then we must take $h^{1,n}_{j^1} = 0$, and choose $h_{j^1}^{1,n} \to f$. 
The rest of the argument is the same.
\end{proof}

\section{From Segal CFT to vertex operators}
\label{secVertexOperators}

The main result of this section is Theorem \ref{thmPantsAreVertexOperators}, in which we identify the value of the CFT on standard pairs of pants $(\bbP_{w,q_1,q_2},NS)$ with fields from the free fermion vertex operator algebra. We fix the notation $H = L^2(S^1)$, $p \in \P(H)$ is the projection onto the classical Hardy space $H^2(\D)$, and $\F = \F_{H,p}$. We will drop the notation $\pi_p$ for the representation of $\CAR(H)$ on $\F$, and simply write $a(f)$.

\subsection{Warmup: Disks and annuli}

Let $(\D, NS)$ be the standard spin disk with its standard parametrization, descibed in Example \ref{exRiggedSpinDisk}, and for $q \in \D$ and $\sigma \in\{NS, R\}$ let $(\A_q, \sigma)$ be the standard spin annuli described in Example \ref{exRiggedSpinAnnuli}.

\begin{Proposition}\label{propEvalCFTDisk}
$E(\D, NS) = \C \Omega_p \in \F$
\end{Proposition}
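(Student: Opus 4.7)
The plan is to use the existence and uniqueness result (Theorem \ref{thmExistence}), which guarantees $\dim E(\D, NS) = 1$, and then exhibit $\Omega_p$ (identified as a trace class map $\C \to \F$) as a nonzero element of $E(\D, NS)$.

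First I would unpack what $E(\D, NS)$ is in this case. The boundary of $\D$ is a single circle, and the standard parametrization $\beta = \operatorname{id}$ is orientation preserving, so $\Gamma^0 = \emptyset$ and $\Gamma^1 = S^1$. Hence $\F_\Gamma^0 = \C$ (empty tensor product), $\F_\Gamma^1 = \F$, and $E(\D, NS) \subseteq \B_1(\C, \F) \cong \F$ via the map $T \mapsto T(1)$. The ``incoming'' commutation relations \eqref{eqImageCommRels} and \eqref{eqPerpCommRels} then reduce to a pair of conditions on the single vector $T(1) \in \F$.

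Next I would identify the Hardy space $H^2(\D, NS) \subseteq L^2(S^1)$. Since $L = \D \times \C$ and $\beta$ is the identity, $\O(\D; L)$ consists of smooth functions on $\D$ holomorphic in the interior, and $H^2(\D, NS)$ is the $L^2$-closure of their boundary values. Smooth holomorphic functions on $\D$ are uniformly dense in the disk algebra (by Fej\'er or Taylor approximation of $z \mapsto f(rz)$ as $r \uparrow 1$), and the disk algebra is $L^2$-dense in the classical Hardy space $pH = \overline{\operatorname{span}}\{z^n : n \geq 0\}$. Conversely, boundary values of holomorphic functions smooth on $\D$ lie in $pH$ since their negative Fourier coefficients vanish. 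Therefore $H^2(\D, NS) = pH$.

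Consequently, a map $T : \C \to \F$ lies in $E(\D, NS)$ iff the vector $\xi := T(1)$ satisfies $a(f)\xi = 0$ for all $f \in pH$ and $a(g)^*\xi = 0$ for all $g \in (\Id - p)H$. These are precisely the $p$-vacuum equations (Definition \ref{defVacuumEquations}), and by \eqref{eqnVacEqn1}, \eqref{eqnVacEqn2} the vacuum vector $\Omega_p$ satisfies them. The corresponding map $1 \mapsto \Omega_p$ is rank one, hence trace class, and nonzero. Combined with $\dim E(\D, NS) = 1$ from Theorem \ref{thmExistence}, this yields $E(\D, NS) = \C\Omega_p$.

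There is no serious obstacle here; the only nontrivial ingredient is the identification $H^2(\D, NS) = pH$, and even that is a direct application of the definition together with the classical fact that polynomials in $z$ are dense in $H^2(\D)$. The proof is essentially a recasting of the characterization of $\Omega_p$ via the $p$-vacuum equations in the language of Segal CFT commutation relations on a surface whose only boundary is outgoing.
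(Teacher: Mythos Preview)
Your proof is correct and follows essentially the same approach as the paper: identify $H^2(\D,NS)$ with the classical Hardy space $pH$, observe that the $H^2(\D,NS)$ commutation relations reduce to the $p$-vacuum equations, and conclude that $\Omega_p$ spans $E(\D,NS)$. The only minor difference is that you invoke Theorem~\ref{thmExistence} for one-dimensionality, whereas the paper simply cites the fact (from Section~\ref{subsecFockSpace}) that the $p$-vacuum equations characterize $\Omega_p$ up to scale; either route is fine.
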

\begin{proof}
The boundary parametrization of $(\D, NS)$ is the identity, so $H^2(\D, NS)$ is just the classical Hardy space $H^2(\D)$. Hence the $H^2(\D, NS)$ commutation relations coincide with the $p$ vacuum equations (Definition \ref{defVacuumEquations}) which characterize $\Omega_p$ up to scale .
\end{proof}

\begin{Proposition}\label{propEvalCFTAnnuli}
$E(\A_q, \sigma) = \C q^{L_0^\sigma}$, where both sides of the equation are understood as depending on a fixed choice of $q^{1/2}$ when $\sigma = NS$.
\end{Proposition}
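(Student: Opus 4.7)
My plan is to exhibit $q^{L_0^\sigma}$ as a concrete element of $E(\A_q,\sigma)$, and then appeal to the one-dimensionality of $E(\A_q,\sigma)$ from Theorem \ref{thmCFTProperties}\eqref{cftPropExistence} to conclude. The work splits into (a) an explicit calculation of the Hardy space $H^2(\A_q,\sigma)$, and (b) checking the two families of commutation relations using the ad-action of $L_0^\sigma$ recorded in Proposition \ref{propGeneratorRelations}.

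First I would compute $H^2(\A_q,\sigma)$. Holomorphic sections of the trivial spin bundle on $\A_q$ are holomorphic functions on the annulus, and by Laurent expansion these are closures of finite sums $F(z)=\sum c_n z^n$. Pulling back via the boundary parametrizations of Example \ref{exRiggedSpinAnnuli} gives the boundary values $f^1(z)=\sum c_n z^n$ on the outer circle (which is outgoing) and, on the inner circle (which is incoming since $z\mapsto qz$ reverses the boundary orientation induced on $qS^1\subset\partial\A_q$),
$$
f^0(z)=\begin{cases} \sum c_n q^n z^n, & \sigma=R,\\ q^{1/2}\sum c_n q^{n}z^n, & \sigma=NS.\end{cases}
$$
Thus $H^2(\A_q,R)$ is the closure of pairs $(z^n,q^n z^n)$, and $H^2(\A_q,NS)$ of pairs $(z^n,q^{n+1/2}z^n)$. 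A short Fourier computation then yields that $H^2(\A_q,\sigma)^\perp$ is the closure of pairs $(g^1,g^0)$ with $g^1_n=-\overline{q^{n+\delta}}g^0_n$ where $\delta=0$ (Ramond) or $\delta=1/2$ (Neveu--Schwarz).

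Second I would verify the commutation relations for $T:=q^{L_0^\sigma}$. From Proposition \ref{propGeneratorRelations}, $[L_0^\sigma,a(z^n)]=-(n+\delta)a(z^n)$, which exponentiates to $a(z^n)\,q^{L_0^\sigma}=q^{n+\delta}\,q^{L_0^\sigma}\,a(z^n)$, i.e.
$$
a(f^1)\,T=T\,a(f^0)\qquad \text{for }(f^1,f^0)\in H^2(\A_q,\sigma),
$$
and similarly the identity $a(z^n)^*q^{L_0^\sigma}=q^{-(n+\delta)}q^{L_0^\sigma}a(z^n)^*$ together with the antilinearity of $a^*$ gives
$$
a(g^1)^*\,T=-T\,a(g^0)^*\qquad \text{for }(g^1,g^0)\in H^2(\A_q,\sigma)^\perp.
$$
Since $q^{L_0^\sigma}$ is even (being a product of a positive even operator $|q|^{L_0^\sigma}$ and a rotation $\Rot_\sigma(\arg q/2\pi)$), the sign factors $(-1)^{p(T)}$ in Definition \ref{defCommutationRelations} are trivial, so $T$ satisfies the $H^2(\A_q,\sigma)$ commutation relations on the dense algebraic core; standard closability/continuity extends these to all of $H^2(\A_q,\sigma)$ and its orthogonal complement.

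Third, because $L_0^\sigma$ has non-negative spectrum in $\tfrac12\Z_{\ge 0}$ with polynomially-growing multiplicities, and $|q|<1$, the operator $q^{L_0^\sigma}$ is manifestly trace class. Therefore $q^{L_0^\sigma}$ is a non-zero element of $E(\A_q,\sigma)$, and the one-dimensionality of this space forces $E(\A_q,\sigma)=\C\,q^{L_0^\sigma}$. The only subtlety is the Neveu--Schwarz bookkeeping of the square root $q^{1/2}$: the same $q^{1/2}$ that appears in the boundary parametrization must be used to define $q^{L_0^{NS}}$ on the half-integer eigenspaces, which is the content of the ``both sides depend on a fixed choice of $q^{1/2}$'' remark in the statement. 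I do not expect any real obstacle; the main care is tracking orientations (to know which boundary is incoming) and the $q^{1/2}$ convention in the NS case.
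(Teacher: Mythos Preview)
Your proposal is correct and follows essentially the same route as the paper: compute $H^2(\A_q,\sigma)$ and its orthogonal complement from the explicit boundary parametrizations, verify that $q^{L_0^\sigma}$ satisfies the resulting commutation relations via \eqref{eqnLCommRel}--\eqref{eqnLCommRelStar}, note it is trace class, and invoke one-dimensionality of $E(\A_q,\sigma)$. The paper's proof is slightly terser (it cites \cite{Kac98} for the trace class property and leaves the evenness of $T$ implicit), but the logic is identical.
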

\begin{proof}
Since $q^{L_0^\sigma}$ is trace class (see \cite[\S5.1]{Kac98}), it suffices in both cases to show that $q^{L_0^\sigma}$ satisfies the $H^2(\A_q, \sigma)$ commutation relations. We first consider $\sigma = NS$. Recall that the boundary parametrizations are given by
$$
\beta_{j} (z, \alpha) = \left\{\begin{array}{cl}(z, \alpha) & j = S^1\\ (qz, q^{-1/2}\alpha) & j = rS^1\end{array}\right.
$$
and thus
\begin{align*}
H^2(\A_q, NS) = \cl \operatorname{span} \{(z^n, q^{n+1/2} z^n) : n \in \Z\} \subset H^1 \oplus H^0 := H \oplus H
\end{align*}
and
$$
H^2(\A_q, NS)^\perp = \cl \operatorname{span} \{(z^n, -\overline{q}^{-(n+1/2)} z^n) : n \in \Z\}.
$$
Hence an operator $T \in \B(\F)$ satisfies the $H^2(\A_q,NS)$ commutation relations if and only if
$$
a(z^n)T = q^{n+1/2}Ta(z^n)
$$
and
$$
a(z^n)^*T = q^{-(n+1/2)}Ta(z^n)^*
$$
for all $n \in \Z$.
These equations are satisfied by $q^{L_0^{NS}}$ by \eqref{eqnLCommRel} and \eqref{eqnLCommRelStar}.

Similarly, one has $H^2(\A_q,R) = \cl \operatorname{span} \{(q^{n} z^n, z^n) : n \in \Z\}$, which corresponds to the commutation relations
$$
a(z^n)T = q^{n}Ta(z^n)
$$
and
$$
a(z^n)^*T = q^{-n}Ta(z^n)^*
$$
for all $n \in \Z$.
The operator $q^{L_0^R}$ satisfies these equations by  \eqref{eqnLCommRel} and \eqref{eqnLCommRelStar}.
\end{proof}

\subsection{Vertex operators}

Recall (Example \ref{exRiggedSpinPants}) that the the moduli space of standard Neveu-Schwarz spin pairs of pants with standard boundary parametrizations is
$$
\M_{NS} = \{(w, q_1, q_1^{1/2}, q_2, q_2^{1/2}) \in (\C^\times)^5 : 0<\abs{q_1} + \abs{q_2} < \abs{w} < 1 - \abs{q_1} \}.
$$
Coresponding to a point $x \in \M_{NS}$, we have a spin pair of pants $\bbP_x$, given as a manifold by 
$$
\D \setminus ((q_1 \interior{\D} + w) \cup q_2 \interior{\D})
$$
with spin structure inherited from $\D$. 
The boundary trivializations are
$$
\beta_{j} (z, \alpha) = \left\{
\begin{array}{cl}(z, \alpha) & j = S^1\\ 
(q_1z + w, q_1^{-1/2}\alpha) & j = q_1S^1 + w\\
(q_2z, q_2^{-1/2}\alpha) & j = q_2S^1
\end{array}\right.
$$

We will now show that $E(\bbP_x)$ can be described by the free fermion vertex operator algebra.
We will not give an introduction to vertex operators (see, e.g., \cite{Kac98, Was11}). 
Instead, we will introduce just the necessary objects and properties, with references to the literature.
The free fermion vertex operator algebra is introduced in \cite[\S5.1]{Kac98} under the name ``charged free fermions.''

Let $\F^0 \subset \F$ be the dense subspace spanned algebraically by vectors
$$
a(z^{n_p})^* \cdots a(z^{n_1})^* a(z^{m_1}) \cdots a(z^{m_q})\Omega.
$$
Let $\End(\F^0)$ denote the space of linear (not necessarily bounded) linear endomorphisms of $\F^0$, and let $\End(\F^0)[[z^{\pm 1}]]$ denote the space of formal distributions with coefficients in $\End(\F^0)$. That is, an element of $\End(\F^0)[[z^{\pm 1}]]$ is a formal sum
$$
\sum_{n \in \Z} \xi_n z^{-n-1}
$$
where $\xi_n \in \End(\F^0)$ and $z$ is a formal variable. 

The vertex operator algebra structure on $\F^0$ gives a state-field correspondence 
$$
Y:\F^0 \to \End(\F^0)[[z^{\pm 1}]].
$$
This is commonly written
$$
Y(\xi,z) = \sum_{n \in \Z} \xi_n z^{-n-1}
$$
for $\xi \in \F^0$. The endomorphisms $\xi_n$ are called the {\it modes} of $\xi$ (or of $Y(\xi,z)$).

The vacuum state is assigned to the identity field. That is, $Y(\Omega, z) = \Id$ or more formally
$$
\Omega_n = \delta_{n+1,0} \Id.
$$
The generating fields are those assigned to the states $a(z^0)^*\Omega$ and $a(z^{-1})\Omega$, where we have written $z^0$ for the constant function $z \mapsto 1$. The generating fields are given by
\begin{equation}\label{eqnGeneratingFieldNoStar}
Y(a(z^{-1})\Omega, z) = \sum_{n \in \Z} a(z^n) z^{-n-1}.
\end{equation}
and
\begin{equation}\label{eqnGeneratingFieldStar}
Y(a(z^0)^*\Omega, z) = \sum_{n \in \Z} a(z^{-n-1})^* z^{-n-1}
\end{equation}

The modes of the generating fields extend to bounded operators on $\F$, which is {\it not} a general feature of modes of vertex operators.

The modes of other fields can be reconstructed from the Borcherds product formula (given as \cite[Eqn. 4.8.3]{Kac98} with $m=0$, and in \cite[\S 5]{Was11}):
\begin{Theorem}
Suppose $\xi, \eta \in \F^0$, and write $Y(\xi,z) = \sum_{n \in \Z} \xi_n z^{-n-1}$ and $Y(\eta, z) = \sum_{n \in \Z} \eta_n z^{-n-1}$ for the fields in the free fermion vertex operator algebra. Then the modes of $Y(\eta_n\xi,z)$ are given by the following formula:
\begin{equation}\label{eqnBPF}
(\eta_n\xi)_m = \sum_{j \ge 0} (-1)^j \binom{n}{j} \left(\eta_{n-j} \xi_{m+j} - (-1)^{p(\eta)p(\xi)+n}
\xi_{m+n-j}\eta_{j}\right).
\end{equation}
for homogeneous $\xi$ and $\eta$, and extended linearly in general. 
\end{Theorem}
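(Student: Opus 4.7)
Formula \eqref{eqnBPF} is the \emph{Borcherds product formula}, a general identity valid in every vertex operator superalgebra. My plan is to derive it from the Jacobi identity for vertex operators, specialized to the free fermion VOA.

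The first step is to establish the Jacobi identity for $Y$; for this I would invoke the Goddard--Dong reconstruction theorem (see e.g.\ \cite[Thm.~4.5]{Kac98}). It suffices to verify mutual locality of the generating fields
\[
\psi(z) := Y(a(z^{-1})\Omega, z) = \sum_n a(z^n) z^{-n-1}, \qquad \psi^*(z) := Y(a(z^0)^*\Omega, z) = \sum_n a(z^{-n-1})^* z^{-n-1},
\]
and this is immediate from the canonical anticommutation relations: the CAR translate directly to $\{\psi(z_1), \psi^*(z_2)\} = z_2^{-1}\delta(z_1/z_2)$ and $\{\psi(z_1),\psi(z_2)\} = 0 = \{\psi^*(z_1), \psi^*(z_2)\}$, so in particular $(z_1-z_2)\{\psi(z_1), \psi^*(z_2)\} = 0$. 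Reconstruction then extends the state-field correspondence uniquely to a vertex algebra structure on $\F^0$ satisfying the full Jacobi identity
\[
z_0^{-1}\delta\!\left(\tfrac{z_1-z_2}{z_0}\right)Y(\eta,z_1)Y(\xi,z_2) - (-1)^{p(\eta)p(\xi)} z_0^{-1}\delta\!\left(\tfrac{z_2-z_1}{-z_0}\right)Y(\xi,z_2)Y(\eta,z_1) = z_2^{-1}\delta\!\left(\tfrac{z_1-z_0}{z_2}\right)Y(Y(\eta,z_0)\xi,z_2).
\]

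The second step is to extract \eqref{eqnBPF} from the Jacobi identity by taking $\mathrm{Res}_{z_0}\, z_0^n$ and then extracting the coefficient of $z_1^{-m-1}$ on both sides. On the right, the residue collapses the delta function, producing (after relabeling) exactly the single term $Y(\eta_n\xi, z_2)$, whose $m$-th mode coefficient is $(\eta_n\xi)_m$. On the left, one uses the standard binomial expansions $(z_1-z_2)^n = \sum_{j\ge 0}(-1)^j\binom{n}{j}z_1^{n-j}z_2^j$ and $(-z_2+z_1)^n = \sum_{j\ge 0}(-1)^{n-j}\binom{n}{j}z_1^{n-j}z_2^j$ (in the appropriate expansion domains) together with the mode expansions $Y(\eta,z_1)=\sum \eta_k z_1^{-k-1}$ and $Y(\xi,z_2)=\sum \xi_\ell z_2^{-\ell-1}$. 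Reading off the coefficient of $z_1^{-m-1}$ picks out the two families of products $\eta_{n-j}\xi_{m+j}$ and $\xi_{m+n-j}\eta_j$, with exactly the signs and binomial coefficients displayed in \eqref{eqnBPF}.

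The principal obstacle is reconstruction itself: all the substantive content is in passing from locality of the generating fields to the Jacobi identity for arbitrary states. Once Jacobi is in hand, the derivation of \eqref{eqnBPF} is purely formal bookkeeping with delta functions and binomial identities. Since both ingredients are completely standard for the charged free fermion VOA, I would, as the authors do, simply cite \cite[Eqn.~4.8.3]{Kac98} and \cite[\S5]{Was11} rather than reproduce these arguments.
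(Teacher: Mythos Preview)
Your proposal is correct, and you have accurately identified the paper's treatment: the paper does not prove this theorem at all but simply quotes it from \cite[Eqn.~4.8.3]{Kac98} (specialized to $m=0$) and \cite[\S5]{Was11}. Your outlined derivation via reconstruction and the Jacobi identity is the standard one and is exactly what lies behind those citations.

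One small bookkeeping slip: after taking $\operatorname{Res}_{z_0} z_0^n$, the correct extraction is $\operatorname{Res}_{z_1}$ (i.e.\ the coefficient of $z_1^{-1}$) followed by the coefficient of $z_2^{-m-1}$, not the coefficient of $z_1^{-m-1}$ as you wrote; with that correction the indices $\eta_{n-j}\xi_{m+j}$ and $\xi_{m+n-j}\eta_j$ fall out exactly as claimed.
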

The modes of any field will satisfy $\xi_n\eta = 0$ for $n$ sufficiently large (depending on $\xi$ and $\eta$), so the sum on the right-hand side of \eqref{eqnBPF} is finite when applied to any fixed vector in $\F^0$.

With this description of the vertex operators in hand, we can prove the main theorem of the section.
\begin{Theorem}\label{thmPantsAreVertexOperators}
Let $x = (w,q_1,q_1^{1/2},q_2,q_2^{1/2}) \in \M_{NS}$. For every $\xi \in \F^0$ and $n \in \Z$, the map $\xi_n q_2^{L_0^{NS}}$ extends to a bounded operator on $\F$. $E(\bbP_x)$ is spanned by the map $T:\F \otimes \F \to \F$ given on $\F^0 \otimes \F^0$ by
\begin{equation}\label{eqnPantsFormula}
T(\xi \otimes \eta) = Y(q_1^{L_0^{NS}}\xi, w) q_2^{L_0^{NS}} \eta = \sum_{n \in \Z} (q_1^{L_0^{NS}}\xi)_n  q_2^{L_0^{NS}}w^{-n-1} \eta.
\end{equation}
We have ordered the input cicles so that the one centered at $w$ comes first. 

For every fixed $\xi \in \F^0$, the sum in \eqref{eqnPantsFormula} converges absolutely in operator norm as a function of $\eta$, uniformly on compact subsets of $\M_{NS}$. 
\end{Theorem}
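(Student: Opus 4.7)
Since $\dim E(\bbP_x)=1$ by Theorem~\ref{thmCFTProperties}, the plan is to exhibit the map defined by \eqref{eqnPantsFormula} as a non-zero element of $E(\bbP_x)$; this both yields the spanning claim and delivers the analytic statements as a byproduct. I would proceed in three stages: (a) convergence and boundedness, (b) verification of the $H^2(\bbP_x)$ commutation relations, and (c) conclusion.

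\emph{Stage (a).} For the generating field $\xi = a(z^{-1})\Omega$, the mode $(q_1^{L_0^{NS}}\xi)_n = q_1^{1/2}a(z^n)$ is bounded of norm $|q_1|^{1/2}$, and commuting $q_2^{L_0^{NS}}$ through via $a(z^n)q_2^{L_0^{NS}} = q_2^{n+1/2}q_2^{L_0^{NS}}a(z^n)$ from \eqref{eqnLCommRel} gives $\|a(z^n)q_2^{L_0^{NS}}\|_{op}\le \min\{1, |q_2|^{n+1/2}\}$, whence
\[
\sum_n \bigl\|(q_1^{L_0^{NS}}\xi)_n q_2^{L_0^{NS}} w^{-n-1}\bigr\|_{op} \;\le\; |q_1|^{1/2}\left(\sum_{n\ge 0} \frac{|q_2|^{n+1/2}}{|w|^{n+1}} + \sum_{n<0} |w|^{-n-1}\right),
\]
which is a convergent geometric sum for $|q_2|<|w|<1$, uniformly on compact subsets of $\M_{NS}$; the other generating field $\xi = a(z^0)^*\Omega$ is handled identically. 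For a general $\xi\in\F^0$ the Borcherds product formula \eqref{eqnBPF} writes $\xi_n$ as a finite sum of products of generating-field modes with binomial coefficients polynomial in $n$; both the boundedness of $\xi_n q_2^{L_0^{NS}}$ and the absolute convergence of the series follow inductively, the polynomial growth in $n$ being absorbed by the geometric decay.

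\emph{Stage (b).} By a Runge-type density argument, $H^2(\bbP_x)$ is the closed span of the boundary values of rational functions with poles only at $0$ and $w$, explicitly spanned by
\begin{align*}
v_n &= (z^n,\; q_1^{1/2}(q_1 z+w)^n,\; q_2^{n+1/2} z^n), \quad n\in\Z,\\
u_k &= \bigl((z-w)^{-k},\; q_1^{1/2-k} z^{-k},\; q_2^{1/2}(q_2 z-w)^{-k}\bigr), \quad k\ge 1.
\end{align*}
For $v_n$ with $n\ge 0$ I would expand $(q_1 z+w)^n$ by the binomial theorem, commute $q_1^{L_0^{NS}}$ past $a(z^k)$ via $a(z^k)q_1^{L_0^{NS}} = q_1^{k+1/2}q_1^{L_0^{NS}}a(z^k)$, and apply the free-fermion commutator identity
\[
[a(z^n),Y(\zeta,w)]_s \;=\; \sum_{k=0}^n \binom{n}{k} w^{n-k} Y(a(z^k)\zeta, w),
\]
obtained by contour integration from the OPE $\psi(z)Y(\zeta,w)\sim\sum_{k\ge 0}(z-w)^{-k-1}Y(a(z^k)\zeta,w)$. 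Since $T$ is grade-preserving (so $p(T)=0$), the sign $(-1)^{p(\xi)}$ coming from the graded tensor-product action on $\F\otimes\F$ matches the parity arising in the super-commutator, and all dilation factors and binomial coefficients line up so that the commutation relation for $v_n$ holds exactly. The cases $v_n$ with $n<0$ and $u_k$ are handled analogously using creation-operator modes of $\psi^*$ and Laurent expansions around $w$ instead of $0$, and the corresponding $H^2(\bbP_x)^\perp$ relations follow from the same computations after invoking the Cauchy-transform formula $H^2(X)^\perp = M_{\pm}H^2(\overline{X})$.

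\emph{Stage (c).} The operator $T$ so obtained is non-zero (since $T(\Omega\otimes\Omega)=\Omega$) and satisfies the $H^2(\bbP_x)$ commutation relations, hence lies in $E(\bbP_x)$; by one-dimensionality it spans $E(\bbP_x)$. The main obstacle is stage (b): carefully bookkeeping the Laurent expansions on the three boundary circles, the translation by $w$, the dilation factors $q_i^{L_0^{NS}}$, and the graded signs so that both the $H^2(\bbP_x)$ and $H^2(\bbP_x)^\perp$ commutation relations match the vertex-operator formula on the nose. Stages (a) and (c) reduce to a geometric-series computation and a dimension count respectively.
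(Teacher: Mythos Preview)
Your approach runs the argument in the opposite direction from the paper, and this creates a genuine gap.  In Stage~(c) you conclude that the operator $T$ built from the vertex-operator series ``lies in $E(\bbP_x)$''; but by definition $E(\bbP_x)$ consists of \emph{trace class} maps $\F\otimes\F\to\F$ satisfying the commutation relations.  Stage~(a) only shows that for each fixed $\xi\in\F^0$ the map $\eta\mapsto T(\xi\otimes\eta)$ extends to a bounded operator on $\F$.  This gives no control on the dependence on $\xi$, so you have not shown that $T$ extends to a bounded (let alone trace class) operator on $\F\otimes\F$, and hence cannot conclude $T\in E(\bbP_x)$.  Proving trace class directly from the vertex-operator series would require uniform estimates in $\xi$ that your inductive bounds via the Borcherds formula do not provide.

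The paper avoids this by inverting the logic: it \emph{starts} with a nonzero $T\in E(\bbP_x)$, whose existence and trace-class property are already guaranteed by Theorem~\ref{thmExistence}.  Gluing a disk into the hole at $w$ and using Propositions~\ref{propEvalCFTDisk} and~\ref{propEvalCFTAnnuli} forces $T(\Omega\otimes\eta)=q_2^{L_0^{NS}}\eta$ after normalization; this is the base case.  Then the $H^2(\bbP_x)$ commutation relation coming from the holomorphic function $(z-w)^n$ (exactly your $u_k$'s and their positive-power analogues) is used to pass from $T(q_1^{-L_0}\xi\otimes\eta)$ to $T(q_1^{-L_0}a(z^n)\xi\otimes\eta)$, and the Borcherds product formula identifies the result with $(a(z^n)\xi)_m q_2^{L_0}$.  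The absolute operator-norm convergence is obtained along the way from the same geometric estimates you wrote down in Stage~(a), now applied termwise inside the induction.  Your Stage~(b) computations are essentially the same identities, just read backwards; the point is that running them in the paper's direction makes the trace-class property an input rather than an obligation.
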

\begin{proof}

To simplify notation, we will write $L_0$ instead of $L_0^{NS}$ throughout the proof. It suffices to prove the theorem for $\xi$ of the form
\begin{equation}\label{eqnXiBasis}
\xi =  a(z^{n_p}) \cdots a(z^{n_1}) a(z^{m_1})^* \cdots a(z^{m_q})^* \Omega,
\end{equation}
where $n_i \in \Z_{< 0}$ and $m_i \in \Z_{\ge 0}$. Since $q_1^{L_0}$ is invertible as a map $\F^0 \to \F^0$, we will instead prove
\begin{equation}\label{eqnVertexSimplifiedForm}
T(q_1^{-L_0} \xi \otimes \eta) = Y(\xi,w)q_2^{L_0}\eta=  \sum_{n \in \Z} \xi_n  q_2^{L_0}w^{-n-1} \eta
\end{equation}
for some $T \in E(\bbP_x)$, all $\xi$ as in \eqref{eqnXiBasis}, and all $\eta \in \F^0$, with the stated convergence properties.

Let $T$ be a nonzero element of $E(\bbP_x)$. By Corollary \ref{corOperadicComposition} and the calculation of $E(\D)$ and $E(\A_{q_2})$ (Propositions \ref{propEvalCFTDisk} and \ref{propEvalCFTAnnuli}), the map $\eta \mapsto T(\Omega \otimes \eta)$ lies in $\C q_2^{L_0}$. 
Rescale $T$ so that 
\begin{equation}\label{eqnTOmegaEta}
T(\Omega \otimes \eta) = q_2^{L_0}\eta.
\end{equation}
In particular, note that $T(\Omega \otimes \Omega) = \Omega$. Since $T$ is homogeneous by Theorem \ref{thmExistence}, we can conclude that $T$ is even.

We now establish \eqref{eqnVertexSimplifiedForm} by induction on the length of the word in $a(z^n)$ and $a(z^m)^*$'s in \eqref{eqnXiBasis}.

Since $\Omega_n = \delta_{n+1,0}\Id$, equation \eqref{eqnVertexSimplifiedForm} holds when $\xi = \Omega$ by \eqref{eqnTOmegaEta}. The convergence properties are trivial, as the sum only has one term.

Now assume that \eqref{eqnVertexSimplifiedForm} holds for $\xi$,  with the sum converging absolutely in operator norm as a function of $\eta$, uniformly on compact subsets of $\M_{NS}$. We will show that the same holds with $a(z^n) \xi$ and $a(z^{-n-1})^* \xi$ in place of $\xi$, for all $n \in \Z$. 

We first consider $a(z^n) \xi$. From the holomorphic function $(z-w)^n \in \O(\bbP_x)$, we have
\begin{equation}\label{eqnVOAHardyElement}
((z-w)^n, q_1^{n+\frac12} z^n, q_2^{\frac12} (q_2 z- w)^n) \in H^2(\bbP_x, NS).
\end{equation}
Here we have ordered the boundary circles with $S^1$ first, $q_1S^1 + w$ second, and $q_2 S^1$ third. By the definition of $E(\bbP_x, NS)$, $T$ satisfies the commutation relation
$$
a((z-w)^n)T = T(a(q_1^{n+\frac12} z^n ) \grotimes \Id) + T(\Id \grotimes a(q_2^{1/2} (q_2 z- w)^n)).
$$
Hence
\begin{align}\label{eqnVOAProofCommRel}
T(q_1^{-L_0} a(z^n)\xi \otimes \eta) &= T(a(q_1^{n+1/2} z^n) q_1^{-L_0} \xi \otimes \eta) \nonumber\\
&= a((z-w)^n)T(q_1^{-L_0}\xi \otimes \eta) - T(d_\F \otimes a(q_2^{\frac12} (q_2 z - w)^n )) (q_1^{-L_0}\xi \otimes \eta) \nonumber\\
&= a((z-w)^n)T(q_1^{-L_0}\xi \otimes \eta) - (-1)^{p(\xi)}T(q_1^{-L_0}\xi \otimes a(q_2^{\frac12} (q_2 z - w)^n ) \eta ).
\end{align}

We treat the two summands in \eqref{eqnVOAProofCommRel} separately. Since $(z-w)^n$ appears as a function of $z \in S^1$, we can expand it as a power series converging uniformly on compact subsets of $\abs{w} < 1$. Combining this with the inductive hypothesis for $\xi$, we compute

\begin{align}\label{eqnVOACompFirstTermPartOne}
a((z-w)^n)T({q_1}^{-L_0}\xi \otimes \eta) &=\sum_{j \ge 0} (-1)^j \binom{n}{j} a(z^{n-j}) T(q_1^{-L_0} \xi \otimes \eta) w^j \nonumber\\
&=\sum_{j \ge 0}\sum_{m \in \Z} (-1)^j \binom{n}{j} a(z^{n-j}) \xi_m q_2^{L_0} w^{j-m-1} \eta \\
:&= \sum_{j \ge 0}\sum_{m \in \Z} S_{j,m}\eta. \nonumber
\end{align}
Observe that every $S_{j,m}$ is a bounded operator, and since $\norm{a(z^{n-j})} = \norm{z^{n-j}}_{L^2(S^1)} = 1$, we have 
$$
\sum_{j\ge 0} \sum_{m \in \Z} \norm{S_{j,m}} \le \left(\sum_{j \ge 0} \binom{n}{j} \abs{w}^j \right) \left(\sum_{m \in \Z} \norm{\xi_m q_2^{L_0}} \abs{w}^{-m-1} \right).
$$
The sum indexed by $j$ on the right-hand side converges uniformly on compact subsets of $\abs{w} < 1$. 
The sum indexed by $m$ converges by the inductive hypothesis, uniform on compact subsets of $\M_{NS}$.
Hence $\sum_{j \ge 0} \sum_{m \in \Z} S_{j,m}$ is absolutely summable, uniformly on compact subsets of $\M_{NS}$.

We now reindex the sum \eqref{eqnVOACompFirstTermPartOne} in $m$ and exhange the order of summation to get
\begin{align}\label{eqnVOACompFirstTermPartTwo}
\sum_{j \ge 0}\sum_{m \in \Z} S_{j,m}\eta &=\sum_{m \in \Z} \left(\sum_{j \ge 0} (-1)^j \binom{n}{j} a(z^{n-j}) \xi_{m+j}q_2^{L_0}\right) {w}^{-m-1} \eta  \nonumber\\
:&= \sum_{m \in \Z} \tilde S_{m} {w}^{-m-1} \eta,
\end{align}
where $\tilde S_m$ is a bounded operator and the sum \eqref{eqnVOACompFirstTermPartTwo} converges uniformly absolutely in operator norm on compact subsets of $\M_{NS}$.

We now treat the second summand of \eqref{eqnVOAProofCommRel} similarly to how we treated the first, expanding $(q_2 z - w)^n$ as a power series in $q_2/w$. We have
\begin{align}\label{eqnVOACompSecondTermPartOne}
T({{q_1}}^{-L_0}\xi \otimes a(q_2^{\frac12} (q_2 z - w)^n ) \eta ) &= \sum_{j \ge 0}(-1)^j\binom{n}{j} (-1)^n T({q_1}^{-L_0}\xi \otimes a(z^j ) \eta) {q_2}^{j+\frac12}{w}^{n-j}\nonumber\\
&= \sum_{j \ge 0}\sum_{m \in \Z} (-1)^j \binom{n}{j} (-1)^n \xi_m {q_2}^{L_0} a(z^j) {q_2}^{j+\frac12}{w}^{n-m-j-1}\eta\\
:&= \sum_{j \ge 0}\sum_{m \in \Z} U_{j,m} \eta \nonumber.
\end{align}

The operators $U_{j,m}$ are bounded, with 
\begin{equation}\label{eqnUjmNorm}
\sum_{j \ge 0} \sum_{m \in \Z} \norm{U_{j,m}} \le \abs{q_2}^{\tfrac12} \abs{w}^n \left( \sum_{j \ge 0} \binom{n}{j} \abs{\frac{q_2}{w}}^j \right) \left(\sum_{m \in \Z} \norm{\xi_m q_2^{L_0}} \abs{w}^{-m-1}\right).
\end{equation}
By the inductive hypothesis, the right-hand side of \eqref{eqnUjmNorm} converges uniformly on compact subsets of $\M_{NS}$. 
Hence the same summability holds for $\norm{U_{j,m}}$. 
We now rewrite $U_{j,m}$ using the commutation relation \eqref{eqnLCommRel} for $a(z^j)$ and ${q_2}^{L_0}$, along with reindexing and interchanging the sums, to get

\begin{align}\label{eqnVOACompSecondTermPartTwo}
\sum_{j \ge 0}\sum_{m \in \Z} U_{j,m} \eta  &= \sum_{j \ge 0}\sum_{m \in \Z} (-1)^j\binom{n}{j} (-1)^n \xi_m  a(z^j)^*q_2^{L_0}{w}^{n-j-m-1} \eta \nonumber\\
&= \sum_{m \in \Z} \left(\sum_{j \ge 0} (-1)^j\binom{n}{j} (-1)^n \xi_{m+n-j}  a(z^j)q_2^{L_0}\right){w}^{-m-1} \eta\\
:&= \sum_{m \in \Z} \tilde U_m {w}^{-m-1} \eta \label{eqnSumTildeUm}.
\end{align}
Observe that each $\tilde U_m$ is a bounded operator, and the sum \eqref{eqnSumTildeUm} converges uniformly absolutely in operator norm (as a function of $\eta$) on compact subsets of $\M_{NS}$.

From the formula for the generating field \eqref{eqnGeneratingFieldStar}, we see that $(a(z^{-1})\Omega)_n = a(z^n)$. Hence the Borcherds product formula \eqref{eqnBPF} asserts that
\begin{align}\label{eqnVOACompAreWeThereYet}
(a(z^n)\xi)_m &=\sum_{j \ge 0} (-1)^j \binom{n}{j} \left(a(z^{n-j}) \xi_{m+j} - (-1)^{p(\xi)+n}
\xi_{m+n-j}a(z^j)\right) .
\end{align}
Comparing \eqref{eqnVOACompAreWeThereYet} with the definitions of $\tilde S_m$ \eqref{eqnVOACompFirstTermPartTwo} and $\tilde U_m$ \eqref{eqnVOACompSecondTermPartTwo} yields
\begin{equation}\label{eqnTildeTermsGiveModes}
\tilde S_m + (-1)^{p(\xi)}\tilde U_m = (a(z^{n}) \xi)_m {q_2}^{L_0}.
\end{equation}
Plugging the results of the computations \eqref{eqnVOACompFirstTermPartOne} through \eqref{eqnVOACompSecondTermPartTwo} into \eqref{eqnVOAProofCommRel}, and then applying \eqref{eqnTildeTermsGiveModes}, yields
\begin{align}
T({q_1}^{-L_0} a(z^n)\xi \otimes \eta) &= \sum_{m \in \Z}(\tilde S_m + (-1)^{p(\xi)} \tilde U_m) {w}^{-m-1} \eta \nonumber\\
&= \sum_{m \in \Z} (a(z^{n}) \xi)_m {q_2}^{L_0} {w}^{-m-1} \eta, \label{eqnVertexCompResult}
\end{align}
which establishes \eqref{eqnVertexSimplifiedForm} for $a(z^n)\xi$. The required convergence property of the sum \eqref{eqnVertexCompResult} follows from the corresponding convergence properties of $\sum \tilde S_m {w}^{-m-1}$ and $\sum \tilde U_m {w}^{-m-1}$ that we previously established.

To complete the proof, we must establish \eqref{eqnVertexSimplifiedForm} with $a(z^{-n-1})^*\xi$ in place of $\xi$. This is nearly identical to the computation above for $a(z^n)\xi$, so we will only sketch the argument. By Theorem \ref{thmHardyPerp}, $H^2(\bbP_x, NS)^\perp = \overline{M_{\pm z} H^2(\bbP_x,NS)}$, where $M_{\pm z}$ is multiplication by the function $z$ on outgoing boundary components and multiplication by $-z$ on incoming boundary components. 

We saw in \eqref{eqnVOAHardyElement} that 
$$
((z-w)^n, q_1^{n+\frac12} z^n, q_2^{\frac12} (q_2 z- w)^n) \in H^2(\bbP_x, NS),
$$
and so 
$$
(z^{-1}(z^{-1}-\overline{w})^n, \; -\overline{q_1}^{n+\frac12} z^{-n-1}, \;\overline{q_2}^{\frac12} z^{-1}(\overline{q_2}z^{-1} - \overline{w})^n) \in H^2(\bbP_{x}, {NS})^\perp.
$$
By the definition of $E(\bbP_x,NS)$, we have
$$
a((z^{-1}-\overline{w})^n)^*T = T(a(\overline{q_1}^{n+\frac12} z^{-n-1})^* \grotimes \Id) + T(\Id \grotimes a(\overline{q_2}^{\frac12} z^{-1}(\overline{q_2}z^{-1} - \overline{w})^n)^*)
$$
and thus
$$
T({q_1}^{-L_0} a(z^{-n-1})^*\xi \otimes \eta) = a((z^{-1}-\overline{w})^n)^*T({q_1}^{-L_0}\xi  \otimes \eta) - (-1)^{p(\xi)} T({q_1}^{-L_0}\xi \otimes a(\overline{q_2}^{\frac12} z^{-1}(\overline{q_2}z^{-1} - \overline{w})^n)^*\eta).
$$
We can now establish the desired formula for the left-hand side by expanding $(z^{-1} - \overline{w})^n$ in the domain $\abs{w} < 1$, expanding $(\overline{q_2}z^{-1} - \overline{w})^n$ in the domain $\abs{q_2} < \abs{w}$, and applying the inductive hypothesis, just as before.
\end{proof}

\section{The Cauchy transform for Riemann surfaces}
\label{secCauchyTransform}

\subsection{Main theorems}
When establishing the properties of the free fermion Segal CFT in Section \ref{subsecCFTProof}, we deferred the proof of two key analytic properties of the Hardy space $H^2(X)$.
In order to prove the sewing property, we needed a formula for $H^2(X)^\perp$:

\begin{Theorem}\label{thmHardyPerp}
Let $X=(\Sigma, L, \Phi, \beta) \in \cR$ be a spin Riemann surface with boundary parametrization. Let $H_\Gamma = \bigoplus_{j \in \pi_0(\Gamma)} L^2(S^1)$ and let $H^2(X) \subset H_\Gamma$ be the Hardy space. Then 
\begin{equation}\label{eqHardyPerp}
H^2(X)^\perp = M_{\pm} \overline{M_z^{NS} H^2(X)} = M_{\pm} H^2(\overline{X}).
\end{equation}

Here $M_{\pm}$ is multiplication by $1$ on copies of $L^2(S^1)$ indexed by outgoing boundary components, and multiplication by $-1$ on copies of $L^2(S^1)$ indexed by incoming boundary components, and $M_z^{NS}$ is multiplication by the function $z$ on copies of $L^2(S^1)$ indexed by $j$ for which $L|_j$ is Neveu-Schwarz, and the identity on other boundary components.
\end{Theorem}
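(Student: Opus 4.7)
The second equality $M_\pm \overline{M_z^{NS} H^2(X)} = M_\pm H^2(\overline{X})$ is immediate from Proposition \ref{propConjugateHardy}, so it suffices to establish the first equality $H^2(X)^\perp = M_\pm \overline{M_z^{NS} H^2(X)}$. My plan is to prove the two required inclusions separately by (a) verifying that $H^2(X)$ is orthogonal to $M_\pm \overline{M_z^{NS} H^2(X)}$, and (b) showing that the sum of these two subspaces is dense in $H_\Gamma$.

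For the orthogonality in (a), the key input is Stokes' theorem applied to a holomorphic section of $K_\Sigma$. Given any $F,G\in \O(\Sigma;L)$, the section $\Phi(F\otimes G)$ is a holomorphic $1$-form on $\Sigma$, hence closed, so $\int_{\partial\Sigma}\Phi(F\otimes G)=0$. Pulling back through the boundary parametrizations $\beta_j$, using the explicit formulae \eqref{eqCirclSpinDefinitions} for $\phi_{\sigma(j)}$, and then using $dz=iz\,d\theta$ on $S^1$, a direct calculation yields
$$
\beta_j^*\Phi(F\otimes G) \;=\; (M_z^{NS}\tilde F_j)\,\tilde G_j\,d\theta
$$
in both the Neveu-Schwarz and Ramond cases, where $\tilde F_j=\beta_j^*F|_{S^1}$. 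Tracking orientation via $\epsilon_j=\pm 1$ for outgoing/incoming components, the vanishing of the boundary integral becomes $\sum_j \epsilon_j \int_{S^1} (M_z^{NS}\tilde F_j)\tilde G_j\,d\theta = 0$, which upon unwinding the definitions of $M_\pm$ and $M_z^{NS}$ is exactly $\langle f, M_\pm\overline{M_z^{NS} g}\rangle = 0$ for $f=(\tilde F_j)_j$ and $g=(\tilde G_j)_j$. Density of pulled-back smooth holomorphic sections extends this to all of $H^2(X)$.

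For the completeness in (b), my plan is to use the Cauchy transform on Riemann surfaces to be developed later in this same section. After embedding $\Sigma$ in an open spin Riemann surface $\tilde\Sigma$ (Corollary \ref{corEmbeddingInOpenSpinSurface}) and trivializing $L$ (Theorem \ref{thmTrivialityOfVectorBundles}), one constructs a Cauchy kernel as a meromorphic section of an appropriate bundle on $\tilde\Sigma\times\tilde\Sigma$ with a normalized simple pole along the diagonal. The associated Cauchy transform sends a smooth section of $L|_\Gamma$ to a section of $L$ holomorphic on $\tilde\Sigma\setminus\Gamma$, and a Plemelj-type formula expresses any smooth $f$ on $\Gamma$ as the jump of its Cauchy transform across $\Gamma$. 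This gives a decomposition $f=f_{\mathrm{in}}+f_{\mathrm{out}}$, where $f_{\mathrm{in}}\in H^2(X)$ is the boundary value from the $\Sigma$-side, and $f_{\mathrm{out}}$ is the boundary value from the complementary side, which is identified with an element of $M_\pm H^2(\overline{X})$ by transporting the complex structure on the outside region via the conjugation construction of Section \ref{subsecConjugation}. Density of smooth sections in $H_\Gamma$ then extends this to the whole space.

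The main obstacle is the analytic work required to build the Cauchy transform in this generality: one must construct the Cauchy kernel with controlled singularities, establish its $L^2$-boundedness and a rigorous Plemelj formula (at the smooth and $L^2$ levels), and carefully match the ``outside'' extension with $M_\pm H^2(\overline{X})$. The twist by $M_z^{NS}$ is expected to arise precisely from the way $\rho_{NS}$ (as defined in \eqref{eqCirclConjugationOnSections}) intertwines the Neveu-Schwarz spin structure with its conjugate, so keeping track of this on each boundary component is the delicate bookkeeping that makes the two inclusions close up into the stated equality.
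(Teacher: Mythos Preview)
Your part~(a) is correct and is a cleaner route to the inclusion $M_\pm\overline{M_z^{NS}H^2(X)}\subseteq H^2(X)^\perp$ than the paper's, which obtains it only as a byproduct of the adjoint computation for the Cauchy transform (Lemma~\ref{lemCauchyAdjoint}). Both ultimately rest on Stokes' theorem, but your version applies it directly to the holomorphic $1$-form $\Phi(F\otimes G)$ without the intermediate machinery.

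Part~(b), however, has a genuine gap. After embedding $\Sigma\hookrightarrow\tilde\Sigma$, the ``complementary side'' $\tilde\Sigma\setminus\Sigma$ is a disjoint union of annular collars, one per boundary component; it is \emph{not} the conjugate surface $\overline\Sigma$, and the conjugation construction of Section~\ref{subsecConjugation} does not relate them. The exterior boundary value $f_{\mathrm{out}}=-(\Id-C)f$ therefore lies in $\ker C$, which is a (non-orthogonal) complement of the Hardy space but is not its orthogonal complement. Concretely, already for the scalar annulus $\A_r\subset\C$ with the standard Cauchy kernel, the pair $(z^{-1},0)\in L^2(S^1)\oplus L^2(S^1)$ (outer component $z^{-1}$, inner component $0$) lies in $\ker C$, since $z^{-1}$ extends holomorphically to $\{|w|>1\}$ and $0$ to $\{|w|<r\}$; but it does not lie in $H^2(\A_r,\gamma)^\perp$, whose elements have correlated components on the two circles coming from a single function on the whole annulus. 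So your jump decomposition does not show that $H^2(X)+M_\pm H^2(\overline X)$ exhausts $H_\Gamma$.

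The paper's route is different: it computes the orthogonal complement by taking the \emph{adjoint} of the Cauchy transform rather than looking at an exterior region. After trivializing $L$, the Cauchy transform $C$ is a bounded idempotent with image $H^2(\Sigma,\gamma)$, so $\Id-C^*$ is an idempotent with image exactly $H^2(\Sigma,\gamma)^\perp$. The explicit formula \eqref{eqnCauchyFormalAdjoint} gives $(\Id-C^*)v=\overline{M_{\pm z}\,r\,C(M_{\pm\bar z}r^{-1}\bar v)}$, exhibiting this image as $\bar r\,M_{\pm\bar z}\,\overline{H^2(\Sigma,\gamma)}$ (Theorem~\ref{thmWPerp}); the conjugate structure enters algebraically through the adjoint, not geometrically through an exterior domain. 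One then writes $H^2(X)=f\,H^2(\Sigma,\gamma)$ for $f=\beta^*F$ with $F$ a nonvanishing holomorphic section of $L$, chooses $R=i\Phi_*(F\otimes F)$ in the defining relation $\gamma^*R=r\,dz$, and reads off from the spin-isomorphism condition on $\beta_j$ that $r=M_z^{NS}z^{-1}f^2$; substituting yields the stated formula for $H^2(X)^\perp$.
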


In order to establish non-triviality of the spaces $E(X)$, we required the following theorem.

\begin{Theorem}\label{thmProjectionDifferenceTraceClass}
Let $X=(\Sigma, L, \Phi, \beta) \in \cR$ be a spin Riemann surface with boundary parametrization. Let $H_\Gamma = \bigoplus_{j \in \pi_0(\Gamma)} L^2(S^1)$ and let $H^2(X) \subset H_\Gamma$ be the Hardy space. Let $q_X \in \P(H_\Gamma)$ be the projection onto $H^2(X)$, and let 
$$
p_\Gamma = \bigoplus_{j \in \pi_0(\Gamma^1)} p \oplus \bigoplus_{j \in \pi_0(\Gamma^0)} \Id - p,
$$
where $p \in \P(L^2(S^1))$ is the projection onto $H^2(\D)$. Then $q_X - p_\Gamma$ is trace class.
\end{Theorem}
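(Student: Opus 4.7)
The plan is to use the Cauchy transform for Riemann surfaces (the subject of Section \ref{secCauchyTransform}) to realize both $q_X$ and $p_\Gamma$ as boundary integral operators, and then to show that the kernel of their difference is smooth on $\Gamma \times \Gamma$, from which trace class membership follows.

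First I would construct a Cauchy kernel $K_X$ for $X$: a section of $L \boxtimes L$ defined and holomorphic on a neighborhood of $\Sigma \times \Sigma$ minus the diagonal, with a simple pole on the diagonal that looks like $\tfrac{1}{2\pi i (z-w)}$ in any local holomorphic coordinate compatible with the spin trivialization. Existence is obtained by embedding $\Sigma$ in an open spin Riemann surface via Corollary \ref{corEmbeddingInOpenSpinSurface}, trivializing $L \boxtimes L$ over a neighborhood of $\Sigma \times \Sigma$ using Theorem \ref{thmTrivialityOfVectorBundles}, and solving a $\bar\partial$-problem to correct the ambient Cauchy kernel. Integration against $K_X$ defines a Cauchy transform $C_X$, and a Plemelj-type jump formula then expresses $q_X$ as $C_X^b$ corrected by a Kerzman-Stein-type term, where $C_X^b$ is the principal-value boundary integral operator on $L^2(\Gamma)$ with kernel $K_X\big|_{\Gamma \times \Gamma}$, and where the Kerzman-Stein correction is a smoothing operator.

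Next I would identify $p_\Gamma$ with the same construction carried out \emph{locally}, one boundary component at a time. Through the parametrization $\beta_j$, each component $j$ is identified with a standard spin circle bounding a standard spin disk. The classical boundary Cauchy projection of that disk is $p$ for outgoing $j$ and $\Id - p$ for incoming $j$ (the latter sign coming from the orientation reversal of $\beta_j$), and these are exactly the summands of $p_\Gamma$. Thus $p_\Gamma$ is realized, up to finite-rank terms absorbed by the constant function on each component, as the boundary Cauchy operator of a local kernel $K_{\mathrm{loc}}$ which is supported on the diagonal blocks $j \times j$ and equals the classical disk Cauchy kernel there.

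The main analytic step is to compare $K_X$ with $K_{\mathrm{loc}}$ near $\Gamma \times \Gamma$. On blocks $j \times j'$ with $j \ne j'$ the local kernel vanishes while $K_X$ is smooth, so the difference is smooth. On a single diagonal block $j \times j$, the choice of spin trivialization and orientation convention aligns the leading poles of $K_X$ and $K_{\mathrm{loc}}$ (both equal to $\tfrac{dw}{2\pi i(w-z)}$ in the parametrization coordinate), so $K_X - K_{\mathrm{loc}}$ extends smoothly across the diagonal. Combined with the smoothing Kerzman-Stein corrections from Step 1, this exhibits $q_X - p_\Gamma$ as an integral operator on $\Gamma$ with a smooth kernel. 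Because $\Gamma$ is a compact smooth $1$-manifold, any such smoothing operator lies in every Schatten class and in particular is trace class, completing the proof.

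The hard part will be the pole-matching of Step 3: verifying that the global Cauchy kernel $K_X$ constructed via $\bar\partial$-techniques really has, along each boundary component and in the parametrization coordinate, the \emph{same} principal pole as the classical disk Cauchy kernel (with the orientation-reversal sign built into the definition of $p_\Gamma$ for incoming components). Once this alignment is in place, together with the Plemelj and Kerzman-Stein formulas established in Section \ref{secCauchyTransform}, the trace class conclusion follows essentially formally.
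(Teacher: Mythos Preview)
Your strategy is correct and parallels the paper's argument closely, but with one organizational difference worth noting. The paper does \emph{not} build a Cauchy kernel valued in $L \boxtimes L$. Instead it first trivializes the spin bundle: choosing a nonvanishing holomorphic section $F$ of $L$ (via Theorem \ref{thmTrivialityOfVectorBundles}) and setting $f = \beta^* F$, one has $H^2(X) = f\, H^2(\Sigma,\gamma)$, so $q_X$ is the range projection of the idempotent $f q_\Sigma f^{-1}$. Since $[f,q_\Sigma]$ is trace class for any smooth $f$ (\cite[Prop.~6.3.1]{PrSe86}), Proposition \ref{propGLresPreservesGrres} gives that $q_X - q_\Sigma$ is trace class, reducing everything to the scalar statement $q_\Sigma - p_\Gamma \in \B_1$ (Theorem \ref{thmHardyGrres}). \emph{That} scalar statement is then proved exactly along the lines you describe: Kerzman--Stein gives $q_\Sigma - C$ trace class, off-diagonal blocks of $C$ have smooth kernel, and each diagonal block $p_j C p_j^*$ is compared to the classical disk projector by reparametrizing so that an annular collar of $j$ is biholomorphic to a standard annulus and invoking independence of the Cauchy transform from the choice of kernel (Corollary \ref{corCauchyIndependent}).

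So your plan and the paper's differ only in where the spin structure is dealt with: you carry $L$ through the Cauchy-kernel construction and do the pole-matching for $L$-valued kernels, whereas the paper peels off the spin structure at the outset by a multiplication operator and runs the entire Cauchy-transform analysis in the scalar setting. The paper's decoupling lets it borrow the scalar Cauchy kernel of Scheinberg and the planar results of Bell directly; your approach is more intrinsic but requires you to redo the Plemelj and Kerzman--Stein formulas for bundle-valued kernels (routine, but extra bookkeeping). Both land on the same ``smooth kernel $\Rightarrow$ trace class'' endgame.
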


The main tool for establishing Theorems \ref{thmHardyPerp} and \ref{thmProjectionDifferenceTraceClass} will be a generalization of the Cauchy transform to Riemann surfaces. A treatment of these theorems when $\Sigma$ is a planar domain appears in the book of Bell \cite[\S1-5]{Bell92}. We will follow Bell's treatment, making adjustments for the non-planar case when needed and reducing to the planar case when possible.

The author would like to thank Antony Wassermann for suggesting the reference \cite{Bell92}, and for explaining the role of the Cauchy transform in proving Theorem \ref{thmProjectionDifferenceTraceClass} in the planar case.

\subsection{The Cauchy transform}

\subsubsection{Definitions}

Let $\Sigma$ be a compact Riemann surface with no closed components, and let $\Gamma = \partial \Sigma$. 
By welding annuli onto each component of $\Gamma$ as in Theorem \ref{thmWelding}, we may assume that $\Sigma$ is embedded in an open Riemann surface $\tilde \Sigma$.

By \cite{GuNa67}, there exists a locally injective holomorphic map $\rho: \tilde \Sigma \to \C$.  
By \cite{Scheinberg78}, there exists a meromorphic  function $q(s,t):\tilde \Sigma \times \tilde \Sigma \to \C$ which is holomorphic except on the diagonal $s=t$, and such that $q(s,t) - (\rho(s) - \rho(t))^{-1}$ is holomorphic on $U \times U$ for any open $U$ on which $\rho$ is injective. We can assume that $q(s,t) = -q(t,s)$ by replacing $q$ with $\frac12 q(s,t) - \frac12 q(t,s).$  Let $\omega_t(s) = q(s,t)d\rho(s)$.

We call $q$ a Cauchy kernel on $\tilde \Sigma$, which is justified by the following Cauchy integral formula.

\begin{Proposition}[{\cite[Prop. 7.1]{Scheinberg78}}]
Let $U$ be an open set in $\tilde \Sigma$  with $\overline{U}$ compact, and with a piecewise $C^1$ oriented boundary $\partial U$.
If $u \in C^1(\overline{U})$, then for every $t \in U$,
$$
u(t) = \frac{1}{2\pi i} \int_{\partial U} u \omega_t - \frac{1}{2\pi i} \int_U \overline{\partial}u \wedge  \omega_t.
$$
\end{Proposition}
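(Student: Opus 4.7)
The plan is to prove the formula by a standard Stokes/Cauchy--Pompeiu argument adapted to the Riemann surface setting, using the Cauchy kernel $q$ to play the role of $1/(w-z)$.

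First I would compute $d(u\omega_t)$ on $U\setminus\{t\}$. Pick a coordinate neighborhood $V \ni t$ on which $\rho$ is injective, so $\omega_t(s) = q(s,t)\,d\rho(s)$ is a $(1,0)$-form which is holomorphic in $s$ away from $t$. Away from $t$, the form $\omega_t$ is $\bar\partial$-closed (holomorphic), and since $\Sigma$ is a complex $1$-manifold every $(2,0)$-form vanishes, so $\partial\omega_t = 0$ as well; hence $d\omega_t = 0$ on $U\setminus\{t\}$. Then
\begin{equation*}
d(u\omega_t) = du \wedge \omega_t = \bar\partial u \wedge \omega_t
\end{equation*}
on $U\setminus\{t\}$, where again the $\partial u\wedge\omega_t$ term vanishes for degree reasons.

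Next I would apply Stokes' theorem to $u\omega_t$ on $U\setminus B_\epsilon(t)$, where $B_\epsilon(t)\subset V$ is the preimage under $\rho|_V$ of the Euclidean disk of radius $\epsilon$ around $\rho(t)$, oriented as the boundary of its complement. This yields
\begin{equation*}
\int_{\partial U} u\omega_t - \int_{\partial B_\epsilon(t)} u\omega_t = \int_{U\setminus B_\epsilon(t)} \bar\partial u \wedge \omega_t.
\end{equation*}
The right-hand side converges to $\int_U \bar\partial u \wedge \omega_t$ as $\epsilon\to 0$, since $\bar\partial u$ is continuous on $\overline{U}$ and $\omega_t$ has, in the coordinate $\rho$, only a simple-pole singularity of the form $d\rho/(\rho(s)-\rho(t))$, which is locally integrable against the area measure coming from $\bar\partial u$.

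The main step — and the main obstacle, since everything else is routine — is the residue computation: I must show
\begin{equation*}
\lim_{\epsilon \to 0} \frac{1}{2\pi i}\int_{\partial B_\epsilon(t)} u\omega_t = u(t).
\end{equation*}
On $V$ we can write $q(s,t) = (\rho(s)-\rho(t))^{-1} + h(s,t)$ with $h$ holomorphic, so $u\omega_t = u(s)(\rho(s)-\rho(t))^{-1}d\rho(s) + u(s) h(s,t)\,d\rho(s)$. The second term is a smooth $1$-form in $s$ and its integral over $\partial B_\epsilon(t)$ tends to $0$ with the length of the curve. For the first term, pull back via $\rho|_V$ to a Euclidean disk around $\rho(t)$: the integrand becomes $u(\rho^{-1}(w))\,dw/(w-\rho(t))$, which is the classical Cauchy kernel, and the standard $\epsilon\to 0$ limit gives $2\pi i\, u(t)$ by continuity of $u$ at $t$. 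Combining the three limits gives the asserted formula.
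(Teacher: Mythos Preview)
Your argument is correct and is precisely the standard Stokes/Cauchy--Pompeiu proof of this formula. Note, however, that the paper does not supply its own proof of this proposition: it is simply quoted from \cite[Prop.~7.1]{Scheinberg78}, so there is nothing in the paper to compare against beyond the citation. Your proof is essentially the argument one finds in that reference. One small wording slip: you say $\partial B_\epsilon(t)$ is ``oriented as the boundary of its complement,'' which would be the \emph{clockwise} orientation, but your displayed Stokes identity and your residue limit $\frac{1}{2\pi i}\int_{\partial B_\epsilon} u\omega_t \to u(t)$ are both written for the \emph{counterclockwise} (positive) orientation. The formulas are internally consistent and correct; only the verbal description of the orientation should be adjusted.
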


We denote by $C^\infty(\interior{\Sigma})$ and $\O(\interior{\Sigma})$ the smooth (resp. holomorphic) functions on the interior of $\Sigma$.
We will write $C^\infty(\Sigma)$ for the subspace of $C^\infty(\mathring{\Sigma})$  consisting of functions which extend to smooth functions on the boundary, and $\O(\Sigma)$ for the subspace of $C^\infty(\Sigma)$ consisting of functions which are holomorphic in the interior.

\begin{Definition}
If $u \in C^\infty(\Gamma)$, then define its Cauchy transform $\CC u \in \O(\mathring{\Sigma})$ by
$$
(\CC u)(t) = \frac{1}{2\pi i} \int_{\Gamma} u\omega_t.
$$
\end{Definition}
This definition has appeared many places in the literature, with early examples including \cite{Scheinberg78,Ga77,Bo87}.

\subsubsection{Basic properties}

Note that $\CC$ depends on the choice of $\rho$ and $q$, so we will regard these as fixed.  We will now show that $\CC u \in \O(\Sigma)$, but first we need the following version of \cite[Thm. 2.2]{Bell92}.

\begin{Theorem}
Suppose $v \in C^\infty(\Sigma)$.  Then the function $u$ defined by
$$
u(t) = \frac{1}{2 \pi i} \int_{\Sigma} v\omega_t \wedge d\rhobar
$$
for $t \in {\Sigma}$ satisfies $\overline{\partial}u = v d\rhobar$ and $u \in C^\infty(\Sigma)$.
\end{Theorem}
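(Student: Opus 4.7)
The plan is to prove this as a local statement on $\Sigma$, reducing to the classical planar Cauchy--Pompeiu theorem (Bell's Theorem 2.2) via the locally injective map $\rho$, and using the defining property of the Cauchy kernel that $q(s,t) - (\rho(s)-\rho(t))^{-1}$ is holomorphic on $U \times U$ whenever $\rho$ is injective on $U$.

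Fix $t_0 \in \Sigma$ and a neighborhood $U \subset \tilde\Sigma$ of $t_0$ on which $\rho$ is injective. Choose $\chi \in C_c^\infty(U)$ with $\chi \equiv 1$ on a smaller open neighborhood $V$ of $t_0$. Write $v = \chi v + (1-\chi) v$ and correspondingly $u = u_1 + u_2$, where
\[
u_1(t) = \frac{1}{2\pi i}\int_\Sigma (\chi v)(s)\, \omega_t(s)\wedge d\overline{\rho}(s), \quad u_2(t) = \frac{1}{2\pi i}\int_\Sigma ((1-\chi) v)(s)\, \omega_t(s) \wedge d\overline{\rho}(s).
\]
For $t \in V$, the support of $(1-\chi)v$ is disjoint from $\{t\}$, so $(s,t)\mapsto q(s,t)$ is holomorphic on $\mathrm{supp}((1-\chi)v) \times V$. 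Differentiating under the integral sign shows that $u_2$ is $C^\infty$ on $V$ and in fact holomorphic there, so $\overline{\partial} u_2 = 0$ on $V$.

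For $u_1$, push forward along $\rho$ to work in the planar chart. Writing $z = \rho(s)$, $w = \rho(t)$, $f(z) = (\chi v)(\rho^{-1}(z))$ (which extends by zero to a compactly supported $C^\infty$ function on an open planar set $\Omega \supset \rho(\Sigma \cap U)$), and using the decomposition $q(s,t) = (\rho(s) - \rho(t))^{-1} + h(s,t)$ with $h$ holomorphic on $U \times U$, we obtain
\[
u_1(t) = \frac{1}{2\pi i}\int_{\Omega} \frac{f(z)}{z - w}\, dz \wedge d\overline{z} \;+\; \frac{1}{2\pi i}\int_\Omega f(z)\, \tilde h(z,w)\, dz \wedge d\overline{z},
\]
where $\tilde h(z,w) = h(\rho^{-1}(z), \rho^{-1}(w))$. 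The second integral is smooth (indeed holomorphic) in $w$ by differentiation under the integral since $\tilde h$ is holomorphic in $w$. The first integral is the classical Cauchy transform of a compactly supported smooth function, to which Bell's Theorem 2.2 applies: it defines a $C^\infty$ function of $w$ whose $\overline{\partial}_w$ derivative equals $f(w)\, d\overline{w}$. Pulling back via $\rho$, we conclude $u_1 \in C^\infty(V)$ and $\overline{\partial} u_1 = (\chi v)\, d\overline{\rho}$ on $V$. Since $\chi \equiv 1$ on $V$, we get $\overline{\partial} u = v\, d\overline{\rho}$ at $t_0$, and $u \in C^\infty$ near $t_0$.

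The argument applies uniformly to both interior points and boundary points of $\Sigma$, since Bell's Theorem 2.2 is stated for smooth functions on domains up to the boundary. The main step where care is needed is ensuring the chart $U$ and the cutoff $\chi$ can be chosen to cover $\Sigma$ (including $\Gamma$) by finitely many such charts and to apply Bell's result for smoothness up to $\Gamma$; since $\Sigma$ is compact and $\rho$ is locally injective on the open $\tilde \Sigma \supset \Sigma$, such a finite cover exists. The main conceptual obstacle is that the classical Cauchy--Pompeiu identity controls only the planar part of the kernel, but the decomposition $q(s,t) = (\rho(s)-\rho(t))^{-1} + h(s,t)$ with holomorphic remainder is precisely what lets the global statement reduce to the planar one.
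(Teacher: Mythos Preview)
Your proof is correct and follows essentially the same strategy as the paper's: localize near $t_0$ where $\rho$ is injective, separate the far contribution (where the kernel is smooth and holomorphic in $t$) from the near contribution, split the near kernel as $(\rho(s)-\rho(t))^{-1}$ plus a holomorphic remainder, and invoke Bell's planar Theorem~2.2 for the singular part. The only cosmetic difference is that you localize via a smooth cutoff $\chi$ whereas the paper splits the domain of integration directly as $\Sigma\setminus V$ and $V$; note that your claim that $f$ extends by zero to an open planar set is only literally correct for interior $t_0$, but as you observe in your final paragraph, for $t_0\in\Gamma$ one simply applies Bell's theorem on the half-planar domain $\rho(\Sigma\cap U)$ up to the boundary, exactly as the paper does.
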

\begin{proof}
We first check that the integral defining $u$ makes sense. Fix $t_0 \in \Sigma$, and let $V$ be a neighborhood of $t_0$ in $\Sigma$ on which $\rho$ is injective. Let $z_0 = \rho(t_0)$, and let $\tau = (\rho|_V)^{-1}$. For $z\in \rho(V)$ we have an identity of $1$-forms on $\rho(V)$
$$
\tau^* \omega_{\tau(z)} = \frac{dw}{w-z} + f(z,w)dw,
$$
where $f$ is holomorphic and $w$ is the standard global parameter for $\C$. 
We then have
\begin{align*}
u(\tau(z)) &= \frac{1}{2 \pi i} \int_{\Sigma  \setminus V} v\omega_{\tau(z)} \wedge d\rhobar +  \frac{1}{2\pi i} \int_{\rho(V)} \frac{v(\tau(w)) dw \wedge d\overline{w}}{w-z} \;+\\  
&\quad\quad+ \; \frac{1}{2\pi i} \int_{\rho(V)} v(\tau(w))f(z,w) dw \wedge d\overline{w}\\
&:= u_1(z) + u_2(z) + u_3(z).
\end{align*}
Both $u_1$ and $u_3$ are clearly smooth in a neighborhood of $z_0$. From \cite[Thm 2.2]{Bell92}, $u_2$ is well-defined and $u_2 \in C^\infty(\rho(V))$. Thus $u$ is smooth in a neighborhood of $t_0$, and since $t_0$ was arbitrary $u \in C^\infty(\Sigma)$.

Differentiating under the integral, we see that 
$$
\frac{\partial}{\partial \overline{z}} u \circ \tau = \frac{\partial}{\partial \overline{z}} u_2 = v \circ \tau
$$
by  \cite[Thm 2.2]{Bell92}. Pulling back by $\rho$ gives $\overline{\partial} u = v d\overline{\rho}$ on $V$, and since $z_0$ was arbitrary, the equality holds on all of $\Sigma$.
\end{proof}

As a corollary, we can show that $\CC u$ extends smoothly to the boundary.
\begin{Proposition}
The Cauchy transform maps $C^\infty(\Gamma)$ into $\O(\Sigma)$.
\end{Proposition}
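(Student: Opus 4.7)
The plan is to reduce the claim to the preceding theorem on the boundary regularity of the solid Cauchy-type integral. Holomorphy of $\CC u$ on $\interior{\Sigma}$ is immediate from differentiating under the integral sign, since $t \mapsto \omega_t(s)$ is holomorphic in $t$ away from $s=t$ and $\Gamma$ is compactly contained away from any interior point. So the real content is to show that $\CC u$ extends smoothly up to $\Gamma$, and this is where I would invoke the previous theorem.

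First, I would pick any smooth extension $\tilde u \in C^\infty(\Sigma)$ of $u$; this exists by a standard partition-of-unity / collar argument. Next, using that $\rho$ is locally injective and hence $d\rho$ (and therefore $d\rhobar$) is a nowhere-vanishing smooth $(1,0)$-form (resp.\ $(0,1)$-form) on $\tilde \Sigma$, I can write $\overline{\partial}\tilde u = v\, d\rhobar$ for a unique $v \in C^\infty(\Sigma)$. Applying the Cauchy integral formula (the proposition quoted from Scheinberg) to $\tilde u$ on $U = \interior{\Sigma}$ gives, for $t \in \interior{\Sigma}$,
$$
\tilde u(t) = \frac{1}{2\pi i}\int_\Gamma \tilde u\, \omega_t - \frac{1}{2\pi i}\int_\Sigma \overline{\partial}\tilde u \wedge \omega_t = (\CC u)(t) + \frac{1}{2\pi i}\int_\Sigma v\, \omega_t \wedge d\rhobar,
$$
where in the last step I used $\overline{\partial}\tilde u \wedge \omega_t = v\, d\rhobar \wedge \omega_t = -v\, \omega_t \wedge d\rhobar$. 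Rearranging,
$$
(\CC u)(t) = \tilde u(t) - \frac{1}{2\pi i}\int_\Sigma v\, \omega_t \wedge d\rhobar.
$$

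Both summands on the right-hand side are smooth on all of $\Sigma$: the first by construction of $\tilde u$, and the second by the theorem just proved, which asserts that $t \mapsto \frac{1}{2\pi i}\int_\Sigma v\, \omega_t \wedge d\rhobar$ lies in $C^\infty(\Sigma)$. Hence $\CC u$ extends to a function in $C^\infty(\Sigma)$; combined with the interior holomorphy noted above, this gives $\CC u \in \O(\Sigma)$.

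The only step that is not essentially formal is writing $\overline{\partial}\tilde u = v\, d\rhobar$ globally, but this is harmless because $d\rhobar$ is nowhere zero on $\tilde\Sigma$, so division is legitimate and $v$ inherits the smoothness of $\overline{\partial}\tilde u$. Everything else is bookkeeping around the Cauchy--Pompeiu identity, with the genuinely analytic input — smooth dependence of the solid Cauchy integral on the boundary parameter $t$ — supplied by the preceding theorem.
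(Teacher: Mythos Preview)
Your proof is correct and follows essentially the same approach as the paper: extend $u$ to $\tilde u \in C^\infty(\Sigma)$, apply the Cauchy integral formula to express $\CC u$ as $\tilde u$ minus the solid integral $\frac{1}{2\pi i}\int_\Sigma v\,\omega_t \wedge d\rhobar$, and invoke the preceding theorem for smoothness of the latter. You are somewhat more explicit than the paper about the sign in $\overline{\partial}\tilde u \wedge \omega_t = -v\,\omega_t \wedge d\rhobar$ and about why $v$ exists globally, but the argument is the same.
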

\begin{proof}
Let $u \in C^\infty(\Gamma)$ and let $\tilde u$ be a function in $C^\infty(\Sigma)$ which is equal to $u$ on $\Gamma$.  The Cauchy integral formula says
$$
\tilde u(t) = (\CC u)(t) - \frac{1}{2\pi i} \int_{\Sigma} \overline{\partial}\tilde u \wedge  \omega_t.
$$
We can write $\overline{\partial}\tilde u = v d\overline{\rho}$ for some $v \in C^\infty(\Sigma)$, so by the preceding theorem, the integral term is in $C^\infty(\Sigma)$.  Hence $\CC u \in C^\infty(\Sigma)$ as well.
\end{proof}

By restriction, we can consider $\CC $ as a map from $C^\infty(\Gamma)$ into itself. The Cauchy integral formula says that $\CC $ is idempotent. 

We will need the following technical results, which are a generalization of \cite[Lem. 2.3 and Thm 3.4]{Bell92}.
\begin{Proposition}\label{propBell23}
Suppose that $v \in C^\infty(\Sigma)$.  Then there exists a function $\Phi \in C^\infty(\Sigma)$ which vanishes on $\Gamma$ and satisfies $\overline{\partial} \Phi|_\Gamma = \overline{\partial}v|_\Gamma$.
\end{Proposition}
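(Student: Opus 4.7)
The plan is to construct $\Phi$ in the factored form $\Phi = \rho h$, where $\rho$ is a real-valued smooth defining function for $\Gamma$ and $h$ is chosen so that the boundary $\overline\partial$ condition is satisfied. This mirrors Bell's argument in the planar case, with the one new ingredient being the existence of a smooth defining function for $\Gamma$ inside a Riemann surface containing $\Sigma$.

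First I would fix a real-valued $\rho \in C^\infty(\tilde\Sigma)$, where $\tilde\Sigma$ is an open Riemann surface containing $\Sigma$ (as produced by Corollary \ref{corEmbeddingInOpenSpinSurface}), satisfying $\rho = 0$ on $\Gamma$, $d\rho$ nonvanishing on $\Gamma$, and $\Sigma = \{\rho \le 0\}$ near $\Gamma$. Its existence follows from standard tubular neighborhood theory, since $\Gamma$ is a smooth codimension-one submanifold of $\tilde\Sigma$. Because $\rho$ is real, in any local holomorphic coordinate $z$ on $\tilde\Sigma$ we have $\partial_z\rho = \overline{\partial_{\bar z}\rho}$, so the condition $d\rho \ne 0$ on $\Gamma$ forces $\partial_{\bar z}\rho \ne 0$ on $\Gamma$. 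Equivalently, $\overline\partial \rho$ is a nowhere-vanishing smooth section of the line bundle $(T^{(0,1)}\Sigma)^*$ restricted to $\Gamma$.

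Given $v \in C^\infty(\Sigma)$, the restriction $\overline\partial v|_\Gamma$ is another smooth section of $(T^{(0,1)}\Sigma)^*|_\Gamma$, so it can be written uniquely as $\overline\partial v|_\Gamma = g\cdot\overline\partial\rho|_\Gamma$ for some $g \in C^\infty(\Gamma)$. I would then extend $g$ to a smooth function $h \in C^\infty(\Sigma)$ by the usual tubular-neighborhood extension, and set $\Phi := \rho h$. Then $\Phi \in C^\infty(\Sigma)$ vanishes on $\Gamma$ because $\rho$ does, and the Leibniz rule gives
$$
\overline\partial\Phi|_\Gamma = (\overline\partial\rho)\, h|_\Gamma + \rho|_\Gamma\,\overline\partial h|_\Gamma = g\cdot\overline\partial\rho|_\Gamma = \overline\partial v|_\Gamma,
$$
which is exactly the desired identity.

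There is no serious obstacle: the only nontrivial point is the non-vanishing of $\overline\partial\rho$ on $\Gamma$, which is what permits the division producing $g$, and this is automatic for any real defining function. In the planar case, taking $\tilde\Sigma$ to be a domain in $\C$ and $\rho$ a standard boundary-defining function recovers Bell's construction verbatim.
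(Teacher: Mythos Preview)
Your argument is correct. You directly globalize Bell's defining-function construction to the Riemann surface: the key observation that a real defining function $\rho$ has $\overline\partial\rho$ nonvanishing on $\Gamma$ lets you divide $\overline\partial v|_\Gamma$ by $\overline\partial\rho|_\Gamma$, and then $\Phi=\rho h$ works by the Leibniz rule.

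The paper takes a different but equally short route: it localizes to annular collar neighborhoods $U_j$ of each boundary component, identifies them holomorphically with planar annuli, invokes Bell's planar lemma there as a black box, and then patches the local solutions together with smooth cutoffs supported in the $U_j$ that are identically $1$ near $\Gamma$. The paper's approach avoids rebuilding the defining-function machinery on the surface but requires the cutoff-and-patch step; your approach is more self-contained and produces $\Phi$ globally in one stroke, at the cost of reproving Bell's mechanism rather than citing it. Neither has a real advantage over the other here since the statement is local near $\Gamma$ anyway.
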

\begin{proof}
We may choose annular neighborhoods $U_j$ in $\Sigma$ of each boundary component $j$, and holomorphically identify these with annuli in $\C$. Thus by the planar version of the proposition \cite[Lem. 2.3]{Bell92}, there exist smooth functions on each $U_j$ with the desired property. Since the conclusion only depends on a neighborhood of $\Gamma$, we can extend these functions to $\Sigma$ via smooth cutoff functions with support in the $U_j$ and which are identically 1 in a neighborhood of $\Gamma$.
\end{proof}

\begin{Proposition}\label{propBoundaryValFormula}
Suppose that $u \in C^\infty(\Gamma)$.  Then there is a $\Psi \in C^\infty(\Sigma)$ with $\overline{\partial}\Psi|_\Gamma = 0$ such that the boundary values of $\CC u$ are expressed by
$$
(\CC u)(t) = u(t) + \frac{1}{2\pi i} \int_{{\Sigma}} \overline{\partial}\Psi \wedge \omega_t,
$$
for all $t \in \Gamma$.  The 2-form $(\overline{\partial}\Psi \wedge \omega_t)(s)$ extends continuously to $(s,t) \in \Sigma \times \Gamma$.
\end{Proposition}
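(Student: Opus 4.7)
The plan is to construct $\Psi$ by taking any smooth extension of $u$ to $\Sigma$ and then subtracting off a correction that kills $\overline{\partial}$ along $\Gamma$; applying the Cauchy integral formula to $\Psi$ then yields the desired identity on the interior, and a local analysis near the boundary extends it to $\Gamma$.

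Concretely, first pick any smooth extension $\tilde u \in C^\infty(\Sigma)$ of $u$ (using a collar of $\Gamma$ in $\tilde\Sigma$ and a standard cutoff). Apply Proposition \ref{propBell23} to produce $\Phi \in C^\infty(\Sigma)$ with $\Phi|_\Gamma = 0$ and $\overline{\partial}\Phi|_\Gamma = \overline{\partial}\tilde u|_\Gamma$, and set $\Psi := \tilde u - \Phi$. Then $\Psi|_\Gamma = u$ and $\overline{\partial}\Psi|_\Gamma = 0$. For $t \in \interior{\Sigma}$, the Cauchy integral formula applied to $\Psi$ gives
$$
\Psi(t) = \frac{1}{2\pi i}\int_\Gamma \Psi \,\omega_t - \frac{1}{2\pi i}\int_\Sigma \overline{\partial}\Psi \wedge \omega_t = (\CC u)(t) - \frac{1}{2\pi i}\int_\Sigma \overline{\partial}\Psi \wedge \omega_t,
$$
so that $(\CC u)(t) = \Psi(t) + \frac{1}{2\pi i}\int_\Sigma \overline{\partial}\Psi \wedge \omega_t$. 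As $t \to t_0 \in \Gamma$, the left-hand side has a smooth limit by the preceding proposition, and $\Psi(t) \to u(t_0)$. The conclusion will follow if the integrand has the stated extension and dominated convergence allows passage to the limit.

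The crux is a local bound near the diagonal $\{s = t\} \subset \Sigma \times \Gamma$. Fix $t_0 \in \Gamma$ and a neighborhood $U$ of $t_0$ on which $\rho$ is injective and identifies $U$ with an open subset of $\{z : \mathrm{Im}\, z \ge 0\}$, with $\Gamma \cap U$ corresponding to the real axis. Writing $\overline{\partial}\Psi = g\,d\bar z$ in these coordinates, the vanishing $\overline{\partial}\Psi|_\Gamma = 0$ permits the factorization $g(z,\bar z) = (z-\bar z)h(z,\bar z)$ with $h$ smooth. Using $q(s,t) = 1/(\rho(s)-\rho(t)) + f(s,t)$ with $f$ holomorphic in a neighborhood of the diagonal, the integrand becomes
$$
\overline{\partial}\Psi \wedge \omega_t = \left(\frac{z-\bar z}{z-w} + (z-\bar z)f(s,t)\right) h(z,\bar z)\, d\bar z \wedge dz,
$$
where $w = \rho(t) \in \R$. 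The elementary estimate $|(z-\bar z)/(z-w)| \le 1$ for $\mathrm{Im}\,z \ge 0$ and $w \in \R$ gives a uniform bound on the coefficient throughout $U \times (\Gamma \cap U)$, so the integrand is uniformly bounded and continuous off the diagonal. Away from any such neighborhood of $\Gamma$, $\omega_t$ is smooth jointly in $(s,t)$. Combining a finite cover of $\Gamma$ by such neighborhoods with dominated convergence shows that $\int_\Sigma \overline{\partial}\Psi \wedge \omega_t$ extends continuously to $t \in \Gamma$, and taking the limit $t \to t_0$ in the interior identity yields the desired boundary formula.

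The main obstacle, and the reason the construction is subtle, is controlling the product of the first-order pole of $\omega_t$ and the factor $\overline{\partial}\Psi$ at the diagonal on $\Gamma$; this is exactly what the passage from $\tilde u$ to $\Psi$ is designed to achieve, via the vanishing $\overline{\partial}\Psi|_\Gamma = 0$ produced by Proposition \ref{propBell23}.
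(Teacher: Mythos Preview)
Your approach is essentially identical to the paper's: set $\Psi=\tilde u-\Phi$ using Proposition~\ref{propBell23}, apply the Cauchy integral formula to $\Psi$ on the interior, and pass to the boundary via a local analysis near the diagonal. If anything you are more explicit than the paper about the dominated convergence step.

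One small gap concerns the second assertion, that the 2-form itself extends \emph{continuously} to $\Sigma\times\Gamma$. Your estimate $|(z-\bar z)/(z-w)|\le 2$ (you wrote $\le 1$; the sharp constant is $2$) yields uniform boundedness, which is exactly what dominated convergence needs for the boundary formula. But boundedness alone does not give continuity at the diagonal: the ratio $(z-\bar z)/(z-w)$ has no limit as $(z,w)\to(w_0,w_0)$ with $w_0\in\R$, so a single factor of $(z-\bar z)$ in $g$ is insufficient. The paper's one-line justification (``$\partial_{\bar w}(\Psi\circ\tau)$ is smooth and vanishes on $\rho(\Gamma\cap V)$'') glosses over the same point. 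What actually rescues continuity is that Bell's Lemma~2.3 (invoked in the proof of Proposition~\ref{propBell23}) produces $\Phi$ with $\overline{\partial}\tilde u-\overline{\partial}\Phi$ vanishing to \emph{infinite} order on $\Gamma$, not merely to first order; with second-order vanishing one may factor $g=(z-\bar z)^2\tilde h$, and then $(z-\bar z)\cdot\frac{z-\bar z}{z-w}\cdot\tilde h\to 0$ on the diagonal. For the boundary formula itself your argument is complete, and the downstream use of this proposition (Fubini in Lemma~\ref{lemCauchyAdjoint}) only requires integrability, which your bound already gives.
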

\begin{proof}
Let $\tilde u$ be an element of $C^\infty(\Sigma)$ with boundary values $u$.  Let $\Phi \in C^\infty(\Sigma)$ be a function from Proposition \ref{propBell23} that vanishes on $\Gamma$ such that $\overline{\partial}\Phi|_\Gamma = \overline{\partial}\tilde u|_\Gamma$. 
Let 
$
\Psi = \tilde u - \Phi.
$
Applying the Cauchy integral formula to $\Psi$ yields
$$
\Psi(t) = (\CC u)(t) - \frac{1}{2\pi i} \int_{\Sigma} \overline{\partial}\Psi \wedge \omega_t.
$$
Since $\Psi = u$ on the boundary, we have established the desired boundary value formula for $\CC u$.

The 2-form $\overline{\partial}\Psi \wedge \omega_t $ is clearly continuous at all points of $\Sigma \times \Gamma$ not of the form $(t_0,t_0)$ with $t_0 \in \Gamma$. 
Fix a neighborhood $V$ of $t_0$ on which $\rho$ is injective and set $z = \rho(t)$ and $\tau = \rho|_V^{-1}$. We have
$$
\tau^*(\overline{\partial}\Psi \wedge \omega_t )(w) = \left(\frac{\partial_{\overline{w}}(\Psi \circ \tau)(w)}{w-z} + \mbox{smooth}\right)d\overline{w} \wedge dw
$$
for $(w,z) \in \rho(V) \times \rho(V)$ with $w \ne z$. Since $\partial_{\overline{w}}(\Psi \circ \tau)$ is smooth and vanishes on $\rho(\Gamma \cap V)$, the above expression defines a continuous function on $\rho(V) \times \rho(V \cap \Gamma)$.
Pulling back by $\rho$, we see that $(s, t) \mapsto (\overline{\partial}\Psi \wedge \omega_t)(s)$ extends continuously to $\Sigma \times \Gamma$.
\end{proof}

We will now define the Hilbert transform for $C^\infty(\Gamma)$, and relate it to the Cauchy transform. 
If $t_0 \in \Gamma$, let $V$ be a neighborhood of $t_0$ in $\Sigma$ on which $\rho$ is injective, and let
$$
\Gamma_\epsilon = (\Gamma \setminus V) \cup \{ t \in V : \abs{\rho(t) - \rho(t_0)} \ge \epsilon\}.
$$ 
Observe that for a different choice of $V$, the resulting sets $\Gamma_\epsilon$ coincide for sufficiently small $\epsilon$.
Define the Hilbert transform $\H u$ for $u \in C^\infty(\Gamma)$ by
$$
(\H u)(t_0) = \PV \frac{1}{2\pi i} \int_\Gamma u\omega_{t_0} := \lim_{\epsilon \downarrow 0} \frac{1}{2\pi i}\int_{\Gamma_\epsilon} u\omega_{t_0}.
$$

We will now establish the Plemelj formula relating the Cauchy and Hilbert transforms, as in \cite[\S 5]{Bell92}.
\begin{Lemma}\label{lemPlemelj} The limit defining $(\H u)(t_0)$ exists and
$$
(\CC u)(t_0) = \frac{1}{2} u(t_0) + (\H u)(t_0).
$$
\end{Lemma}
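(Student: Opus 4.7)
The plan is to reduce to the planar Plemelj formula via the boundary value representation from Proposition \ref{propBoundaryValFormula}, then exploit a Stokes / residue argument. Start from
$$
(\CC u)(t_0) = u(t_0) + \frac{1}{2\pi i}\int_\Sigma \overline{\partial}\Psi\wedge \omega_{t_0},
$$
where $\Psi\in C^\infty(\Sigma)$ satisfies $\Psi|_\Gamma = u$ and $\overline{\partial}\Psi|_\Gamma = 0$. Near $t_0$ pick a neighborhood $V\subset\tilde\Sigma$ on which $\rho$ is injective, transport everything to $\rho(V)\subset \C$, and set $z_0=\rho(t_0)$; then $\omega_{t_0}$ differs from $dz/(z-z_0)$ by a holomorphic $1$-form, and because $\overline{\partial}\Psi$ is smooth the integrand has an $|z-z_0|^{-1}$ singularity, which is integrable in two real dimensions. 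Thus the area integral converges and can be computed as a limit of integrals over $\Sigma_\epsilon := \Sigma\setminus \rho^{-1}(B_\epsilon(z_0))$.

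On $\Sigma_\epsilon$ the form $\omega_{t_0}$ is holomorphic, in particular $d\omega_{t_0}=0$, so $\overline{\partial}\Psi\wedge\omega_{t_0} = d(\Psi\,\omega_{t_0})$. Stokes' theorem gives
$$
\int_{\Sigma_\epsilon} \overline{\partial}\Psi\wedge\omega_{t_0} = \int_{\Gamma_\epsilon} u\,\omega_{t_0} + \int_{C_\epsilon} \Psi\,\omega_{t_0},
$$
where $C_\epsilon$ is the arc $\rho^{-1}(\{|z-z_0|=\epsilon\})\cap\Sigma$, oriented as the boundary of $\Sigma_\epsilon$ (i.e.\ clockwise around $z_0$). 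In the local coordinate this arc parametrizes as $z = z_0+\epsilon e^{i\theta}$, and since $\Gamma$ is smooth at $t_0$ the range of $\theta$ tends to a half-circle of angular measure $\pi$ traversed in the clockwise direction as $\epsilon\downarrow 0$. Combined with $\omega_{t_0} = dz/(z-z_0) + \text{smooth}$ and $\Psi(t_0)=u(t_0)$, this yields
$$
\lim_{\epsilon\downarrow 0}\int_{C_\epsilon}\Psi\,\omega_{t_0} = u(t_0)\int_\pi^0 i\,d\theta = -i\pi\,u(t_0).
$$

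Since the left-hand side of the Stokes identity has a limit (namely $\int_\Sigma\overline\partial\Psi\wedge\omega_{t_0}$) and the $C_\epsilon$-piece converges, the $\Gamma_\epsilon$-piece also converges, which proves the existence of $(\H u)(t_0)$ and gives
$$
\int_\Sigma \overline{\partial}\Psi\wedge\omega_{t_0} = 2\pi i\,(\H u)(t_0) - i\pi\,u(t_0).
$$
Substituting back into the boundary value formula yields $(\CC u)(t_0) = \tfrac12 u(t_0) + (\H u)(t_0)$, as required. The only delicate point is the precise orientation and angular measure of $C_\epsilon$ — that is, verifying that the smoothness of $\Gamma$ at $t_0$ makes the limiting arc a genuine half-circle in the local coordinate — and that is the step I expect to require the most care, handled by Taylor-expanding the parametrization of $\Gamma$ around $t_0$ in the chart $\rho$.
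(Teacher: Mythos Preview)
Your argument is correct, but it takes a different route from the paper's proof. The paper does not invoke Proposition~\ref{propBoundaryValFormula} at all; instead it splits into two cases. For constant $u$, the form $u\,\omega_{t_0}$ is closed on $\Sigma_\epsilon$, so Stokes gives $\int_{\Gamma_\epsilon} u\,\omega_{t_0} = -\int_{C_\epsilon} u\,\omega_{t_0}$, and the same half-residue computation you carry out yields $\tfrac12 u(t_0)$. For general $u$, one subtracts the constant $u(t_0)$ and observes that when $u(t_0)=0$ the integrand $u\,\omega_{t_0}$ is continuous on $\Gamma$, so both $(\H u)(t_0)$ and $(\CC u)(t_0)$ are given by the same ordinary integral.

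Your approach handles everything in one pass by applying Stokes to $\Psi\,\omega_{t_0}$ with $\Psi$ the smooth extension from Proposition~\ref{propBoundaryValFormula}, so you never need to split cases or argue separately that the boundary value of $\CC u$ agrees with the ordinary integral when $u(t_0)=0$. The price is that you lean on Proposition~\ref{propBoundaryValFormula}, whereas the paper's proof is self-contained. One small remark: you note that the integrand in the area integral has an $|z-z_0|^{-1}$ singularity, but in fact Proposition~\ref{propBoundaryValFormula} already tells you it extends continuously to $\Sigma\times\Gamma$ (since $\overline{\partial}\Psi|_\Gamma = 0$), so integrability is automatic and you could simplify that step.
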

\begin{proof}
We first prove the theorem in the case where $u$ is a constant function.  Let 
$$
C_\epsilon = \{t \in V : \abs{\rho(t) - \rho(t_0)} = \epsilon\},
$$
oriented so that $\Gamma_\epsilon \cup C_\epsilon$ is an oriented curve for sufficiently small $\epsilon$ (i.e. so that $C_\epsilon$ is oriented negatively around $t_0$). 
We give $\rho(C_\epsilon)$ the opposite of the orientation coming from $C_\epsilon$, so that it is oriented counterclockwise about $\rho(t_0)$. 
Let $\tau = \rho|_V^{-1}$. Using the holomorphicity of $u(s)\omega_{t_0}(s)$ away from $s=t_0$ and the fractional residue formula, we compute
\begin{align*}
\lim_{\epsilon \downarrow 0} \frac{1}{2\pi i}\int_{\Gamma_\epsilon} u(s)\omega_{t_0}(s) &=
\lim_{\epsilon \downarrow 0} \frac{1}{2\pi i}\int_{\Gamma_\epsilon} u(s) q(s,t_0) d\rho(s)\\
&=-\lim_{\epsilon \downarrow 0} \frac{1}{2\pi i}\int_{C_\epsilon} u(s) q(s,t_0) d\rho(s)\\
&= \lim_{\epsilon \downarrow 0} \frac{1}{2\pi i}\int_{\rho(C_\epsilon)} \frac{u(\tau(w))}{w-\rho(t_0)} dw\\
&= \frac{1}{2} u(t_0).
\end{align*}

We now return to arbitrary $u \in C^\infty(\Gamma)$, but we assume without loss of generality that $u(t_0) = 0$.  
Hence the integrand in the Hilbert and Cauchy transforms $u\omega_{t_0}$ is continuous at $t_0$, and thus on $\Sigma$.  
In this case $(\H u)(t_0)$ is given by the ordinary integral
$$
(\H u)(t_0) = \frac{1}{2\pi i} \int_\Gamma u\omega_{t_0},
$$
and the same for $(\CC u)(t_0)$.
\end{proof}

\subsubsection{Adjoint of the Cauchy transform}

Define a bilinear form $[\cdot, \cdot]$ on $C^\infty(\Gamma)$ by 
$$
[u,v] = \frac{1}{2\pi i}\int_\Gamma uv d\rho.
$$

\begin{Lemma}\label{lemCauchyAdjoint}
For $u,v \in C^\infty(\Gamma)$, we have $[\CC u, v] = [u, (\Id - \CC)v]$.
\end{Lemma}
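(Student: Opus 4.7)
The plan is to reduce the claim to the skew-symmetry of the Hilbert transform $\mathcal{H}$ with respect to the pairing $[\cdot,\cdot]$ and then exploit the antisymmetry $q(s,t) = -q(t,s)$ of the Cauchy kernel. Applying the Plemelj formula (Lemma \ref{lemPlemelj}) to both sides gives
\begin{align*}
[\mathcal{C}u, v] &= \tfrac{1}{2}[u,v] + [\mathcal{H}u, v], \\
[u, (\operatorname{Id}-\mathcal{C})v] &= \tfrac{1}{2}[u,v] - [u, \mathcal{H}v],
\end{align*}
so the statement is equivalent to $[\mathcal{H}u, v] = -[u, \mathcal{H}v]$. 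I would write this out as
$$
(2\pi i)^2[\mathcal{H}u, v] = \int_\Gamma v(t)\,\mathbf{P.V.}\!\int_\Gamma u(s)\, q(s,t)\, d\rho(s)\, d\rho(t),
$$
and similarly for $[u,\mathcal{H}v]$. Using $q(t,s) = -q(s,t)$, the desired skew-symmetry will follow once I justify that both iterated principal value integrals equal the ``joint'' limit
$$
I := \lim_{\epsilon \downarrow 0} \iint_{\{(s,t)\in\Gamma\times\Gamma : d(s,t)\ge\epsilon\}} u(s)\, v(t)\, q(s,t)\, d\rho(s)\, d\rho(t),
$$
where $d$ measures distance in local $\rho$-coordinates near the diagonal (away from the diagonal no excision is needed, since $q$ is smooth there).

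The main step is therefore to prove this Fubini-type equality. Away from the diagonal, the integrand $u(s)v(t)q(s,t)\,d\rho(s)\,d\rho(t)$ is a smooth top-form, so ordinary Fubini applies. Near a diagonal point $t_0$, I would pick a neighborhood $V$ on which $\rho$ is injective and decompose $q(s,t) = (\rho(s)-\rho(t))^{-1} + r(s,t)$ with $r$ holomorphic on $V \times V$. The contribution of $r$ is a smooth integral, and the contribution of the singular part reduces, via $\rho_*$, to the planar Cauchy kernel on a smooth arc in $\mathbb{C}$, for which the iterated-equals-joint principal value identity is classical (cf.\ \cite[\S 5]{Bell92}). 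A compactness argument using a finite cover of $\Gamma$ by such neighborhoods $V$ then shows that both iterated integrals coincide with $I$.

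The main obstacle is strictly the Fubini interchange at the diagonal, since the excised regions $\Gamma_\epsilon$ appearing in $\mathcal{H}u(t)$ depend on the outer variable $t$. The key observation that makes this go through is that, in any local chart where $\rho$ is injective, the excision $\{s \in V : |\rho(s)-\rho(t)| \ge \epsilon\}$ matches up precisely with the symmetric joint excision $\{(s,t) : |\rho(s)-\rho(t)|\ge\epsilon\}$, so Fubini on the off-diagonal region plus the classical planar PV identity supply the remaining ingredient. Once $[\mathcal{H}u,v] = -[u,\mathcal{H}v]$ is in hand, the identity $[\mathcal{C}u,v] = [u,(\operatorname{Id}-\mathcal{C})v]$ follows by combining the two Plemelj expressions above.
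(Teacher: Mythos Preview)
Your reduction via Plemelj to the skew-symmetry $[\mathcal{H}u,v] = -[u,\mathcal{H}v]$ is correct, and the joint-versus-iterated principal value argument you sketch does go through (the key unstated fact is that for smooth $u$ the truncated integrals $\int_{\Gamma_\epsilon} u\,\omega_t$ converge to $(\mathcal{H}u)(t)$ \emph{uniformly} in $t$, which follows from a first-order Taylor expansion of $u$ along the curve; with uniform convergence, passing the limit through the outer integral is immediate). One small quibble: the reference to \cite[\S 5]{Bell92} for the planar PV-Fubini identity is not quite apt---Bell's own proof of the adjoint formula proceeds differently---though the fact itself is classical and can be found in standard singular-integral references.

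The paper takes a different route. Rather than staying on the boundary and handling principal values directly, it invokes the boundary value formula (Proposition~\ref{propBoundaryValFormula}) to write $\CC u = u + I$, where $I(t)$ is a \emph{surface} integral $\frac{1}{2\pi i}\int_\Sigma \overline{\partial}\Psi \wedge \omega_t$ with continuous integrand. Ordinary Fubini then applies to $\int_\Gamma I(t)v(t)\,d\rho(t)$; the antisymmetry $q(s,t)=-q(t,s)$ turns the inner boundary integral into $\CC v$; and Stokes' theorem converts the resulting surface integral back into $-\int_\Gamma u\,(\CC v)\,d\rho$. Your approach is structurally cleaner and lives entirely on $\Gamma$, but buys this by importing a uniform-convergence lemma for singular integrals. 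The paper's approach trades that analytic input for the machinery of Proposition~\ref{propBoundaryValFormula} and Stokes, making the interchange of integration completely elementary.
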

\begin{proof}
By Proposition \ref{propBoundaryValFormula}, for $t \in \Gamma$ we have $\CC  u = u + I$ where
$$
I(t) =\frac{1}{2\pi i} \int_{{\Sigma}} \overline{\partial}\Psi \wedge \omega_{t}
$$
and $\Psi$ is as in Proposition \ref{propBoundaryValFormula}.
By Proposition \ref{propBoundaryValFormula}, the integrand in the definition of $I$ is continuous, and so we may apply Fubini's theorem to compute
\begin{align}
\int_\Gamma I(t) v(t) d\rho(t) &= \int_\Gamma \left(\frac{1}{2\pi i} \int_{{\Sigma}} \overline{\partial}\Psi(s) \wedge \omega_{t}(s) \right) v(t) d\rho(t)\nonumber\\
&= \int_{{\Sigma}} \left(\frac{1}{2\pi i} \int_\Gamma -q(s,t)v(t) d\rho(t) \right) d\rho(s) \wedge \overline{\partial}\Psi(s)\nonumber\\
&= \int_{{\Sigma}} \left(\frac{1}{2\pi i} \int_\Gamma q(t,s)v(t) d\rho(t) \right) d\rho(s) \wedge \overline{\partial}\Psi(s)\nonumber\\
&=\int_{{\Sigma}}  (\CC v)(s) \, d\rho(s) \wedge \overline{\partial}\Psi(s) \label{eqnIntAgainstI}.
\end{align}
Recall that $\Psi|_\Gamma = u|_\Gamma$. Since $\rho$ and $\CC v$ are holomorphic, 
\begin{equation*}
d(\Psi(\CC v)  d\rho) = - (\CC v) \, d\rho \wedge \overline{\partial}\Psi
\end{equation*}
and we may apply Stokes' theorem to obtain
\begin{align}\label{eqnIntAgainstICont}
\int_{{\Sigma}}  (\CC v) d\rho \wedge \overline{\partial}\Psi &= -\int_\Gamma \Psi (\CC v) d\rho 
= -\int_\Gamma u(\CC v) d\rho.
\end{align}
Combining \eqref{eqnIntAgainstI} and \eqref{eqnIntAgainstICont}, we get
$$
\int_\Gamma I(t) v(t) d\rho(t) = -\int_\Gamma u(\CC v) d\rho.
$$
Hence
$$
\int_\Gamma (\CC u) v d\rho = \int_\Gamma (u + I) v d\rho = \int_\Gamma u(v - \CC v) d\rho,
$$
which was to be shown.
\end{proof}

Let $\gamma: \bigsqcup_{j \in \pi_0(\Gamma)} S^1 \to \Gamma$ be family of diffeomorphisms.
Let $\Gamma^0$ be the subset of the boundary consisting of boundary components on which $\gamma$ is orientation reversing, and $\Gamma^1$ be the complement, on which $\gamma$ is orientation preserving. 

Let $H_\Gamma = \bigoplus_{j \in \pi_0(\Gamma)} L^2(S^1)$, and let $W_\Gamma = C^\infty(\sqcup_{j \in \pi_0(\Gamma)} S^1)  \subset H_\Gamma$. Define the Hardy space 
$$
H^2(\Sigma, \gamma) = \cl \{ \gamma^*F : F \in \O(\Sigma) \} \subseteq H_\Gamma.
$$

Using the parameterization $\gamma: \bigsqcup S^1 \to \Gamma$, we may identify $C^\infty(\Gamma)$ with $W_\Gamma$.
Thus the Cauchy transform $\CC \in \End(C^\infty(\Gamma))$ induces a linear map $C \in \End(W_\Gamma)$ by
$$
Cu = \gamma^*\CC(u \circ \gamma^{-1}).
$$

Let $r \in W_\Gamma$ be given by $\gamma^*d\rho = rdz$.
Define the formal adjoint $C^* \in \End(W_\Gamma)$ by 
\begin{equation}\label{eqnCauchyFormalAdjoint}
(C^* v)(z) := v(z) - \overline{\pm z r(z) C(M_{\pm \overline{z}} r^{-1}\overline{v})(z)},
\end{equation}
where $M_{\pm \overline{z}}$ is the operator on $H_\Gamma$ given by multiplication by the function $\overline{z}$ on direct summands indexed by $j \in \pi_0(\Gamma^1)$, and multiplication by $-\overline{z}$ on the complement. 
We think of $C$ and $C^*$ as unbounded operators on $H_\Gamma$ (although the adjoint of $C$ will turn out to actually be an extension of $C^*$, since we will see that $C$ is bounded).

\begin{Proposition}\label{propFormalAdjoint}
Let $u, v \in W_\Gamma \subset H_\Gamma$. 
Then
$
\ip{C u, v} = \ip{u, C^* v}.
$
\end{Proposition}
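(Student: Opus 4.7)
The plan is to deduce Proposition \ref{propFormalAdjoint} from Lemma \ref{lemCauchyAdjoint} by converting the bilinear-form adjoint identity of that lemma into a Hilbert-space adjoint identity via an explicit multiplication operator. The only substantive mathematical input is Lemma \ref{lemCauchyAdjoint}; the rest is bookkeeping.

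First, I would transport Lemma \ref{lemCauchyAdjoint} to $W_\Gamma$ using the parametrization $\gamma$. Writing $\gamma^* d\rho = r\,dz$ and recalling that $\gamma_j$ is orientation-reversing precisely when $j \in \pi_0(\Gamma^0)$, the pullback of the bilinear form $[\cdot,\cdot]$ to $W_\Gamma$ is
$$
[u,h]_{W_\Gamma} := \frac{1}{2\pi i}\int_{\bigsqcup S^1} uh\cdot M_\pm r\,dz,
$$
and the lemma becomes $[Cu,h]_{W_\Gamma} = [u,h]_{W_\Gamma} - [u,Ch]_{W_\Gamma}$ for all $u,h \in W_\Gamma$. (The function $r$ is nowhere zero on $\bigsqcup S^1$ because $\rho$ is locally injective on $\tilde\Sigma$, so $r^{-1} \in W_\Gamma$.)

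Next, a direct computation using $\bar z\,dz = dz/z = i\,d\theta$ on $S^1$ gives the conversion identity
$$
\ip{u,v} \;=\; [u,\,M_{\pm\bar z}\, r^{-1} \bar v]_{W_\Gamma}, \qquad u,v \in W_\Gamma,
$$
and, read in the reverse direction, $[u,\phi]_{W_\Gamma} = \ip{u,\,M_{\pm\bar z}\,\bar r\,\bar\phi}$ for all $\phi \in W_\Gamma$. Setting $h := M_{\pm\bar z}\, r^{-1}\bar v$ and chaining these identities yields
\begin{align*}
\ip{Cu,v} \;=\; [Cu,h]_{W_\Gamma} \;=\; [u,h]_{W_\Gamma} - [u,Ch]_{W_\Gamma}
\;=\; \ip{u,v} - \ip{u,\,M_{\pm\bar z}\,\bar r\,\overline{Ch}} \;=\; \ip{u,\,C^*v},
\end{align*}
with $C^*v = v - M_{\pm\bar z}\,\bar r\,\overline{C(M_{\pm\bar z}\, r^{-1}\bar v)}$. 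Since $M_\pm$ is real, the identity $M_{\pm\bar z}\,\bar r\,\overline{(\,\cdot\,)} = \overline{M_{\pm z}\, r\,(\,\cdot\,)}$ shows this matches the formula \eqref{eqnCauchyFormalAdjoint}. The only obstacle is the sign-and-conjugate bookkeeping, which is clean once one records that $\bar z = z^{-1}$ on $S^1$ and that $\epsilon_j := (M_\pm)_j$ is real.
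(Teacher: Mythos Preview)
Your proof is correct and is essentially the same as the paper's: both reduce the statement to Lemma \ref{lemCauchyAdjoint} by establishing the conversion identity $\ip{u,\,M_{\pm\bar z}\,\overline{rv}} = [\tilde u,\tilde v]$ between the Hilbert-space inner product on $H_\Gamma$ and the bilinear form on $C^\infty(\Gamma)$, and then reading off the formula for $C^*$. Your write-up is slightly more explicit about the inverse substitution and the match with \eqref{eqnCauchyFormalAdjoint}, but the argument is the same.
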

\begin{proof}
Let $\tilde u,\tilde v \in C^\infty(\Gamma)$ be given by $\tilde u = u \circ \gamma^{-1}$ and $\tilde v = v \circ \gamma^{-1}$. 
Then we have
\begin{align*}
\ip{u, M_{\pm \overline{z}}\overline{r v}} &= \frac{1}{2\pi i}\int_{\bigsqcup S^1} \pm u(z) v(z) r(z) dz\\
&=\frac{1}{2\pi i} \int_\Gamma \tilde u(t) \tilde v(t) d\rho(t)\\
&= [\tilde u, \tilde v].
\end{align*}
By Lemma \ref{lemCauchyAdjoint}, $[\CC\tilde u, v] = [\tilde u, (\Id - \CC) \tilde v]$.
Hence
$$
\ip{C u, M_{\pm \overline{z}}\overline{r v}} = \ip{u,M_{\pm \overline{z}}\overline{r (\Id -C)v }},
$$
which was to be shown.
\end{proof}

We now establish the Kerzman-Stein formula
\begin{equation}\label{eqKS}
q_\Sigma(\Id  + A) = C
\end{equation}
where $A = C - C^*$ and $q_\Sigma$ is the orthogonal projection of $H_\Gamma$ onto $H^2(\Sigma,\gamma)$. 
For now, we regard \eqref{eqKS} as an identity of endomorphisms of $W_\Gamma$.
Soon, however, we will show that $A$ is trace class, and thus $C$ extends to a bounded operator on $H_\Gamma$, and \eqref{eqKS} gives an equality of operators on $H_\Gamma$.

\begin{Proposition}\label{propKS}
If $u \in W_\Gamma$, then
$$
q_\Sigma(I + A)u = C u.
$$
\end{Proposition}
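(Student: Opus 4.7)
The plan is to reduce the identity $q_\Sigma(\Id + A) = C$ to a single orthogonality statement and then verify it using the formal adjoint relation established in Proposition \ref{propFormalAdjoint}.

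First, expand $q_\Sigma(\Id + A)u = q_\Sigma u + q_\Sigma C u - q_\Sigma C^* u$. The key observation is that $Cu \in H^2(\Sigma,\gamma)$: by definition $Cu = \gamma^*\CC(u \circ \gamma^{-1})$, and we have already shown that $\CC$ sends $C^\infty(\Gamma)$ into $\O(\Sigma)$, so $Cu$ is of the form $\gamma^* F$ with $F \in \O(\Sigma)$. Hence $q_\Sigma Cu = Cu$, and the desired identity reduces to showing $q_\Sigma(\Id - C^*)u = 0$, i.e., $(\Id - C^*)u \perp H^2(\Sigma,\gamma)$.

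To verify this orthogonality, note that $(\Id - C^*)u \in W_\Gamma \subset H_\Gamma$, so the functional $v \mapsto \ip{(\Id - C^*)u, v}$ is continuous on $H_\Gamma$. By the definition of $H^2(\Sigma,\gamma)$ as the closure of $\{\gamma^* F : F \in \O(\Sigma)\}$, it suffices to test against $v$ of this form. For any such $v$, the Cauchy integral formula applied to the holomorphic function $F$ yields $\CC(v\circ\gamma^{-1}) = F$, hence $Cv = v$. Swapping the roles of $u$ and $v$ in Proposition \ref{propFormalAdjoint} and taking complex conjugates gives the dual relation $\ip{C^* u, v} = \ip{u, Cv}$, which combined with $Cv = v$ yields $\ip{C^* u, v} = \ip{u, v}$. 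Therefore $\ip{(\Id - C^*)u, v} = 0$, as required.

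The whole argument is a short formal manipulation once the two underlying ingredients—Proposition \ref{propFormalAdjoint} and the fact that $\CC$ sends smooth boundary functions to elements of $\O(\Sigma)$—are in place. I do not anticipate a serious obstacle here; the analytic content has already been absorbed into the earlier development of the Cauchy transform on $\Sigma$.
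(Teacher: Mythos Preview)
Your proof is correct and follows essentially the same approach as the paper: both reduce the identity to showing $q_\Sigma(\Id - C^*)u = 0$ by testing $(\Id - C^*)u$ against smooth elements $v$ of $H^2(\Sigma,\gamma)$, using $Cv = v$ together with the formal adjoint relation from Proposition~\ref{propFormalAdjoint}, and then invoking density. Your write-up is slightly more explicit in justifying why $Cu \in H^2(\Sigma,\gamma)$ and how the dual relation $\ip{C^*u,v} = \ip{u,Cv}$ follows from Proposition~\ref{propFormalAdjoint}, but the argument is the same.
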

\begin{proof}
For $v \in H^2(\Sigma,\gamma)$ we have
\begin{align*}
\ip{(\Id - C^*)u,v} &= \ip{u,v} - \ip{C^* u, v} = \ip{u,v} - \ip{u, C v} = 0.
\end{align*}
Thus $(\Id-C^*)u$ is orthogonal to any smooth function in $H^2(\Sigma,\gamma)$.  By construction, such functions are dense in $H^2(\Sigma,\gamma)$ so we have $q_\Sigma(\Id-C^*)u = 0$. We now have
$$
q_\Sigma(\Id + A)u = q_\Sigma C u = C u.
$$
\end{proof}

Our proof that $A$ is an integral operator with smooth kernel follows \cite[Ch. 4-5]{Bell92}.  
\begin{Theorem}\label{thmKSSmooth}
For $u \in W_\Gamma$, the operator $A = C - C^*$ is given by the formula
$$
(A u)(z) = \frac{1}{2\pi } \int_{\bigsqcup_{j \in\pi_0(\Gamma)} S^1} a(w,z)u(w) \frac{dw}{iw}
$$
for a smooth function $a: \bigsqcup S^1 \times \bigsqcup S^1 \to \C$. In particular, $A$ is trace class.
\end{Theorem}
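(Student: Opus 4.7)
The plan is to write $A = C - C^*$ explicitly as a principal value singular integral operator, show that the singular parts of $C$ and $C^*$ cancel across the diagonal $\{\zeta=z\}$ to yield a smooth kernel, and then invoke the standard trace class criterion for integral operators on a compact manifold.

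First, I would unfold $C$ and $C^*$ using the Plemelj formula. Lemma~\ref{lemPlemelj} gives $(Cu)(z) = \tfrac{1}{2}u(z) + (Hu)(z)$ where $H$ is the principal value operator with kernel $k_C(\zeta,z) = \zeta r(\zeta) q(\gamma(\zeta),\gamma(z))$. Starting from the definition \eqref{eqnCauchyFormalAdjoint} of $C^*$, setting $w := M_{\pm\overline z} r^{-1}\overline v$, applying Plemelj to $Cw$, and conjugating, one computes $(C^*v)(z) = \tfrac{1}{2}v(z) - \overline{\epsilon(z) z r(z) (Hw)(z)}$. The $\tfrac{1}{2}$ identity contributions cancel in $A$, leaving a principal value integral with kernel
$$
a(\zeta, z) = \zeta r(\zeta)\, q(\gamma(\zeta),\gamma(z)) + \epsilon(z)\epsilon(\zeta)\,\overline{z r(z)\, q(\gamma(\zeta),\gamma(z))},
$$
with no sign issues from conjugating $\tfrac{d\zeta}{i\zeta} = d\theta$, which is real.

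The main obstacle is smoothness of $a$ across the diagonal. Away from $\{\zeta=z\}$ (in particular, between different boundary components) $q$ is holomorphic in its arguments, so $a$ is manifestly smooth. For $\zeta \to z$ on the same component, each summand has a simple pole and one must show the singularities cancel. I would localize: in a neighborhood $U \subset \tilde\Sigma$ of a boundary point on which $\rho$ is injective, write $q(s,t) = (\rho(s)-\rho(t))^{-1} + h(s,t)$ with $h$ holomorphic, as in the construction of $q$. The $h$-contribution to $a$ is smooth, so the problem reduces to the model kernel
$$
\frac{\zeta\tilde\rho'(\zeta)}{\tilde\rho(\zeta) - \tilde\rho(z)} + \frac{\overline{z\tilde\rho'(z)}}{\overline{\tilde\rho(\zeta)-\tilde\rho(z)}}, \qquad \tilde\rho = \rho\circ\gamma,
$$
where $\epsilon(z)\epsilon(\zeta) = 1$ on a fixed component has been used. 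Writing $\tilde\rho(\zeta) - \tilde\rho(z) = (\zeta-z)g(\zeta,z)$ with $g$ smooth and $g(z,z) = \tilde\rho'(z) \ne 0$, and applying $\overline{\zeta - z} = -(\zeta - z)/(\zeta z)$ together with $z\overline z = 1$ on $S^1$, this model kernel becomes $\zeta(\zeta - z)^{-1}$ times the bracket $\tilde\rho'(\zeta)/g(\zeta,z) - \overline{\tilde\rho'(z)/g(\zeta,z)}$, which is smooth and vanishes on the diagonal. By Hadamard's lemma the quotient extends smoothly across $\{\zeta=z\}$. This is precisely the classical Kerzman-Stein cancellation in the planar chart furnished by $\rho|_U$, so one may alternatively quote the planar statement from \cite[Ch.~4--5]{Bell92} after identifying the local picture.

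Trace class then follows from a standard fact: an integral operator on the compact smooth manifold $\bigsqcup S^1 \times \bigsqcup S^1$ with smooth kernel is trace class, for instance by factoring $a = a_1 a_2$ with smooth $a_i$ to exhibit $A$ as the composition of two Hilbert--Schmidt operators, or by observing that the double Fourier coefficients of $a$ decay faster than any polynomial.
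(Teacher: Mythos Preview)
Your proposal is correct and follows essentially the same route as the paper: apply the Plemelj formula to both $C$ and $C^*$ so that the $\tfrac12$-identity terms cancel, identify the resulting principal-value kernel, and show that on a single boundary component the two singular summands combine to $\zeta(\zeta-z)^{-1}$ times a smooth function vanishing on the diagonal, whence the kernel is smooth by Hadamard's lemma. The paper carries out the identical computation (with $w$ in place of $\zeta$, and simplifying to a single outgoing component so that the $\epsilon$-signs drop out), arriving at the same factored expression \eqref{eqnKSKernelexpansion}; your explicit justification of the trace class conclusion via a Hilbert--Schmidt factorization is a reasonable addition, as the paper leaves this implicit.
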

\begin{proof}
Recall that for $u \in W_\Gamma$, the formal adjoint $C^*$ is given by the formula
$$
(C^*u)(z) = u(z) - \overline{\pm z r(z) C(M_{\pm \overline{z}} r^{-1}\overline{u})(z)},
$$
where $r(z) dz = \gamma^* d\rho$.

By definition,
$$
(Cu) = \CC(u \circ \gamma^{-1}) \circ \gamma.
$$
Thus we can apply Lemma \ref{lemPlemelj} to get
$$
(Au)  = \H(u \circ \gamma^{-1}) \circ \gamma + \overline{M_{\pm \overline{z}}r (\H(v \circ \gamma^{-1}) \circ \gamma)}
$$
where $v = M_{\pm \overline{z}} r^{-1} \overline{u}$. That is, for $z \in \bigsqcup_{j \in \pi_0(\Gamma)} S^1$ we have
\begin{align*}
(A u)(z) &= \frac{1}{2\pi i} \PV \int_\Gamma u(\gamma^{-1}(s))q(s,\gamma(z))d\rho(s) \\
&\quad - \overline{\frac{1}{2\pi i} \pm z r(z) \PV \int_\Gamma \pm \overline{\gamma^{-1}(s)}r(\gamma^{-1}(s))^{-1}\overline{u(\gamma^{-1}(s))} q(s,\gamma(z)) d\rho(s)}
\end{align*}
where the two $\pm$ are determined by whether the boundary near $s$ and $\gamma(z)$ is incoming or outgoing. It is clear that the kernel of $A$ is smooth in any neighborhood of $(s,t)$ when $s$ and $t$ lie on distinct components of $\Gamma$. Thus in order to simplify notation, we will assume that $\Gamma$ has a single outgoing connected component, and the general case is no different. When restricting to $s$ and $t$ on the same connected component, the signs $\pm$ cancel. 

Pulling the integral back to $S^1$, we get
\begin{align*}
(A u)(z) &= \frac{1}{2\pi} \PV \int_{S^1} w u(w)q(\gamma(w),\gamma(z))r(w)\frac{dw}{i w}\\
&\quad + \overline{\frac{1}{2\pi} z r(z) \PV \int_{S^1}  \overline{u(w)} q(\gamma(w),\gamma(z)) \frac{dw}{iw}}\\
&= \frac{1}{2\pi} \PV \int_{S^1} w r(w) q(\gamma(w),\gamma(z))u(w)\frac{dw}{iw} \\
&\quad + \frac{1}{2\pi}  \PV \int_{S^1}  \overline{z r(z)q(\gamma(w),\gamma(z))} u(w)  \frac{dw}{iw}\\
&= \frac{1}{2\pi} \PV \int_{S^1} a(w,z) u(w) \frac{dw}{iw}
\end{align*}
where
$$
a(w,z) = wr(w)q(\gamma(w),\gamma(z)) + \overline{zr(z)q(\gamma(w),\gamma(z))}.
$$
Clearly $a$ is smooth away from $w=z$, so we fix $z$ and consider when $w-z$ is small. In this scenario, we may write 
\begin{align}
a(w,z) &= \frac{wr(w)}{\rho(\gamma(w)) - \rho(\gamma(z))} + \frac{\overline{zr(z)}}{\overline{\rho(\gamma(w))} - \overline{\rho(\gamma(z))}} + \mbox{ smooth} \nonumber\\
&= \frac{w}{w-z} \left(\frac{r(w)(w-z)}{\rho(\gamma(w)) - \rho(\gamma(z))} - \frac{\overline{r(z)(w-z)}}{\overline{\rho(\gamma(w))} - \overline{\rho(\gamma(z))}}\right) + \mbox{ smooth}. \label{eqnKSKernelexpansion}
\end{align}
Since $\tfrac{\partial}{\partial w} \,\rho \circ \gamma = r$, we have that $(w-z)a(w,z)$ is a smooth function vanishing on the diagonal $(z,z)$. Hence $a(w,z)$ is itself a smooth function.
\end{proof}

\begin{Theorem}\label{thmWPerp}
Let $H^2(\Sigma, \gamma) \subset H_\Gamma$ be the Hardy space, and let $q_\Sigma \in \P(H_\Gamma)$ be the projection onto $H^2(\Sigma, \gamma)$. 
Then the Cauchy transform $C$ extends to a bounded operator on $H_\Gamma$ and $q_\Sigma - C$ is trace class.  
We have $H^2(\Sigma,\gamma)^\perp = \overline{r} M_{\pm \overline{z}} \overline{H^2(\Sigma,\gamma)}$, where $r$ satisfies $\gamma^*R = rdz$ for any non-vanishing holomorphic 1-form $R$. 
In particular, one may take $R = d\rho$. 
\end{Theorem}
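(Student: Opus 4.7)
The plan has two parts, both leveraging the Kerzman-Stein formula and the trace class estimate already established. First I would show the boundedness and trace class claims. By Proposition \ref{propKS}, $C = q_\Sigma(\Id + A)$ on the dense subspace $W_\Gamma$; since $A$ is trace class (hence bounded) by Theorem \ref{thmKSSmooth} and $q_\Sigma$ is bounded, the right-hand side extends to a bounded operator on $H_\Gamma$, which is therefore the unique bounded extension of $C$. Then $C - q_\Sigma = q_\Sigma A$ is trace class as a product of a bounded operator with a trace class operator.

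For the orthogonal complement formula, I would rewrite the formal adjoint formula \eqref{eqnCauchyFormalAdjoint} as
\[
(\Id - C^*)v = \tilde S\bigl(C(Sv)\bigr),
\]
where $Sv := M_{\pm \overline{z}}\, r^{-1}\overline{v}$ and $\tilde S u := \overline{r}\, M_{\pm \overline{z}}\,\overline{u}$ are bounded antilinear operators (bounded because $r$ and $r^{-1}$ are smooth on the compact boundary, and $|{\pm}\overline{z}| = 1$). A direct computation shows that $S^2$ and $\tilde S^2$ are multiplication by the positive bounded functions $|r|^{-2}$ and $|r|^2$, so $S$ and $\tilde S$ are $\R$-linearly invertible. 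By Proposition \ref{propFormalAdjoint} the displayed identity holds on $W_\Gamma$, and since every operator in sight is bounded, it extends by density to all of $H_\Gamma$.

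The Cauchy integral formula applied to $\CC u$ (which is smooth and holomorphic on the interior of $\Sigma$) gives $\CC^2 u = \CC u$ on $W_\Gamma$, so $C^2 = C$ on $H_\Gamma$ by boundedness. Hence $C$ and $C^*$ are bounded idempotents. The image $\im C$ is closed (as $C$ is a bounded idempotent), contains the smooth part of $H^2(\Sigma,\gamma)$ (by definition of $C$), and is contained in $H^2(\Sigma,\gamma)$ (by approximation, since $H^2(\Sigma,\gamma)$ is closed), so $\im C = H^2(\Sigma,\gamma)$. The elementary identity $\im(\Id - E) = \ker E$ for the idempotent $E = C^*$, together with the factorization above and the invertibility of $S$, then yields
\[
H^2(\Sigma,\gamma)^\perp = (\im C)^\perp = \ker C^* = \im(\Id - C^*) = \tilde S(\im C) = \tilde S\bigl(H^2(\Sigma,\gamma)\bigr) = \overline{r}\, M_{\pm \overline{z}}\, \overline{H^2(\Sigma,\gamma)}.
\]

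It remains to see that the right-hand side is independent of the choice of non-vanishing holomorphic $1$-form $R$. If $R' = fR$ with $f \in \O(\Sigma)^\times$, then $h := \gamma^* f$ and $h^{-1}$ both lie in $H^2(\Sigma,\gamma)$, so $h \cdot H^2(\Sigma,\gamma) = H^2(\Sigma,\gamma)$. Writing $r' = hr$ and using the commutativity of multiplication operators on $L^2$, one checks $\overline{r'}\, M_{\pm \overline{z}}\, \overline{H^2(\Sigma,\gamma)} = \overline{r}\, M_{\pm \overline{z}}\, \overline{h \cdot H^2(\Sigma,\gamma)} = \overline{r}\, M_{\pm \overline{z}}\, \overline{H^2(\Sigma,\gamma)}$. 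The main subtlety throughout is keeping the antilinearity of $S$ and $\tilde S$ and the boundary orientation signs packaged into $M_{\pm\overline{z}}$ straight; once that bookkeeping is done the theorem reduces to Kerzman-Stein and the idempotent identity $\im(\Id - E) = \ker E$.
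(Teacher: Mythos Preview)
Your proposal is correct and follows essentially the same route as the paper: boundedness and the trace class claim come from the Kerzman-Stein identity $C = q_\Sigma(\Id + A)$ with $A$ trace class, and the orthogonal complement formula comes from identifying $\im(\Id - C^*)$ via the explicit formal adjoint formula \eqref{eqnCauchyFormalAdjoint}, using that $C$ is a bounded idempotent with image $H^2(\Sigma,\gamma)$. Your write-up is somewhat more explicit than the paper's (you spell out why $C^2 = C$, why $\im C = H^2(\Sigma,\gamma)$, and package the formal adjoint as $\tilde S \circ C \circ S$ with invertible antilinear $S,\tilde S$), but the underlying argument is the same.
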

\begin{proof}
The fact that $C$ is bounded follows immediately from Proposition \ref{propKS} and the fact that $A$ is bounded.  Rewriting the Kerzman-Stein formula as $q_\Sigma - C =  - q_\Sigma A$ we can see that $q_\Sigma - C$ is trace class.

Since $C$ is an idempotent with image $H^2(\Sigma,\gamma)$, we have that $\Id - C^*$ is an idempotent with image $H^2(\Sigma,\gamma)^\perp$. 
Since $C$ is bounded, the formula for the formal adjoint from Lemma \ref{lemCauchyAdjoint} indeed gives the adjoint. 
It follows that $H^2(\Sigma,\gamma)^\perp = \overline{r M_{\pm z} H^2(\Sigma,\gamma)}$, where $\gamma^*d\rho = r(z)dz$. Since $H^2(\Sigma,\gamma)$ is invariant under multiplication by $\gamma^*F$ for any $F \in \O(\Sigma)$, the formula for $H^2(\Sigma,\gamma)^\perp$ holds when $\gamma^*R = r(z) dz$ for any non-vanishing holomorphic 1-form R.
\end{proof}

Recall that the Cauchy transform $\CC$ for $C^\infty(\Gamma)$ depended on a choice of holomorphic immersion $\rho$ and Cauchy kernel $q$. 
The induced Cauchy transform $C \in \B(H_\Gamma)$ also depended on the boundary parametrization $\gamma$. 
However we will see that, modulo a trace class perturbation, $C$ does not actually depend on the choices of  $\rho$, $q$ and $\gamma$.
That $C$ is independent of $\rho$ and $q$ modulo trace class operators is a simple corollary of Theorem \ref{thmWPerp}.

\begin{Corollary}\label{corCauchyIndependent}
Suppose $C_1$ and $C_2$ are two Cauchy transforms for $H_\Gamma$ coming from different choices of $q$ and $\rho$. Then $C_1 - C_2$ is trace class.
\end{Corollary}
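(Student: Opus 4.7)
The plan is to observe that the assertion follows almost immediately from Theorem \ref{thmWPerp} by a subtraction, since the Hardy space $H^2(\Sigma,\gamma)$ to which everything is being compared does not know about the auxiliary data $(\rho,q)$.

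First I would point out that the definition of $H^2(\Sigma,\gamma)=\cl\{\gamma^*F:F\in\O(\Sigma)\}\subseteq H_\Gamma$ depends only on the Riemann surface $\Sigma$ and the boundary parametrization $\gamma$: neither the holomorphic immersion $\rho:\tilde\Sigma\to\C$ nor the Cauchy kernel $q$ appears. Consequently the orthogonal projection $q_\Sigma\in\P(H_\Gamma)$ onto $H^2(\Sigma,\gamma)$ is a fixed operator, common to both Cauchy transforms under discussion.

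Next I would apply Theorem \ref{thmWPerp} separately to each of $C_1$ and $C_2$. That theorem tells us that each $C_i$ extends to a bounded operator on $H_\Gamma$ and that $q_\Sigma-C_i\in\B_1(H_\Gamma)$ for $i=1,2$. Writing
\[
C_1-C_2 \;=\; (C_1-q_\Sigma)\;+\;(q_\Sigma-C_2),
\]
we exhibit $C_1-C_2$ as a difference of two trace class operators, and hence it is itself trace class. This completes the argument.

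There is no real obstacle here: the content of the corollary is packaged inside Theorem \ref{thmWPerp}, and the only thing to verify is that the projection $q_\Sigma$ is genuinely independent of the choices. One small point worth highlighting in the write-up is that the Kerzman--Stein identity $q_\Sigma-C=-q_\Sigma A$ used in the proof of Theorem \ref{thmWPerp} produces an explicit trace class representative (namely $-q_\Sigma A_i$) for each $q_\Sigma-C_i$, so one even obtains the explicit formula $C_1-C_2=q_\Sigma(A_2-A_1)$, where $A_i=C_i-C_i^*$ are the smoothing operators of Theorem \ref{thmKSSmooth} associated to the two choices.
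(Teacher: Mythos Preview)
Your proof is correct and is essentially identical to the paper's own argument: observe that $q_\Sigma$ depends only on $H^2(\Sigma,\gamma)$ and not on $(\rho,q)$, then apply Theorem~\ref{thmWPerp} to each $C_i$ and subtract. The explicit formula $C_1-C_2=q_\Sigma(A_2-A_1)$ you mention at the end is a nice bonus but is not needed (and does not appear) in the paper.
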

\begin{proof}
Note that $q_\Sigma$ only depends on $H^2(\Sigma, \gamma)$, and not on $\rho$ or $q$. Thus $C_1 - C_2$ is trace class by Theorem \ref{thmWPerp}.
\end{proof}
Let $p \in \P(L^2(S^1))$ be the projection onto the classical Hardy space $H^2(\D)$, and let 
$$
p_\Gamma = \bigoplus_{j \in \pi_0(\Gamma^1)} p \oplus \bigoplus_{j \in \pi_0(\Gamma^0)} \Id - p \in \P(H_\Gamma).
$$
Let $q_\Sigma \in \P(H_\Gamma)$ be the projection onto $H^2(\Sigma, \gamma)$.

We wish to show that $q_\Sigma - p_\Gamma$ is trace class.
We begin by showing that this property is independent of the choice of $\gamma$.
First, a simple observation relating idempotents and range projections.
\begin{Proposition}\label{propGLresPreservesGrres}
Let $K$ be a Hilbert space, and let $p$ be a projection on $K$. Let $c$ be an idempotent operator on $K$ with $c - p$ trace class, and let $q$ be the range projection of $c$. Then $q - p$ is trace class. 
\end{Proposition}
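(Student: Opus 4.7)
The plan is to write
\[
q - p = q(1-p) - (1-q)p,
\]
and exhibit each summand as trace class using the idempotency of $c$ together with $T := c - p \in \B_1(K)$.

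First I would record two elementary algebraic identities relating an idempotent $c$ to its range projection $q$: (i) $(1-q)c = 0$, since $\im c \subseteq \im q$, and (ii) $cq = q$, since $c$ acts as the identity on $\im c = \im q$ while $\ker q$ is sent to $0$ by $q$. Both hold with $q$ just any idempotent of range $\im c$, but self-adjointness of $q$ will matter in the next step.

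Substituting $c = p + T$ into (i) yields $(1-q)p = -(1-q)T \in \B_1(K)$, and substituting into (ii) yields $(1-p)q = Tq \in \B_1(K)$. Then, using that $p$ and $q$ are self-adjoint, taking adjoints gives $p(1-q) \in \B_1(K)$ and $q(1-p) \in \B_1(K)$. The displayed identity now exhibits $q - p$ as a sum of two trace class operators.

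The argument is essentially formal, so there is no serious obstacle. The only mild observation needed is that self-adjointness of $p$ and $q$ upgrades the one-sided trace class statements coming from the idempotency of $c$ to statements on either side, which is exactly what is required to use the decomposition $q - p = q(1-p) - (1-q)p$.
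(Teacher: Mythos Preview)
Your proof is correct. Both you and the paper exploit the same two identities $cq = q$ and $qc = c$ (your $(1-q)c=0$), together with self-adjointness of $p$ and $q$, but you package the algebra differently. You use the standard decomposition $q-p = q(1-p) - (1-q)p$ and show each piece is trace class by substituting $c = p+T$ and taking an adjoint. The paper instead observes directly that $c-c^*=(c-p)-(c-p)^*$ is trace class and then writes
\[
q - p \;=\; cq - p \;=\; (c-c^*)q + (qc-p)^* \;=\; (c-c^*)q + (c-p)^*,
\]
obtaining a one-line expression as a sum of trace class operators. The paper's route is a bit slicker; yours is perhaps more transparent in showing exactly where self-adjointness of $q$ enters. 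Neither buys anything the other doesn't.
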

\begin{proof}
Since $c- p$ is trace class, so is $(c-p) - (c-p)^* = c - c^*$. We compute 
\begin{align*}
q - p &= cq - p\\
&= (c-c^*)q + (qc - p)^*\\ 
&= (c-c^*)q + (c-p)^*
\end{align*}
which is evidently trace class.
\end{proof}

\begin{Proposition}\label{propReparametrizeHardyHS}
Let $\Sigma$ be a compact Riemann surface,  let $\gamma$ be a family of boundary trivializations for $\Sigma$, and let $q_\Sigma \in \P(H_\Gamma)$ be the projection onto $H^2(\Sigma, \gamma)$. 
Let $\alpha: \bigsqcup_{j \in \pi_0(\Gamma)} S^1 \to \bigsqcup_{j \in \pi_0(\Gamma)} S^1$ be a family of orientation preserving diffeomorphisms, and let $q_\Sigma^\prime \in \P(H_\Gamma)$ be the projection onto $H^2(\Sigma, \gamma \circ \alpha^{-1})$. 
Then $q_\Sigma - p_\Gamma$ is trace class if and only if $q_\Sigma^\prime - p_\Gamma$ is trace class.
\end{Proposition}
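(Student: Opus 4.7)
I will deduce the statement from a single key commutator estimate. Define $V \in GL(H_\Gamma)$ to be the bounded invertible operator that acts on each summand $L^2(S^1)$ of $H_\Gamma$ by $f \mapsto f \circ \alpha_j^{-1}$; boundedness and invertibility follow from the smoothness of $\alpha_j$ and $\alpha_j^{-1}$. By unwinding the definition of the Hardy space, $V\{\gamma^* F : F \in \O(\Sigma)\} = \{(\gamma \circ \alpha^{-1})^* F : F \in \O(\Sigma)\}$, and taking closures gives $VH^2(\Sigma,\gamma) = H^2(\Sigma,\gamma \circ \alpha^{-1})$. Consequently $Vq_\Sigma V^{-1}$ is an idempotent (generally non-selfadjoint) whose range is exactly $H^2(\Sigma,\gamma \circ \alpha^{-1})$.

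Assume $q_\Sigma - p_\Gamma$ is trace class. The identity
$$Vq_\Sigma V^{-1} - p_\Gamma = V(q_\Sigma - p_\Gamma)V^{-1} + [V,p_\Gamma]V^{-1}$$
reduces the proof to showing that $[V,p_\Gamma]$ is trace class: once we have this, $Vq_\Sigma V^{-1} - p_\Gamma$ is trace class, and Proposition \ref{propGLresPreservesGrres}, applied to the idempotent $Vq_\Sigma V^{-1}$, gives that its range projection $q_\Sigma^\prime$ satisfies $q_\Sigma^\prime - p_\Gamma$ trace class. The reverse implication follows by swapping the roles of $\gamma$ and $\gamma \circ \alpha^{-1}$ (i.e.\ applying the same argument with $\alpha^{-1}$ in place of $\alpha$).

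It remains to show $[V,p_\Gamma]$ is trace class. Both $V$ and $p_\Gamma$ are diagonal with respect to the decomposition indexed by $\pi_0(\Gamma)$, so it suffices to show that on each summand $L^2(S^1)$ the operator $[V_j,p]$ is trace class, where $V_j f = f \circ \alpha_j^{-1}$. Dropping the index $j$, set $\psi := |(\alpha^{-1})^\prime|^{1/2} \in C^\infty(S^1)^\times$; then $(\psi,\alpha) \in \Diff^R_+(S^1)$ is a Ramond spin lift, and the corresponding unitary $U := u_R(\psi,\alpha)$ on $L^2(S^1)$ satisfies $Uf = \psi \cdot (f \circ \alpha^{-1}) = M_\psi V f$, so $V = M_{\psi^{-1}} U$. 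Then
$$[V,p] = M_{\psi^{-1}}[U,p] + [M_{\psi^{-1}},p]\,U.$$
The first summand is trace class because $[U,p]$ is trace class, as noted in the proof of Proposition \ref{propDiffsQuantized}. The second summand is trace class by the standard fact that multiplication by a smooth function has trace class commutator with the Szegő projection $p$: expand $\psi^{-1} = \sum_n c_n z^n$ in a Fourier series, observe that $[M_{z^n},p]$ has rank $|n|$ with trace norm $|n|$, and note that $\sum_n |n|\,|c_n| < \infty$ by the rapid decay of Fourier coefficients of smooth functions.

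The main obstacle is this last step: isolating and controlling the ``non-unitary'' multiplicative correction $M_{\psi^{-1}}$ that separates the naive pullback $V$ from the spin representation unitary $U$. Once that trace class estimate is in hand, the rest is a formal manipulation combining Proposition \ref{propGLresPreservesGrres} with the commutator identity above.
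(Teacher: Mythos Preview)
Your proof is correct and follows essentially the same approach as the paper: both introduce the composition operator (the paper calls it $c_\alpha$, you call it $V$), observe that it conjugates $q_\Sigma$ to an idempotent with range $H^2(\Sigma,\gamma\circ\alpha^{-1})$, reduce to showing $[V,p_\Gamma]$ is trace class, and then invoke Proposition~\ref{propGLresPreservesGrres}. The only difference is in that commutator step: the paper simply cites \cite[Prop.~6.3.1 and Prop.~6.8.2]{PrSe86}, whereas you give a self-contained argument by factoring $V = M_{\psi^{-1}}\,u_R(\psi,\alpha)$ through a Ramond unitary and handling the two commutators separately using the trace class statement noted in the proof of Proposition~\ref{propDiffsQuantized} and a direct Fourier estimate for $[M_g,p]$.
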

\begin{proof}
Suppose that $q_\Sigma - p_\Gamma$ is trace class. Let $c_\alpha$ be the bounded operator on $H_\Gamma$ given by $f \mapsto f \circ \alpha^{-1}$. Observe that 
$$
H^2(\Sigma, \gamma\circ \alpha^{-1}) = c_\alpha H^2(\Sigma, \gamma).
$$
Thus $c_\alpha q_\Sigma c_\alpha^{-1}$ is an idempotent whose range projection is $q_\Sigma^\prime$. 
But $[c_\alpha, p_\Gamma]$ is trace class by \cite[Prop. 6.3.1 and Prop. 6.8.2]{PrSe86}, and so $c_\alpha p_\Gamma c_\alpha^{-1} - p_\Gamma$ is trace class as well. 
Since $c_\alpha q_\Sigma c_\alpha^{-1} - c_\alpha p_\Gamma c_{\alpha}^{-1}$ is trace class by assumption, we must also have that $c_\alpha q_\Sigma c_\alpha^{-1} - p_\Gamma$ is trace class.
By Proposition \ref{propGLresPreservesGrres} we can conclude that $q_\Sigma^\prime - p_\Gamma$ is trace class.
\end{proof}

\begin{Theorem}\label{thmHardyGrres}
Let $\Sigma$ be a compact Riemann surface,  let $\gamma$ be a family of boundary trivializations for $\Sigma$, and let $q_\Sigma$ be the projection of $H_\Gamma$ onto $H^2(\Sigma, \gamma)$. Then $q_\Sigma - p_\Gamma$ is trace class.
\end{Theorem}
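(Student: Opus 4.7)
By Theorem \ref{thmWPerp}, $q_\Sigma - C_\Sigma$ is trace class as an operator on $H_\Gamma$, so it suffices to show that $C_\Sigma - p_\Gamma$ is trace class. The plan is to analyze $C_\Sigma - p_\Gamma$ as a block operator with respect to the decomposition $H_\Gamma = \bigoplus_{j \in \pi_0(\Gamma)} L^2(S^1)$ and to show that each block is an integral operator whose kernel is smooth on $S^1 \times S^1$, and hence trace class.

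For an off-diagonal block $(j,j')$ with $j \ne j'$, the projection $p_\Gamma$ contributes zero, while $C_\Sigma$ is given by the smooth kernel $\tfrac{\pm 1}{2\pi i} q(\gamma_{j}(w),\gamma_{j'}(z)) r_{j}(w)$ (the sign depending on orientations), which is smooth since the images $\gamma_j(S^1)$ and $\gamma_{j'}(S^1)$ are disjoint boundary components and $q(s,t)$ is holomorphic off the diagonal. For a diagonal block $(j,j)$, the Plemelj formula (Lemma \ref{lemPlemelj}) gives $C_\Sigma|_{jj} = \tfrac12 \Id + \H_j$, where $\H_j$ is the PV integral operator with kernel $\tfrac{\epsilon(j)}{2\pi i} q(\gamma(w),\gamma(z)) r(w)$ and $\epsilon(j) = +1$ for outgoing, $-1$ for incoming (the sign coming from the orientation-reversal of $\gamma$ in the pullback of the $\Gamma$-integral). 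On the other hand, $p = \tfrac12 \Id + \H_\D$ and $1-p = \tfrac12 \Id - \H_\D$, where $\H_\D$ is the PV operator with kernel $\tfrac{1}{2\pi i (w-z)}$. A short sign check shows that in both orientation cases the $\tfrac12$ terms cancel and $C_\Sigma|_{jj} - p_\Gamma|_{jj}$ is the PV integral operator with kernel
\[
\frac{\epsilon(j)}{2\pi i}\left[q(\gamma(w),\gamma(z)) r(w) - \frac{1}{w-z}\right].
\]

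The critical step is to show that this kernel extends smoothly across $w = z$. Writing $q(s,t) = (\rho(s)-\rho(t))^{-1} + h(s,t)$ for some smooth $h$, this reduces to showing
\[
\frac{r(w)}{\rho(\gamma(w)) - \rho(\gamma(z))} - \frac{1}{w-z} = \frac{r(w)(w-z) - \bigl(\rho(\gamma(w)) - \rho(\gamma(z))\bigr)}{(w-z)\bigl(\rho(\gamma(w)) - \rho(\gamma(z))\bigr)}
\]
is smooth. Using $\gamma^* d\rho = r(z)\,dz$ and parametrizing $w = e^{i\theta_w}$, $z = e^{i\theta_z}$, we have
\[
\rho(\gamma(w)) - \rho(\gamma(z)) = \int_{\theta_z}^{\theta_w} r(e^{i\theta}) i e^{i\theta}\, d\theta, \qquad w - z = \int_{\theta_z}^{\theta_w} i e^{i\theta}\, d\theta,
\]
so the numerator equals $\int_{\theta_z}^{\theta_w} [r(w) - r(e^{i\theta})] i e^{i\theta}\, d\theta$, which vanishes to second order in $w-z$, while the denominator also vanishes to second order with nonvanishing leading coefficient $r(z)$. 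Hence the ratio extends smoothly to the diagonal, making $C_\Sigma - p_\Gamma$ an integral operator with smooth kernel and therefore trace class. The main obstacle will be this second-order kernel cancellation together with the orientation bookkeeping on incoming components, where both the sign of the $\H_\Sigma$ kernel and the sign of the $\H_\D$ term in $p_\Gamma = 1-p$ flip, and one must check that these flips combine correctly so the singular parts still cancel; once this is settled, the rest of the argument is formal.
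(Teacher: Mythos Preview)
Your argument is correct and in fact slightly more direct than the paper's. Both proofs begin by invoking Theorem~\ref{thmWPerp} to reduce to showing that $C - p_\Gamma$ is trace class, and both handle the off-diagonal blocks identically (smooth kernel since $q(s,t)$ is holomorphic off the diagonal). The difference is in the diagonal blocks. The paper does not analyze the kernel of $C|_{jj} - p_\Gamma|_{jj}$ directly; instead it chooses a biholomorphism $g_j$ from a standard annulus $\A \subset \C$ onto a collar neighborhood of $j$, uses Proposition~\ref{propReparametrizeHardyHS} to replace $\gamma_j$ by $g_j|_{S^1}$, and then invokes Corollary~\ref{corCauchyIndependent} to compare the pulled-back Cauchy transform on $\A$ with the standard one $\tfrac{1}{w-z}$, whose outer diagonal block is literally $p$. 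Your approach instead performs the kernel cancellation by hand, showing that $q(\gamma(w),\gamma(z))r(w) - (w-z)^{-1}$ extends smoothly across the diagonal via the second-order vanishing of $r(w)(w-z) - (\rho\gamma(w)-\rho\gamma(z))$; this is exactly the same mechanism the paper already used in the proof of Theorem~\ref{thmKSSmooth} (cf.\ \eqref{eqnKSKernelexpansion}), just applied to a different pair of kernels. What your route buys is that it avoids the reparametrization lemma (Proposition~\ref{propReparametrizeHardyHS}) and hence the external input from \cite{PrSe86} on $[c_\alpha,p_\Gamma]$ being trace class; what the paper's route buys is that it localizes the problem to a planar annulus and thereby sidesteps any new kernel estimate. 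One small point: your justification that a ratio of two functions each vanishing to second order extends \emph{smoothly} (not just continuously) should appeal to Hadamard's lemma or the integral remainder, as the paper does implicitly after \eqref{eqnKSKernelexpansion}; this is routine but worth a sentence.
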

\begin{proof}
Fix $\rho$ and $q$, and let $C$ be the corresponding Cauchy transform for $\Sigma$.
For $j \in \pi_0(\Gamma)$, let $p_j:H_\Gamma \to L^2(S^1)$ be the projection from $H_\Gamma$ onto the copy of $L^2(S^1)$ indexed by $j$.
We will show 
\begin{enumerate}
\item \label{itmDiagonalOutgoing} $p_j C p_j^* - p$ is trace class when $j \in \pi_0(\Gamma^1)$,
\item \label{itmDiagonalIncoming} $p_j C p_j^* - (\Id -p)$ is trace class when $j \in \pi_0(\Gamma^0)$,
\item \label{itmOffDiagonal} $p_j C p_k^*$ is trace class when $j,k \in \pi_0(\Gamma)$ and $j \ne k$.
\end{enumerate}

The statement of condition \eqref{itmOffDiagonal} is clear, since $p_j C p_k^*$ is an integral operator with smooth kernel.

We now consider condition \eqref{itmDiagonalOutgoing}. Let $j \in \pi_0(\Gamma^1)$, and let $K_j$ be a closed annulus in $\Sigma$ with one boundary component $j$. There is an annulus
$$
\A = \{ z \in \C : 1-\epsilon \le \abs{z} \le 1\} \subset \C
$$
such that we can find a biholomorphic map $g_j:\A \to K_j$. 
By Proposition \ref{propReparametrizeHardyHS}, the conclusion of the theorem is independent of the choice of $\gamma$, so we may assume without loss of generality that $\gamma_j = g_j|_{S^1}$.

There is a Cauchy transform $\CC_\A$ for $\A$ coming from the holomorphic immersion $\rho \circ g_j$ and Cauchy kernel 
$q(g_j(z),g_j(w))$.
Let $\Gamma_\A$ be the boundary of $\A$, and parametrize $\Gamma_\A$ via the identity map on the boundary component $S^1$, and arbitrarily on the other component.
Conjugating by these parametrizations, we get a Cauchy transform 
$$
C_A \in \B\left(\bigoplus_{\pi_0(\Gamma_\A)} L^2(S^1)\right)=:\B(H_{\Gamma_\A}).
$$
By construction, we have
$$
p_j C p_j^* = p_{S^1} C_\A p_{S^1}^*,
$$
where $p_{S^1} : H_{\Gamma_\A} \to L^2(S^1)$ is the projection onto the copy of $L^2(S^1)$ indexed by the boundary component $S^1$ of $\A$.

On the other hand, we have the standard Cauchy transform $C_{st}$ on $\A$ given by the standard Cauchy kernel $\frac{1}{w-z}$, and the same parametrizations used before to define $C_\A$.
By Corollary \ref{corCauchyIndependent}, $C_\A - C_{st}$ is trace class. 
Hence $p_j C p_j^* - p_{S^1} C_{st} p_{S^1}^*$ is trace class as well. 
But $p_{S^1}C_{st}p_{S^1}^*$ is just the projection onto the standard Hardy space $H^2(\D)$. 
Hence $p_j C p_j^* - p$ is trace class, as desired.

If $j \in \pi_0(\Gamma^0)$, we can establish \eqref{itmDiagonalIncoming} using essentially the same argument. The only modification is that we identify an annular neighborhood of $j$ with 
$$
\A^\prime = \{ z \in \C : 1 \le \abs{z} \le 1+\epsilon\}.
$$
\end{proof}

We now prove Theorem \ref{thmHardyPerp} and Theorem \ref{thmProjectionDifferenceTraceClass} by applying the preceding results to Hardy spaces coming from spin structures. We restate the theorems here for the convenience of the reader.

\begin{Theorem*}[Theorem \ref{thmHardyPerp}]
Let $X=(\Sigma, L, \Phi, \beta) \in \cR$ be a Riemann spin surface with boundary parametrization. 
Let $H_\Gamma = \bigoplus_{j \in \pi_0(\Gamma)} L^2(S^1)$ and let $H^2(X) \subset H_\Gamma$ be the Hardy space. 
Then 
\begin{equation}\label{eqHardyPerpRepeat}
H^2(X)^\perp = M_{\pm} \overline{M_z^{NS} H^2(X)} = M_{\pm} H^2(\overline{X}).
\end{equation}

Here $M_{\pm}$ is multiplication by $1$ on copies of $L^2(S^1)$ indexed by outgoing boundary components, and multiplication by $-1$ on copies of $L^2(S^1)$ indexed by incoming boundary components, and $M_z^{NS}$ is multiplication by the function $z$ on copies of $L^2(S^1)$ indexed by $j$ for which $L|_j$ is Neveu-Schwarz, and the identity on other boundary components.
\end{Theorem*}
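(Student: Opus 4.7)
The second equality $M_\pm \overline{M_z^{NS} H^2(X)} = M_\pm H^2(\overline{X})$ is immediate from Proposition \ref{propConjugateHardy}, so the content lies entirely in the first equality $H^2(X)^\perp = M_\pm \overline{M_z^{NS} H^2(X)}$. The strategy is to trivialize the spin bundle in order to reduce $H^2(X)$ to the non-spin Hardy space $H^2(\Sigma,\gamma)$, and then invoke Theorem \ref{thmWPerp}. By Corollary \ref{corEmbeddingInOpenSpinSurface}, $X$ embeds in an open spin Riemann surface, and then Theorem \ref{thmTrivialityOfVectorBundles} produces a nowhere-vanishing holomorphic section $s$ of $L$ over $\Sigma$. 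The 1-form $R := \Phi(s \otimes s)$ is then holomorphic and nowhere vanishing on $\Sigma$, so it qualifies as the 1-form appearing in Theorem \ref{thmWPerp}.

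Let $\alpha_j \in C^\infty(S^1)$ denote the expression of $s|_j$ in the trivialization $\beta_j$; each $\alpha_j$ is nowhere vanishing. Since every holomorphic section $F \in \O(\Sigma;L)$ has the form $F = fs$ with $f \in \O(\Sigma)$, one computes $\beta_j^* F|_j = (f\circ\gamma_j)\,\alpha_j$, so
$$
H^2(X) = M_\alpha H^2(\Sigma,\gamma),
$$
where $M_\alpha$ is componentwise multiplication by the bounded invertible tuple $\alpha = (\alpha_j)$. Taking orthogonal complements and applying Theorem \ref{thmWPerp} with the 1-form $R$ then gives
$$
H^2(X)^\perp \;=\; M_{1/\overline\alpha}\, H^2(\Sigma,\gamma)^\perp \;=\; M_{\overline r/\overline\alpha}\, M_{\pm\overline z}\, \overline{H^2(\Sigma,\gamma)},
$$
where $r$ is defined by $\gamma^* R = r\, dz$.

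It remains to match this with $M_\pm \overline{M_z^{NS} H^2(X)} = M_\pm\, M_{\overline z}^{NS}\, M_{\overline\alpha}\, \overline{H^2(\Sigma,\gamma)}$. The intertwining property of the spin isomorphism $\beta_j$ together with the definition \eqref{eqCirclSpinDefinitions} of $\phi_\sigma$ yields $r_j = \alpha_j^2/i$ when $\sigma(j) = NS$ and $r_j = \alpha_j^2/(iz)$ when $\sigma(j) = R$. A direct componentwise computation then shows that both operators $M_{\overline r/\overline\alpha}\,M_{\pm\overline z}$ and $M_\pm\, M_{\overline z}^{NS}\, M_{\overline\alpha}$ agree up to the global scalar $i$ — each multiplies on component $j$ by $\pm_j\,\overline\alpha_j\,\overline z^{\delta_j}$, modulo this factor of $i$, where $\delta_j = 1$ for NS and $0$ for R. Since the complex subspace $\overline{H^2(\Sigma,\gamma)}$ is closed under scalar multiplication by $i$, the two subspaces coincide. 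The hard part is not conceptual but the careful bookkeeping of the $z$, $\overline z$, and $i$ factors arising from the two spin types; these conspire so that the spin twist by $M_z^{NS}$ on the right side of \eqref{eqHardyPerp} exactly absorbs the cotangent factors produced by Theorem \ref{thmWPerp} in the non-spin setting.
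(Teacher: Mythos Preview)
Your proof is correct and follows essentially the same approach as the paper's own argument: trivialize the spin bundle via a nowhere-vanishing holomorphic section, reduce $H^2(X)$ to $H^2(\Sigma,\gamma)$ by multiplication by the boundary values of that section, apply Theorem \ref{thmWPerp}, and then use the spin isomorphism property of $\beta_j$ together with \eqref{eqCirclSpinDefinitions} to identify $r$ in terms of $\alpha$ and match the factors. The paper chooses $R = i\Phi_*(F\otimes F)$ to kill the residual factor of $i$, whereas you keep $R=\Phi(s\otimes s)$ and absorb the $i$ at the end; otherwise the arguments are the same.
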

\begin{proof}[Proof of Theorem \ref{thmHardyPerp}]
The second equality of \eqref{eqHardyPerpRepeat} is Proposition \ref{propConjugateHardy}, and so we only need to establish the first.

By Theorem \ref{thmTrivialityOfVectorBundles}, there exists a non-vanishing holomorphic section $F$ of $L$, and we denote the corresponding boundary values by $f := \beta^*F \in H^2(X)$. We then have 
\begin{equation}\label{eqnRelateTwoHardys}
H^2(X) = f H^2(\Sigma, \gamma) := \{f h : h \in H^2(\Sigma, \gamma)\},
\end{equation}
where $\gamma_j = \beta_j|_{S^1}$ is the boundary parametrization of $\Sigma$ given by $\beta$.
We then have 
$$
H^2(X)^\perp = (fH^2(\Sigma, \gamma))^\perp = \overline{f}^{-1} H^2(\Sigma, \gamma)^\perp.
$$
Applying Theorem \ref{thmWPerp} we get
\begin{equation}\label{eqIntermediateHardyPerp}
H^2(X)^\perp = M_{\pm} \overline{f^{-1} r z H^2(\Sigma, \gamma)}
\end{equation}
where $M_z$ is multiplication by $z$ on each copy of $L^2(S^1)$ in $H_\Gamma$, and $r$ is characterized by $r \; dz = \gamma^*R$ for any non-vanishing holomorphic section $R$ of $K_\Sigma$. 

In particular, we can take $R = i\Phi_*(F \otimes F)$.
Let $j \in \pi_0(\Gamma)$.
We will now show that 
\begin{equation}\label{eqSpinPullback}
\gamma_j^*R = \left\{
\begin{array}{ll}
f_j^2 dz \quad & \sigma(j) = NS\\
f_j^2 \frac{dz}{z} & \sigma(j) = R.
\end{array}
\right.
\end{equation}
Once we establish \eqref{eqSpinPullback}, then the desired result easily follows. Indeed, since $r\; dz = \gamma^* R$, we can rewrite \eqref{eqSpinPullback} as
\begin{equation}\label{eqCondensedSpinPullback}
r = M_z^{NS} z^{-1} f^2,
\end{equation}
Now \eqref{eqHardyPerpRepeat} follows from plugging \eqref{eqCondensedSpinPullback} into \eqref{eqIntermediateHardyPerp}.

We now turn to establishing \eqref{eqSpinPullback}. Recall that $\beta_j:(S^1, \sigma(j)) \to L|_j$ is an isomorphism of spin structures, and that $\gamma_j$ is the restriction of $\beta_j$ to the base space $S^1$. Since $\beta_j:(S^1,\sigma(j) \to (\Phi|_j, L|_j)$ is a spin isomorphism, by definition \eqref{eqnCircleSpinIso} we have
\begin{equation}\label{eqSpinIsoDefHardyProof}
\gamma_j^* \Phi_*(F \otimes F) = (\phi_{\sigma(j)})_*(\beta_j^*F \otimes \beta_j^* F) = (\phi_{\sigma(j)})_* (f_j \otimes f_j).
\end{equation}
Recall that we defined the spin structure $\phi_{\sigma(j)}$ in \eqref{eqCirclSpinDefinitions} so that
\begin{equation}\label{eqSpinIsoDefHardyProof2}
i(\phi_{\sigma(j)})_*(f_j \otimes f_j) = \left\{ \begin{array}{ll}
f_j^2 dz \quad & \sigma(j) = NS \vspace{0.05in}\\
f_j^2 \frac{dz}{z} & \sigma(j) = R.
\end{array}
\right..
\end{equation}
Combining \eqref{eqSpinIsoDefHardyProof} and \eqref{eqSpinIsoDefHardyProof2} yields \eqref{eqSpinPullback} and completes the proof.
\end{proof}

\begin{Theorem*}[Theorem \ref{thmProjectionDifferenceTraceClass}]
Let $X=(\Sigma, L, \Phi, \beta) \in \cR$ be a spin Riemann surface with boundary parametrization. 
Let $H_\Gamma = \bigoplus_{j \in \pi_0(\Gamma)} L^2(S^1)$ and let $H^2(X) \subset H_\Gamma$ be the Hardy space. 
Let $q_X \in \P(H_\Gamma)$ be the projection onto $H^2(X)$, and let 
$$
p_\Gamma = \bigoplus_{j \in \pi_0(\Gamma^1)} p \oplus \bigoplus_{j \in \pi_0(\Gamma^0)} \Id - p,
$$
where $p \in \P(L^2(S^1))$ is the projection onto $H^2(\D)$. Then $q_X - p_\Gamma$ is trace class.
\end{Theorem*}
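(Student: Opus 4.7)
The plan is to deduce this from the analogous statement for the bare (non-spin) Hardy space (Theorem \ref{thmHardyGrres}) by trivializing the spin bundle and using that multiplication by a smooth nonvanishing function has trace class commutator with the classical Hardy projection. Concretely, by Proposition \ref{propMonoidalHardy} we can reduce immediately to the case that $\Sigma$ is connected, since the trace class property is additive over the orthogonal direct sum decomposition induced on $H_\Gamma$ by $\pi_0(\Sigma)$.

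With $\Sigma$ connected, I would first embed $(\Sigma,L,\Phi)$ in an open spin Riemann surface via Corollary \ref{corEmbeddingInOpenSpinSurface}, and then trivialize the extension of $L$ via Theorem \ref{thmTrivialityOfVectorBundles} to obtain a nonvanishing holomorphic section $F$ of $L$ over a neighborhood of $\Sigma$. Setting $f := \beta^*F|_\Gamma \in H_\Gamma$, the section $f$ is smooth and nonvanishing on each boundary circle, so multiplication $M_f : H_\Gamma \to H_\Gamma$ (acting component-wise on each copy of $L^2(S^1)$) is bounded with bounded inverse $M_{f^{-1}}$. Exactly as in the proof of Theorem \ref{thmHardyPerp} (see equation \eqref{eqnRelateTwoHardys}), we have the identification
\[
H^2(X) \;=\; M_f\, H^2(\Sigma,\gamma), \qquad \gamma := \beta|_\Gamma.
\]
Let $q_\Sigma$ denote the orthogonal projection of $H_\Gamma$ onto $H^2(\Sigma,\gamma)$; then $c := M_f q_\Sigma M_f^{-1}$ is a bounded idempotent whose range is $H^2(X)$, and by Proposition \ref{propGLresPreservesGrres} it suffices to prove that $c - p_\Gamma$ is trace class.

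For this I would write
\[
c - p_\Gamma \;=\; M_f(q_\Sigma - p_\Gamma)M_f^{-1} \;+\; [M_f,\, p_\Gamma]\,M_f^{-1}.
\]
The first summand is trace class because $q_\Sigma - p_\Gamma$ is trace class by Theorem \ref{thmHardyGrres} and the trace class ideal is invariant under left and right multiplication by bounded operators. For the second summand, $[M_f,p_\Gamma]$ decomposes as a direct sum of commutators $[M_{f_j}, p]$ (or $[M_{f_j}, \Id-p]$, which coincide up to sign) on the individual copies of $L^2(S^1)$, and each such commutator is trace class because $f_j \in C^\infty(S^1)$: this is the same smooth-function commutator fact (cf.\ \cite[Prop.\ 6.3.1, 6.8.2]{PrSe86}) that was invoked in Proposition \ref{propReparametrizeHardyHS}, and is elementary from the rapid decay of Fourier coefficients of a smooth function together with the observation that $[M_{z^n},p]$ has rank $|n|$. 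Combining these two observations gives that $c - p_\Gamma$ is trace class, and Proposition \ref{propGLresPreservesGrres} then yields the desired conclusion that $q_X - p_\Gamma$ is trace class.

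The main obstacle is really a bookkeeping one rather than an analytic one: the essential analytic input (trace class commutators with $p$) is classical, and the essential geometric input (trivializability of $L$ over an open neighborhood) was already established. Everything else amounts to checking that multiplication by $f$ conjugates the projection $q_\Sigma$ to an idempotent with range $H^2(X)$ and that the difference from $p_\Gamma$ decomposes into two pieces each controlled by results already in hand.
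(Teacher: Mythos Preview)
Your proposal is correct and follows essentially the same approach as the paper: trivialize $L$ to obtain a smooth nonvanishing $f$ with $H^2(X)=M_f H^2(\Sigma,\gamma)$, then combine Theorem \ref{thmHardyGrres} with the classical trace class commutator $[M_f,p]$ and invoke Proposition \ref{propGLresPreservesGrres}. The only organizational difference is that the paper first shows $q_X-q_\Sigma$ is trace class (via $[f,q_\Sigma]$) and then adds $q_\Sigma-p_\Gamma$, whereas you compare the idempotent $M_f q_\Sigma M_f^{-1}$ directly with $p_\Gamma$ via the decomposition $M_f(q_\Sigma-p_\Gamma)M_f^{-1}+[M_f,p_\Gamma]M_f^{-1}$; your route is marginally cleaner in that it only needs the commutator with $p_\Gamma$ rather than with $q_\Sigma$.
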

\begin{proof}[Proof of Theorem \ref{thmProjectionDifferenceTraceClass}]
Let $\gamma_j = \beta_j|_{S^1}$ be the parametrization of $\Gamma$ induced by $\beta$. Let $q_\Sigma$ be the projection onto $H^2(\Sigma, \gamma)$. By Theorem \ref{thmHardyGrres}, $q_\Sigma - p_\Gamma$ is trace class, so we just need to show that $q_X - q_\Sigma$ is trace class. Let $F$ be a non-vanishing section $L$, and let $f = \beta^*F \in H^2(X)$ be the corresponding element of the Hardy space. Note that $f$ is a smooth function on $\bigsqcup S^1$.

We have $H^2(X) = f H^2(\Sigma, \gamma)$, so $f q_\Sigma f^{-1}$ is an idempotent whose range projection is $q_X$. By Proposition \ref{propGLresPreservesGrres}, to prove that $q_X - q_\Sigma$ is trace class, it suffices to prove that $[f, q_\Sigma]$ is trace class. This is done in \cite[Prop. 6.3.1]{PrSe86}.
\end{proof}

\bibliographystyle{alpha}
\bibliography{../ffbib.bib} 

\end{document}